\newfont{\fsc}{eusm10}
\newcommand{\plat}[1]{\raisebox{0pt}[0pt][0pt]{#1}} 
\newcommand{\nietplat}[1]{#1} 
\newcommand{\diam}[1]{\langle#1\rangle}
\newcommand{\eps}{\langle\epsilon\rangle}
\newcommand{\trans}[1]{\stackrel{#1}{\longrightarrow}}
\newcommand{\ntrans}[1]{\mathrel{\hspace{.4em}\not\hspace{-.4em}\trans{#1\;}}}
\newcommand{\epsarrow}{\stackrel{\epsilon}{\Longrightarrow}}
\newcommand{\bis}[1]{\,\, \raisebox{.3ex}{$\underline{\makebox[.7em]{$\leftrightarrow$}}$}\,_{#1}\,}
\newcommand{\notbis}[1]{\not\mathrel{\raisebox{.3ex}{$\underline{\makebox[.7em]{\,$\leftrightarrow$}}$}\,_{#1}\!}}
\newcommand{\var}{{\it var}}
\newcommand{\ar}{{\it ar}}
\newcommand{\IO}[1]{\mathbb{O}_{#1}}
\renewcommand{\phi}{\varphi}
\newcommand{\Brac}[1]{[#1]}
\newcounter{saveenumi}
\newcommand{\valuation}{closed substitution}
\newcommand{\aL}{\aleph\,\mathord\cap\Lambda}
\newcommand{\qed}{\hfill$\Box$}
\newtheorem{defi}{Definition}
\newtheorem{theo}{Theorem}
\newtheorem{prop}{Proposition}
\newtheorem{lemm}{Lemma}
\newtheorem{coro}{Corollary}
\newtheorem{exam}{Example}
\newtheorem{rema}{Remark}
\newenvironment{definition}{\begin{defi} \rm }{\end{defi}}
\newenvironment{theorem}{\begin{theo}}{\end{theo}}
\newenvironment{proposition}{\begin{prop}}{\end{prop}}
\newenvironment{lemma}{\begin{lemm}}{\end{lemm}}
\newenvironment{corollary}{\begin{coro}}{\end{coro}}
\newenvironment{example}{\begin{exam} \rm }{\end{exam}}
\newenvironment{remark}{\begin{rema}}{\end{rema}}
\newenvironment{proof}{\begin{trivlist} \item[\hspace{\labelsep}\bf Proof:]}{\end{trivlist}}
\begin{document}

\title{\texorpdfstring{Divide and Congruence II:\\
From Decomposition of Modal Formulas\\ to Preservation of
Delay and Weak Bisimilarity}
{Divide and Congruence II:
From Decomposition of Modal Formulas
to Preservation of
Delay and Weak Bisimilarity}}
\def\titlerunning{Divide and Congruence II}
\def\authorrunning{W.J. Fokkink \& R.J. van Glabbeek}

\author{Wan Fokkink
\institute{Department of Computer Science\\
Vrije Universiteit Amsterdam, The Netherlands}
\email{w.j.fokkink@vu.nl}
\and
Rob van Glabbeek
\institute{NICTA\thanks{NICTA is funded by the Australian Government
    through the Department of Communications and the Australian
    Research Council through the ICT Centre of Excellence
    Program.}\;\,/Data61, CSIRO, Sydney, Australia}
\institute{School of Computer Science and Engineering\\
University of New South Wales, Sydney, Australia}
\email{rvg@cs.stanford.edu}
}

\providecommand{\publicationstatus}{An extended abstract of this paper appeared as \cite{FvG16}.}
\maketitle

\begin{abstract}
Earlier we presented a method to decompose modal formulas for processes with the internal action $\tau$, and congruence formats for branching and $\eta$-bisimilarity were derived on the basis of this decomposition method. The idea is that a congruence format for a semantics must ensure that the formulas in the modal characterisation of this semantics are always decomposed into formulas that are again in this modal characterisation. In this follow-up paper the decomposition method is enhanced to deal with modal characterisations that contain a modality $\eps\diam{a}\phi$, to derive congruence formats for delay and weak bisimilarity.
\end{abstract}

\section{Introduction}\label{sec:introduction}

In \cite{BFvG04} a method was developed to generate congruence formats for (concrete) process semantics from their modal characterisation. It crosses the borders between process algebra, structural operational semantics, process semantics, and modal logic. Cornerstone is the work in \cite{LL91} to decompose formulas from Hennessy-Milner logic \cite{HM85} with respect to a structural operational semantics in the De Simone format \cite{Sim85}. It was extended to the ntyft format \cite{Gro93} without lookahead in \cite{BFvG04}, and to the tyft format \cite{GV92} in \cite{FvGdW03}.

An equivalence is a congruence for a given process algebra or programming language if the equivalence class of a term $f(p_1,\ldots,p_n)$ is determined by the function $f$ and the equivalence classes of its arguments $p_1,\ldots,p_n$. Being a congruence is an important property, for instance to fit a process semantics into an axiomatic framework. A wide range of syntactic formats for structural operational semantics have been developed for several process semantics, to ensure that such a semantics is a congruence; notably for unrooted and rooted weak bisimilarity in \cite{Blo95} and for unrooted and rooted delay bisimilarity in \cite{vGl11}. These formats are contained in the positive GSOS format \cite{BIM95}. They include so-called patience rules for arguments $i$ of function symbols $f$, which imply that any term $f(p_1,\ldots,p_n)$ inherits the $\tau$-transitions of its argument $p_i$.
\advance\textheight 1pt

Key idea in \cite{BFvG04} is that a congruence format for a process semantics must ensure that the formulas in a modal characterisation of this semantics are always decomposed into formulas that are again in this modal characterisation. This yielded congruence formats for all known concrete (i.e., $\tau$-free) process semantics in a convenient way. Moreover, the resulting congruence formats are more elegant and expressive than existing congruence formats for individual process semantics. In \cite{FvGdW12} this method was extended to weak process semantics, which take into account the internal action $\tau$. As a result, congruence formats for rooted branching and $\eta$-bisimilarity were derived. These formats use two predicates $\aleph$ and $\Lambda$ on arguments of function symbols: $\aleph$ marks processes that can execute immediately, and $\Lambda$ marks processes that have started executing (but may currently be unable to execute). Formats for unrooted branching and $\eta$-bisimilarity were obtained by imposing one extra restriction on top of the format for the corresponding rooted semantics: $\Lambda$ holds universally.

\advance\textheight -1pt
The framework from \cite{FvGdW12} covers only a small part of the spectrum of weak semantics from \cite{vGl93}. In particular, it does not readily extend to delay and weak bisimilarity \cite{Mil81,Mil89}. The reason is that in the definition of these semantics, in contrast to branching and $\eta$-bisimilarity, a process $q$ that mimics an $a$-transition from a process $p$, does not need to be related to $p$ at the moment that $q$ performs the $a$-transition. This implies that in the modal characterisation of delay and weak bisimilarity, a modality $\diam{a}\phi$ stating that an $a$-transition to a process where $\phi$ holds, is always preceded by a modality $\eps$ allowing any number of $\tau$-transitions. As a consequence, devising congruence formats for delay and weak bisimilarity has been notoriously difficult, see e.g.\ \cite{Blo95,vGl11}. Here we show how this technical obstacle can be overcome by means of the semantic notion of \emph{delay resistance}, which generalises earlier notions from \cite{Blo95,vGl11}. This notion ensures that modalities $\eps\diam{a}\phi$ are decomposed into formulas that again have this form. Thus congruence formats can be derived for semantics with a modal characterisation containing such modalities. We derive congruence formats for rooted delay and weak bisimilarity. The congruence formats for the unrooted counterparts of these semantics are again obtained by the extra requirement that $\Lambda$ must be universal. We moreover provide syntactic restrictions which imply delay resistance, leading to the first entirely syntactic congruence formats for rooted delay and weak bisimilarity.

In \cite{vGl11} a general method is presented to generalise any congruence format $F$, contained in
the GSOS format, into a \emph{two-tiered} version of $F$. Two-tiered formats distinguish so-called
``principal'' function symbols and ``abbreviations''.  The latter can be regarded as syntactic
sugar, adding nothing that could not be expressed with principal function symbols.  The original
format $F$ is essentially the restriction of its two-tiered version that allows principal function
symbols only.  As shown in \cite{vGl11}, the general formats of \cite{Blo95,vGl11} can be obtained
as the two-tiered versions of the simplified formats from \cite{Blo95,vGl11}.  In \cite{FvGdW12}
this two-tiered approach was generalised from GSOS to decent ntyft format.  Consequently, the
two-tiered versions of the congruence formats presented in the current paper are again congruence
formats for rooted/unrooted delay/weak bisimilarity. These two-tiered versions of our formats (or
more precisely, of the intersection of our formats with the decent ntyft format) generalise the full
formats of \cite{Blo95,vGl11}. Ulidowski \cite{Uli92,UP02,UY00} proposed congruence formats for weak
semantics with a different treatment of divergence, which interestingly allow the inclusion of the
priority operator; (divergence-insensitive) rooted weak bisimilarity is not a congruence for this operator.
The TSSs of BPA$_{\epsilon\delta\tau}$, binary Kleene star and deadlock testing in Sect.~\ref{sec:applications}
are however outside those formats.

This research line shows that it is worthwhile to study the interplay of structural operational semantics and modal logic. The modal characterisation of a process semantics turns out to be fundamental for its congruence properties. Although some rather heavy technical machinery is needed to set the scene, especially the derivation of so-called ruloids and the decomposition method for modal formulas, the bulk of this work can be reused for the development of congruence formats for other weak process semantics. This is witnessed by the fact that the congruence results for rooted delay and weak bisimilarity are obtained in an almost identical fashion, and build upon the congruence proofs for rooted branching bisimilarity in \cite{FvGdW12}. Furthermore, the congruence formats that we obtain here are more liberal and more elegant than existing congruence formats for these semantics. In particular, in \cite{Blo95} it is stated that the RWB format put forward in that paper has a ``horrible definition''. In \cite{Blo95} it is moreover stated that ``negative rules seem incompatible with weak process equivalences.'' Here we show how negative premises can be included in congruence formats for rooted delay and weak bisimilarity.

The paper is structured as follows. Sect.~\ref{sec:preliminaries} contains technical preliminaries. Sect.~\ref{sec:delay-resistance} introduces the notion of delay resistance and explains how the decomposition method of modal formulas from \cite{FvGdW12} needs to be adapted. Sect.~\ref{sec:congruence} presents the congruence formats for rooted delay and weak bisimilarity and the proofs of these congruence results. Sect.~\ref{sec:checking-delay-resistance} shows that it is sufficient to check delay resistance for a limited set of variables (to be precise, the $\Lambda$-frozen arguments of a source), and how the semantic notion of delay resistance can be captured by means of syntactic criteria. Sect.~\ref{sec:applications} provides applications of our congruence formats. Sect.~\ref{sec:conclusions} concludes the paper.

\section{Preliminaries}
\label{sec:preliminaries}

This section recalls the basic notions of labelled transition systems
and weak semantics (Sect.~\ref{sec:equivalences_terms}), and presents
modal characterisations of the semantic equivalences that are studied
in this paper (Sect.~\ref{sec:modal}). Then follows a brief
introduction to structural operational semantics and the notion of a
well-supported proof (Sect.~\ref{sec:sos}).
Next we recall some syntactic restrictions on transition rules (Sect.~\ref{sec:ntytt}).
Then we present the notion of patience rules (Sect.~\ref{sec:predicates}),
and a basic result from \cite{BFvG04}, Prop.~\ref{prop:ruloid}, regarding ruloids (Sect.~\ref{sec:ruloids}).
Sect.~\ref{sec:linearity} shows that in Prop.~\ref{prop:ruloid} we may restrict attention to
  ruloids with so-called linear proofs.
Finally, we recall from \cite{FvGdW12} a method for decomposition of modal formulas (Sect.~\ref{sec:decomposition}).

\subsection{Equivalences on labelled transition systems}\label{sec:equivalences_terms}

A \emph{labelled transition system (LTS)} is a pair $(\mathbb{P},\rightarrow)$, with $\mathbb{P}$ a set of \emph{processes} and $\rightarrow\;\subseteq\mathbb{P}\times(A\cup\{\tau\})\times\mathbb{P}$, where $\tau$ is an \emph{internal action} and $A$ a set of \emph{concrete actions} not containing $\tau$.
We use $p,q$ to denote processes, $\alpha,\beta,\gamma$ for elements of $A\cup\{\tau\}$, and $a,b$ for elements of $A$. We write \plat{$p\trans\alpha q$} for $(p,\alpha,q)\in\;\rightarrow$ and \plat{$p\ntrans\alpha$} for \plat{$\neg(\exists q\in\mathbb{P}:p\trans\alpha q)$}. Furthermore, \plat{$\epsarrow$} denotes the transitive-reflexive closure of \nietplat{$\trans{\tau}$}.

Processes can be distinguished from each other by a wide range of semantics, based on e.g.\ branching structure or decorated versions of execution sequences. Van Glabbeek \cite{vGl93} classified so-called weak semantics, which take into account the internal action $\tau$. Here we focus on two such equivalences which, to different degrees, abstract away from internal actions: delay bisimilarity \cite{Mil81} and weak bisimilarity \cite{Mil89}. They are the two weak semantics that are employed most widely in the literature.

\begin{definition}\label{def:bb}
{\rm
Let $B\subseteq\mathbb{P}\times\mathbb{P}$ be a symmetric relation.
\begin{itemize}
\item
$B$ is a \emph{delay bisimulation} if $pBq$ and \plat{$p\trans{\alpha}p'$} implies that either $\alpha = \tau$ and $p'\,B\,q$, or \plat{$q\epsarrow \trans{\alpha} q''$} for some $q''$ with $p'Bq''$.

Processes $p,q$ are {\em delay bisimilar}, denoted $p\bis{d}q$, if there exists a delay bisimulation $B$ with $pBq$.

\item
$B$ is a \emph{weak bisimulation} if $pBq$ and \plat{$p\mathbin{\trans{\alpha}}p'$} implies that either $\alpha \mathbin= \tau$ and $p'\,B\,q$, or \plat{$q\mathbin{\epsarrow \trans{\alpha} \epsarrow} q''$} for some $q''$ with $p'Bq''$.

Processes $p,q$ are {\em weakly bisimilar}, denoted $p\bis{w}q$, if there exists a weak bisimulation $B$ with $pBq$.
\end{itemize}
}
\end{definition}
The notions of delay and weak bisimilarity were originally both introduced by Milner under the name ``observation equivalence''. Clearly, delay bisimilarity is included in weak bisimilarity.

It is well-known that delay and weak bisimilarity constitute equivalence relations \cite{Mil81,Mil89}. However, these two semantics are not {\em congruences} for most process algebras from the literature, meaning that the equivalence class of a process $f(p_1,\ldots,p_n)$, with $f$ an $n$-ary function symbol, is not always determined by the equivalence classes of its arguments, i.e.\ the processes $p_1,\ldots,p_n$. Rooted counterparts of these equivalences were introduced, which require for the pair of initial states that a $\tau$-transition needs to be matched by at least one $\tau$-transition. Unlike the unrooted versions they are congruences for basic process algebras, notably for the alternative composition operator.

\begin{definition}\label{def:rbb}
{\rm
Let $R\subseteq\mathbb{P}\times\mathbb{P}$ be a symmetric relation.
\begin{itemize}
\item
$R$ is a \emph{rooted delay bisimulation} if $pRq$ and \plat{$p \trans{\alpha} p'$} implies that \plat{$q \epsarrow\trans{\alpha}q '$} for some $q'$ with $p'\bis{d}q'$.

Processes $p,q$ are \emph{rooted delay bisimilar}, denoted $p\bis{rd}q$, if there exists a rooted delay bisimulation $R$ with $pRq$.

\item
$R$ is a \emph{rooted weak bisimulation} if $pRq$ and \plat{$p \!\trans{\alpha}\! p'$} implies that \plat{$q \!\epsarrow\trans{\alpha}\epsarrow\! q '$} for some $q'$ with $p'\bis{w}q'$.

Processes $p,q$ are \emph{rooted weakly bisimilar}, denoted $p\bis{rw}q$, if there exists a rooted weak bisimulation $R$ with $pRq$.
\end{itemize}
}
\end{definition}

\begin{example}\label{ex:weak bisimulations}
The processes $p_0$ and $p_1$ in the following LTS are rooted delay bisimilar but not $\eta$-bisimilar.
The idea is that in an $\eta$-bisimulation the transition $p_0\trans{b}0$ cannot be mimicked by $p_1$; the only candidate $p_1\trans{\tau}q\trans{b}0$ fails because
$q$ cannot be related to $p_0$, while this would be required for an $\eta$-bisimulation.

\vspace{4mm}

\centerline{\begin{picture}(0,0)%
\includegraphics{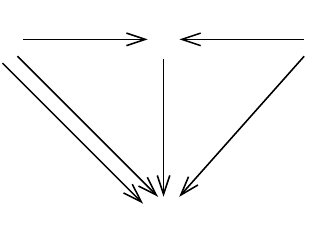}%
\end{picture}%
\setlength{\unitlength}{3947sp}%
\begingroup\makeatletter\ifx\SetFigFont\undefined%
\gdef\SetFigFont#1#2#3#4#5{%
  \reset@font\fontsize{#1}{#2pt}%
  \fontfamily{#3}\fontseries{#4}\fontshape{#5}%
  \selectfont}%
\fi\endgroup%
\begin{picture}(1508,1144)(4616,-2666)
\put(5461,-2086){\makebox(0,0)[lb]{\smash{{\SetFigFont{10}{12.0}{\rmdefault}{\mddefault}{\updefault}{\color[rgb]{0,0,0}$b$}%
}}}}
\put(5821,-2236){\makebox(0,0)[lb]{\smash{{\SetFigFont{10}{12.0}{\rmdefault}{\mddefault}{\updefault}{\color[rgb]{0,0,0}$a$}%
}}}}
\put(5089,-2086){\makebox(0,0)[lb]{\smash{{\SetFigFont{10}{12.0}{\rmdefault}{\mddefault}{\updefault}{\color[rgb]{0,0,0}$a$}%
}}}}
\put(4876,-2236){\makebox(0,0)[rb]{\smash{{\SetFigFont{10}{12.0}{\rmdefault}{\mddefault}{\updefault}{\color[rgb]{0,0,0}$b$}%
}}}}
\put(5401,-1741){\makebox(0,0)[b]{\smash{{\SetFigFont{10}{12.0}{\rmdefault}{\mddefault}{\updefault}{\color[rgb]{0,0,0}$q$}%
}}}}
\put(4951,-1645){\makebox(0,0)[b]{\smash{{\SetFigFont{10}{12.0}{\rmdefault}{\mddefault}{\updefault}{\color[rgb]{0,0,0}$\tau$}%
}}}}
\put(4671,-1741){\makebox(0,0)[rb]{\smash{{\SetFigFont{10}{12.0}{\rmdefault}{\mddefault}{\updefault}{\color[rgb]{0,0,0}$p_0$}%
}}}}
\put(6109,-1741){\makebox(0,0)[lb]{\smash{{\SetFigFont{10}{12.0}{\rmdefault}{\mddefault}{\updefault}{\color[rgb]{0,0,0}$p_1$}%
}}}}
\put(5401,-2611){\makebox(0,0)[b]{\smash{{\SetFigFont{10}{12.0}{\rmdefault}{\mddefault}{\updefault}{\color[rgb]{0,0,0}$0$}%
}}}}
\put(5776,-1645){\makebox(0,0)[b]{\smash{{\SetFigFont{10}{12.0}{\rmdefault}{\mddefault}{\updefault}{\color[rgb]{0,0,0}$\tau$}%
}}}}
\end{picture}%
}

\vspace{4mm}

\noindent
The processes $p_0$ and $p_1$ in the following LTS are rooted weakly bisimilar but not delay bisimilar.
The idea is that in a delay bisimulation the transition $p_0\trans{a}0$ cannot be mimicked by $p_1$; the only candidate $p_1\trans{a}q\trans{\tau}0$ fails because
$q$ cannot be related to $0$, while this would be required for a delay bisimulation.

\vspace{4mm}

\centerline{\begin{picture}(0,0)%
\includegraphics{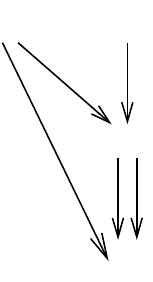}%
\end{picture}%
\setlength{\unitlength}{3947sp}%
\begingroup\makeatletter\ifx\SetFigFont\undefined%
\gdef\SetFigFont#1#2#3#4#5{%
  \reset@font\fontsize{#1}{#2pt}%
  \fontfamily{#3}\fontseries{#4}\fontshape{#5}%
  \selectfont}%
\fi\endgroup%
\begin{picture}(732,1342)(2764,-1431)
\put(2821,-236){\makebox(0,0)[b]{\smash{{\SetFigFont{10}{12.0}{\rmdefault}{\mddefault}{\updefault}{\color[rgb]{0,0,0}$p_0$}%
}}}}
\put(3376,-801){\makebox(0,0)[b]{\smash{{\SetFigFont{10}{12.0}{\rmdefault}{\mddefault}{\updefault}{\color[rgb]{0,0,0}$q$}%
}}}}
\put(3076,-436){\makebox(0,0)[lb]{\smash{{\SetFigFont{10}{12.0}{\rmdefault}{\mddefault}{\updefault}{\color[rgb]{0,0,0}$a$}%
}}}}
\put(3376,-1376){\makebox(0,0)[b]{\smash{{\SetFigFont{10}{12.0}{\rmdefault}{\mddefault}{\updefault}{\color[rgb]{0,0,0}$0$}%
}}}}
\put(3376,-236){\makebox(0,0)[b]{\smash{{\SetFigFont{10}{12.0}{\rmdefault}{\mddefault}{\updefault}{\color[rgb]{0,0,0}$p_1$}%
}}}}
\put(3439,-436){\makebox(0,0)[lb]{\smash{{\SetFigFont{10}{12.0}{\rmdefault}{\mddefault}{\updefault}{\color[rgb]{0,0,0}$a$}%
}}}}
\put(3481,-1036){\makebox(0,0)[lb]{\smash{{\SetFigFont{10}{12.0}{\rmdefault}{\mddefault}{\updefault}{\color[rgb]{0,0,0}$b$}%
}}}}
\put(3301,-1036){\makebox(0,0)[rb]{\smash{{\SetFigFont{10}{12.0}{\rmdefault}{\mddefault}{\updefault}{\color[rgb]{0,0,0}$\tau$}%
}}}}
\put(2926,-811){\makebox(0,0)[rb]{\smash{{\SetFigFont{10}{12.0}{\rmdefault}{\mddefault}{\updefault}{\color[rgb]{0,0,0}$a$}%
}}}}
\end{picture}%
}

\end{example}

\noindent
Our main aim is to develop congruence formats for both the rooted and the unrooted
versions of the two weak semantics defined in this section. These congruence formats will
impose syntactic restrictions on the transition rules (see Sect.~\ref{sec:sos}) that are
used to generate the underlying LTS. The congruence formats will be determined using the
characterising modal logics for these two weak semantics, which are presented in the next section.

\subsection{Modal logic}\label{sec:modal}

Behavioural equivalences can be characterised in terms of the observations that an experimenter could make during a session with a process. Modal logic captures such observations, with the aim to formulate properties of processes in an LTS. Following \cite{vGl93}, we extend Hennessy-Milner logic \cite{HM85} with the modal connective $\eps\phi$, expressing that a process can perform zero or more $\tau$-transitions to a process where $\phi$ holds.

\begin{definition}\label{def:formulas}
{\rm
The class $\mathbb{O}$ of \emph{modal formulas} is defined as follows, where $I$ ranges over all index sets:
\[
\mathbb{O} \qquad \phi ::= \mbox{$\bigwedge$}_{i\in I}\phi_i ~|~ \neg\phi ~|~ \diam{\alpha}\phi ~|~ \eps\phi\enspace .
\]
}
\end{definition}

\noindent
$p\models\phi$ denotes that $p$ satisfies $\phi$. By definition, $p\models\diam{\alpha}\phi$ if
\plat{$p\trans{\alpha}p'$} for some $p'$ with $p'\models\phi$, and $p\models\eps\phi$ if
\plat{$p\epsarrow p'$} for some $p'$ with $p'\models\phi$. We use abbreviations $\top$ for the empty
conjunction and $\phi_1\land\phi_2$ for $\bigwedge_{i\in\{1,2\}}\phi_i$.
We write $\phi\equiv\phi'$ if $p\models\phi\Leftrightarrow
p\models\phi'$ for any process $p$ in any LTS.

A modal characterisation of an equivalence on processes consists of a class $C$ of modal formulas such that two processes are equivalent if and only if they satisfy the same formulas in $C$. Hennessy-Milner logic is a modal characterisation of bisimilarity. We now introduce modal characterisations for (unrooted and rooted) delay and weak bisimilarity.

\begin{definition}\label{def:modal-characterisations}
{\rm
The subclasses $\IO{e}$ and $\IO{re}$ of $\mathbb{O}$, for $e\in\{d,w\}$, are defined as follows:
\begin{align*}
&\IO{d}       && \phi ::= \mbox{$\bigwedge$}_{i\in I}\phi_i ~|~ \neg\phi ~|~ \eps\phi ~|~ \eps\diam{a}\phi \\
&\IO{rd}      && \phi ::= \mbox{$\bigwedge$}_{i\in I}\phi_i ~|~ \neg\phi ~|~ \eps\diam{\alpha}\hat\phi ~|~ \hat\phi ~(\hat\phi\in\IO{d}) \\
\\
&\IO{w}       && \phi ::= \mbox{$\bigwedge$}_{i\in I}\phi_i ~|~ \neg\phi ~|~ \eps\phi ~|~ \eps\diam{a}\eps\phi \\
&\IO{rw}      && \phi ::= \mbox{$\bigwedge$}_{i\in I}\phi_i ~|~ \neg\phi ~|~ \eps\diam{\alpha}\eps\hat\phi ~|~ \hat\phi ~(\hat\phi\in\IO{w}).
\end{align*}
In these definitions, $a$ ranges over $A$ and $\alpha$ over $A\cup\{\tau\}$. The classes $\IO{e}^\equiv$ and $\IO{re}^\equiv$ denote the closures of $\IO{e}$, respectively $\IO{re}$, under $\equiv$.
}
\end{definition}
\noindent
The last clause in the definition of $\IO{re}$ guarantees that $\IO{e}
\subset \IO{re}$, which will be needed in the proof of
Prop.~\ref{prop:rooted_delay_preservation}. If this clause were omitted,
it would still follow that $\IO{e}^\equiv\subset\IO{re}^\equiv$,
using structural induction together with $\eps\phi\equiv\phi\vee\eps\diam{\tau}\phi$ (for $e=d$)
or $\eps\phi\equiv\phi\vee\eps\diam{\tau}\eps\phi$ (for $e=w$).

For $L\subseteq\mathbb{O}$, we write $p\sim_L q$ if $p$ and $q$ satisfy the same formulas in $L$. Note that, trivially, $p\sim_{\IO{e}}q \Leftrightarrow p\sim_{\IO{e}^\equiv}q$ and $p\sim_{\IO{re}}q \Leftrightarrow p\sim_{\IO{re}^\equiv}q$.

\begin{theorem}\label{thm:characterisation}
$p\bis{e}q\Leftrightarrow p\sim_{\IO{e}}q$ and $p\bis{re}q\Leftrightarrow p\sim_{\IO{re}}q$, for all $p,q\in\mathbb{P}$, where $e\in\{d,w\}$.
\end{theorem}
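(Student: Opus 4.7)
The statement is a standard Hennessy--Milner-style modal characterisation result for delay and weak bisimilarity, essentially folklore and going back to van Glabbeek~\cite{vGl93}. My plan is to prove the four implications -- for $e\in\{d,w\}$, the unrooted direction $p\bis{e}q\Leftrightarrow p\sim_{\IO{e}}q$ and the rooted direction $p\bis{re}q\Leftrightarrow p\sim_{\IO{re}}q$ -- in the usual pattern: \emph{soundness} (the left-to-right implications) by structural induction on the modal formula, and \emph{completeness} (the right-to-left implications) by showing that the equivalence on the right is itself a (rooted) delay/weak bisimulation, extracting a distinguishing formula whenever a matching transition fails to exist. I would prove the unrooted case first and then bootstrap to the rooted case, exploiting $\IO{e}\subset\IO{re}$ and $\bis{re}\subseteq\bis{e}$ (the latter follows by observing that $R\cup\bis{e}$ is a delay/weak bisimulation whenever $R$ is a rooted one).

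For soundness of the unrooted case, the only interesting cases are $\eps\phi$ and $\eps\diam{a}\phi$ (respectively $\eps\diam{a}\eps\phi$ for $e=w$). If $p\bis{e}q$ and $p\models\eps\phi$, decompose $p\epsarrow p'$ and iterate the bisimulation clause step by step -- each single $\tau$-step either keeps the pair related or is matched by a nonempty $\tau$-chain on the $q$-side -- to obtain $q\epsarrow q''$ with $p'\bis{e}q''$; conclude by the induction hypothesis. For $\eps\diam{a}\phi$, decompose $p\epsarrow p_1\trans{a}p'$, mimic the $\epsarrow$-part as above to $q\epsarrow q_1$ with $p_1\bis{e}q_1$, then apply the bisimulation clause to the $a$-step, yielding $q_1\epsarrow\trans{a}q''$ (or $q_1\epsarrow\trans{a}\epsarrow q''$ for $e=w$) with $p'\bis{e}q''$; concatenate the $\epsarrow$-prefixes and invoke the induction hypothesis. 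Conjunction is immediate and negation uses symmetry of $\bis{e}$.

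For completeness of the unrooted case, I claim $\sim_{\IO{e}}$ is itself a delay (resp.\ weak) bisimulation. Suppose $p\sim_{\IO{e}}q$ and $p\trans{\alpha}p'$, and assume toward a contradiction that the required matching target is missing: for $e=d$ and $\alpha=\tau$ no $q''$ with $q\epsarrow q''$ and $p'\sim_{\IO{d}}q''$; for $e=d$ and $\alpha=a$ no $q''$ with $q\epsarrow\trans{a}q''$; and analogously $q\epsarrow q''$ or $q\epsarrow\trans{a}\epsarrow q''$ for $e=w$. For every $q''$ in the relevant candidate set, pick $\phi_{q''}\in\IO{e}$ with $p'\models\phi_{q''}$ and $q''\not\models\phi_{q''}$ (negating if necessary). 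Since $\mathbb{O}$ permits conjunctions over arbitrary index sets, $\phi:=\bigwedge_{q''}\phi_{q''}$ lies in $\IO{e}$, is satisfied by $p'$, and is refuted by every candidate $q''$. Prepending the appropriate modality ($\eps\phi$, $\eps\diam{a}\phi$, or $\eps\diam{a}\eps\phi$) yields a formula in $\IO{e}$ satisfied by $p$ but not by $q$, contradicting $p\sim_{\IO{e}}q$.

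The rooted case follows the same template. Soundness splits on the grammar of $\IO{re}$: for $\hat\phi\in\IO{e}$ apply unrooted soundness together with $\bis{re}\subseteq\bis{e}$; for $\eps\diam{\alpha}\hat\phi$, decompose $p\epsarrow p_1\trans{\alpha}p''$, handle the leading $\tau$-step (if any) with the rooted clause, use unrooted soundness for the remaining $\tau$-chain and the $\alpha$-step, and conclude with the unrooted-soundness IH on $\hat\phi\in\IO{e}$. For completeness, if $p\sim_{\IO{re}}q$, $p\trans{\alpha}p'$, and no $q'$ with $q\epsarrow\trans{\alpha}q'$ and $p'\bis{e}q'$ exists, then using the already established unrooted equivalence $\bis{e}\,=\,\sim_{\IO{e}}$ pick $\hat\phi_{q'}\in\IO{e}$ separating $p'$ from each such $q'$, set $\hat\phi:=\bigwedge_{q'}\hat\phi_{q'}\in\IO{e}$, and observe that $\eps\diam{\alpha}\hat\phi\in\IO{re}$ distinguishes $p$ from $q$. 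The one place that needs care -- but is not really an obstacle -- is checking that the iterated mimicking of a $\tau$-chain on the $q$-side really collapses into a single $\epsarrow$-step and that all distinguishing conjunctions remain within the prescribed sublogics $\IO{e}$ and $\IO{re}$; both are straightforward once the grammars of Def.~\ref{def:modal-characterisations} are respected.
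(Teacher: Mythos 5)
Your proposal is correct and follows essentially the same route as the paper: soundness of the characterisations by structural induction on the formula, completeness by showing that $\sim_{\IO{e}}$ (resp.\ $\sim_{\IO{re}}$) is itself a (rooted) delay/weak bisimulation via arbitrary conjunctions of distinguishing formulas, with the rooted case bootstrapped from the unrooted one using $\IO{e}\subset\IO{re}$ and $\bis{re}\subseteq\bis{e}$. The only cosmetic difference is the $\alpha=\tau$ subcase of unrooted completeness, where you fold the ``stay put'' option into the candidate set $\{q''\mid q\epsarrow q''\}$ and argue by contradiction, whereas the paper treats that case separately and forces a nonempty $\tau$-sequence with an extra conjunct $\tilde\phi$; both handle the subtlety correctly.
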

The first statement is a restatement of standard results from \cite{HM85,Mil81,Mil89,vGl93}; the
second statement is an easy corollary, obtained in the same vein.
Moreover, a proof for the case $e=w$ is presented in Appendix~\ref{app:modal}.
The proof for the case $e=d$ is similar.

\subsection{Structural operational semantics}\label{sec:sos}

A \emph{signature} is a set $\Sigma$ of function symbols $f$ with arity $\ar(f)$.
Let $V$ be an infinite set of variables, with typical elements
$x,y,z$; we always take $|\Sigma|, |A| \leq |V|$.
A syntactic object is {\em closed} if it does not contain any variables.
The set $\mathbb{T}(\Sigma)$ of terms over $\Sigma$ and $V$ is defined as
usual; $t,u,v,w$ denote terms and $\var(t)$ is the set of variables
that occur in term $t$.
A term is {\em univariate} if it is without multiple occurrences of the same variable.
 A substitution $\sigma$ is a partial function
from $V$ to $\mathbb{T}(\Sigma)$. A closed substitution is a total
function from $V$ to closed terms. The domain of substitutions is
extended to $\mathbb{T}(\Sigma)$ as usual.

Structural operational semantics \cite{Plo04} provides process algebras and specification languages with an interpretation. It generates an LTS, in which processes are the closed terms over a (single-sorted, first-order) signature, and transitions between processes may be supplied with labels. The transitions between processes are obtained from a transition system specification, which consists of a set of proof rules called transition rules.

\begin{definition}\label{def:TSS}\rm
A (\emph{positive} or \emph{negative}) \emph{literal} is an expression
\plat{$t \trans\alpha u$} or \plat{$t\ntrans\alpha$}. A
\emph{(transition) rule} is of the form $\frac{H}{\lambda}$ with $H$ a
set of literals called the \emph{premises}, and $\lambda$ a literal
called the \emph{conclusion}; the term at the left-hand
side of $\lambda$ is called the \emph{source} of
the rule.
Given a transition rule $\frac{H}{\lambda}$, write $H^+$ for the set of
positive premises in $H$, and $H^{s-}$ for the set of \emph{stable}
negative premises in $H$: those premises \plat{$t \ntrans\alpha$} for which
also the premise $t \ntrans\tau$ is in $H$.
With ${\it rhs}(H)$ we denote the set of
right-hand sides of the premises in $H^{+}$.
A rule $\frac{\emptyset}{\lambda}$ is also written $\lambda$.
A rule is {\em standard} if it has a positive conclusion,
and {\em positive} if moreover it has only positive premises.
A {\em transition system specification (TSS)}, written $(\Sigma,R)$,
consists of a signature $\Sigma$ and a collection $R$ of transition rules over $\Sigma$.
A TSS is {\em standard} or {\em positive} if all its rules are.
\end{definition}

\noindent
The following definition tells when a literal is provable from a TSS. It generalises the standard definition (see e.g.\ \cite{GV92}) by allowing the derivation of transition rules. The derivation of a literal $\lambda$ corresponds to the derivation of the transition rule $\frac{H}{\lambda}$ with $H=\emptyset$. The case $H \neq \emptyset$ corresponds to the derivation of $\lambda$ under the assumptions $H$.

\begin{definition}\label{def:proof}
{\rm
Let $P=(\Sigma,R)$ be a TSS. An {\em irredundant proof} from $P$ of a transition rule $\frac{H}{\lambda}$ is a well-founded tree with the nodes labelled by literals and some of the leaves marked ``hypothesis'', such that the root has label $\lambda$, $H$ is the set of labels of the hypotheses, and if $\mu$ is the label of a node that is not a hypothesis and $K$ is the set of labels of the children of this node then $\frac{K}{\mu}$ is a substitution instance of a transition rule in $R$.
}
\end{definition}
\noindent
The proof of $\frac{H}{\lambda}$ is called irredundant \cite{BFvG04} because $H$ must equal (instead of include)
the set of labels of the hypotheses. Irredundancy will be crucial for the preservation under provability of our congruence formats;
see Sect.~\ref{sec:preservation}. Namely, in a `redundant' proof one can freely add premises to the derived rule,
so also a premise that violates a syntactic restriction of the congruence format under consideration.

A TSS is meant to specify an LTS in which the transitions are closed positive literals. A standard TSS with only positive premises
specifies an LTS in a straightforward way, but it is not so easy to associate an LTS to a TSS with negative premises.
From \cite{vGl04} we adopt the notion of a well-supported proof of a closed literal.
Literals \plat{$t \trans{\alpha} u$} and \plat{$t\ntrans{\alpha}$} are said to \emph{deny} each other.

\begin{definition}\label{def:wsp}
{\rm
Let $P=(\Sigma,R)$ be standard TSS. A \emph{well-supported proof} from $P$ of a closed literal ${\lambda}$ is a well-founded tree with the nodes labelled by closed literals, such that the root is labelled by $\lambda$, and if $\mu$ is the label of a node and $K$ is the set of labels of the children of this node, then: 
\begin{enumerate}
\item either $\mu$ is positive and $\frac{K}{\mu}$ is a closed substitution instance of a transition rule in $R$; 
\item or $\mu$ is negative and for each set $N$ of closed negative literals with $\frac{N}{\nu}$ irredundantly provable
from $P$ and $\nu$ a closed positive literal that denies $\mu$, a literal in $K$ denies one in $N$. 
\end{enumerate}
$P\vdash_{\it ws}\lambda$ denotes that a well-supported proof from $P$ of $\lambda$ exists.
A standard TSS $P$ is \emph{complete} if for each $p$ and $\alpha$, either \plat{$P\vdash_{\it ws}p\ntrans\alpha$} or
\plat{$P\vdash_{\it ws}p\trans\alpha p'$} for some $p'$.
}
\end{definition}
\noindent
In \cite{vGl04} it was shown that $\vdash_{\it ws}$ is consistent, in the sense that no standard TSS
admits well-supported proofs of two literals that deny each other.
A complete TSS specifies an LTS, consisting of the {\it ws}-provable closed positive literals.

\subsection{Syntactic restrictions on transition rules}\label{sec:ntytt}

In this section we present terminology for syntactic restrictions on rules,
originating from \cite{BFvG04,Gro93,GV92}, where congruence formats are presented for
a range of concrete semantics (which do not take into account the internal action $\tau$).

\begin{definition}\label{def:ntytt}
{\rm
An \emph{ntytt rule} is a transition rule in which the right-hand sides of positive premises are variables that
are all distinct, and that do not occur in the source. An ntytt rule is an \emph{ntyxt rule} if its
source is a variable, an \emph{ntyft rule} if its source contains exactly one function symbol and no
multiple occurrences of variables, an \emph{nxytt rule} if the left-hand sides of its premises
are variables, and an \emph{xyntt rule} if the left-hand sides of its positive premises
are variables. An \emph{xynft} rule is both ntyft and xyntt.
}
\end{definition} 

\begin{definition}\label{def:decent}
{\rm
A variable in a transition rule is \emph{free} if it occurs neither in the source nor in right-hand sides of premises. A transition rule has \emph{lookahead} if some variable occurs in the right-hand side of a premise and in the left-hand side of a premise. A transition rule is \emph{decent} if it has no lookahead and does not contain free variables.
}
\end{definition}

\noindent
Each combination of syntactic restrictions on transition rules
induces a corresponding syntactic format for TSSs of the same name.
For instance, a TSS is in decent ntyft format if it contains
decent ntyft rules only.

The following lemma, on the preservation of decency under irredundant provability, was proved in \cite{BFvG04}.

\begin{lemma}
\label{lem:preservation_decency}
Let $P$ be a TSS in decent ntytt format. Then any ntytt rule irredundantly provable from $P$ is decent.
\end{lemma}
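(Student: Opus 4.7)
The plan is to proceed by induction on the height of an irredundant proof tree $\pi$ of the ntytt rule $\frac{H}{\lambda}$ from $P$, establishing both parts of decency: no free variables and no lookahead. In the base case where $\pi$ is a single node, either the node is a hypothesis (so $H = \{\lambda\}$ and the trivial rule $\frac{\{\lambda\}}{\lambda}$ is decent by the ntytt assumption alone) or it instantiates a premise-free rule of $P$, and decency follows immediately from that of $P$.

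For the inductive step, let the root of $\pi$ apply a rule $r = \frac{K}{\mu} \in P$ under substitution $\sigma$, so that $\sigma(\mu) = \lambda$ and each child is labelled $\sigma(k)$ for some $k \in K$, either a hypothesis (in $H$) or the root of an immediate subproof $\pi_k$. Since $r$ is decent ntytt, $\var(r) \subseteq \var(\text{source}(r)) \cup \{y_k \mid k \in K^+\}$, where $y_k$ is the variable RHS of the positive premise $k$. Applying $\sigma$ gives $\var(\lambda) \cup \var(\sigma(K)) \subseteq \var(\text{source}(\lambda)) \cup \bigcup_{k \in K^+} \var(\sigma(y_k))$. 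For each $k \in K^+$, either $\sigma(k) \in H$, in which case $\sigma(y_k)$ occurs as the RHS of a positive hypothesis, or the subproof $\pi_k$ takes over. Because subproofs need not themselves prove ntytt rules (their conclusions may have non-variable RHSs), I would strengthen the invariant to: every variable appearing anywhere in any literal of $\pi$ lies in $\var(\text{source}(\lambda)) \cup V_R$, with $V_R$ the set of RHS variables of positive hypotheses in $H$. This strengthened invariant is preserved by the analysis above and yields the no-free-variables part for $\frac{H}{\lambda}$.

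For no lookahead I would use the ntytt assumption on $\frac{H}{\lambda}$ itself: the variables in $V_R$ are distinct and disjoint from $\var(\text{source}(\lambda))$. Every LHS variable of a premise in $H$ lies in $\var(\text{source}(\lambda)) \cup V_R$ by the invariant, so it suffices to rule out LHS occurrences of elements of $V_R$. A variable in $V_R$ is introduced at a single hypothesis leaf as $\sigma(y_k)$ for some positive premise $k$ of a decent ntytt rule in $P$ — and is therefore fresh with respect to the local source. Irredundancy is essential here: it ensures that each hypothesis is genuinely used and that such a fresh variable cannot recur as the LHS of another premise without violating either the distinctness of RHSs required by the ntytt assumption on the derived rule or the freshness of $\sigma(y_k)$ with respect to the source of its parent rule.

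The main obstacle I anticipate is formulating the strengthened inductive invariant so that it survives subproofs whose conclusions are not ntytt; this is precisely why one cannot apply the lemma itself recursively to each $\pi_k$. Tracking where each variable first enters the proof tree — either as a source variable at the root or as a fresh RHS of a positive premise of some $P$-rule — is the technical heart of the argument, and irredundancy is what prevents such fresh variables from being duplicated in ways that would create either free variables or lookahead in $\frac{H}{\lambda}$.
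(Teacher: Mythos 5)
There is a genuine gap, and it starts with the premise that forces you off the direct route. You claim the subproofs ``need not themselves prove ntytt rules (their conclusions may have non-variable RHSs)'', but Def.~\ref{def:ntytt} places no restriction on the conclusion of an ntytt rule: it only requires the right-hand sides of the \emph{positive premises} to be distinct variables not occurring in the source. In fact the subproof below each positive premise of the bottom rule $r=\frac{K}{\mu}$ proves a rule $\frac{H_k}{\sigma(v_k)\trans{\beta_k}\sigma(y_k)}$ whose premises $H_k\subseteq H$ already have distinct variable right-hand sides, and these do not occur in $\sigma(v_k)$: decency of $r$ (no free variables together with no lookahead) gives $\var(v_k)\subseteq\var(\mathrm{source}(r))$, hence $\var(\sigma(v_k))\subseteq\var(\mathrm{source}(\lambda))$, which is disjoint from ${\it rhs}(H)$ by the ntytt assumption on the derived rule. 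So the subproofs do prove ntytt rules, the induction hypothesis applies to them directly, and your strengthened invariant is not needed at all; this is exactly the pattern used in the paper for the analogous Lem.~\ref{lem:preservation_smooth} (the lemma itself is imported from the earlier paper, where it is proved the same way).

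More importantly, the half of decency you must actually work for---no lookahead---is asserted rather than proved. Your invariant only bounds every variable of the tree by $\var(\mathrm{source}(\lambda))\cup V_R$; it does not exclude a variable of $V_R$ from occurring in the \emph{left-hand side} of some premise of $H$, and the appeals you make do not close this: irredundancy only guarantees that $H$ is exactly the set of hypothesis labels, distinctness of the right-hand sides says nothing about left-hand sides, and freshness of $y_k$ with respect to the \emph{uninstantiated} source of its parent rule does not imply freshness of the variable $\sigma(y_k)$ with respect to the instantiated terms elsewhere in the tree. What is needed is the stronger invariant that the left-hand side of every literal in the tree contains only variables of $\var(\mathrm{source}(\lambda))$; this follows by the same induction, using that in a decent rule the left-hand sides of premises contain only source variables, and then disjointness from ${\it rhs}(H)$ is precisely the ntytt condition on the derived rule. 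With that amendment (or simply with the direct structural induction sketched above) the argument goes through; as written, the lookahead case is a gap.
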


\noindent
We define two more syntactic formats for TSSs. The ntyft/ntyxt and ready simulation formats \cite{Gro93,BFvG04}
were originally introduced to guarantee congruence for bisimilarity and ready simulation.

\begin{definition}\label{def:ready_sim}
{\rm
A TSS is in \emph{ntyft/ntyxt format} if it consists of ntyft and ntyxt rules,
and in \emph{ready simulation format} if moreover its transition rules have no lookahead.
}
\end{definition}

\subsection{Patience rules}\label{sec:predicates}

We introduce some terminology for predicates on arguments of function symbols from \cite{Blo95,BFvG04}.

\begin{definition}\label{def:liquid/frozen}
{\rm
Let $\Gamma$ be a unary predicate on $\{(f,i)\mid 1 \leq i \leq ar(f),~f \in \Sigma \}$. If $\Gamma(f,i)$, then argument $i$ of $f$ is {\em $\Gamma$-liquid}; otherwise it is {\em $\Gamma$-frozen}. An occurrence of $x$ in $t$ is {\em $\Gamma$-liquid} if either $t=x$, or $t=f(t_1,\ldots,t_{\ar(f)})$ and the occurrence is $\Gamma$-liquid in $t_i$ for a liquid argument $i$ of $f$; otherwise the occurrence is {\em $\Gamma$-frozen}.
}
\end{definition}

\noindent
Note that an occurrence of a variable $x$ in a term $t \in \mathbb{T}(\Sigma)$ is $\Gamma$-frozen if and only if $t$ contains a subterm $f(t_1,\ldots,t_{\ar(f)})$ such that the occurrence of $x$ is in $t_i$ for a $\Gamma$-frozen argument $i$ of $f$.

In Sect.~\ref{sec:decomposition} we will present a method for
decomposing modal formulas that gives a special treatment to arguments
of function symbols that are deemed \emph{patient}; we will use a
predicate $\Gamma$ to mark the arguments that get this special treatment.

\begin{definition}\label{def:patience_rule}
{\rm \cite{Blo95,Fok00}
A standard ntyft rule is a {\em patience rule} for argument $i$ of $f$
if it is of the form 
\[
\frac{x_i\trans{\tau}y}{f(x_1,\ldots,x_{\ar(f)})\trans{\tau}f(x_1,\ldots,x_{i-1},y,x_{i+1},\ldots,x_{\ar(f)})}
\]
Given a predicate $\Gamma$, the rule above is called a {\em $\Gamma$-patience rule},
if $\Gamma(f,i)$.
A TSS is \emph{$\Gamma$-patient} if it contains all $\Gamma$-patience rules.
A standard ntytt rule is {\em $\Gamma$-patient} if it is
irredundantly provable from the $\Gamma$-patience rules; else it is called {\em $\Gamma$-impatient}.
}
\end{definition}

\noindent
A patience rule for an argument $i$ of a function symbol $f$ expresses that terms
$f(p_1,\ldots,p_n)$ can mimic the $\tau$-transitions of argument $p_i$.
Typically, in process algebra, there are patience rules for both arguments of the merge operator
and for the first argument of sequential composition, as they can contain running processes,
but not for the arguments of alternative composition or for the second argument of sequential composition.

\subsection{Ruloids}\label{sec:ruloids}

To decompose modal formulas, we use a result from \cite{BFvG04},
where for any standard TSS $P$ in ready simulation format
a collection of decent nxytt rules, called {\it $P$-ruloids}, is constructed.
We explain this construction at a rather superficial level;
the precise transformation can be found in \cite{BFvG04}.

First $P$ is converted to a standard TSS $P^\dagger$ in decent ntyft format.
In this conversion from \cite{GV92}, free variables in a rule are
replaced by all closed terms (generating a different rule for each substitution),
and if the source is of the form $x$, then
this variable is replaced by a term $f(x_1,\ldots,x_{\ar(f)})$
for each function symbol $f$ in the signature of $P$,
where the variables $x_1,\ldots,x_{\ar(f)}$ are fresh.

Next, using a construction from \cite{FvG96}, left-hand sides of
positive premises are reduced to variables. Roughly
the idea is, given a premise \plat{$f(t_1,\ldots,t_n)\trans{\alpha}y$}
in a rule $r$, and another rule
$\frac{H}{f(x_1,\ldots,x_n)\trans{\alpha}t}$, to transform $r$ by
replacing the aforementioned premise by $H$, $y$ by $t$, and the $x_i$
by the $t_i$; this is repeated (transfinitely) until all positive
premises with a non-variable term as left-hand side have disappeared.
This yields an intermediate standard TSS $P^\ddagger$ in xynft format, of which all the
rules are irredundantly provable from $P$. In fact, the rules of $P^\ddagger$ are exactly the
xynft rules irredundantly provable from $P^\dagger$.
The motivation for this transformation step is that for TSSs in xynft format the semantic phrase
``for each set $N$ of closed negative literals with $\frac{N}{\nu}$ irredundantly provable
from $P$'' in the second clause of Def.~\ref{def:wsp} of a well-supported proof
can be replaced by a syntactic phrase: ``for each closed substitution instance
$\frac{N}{\nu}$ of a rule in $R$''.

In the final transformation step, non-standard rules with a negative
conclusion \plat{$t\ntrans\alpha$} are introduced.  The motivation is
that instead of the notion of well-founded provability of
Def.~\ref{def:wsp}, we want a more constructive notion like
Def.~\ref{def:proof}, by making it possible that a negative premise is
matched with a negative conclusion. A non-standard rule
$\frac{H}{f(x_1,\ldots,x_n)\ntrans{\alpha}}$ is obtained by picking
one premise from each xynft rule in $P^\ddagger$ with a conclusion of the form
\plat{$f(x_1,\ldots,x_n)\trans{\alpha}t$}, and including the denial of
each of the selected premises as a premise in $H$.

The resulting TSS, which is in decent ntyft format, is denoted by $P^+$.
The above construction implies that if $P$ is $\Gamma$-patient, then so is $P^+$.
In \cite{BFvG04} it was established, for all closed literals $\mu$, that $P\vdash_{\it ws}\mu$
if and only if $\mu$ is irredundantly provable from $P^+$. By definition,
the \emph{$P$-ruloids} are the (decent) nxytt rules irredundantly provable from $P^+$.

The following correspondence result from
\cite{BFvG04} between a TSS and its ruloids is crucial for the soundness of the
decomposition method presented in Sect.~\ref{sec:decomposition}. It says that there is a well-supported proof
from $P$ of a transition \plat{$\rho(t) \trans{a}q$}, with $\rho$ a closed substitution,
if and only if there is a derivation of this transition that uses at the root a $P$-ruloid with source $t$.

\begin{proposition}\label{prop:ruloid}
Let $P=(\Sigma,R)$ be a standard TSS in ready simulation format,
$t\in\mathbb{T}(\Sigma)$ and $\rho: V \rightarrow \mathbb{T}(\Sigma)$ a closed substitution.
Then $P\vdash_{\it ws}\rho(t)\trans\alpha q$ if and only if there
are a $P$-ruloid $\frac{H}{t\trans\alpha u}$ and a {\valuation}
$\rho'$ such that $P\vdash_{\it ws}\rho'(\mu)$ for all $\mu\in H$,
$\rho'(t)=\rho(t)$ and $\rho'(u)=q$.
\end{proposition}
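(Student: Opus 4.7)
The proposition is a restatement of a correspondence theorem from \cite{BFvG04}, so the plan is to reduce it to the machinery already described in Sect.~\ref{sec:ruloids}. The two key facts to exploit are: (i) the $P$-ruloids are by definition the decent nxytt rules irredundantly provable from $P^+$, and (ii) for any closed literal $\mu$, $P\vdash_{\it ws}\mu$ if and only if $\mu$ is irredundantly provable from $P^+$ (established in \cite{BFvG04}). Together these reduce both directions to manipulations of irredundant proof trees in the decent ntyft TSS $P^+$.

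For the ``if'' direction, given a $P$-ruloid $\frac{H}{t\trans\alpha u}$ and a closed substitution $\rho'$ extending $\rho$ on $\var(t)$ with $P\vdash_{\it ws}\rho'(\mu)$ for every $\mu\in H$ and $\rho'(u)=q$, I would first apply $\rho'$ throughout the irredundant $P^+$-proof tree of $\frac{H}{t\trans\alpha u}$ to obtain an irredundant $P^+$-proof of $\rho(t)\trans\alpha q$ from the hypotheses $\rho'(H)$. By fact (ii) above (and the hypothesis that each $\rho'(\mu)$ is well-supported), each hypothesis $\rho'(\mu)$ admits an irredundant $P^+$-proof. Splicing these into the hypothesis leaves of the first tree produces an irredundant $P^+$-proof of $\rho(t)\trans\alpha q$ with no open hypotheses, and fact (ii) then yields $P\vdash_{\it ws}\rho(t)\trans\alpha q$.

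For the ``only if'' direction, given $P\vdash_{\it ws}\rho(t)\trans\alpha q$, use (ii) to obtain an irredundant $P^+$-proof tree $T$ of $\rho(t)\trans\alpha q$. The plan is to factor $T$ along the syntactic structure of $t$: by induction on $t$, descend through the function symbols occurring in $t$, at each level selecting a substitution instance of a $P^+$-rule matching the current subterm $\rho(t')$, and recursively factoring the subtrees that derive its positive premises whose left-hand sides come from subterms of $t$. Subtrees rooted at variables of $t$, together with the negative premises whose left-hand sides originate in $\var(t)$, are set aside as hypotheses. The derivation skeleton so extracted is an irredundant $P^+$-proof of some nxytt rule $\frac{H}{t\trans\alpha u}$ with source $t$ (so a $P$-ruloid), and the substitution $\rho'$ is defined to agree with $\rho$ on $\var(t)$ and on the fresh right-hand side variables introduced by the positive premises of $H$ to be the actual derived targets in $T$; this ensures $\rho'(u)=q$ and $P\vdash_{\it ws}\rho'(\mu)$ for every $\mu\in H$.

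The main obstacle is the factorisation in the second direction: one has to argue that the extracted rule is genuinely nxytt (left-hand sides of positive premises are single variables, distinct, not occurring in $t$) and that the extraction yields an \emph{irredundant} proof, i.e.\ the hypothesis set is exactly $H$ with nothing redundant added. Irredundancy is where the absence of lookahead (ready simulation format) and the decent ntyft shape of $P^+$ come in: they guarantee that the recursion terminates with premises whose left-hand sides are variables in $t$ or freshly introduced during composition, so the bookkeeping of which leaves become hypotheses is unambiguous. All remaining steps are straightforward inductions on proof trees, so I would not grind through them and instead cite the detailed treatment in \cite{BFvG04}.
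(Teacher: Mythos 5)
The paper does not prove Prop.~\ref{prop:ruloid} at all: it is imported from \cite{BFvG04}, with Sect.~\ref{sec:ruloids} merely recalling the construction of $P^+$ and the facts that $P$-ruloids are the decent nxytt rules irredundantly provable from $P^+$ and that $P\vdash_{\it ws}\mu$ iff $\mu$ is irredundantly provable from $P^+$. Your sketch --- composing a substitution instance of the $P^+$-proof of the ruloid with closed $P^+$-proofs of its instantiated (positive and negative) premises for the ``if'' direction, and factoring a closed $P^+$-proof along the structure of $t$ for the ``only if'' direction, deferring the detailed factorisation (which, note, must also pass through negative literals with non-variable left-hand sides via the non-standard rules of $P^+$) to \cite{BFvG04} --- is a faithful outline of the argument behind that citation, so in effect you take the same route as the paper.
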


\subsection{Linear proofs}\label{sec:linearity}

\begin{definition}\rm
An irredundant proof of a transition rule $\frac{H}{\lambda}$ is called \emph{linear}
if no two hypotheses in the proof tree of Def~\ref{def:proof} are labelled with the same positive premise.
\end{definition}

\begin{example}
From the TSS with rules
\[\frac{x\trans a y}{f(x)\trans b x} \qquad \frac{x\trans a y}{f(x)\trans c x} \qquad \frac{x\trans b y \quad x\trans c z}{g(x)\trans d x}\]
the ntytt rule $\frac{x\trans a y}{g(f(x))\trans d x}$ is irredundantly provable, but not with a linear proof.
However, the ntytt rule $\frac{x\trans a y \quad x\trans a z}{g(f(x))\trans d x}$ has a linear proof.
\end{example}
Clearly, each ntytt rule provable from a TSS is a substitution instance of an ntytt rule that has a
linear proof.
In Def.~\ref{def:wsp} it does not make any difference whether in clause 2 we quantify over rules
$\frac{N}{\nu}$ that are provable (as in \cite{vGl04,BFvG04,FvGdW03}), irredundantly provable (as in
\cite{FvGdW12}), or linearly provable.

\begin{lemma}\label{linear proof composition}
If a rule \plat{$\frac{H}{\lambda}$} is linearly provable from a TSS $P$, then so
is \nietplat{$\frac{\sigma(H)}{\sigma(\lambda)}$} for any substitution $\sigma$.

If a rule \nietplat{$\frac{ H}{\lambda}$} as well as rules $\frac{ H_\mu}{\mu}$ for each
literal $\mu\mathbin\in  H$ are linearly provable, with the $H_\mu$ pairwise disjoint,
then so is the rule \nietplat{$\frac{\bigcup_{\mu\in  H} H_\mu}{\lambda}$}.
\end{lemma}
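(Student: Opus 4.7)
The plan is to prove both statements by structural induction on proof trees.

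For part (i), given a linear proof $\pi$ of $\frac{H}{\lambda}$, I would apply $\sigma$ label-wise throughout $\pi$. If a non-hypothesis node of $\pi$ uses a transition rule $\frac{K'}{\mu'}\in R$ via substitution $\tau$, then the corresponding node in the transformed tree uses the same rule via the composed substitution $\sigma\circ\tau$, so the rule-instance condition at every internal node is preserved and the result is an irredundant proof of $\frac{\sigma(H)}{\sigma(\lambda)}$. The hypothesis leaves of the transformed tree are in label-respecting correspondence with those of $\pi$, and so --- after possibly merging sibling children that collapse to a common label under $\sigma$, which is always allowed since it does not alter the hypothesis set --- each positive premise of $\sigma(H)$ is labelled by a unique hypothesis leaf, giving a linear proof.

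For part (ii), the construction is to graft: at every hypothesis leaf of $\pi$ labelled by some $\mu\in H$, insert a fresh copy of the linear proof $\pi_\mu$. The resulting tree has root label $\lambda$, every internal node still instantiates a transition rule of $P$ (since the internal structure of each graft is undisturbed), and pairwise disjointness of the $H_\mu$ ensures that its set of hypothesis labels is exactly $\bigcup_{\mu\in H}H_\mu$. Linearity of the grafted tree rests on three ingredients: first, by linearity of $\pi$, each positive $\mu\in H$ labels a unique hypothesis leaf of $\pi$, so the corresponding $\pi_\mu$ is grafted at a single location for each positive $\mu$; second, each $\pi_\mu$ is internally linear, preventing positive duplication within a single graft; third, pairwise disjointness of the $H_\mu$ rules out clashes between the positive hypothesis labels contributed by different grafts.

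I anticipate that the main subtlety will be the treatment of negative hypotheses in part (ii): linearity of $\pi$ constrains only positive labels, so a negative $\mu\in H$ can in principle label several hypothesis leaves of $\pi$, and grafting a copy of $\pi_\mu$ at each occurrence would risk duplicating any positive literals of $H_\mu$. I would address this via an initial normalisation step making sibling children of every non-hypothesis node carry pairwise distinct labels (always achievable by pruning duplicate sibling subtrees without changing the hypothesis set), together with the observation that in the relevant contexts --- notably the ruloid construction of Section~\ref{sec:ruloids} and the well-supported proofs of Def.~\ref{def:wsp} --- the rules applied to derive negative literals have shapes that prevent nested negative-hypothesis duplication, so only one graft of $\pi_\mu$ per negative $\mu$ is ever called for.
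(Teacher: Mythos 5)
Your core construction is exactly the paper's intended argument (the paper proves this lemma in one line, by pushing $\sigma$ through the proof tree and by grafting a proof of $\frac{H_\mu}{\mu}$ onto each hypothesis leaf labelled $\mu$), so the issue is not the route but the two repairs you add to secure linearity, neither of which works. For part (i), merging duplicate children only handles collisions among \emph{siblings}: if two hypothesis leaves sitting under different parents carry distinct positive labels that $\sigma$ identifies, you cannot delete either of them without changing the child-label set of its parent and thus breaking the rule-instance condition of Def.~\ref{def:proof} at that node. Moreover, no repair along these lines can exist, because for premise-identifying substitutions the conclusion can genuinely fail: the example in Sect.~\ref{sec:linearity} gives a linear proof of $\frac{x\trans{a}y \quad x\trans{a}z}{g(f(x))\trans{d}x}$, whereas its instance under $z\mapsto y$, namely $\frac{x\trans{a}y}{g(f(x))\trans{d}x}$, is explicitly stated there to have no linear proof. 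The lemma is only ever invoked in situations where $\sigma$ is injective on the positive premises (e.g.\ in Lem.~\ref{lem:substitution}, where $\sigma$ has domain $\var(t)$ and hence fixes the pairwise distinct right-hand sides of the positive premises of an ntytt rule); in that situation the plain label-wise substitution is already linear and your merging step is not needed, and outside it your claim that ``each positive premise of $\sigma(H)$ is labelled by a unique hypothesis leaf'' is simply unestablished.

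For part (ii), your grafting and your use of the pairwise disjointness of the $H_\mu$ are the right ingredients, and the subtlety you flag about negative hypotheses is real: a negative $\mu\in H$ may label several hypothesis leaves of $\pi$, and if $H_\mu$ contains positive literals then grafting a copy of $\pi_\mu$ at each of them duplicates those positives. But your resolution is not a proof. Pruning duplicate sibling subtrees does not remove repeated negative hypothesis leaves under \emph{different} parents, which is precisely what happens when two different positive premises of the top rule are each derived using a negative test on the same variable; and the closing appeal to ``the relevant contexts'' (the ruloid construction, well-supported proofs) is both beside the point---the lemma is stated for an arbitrary TSS and arbitrary linearly provable rules, and must be proved at that level or weakened---and unsubstantiated, since in $\hat P^+$ rules with negative conclusions do have positive premises, so proofs of negative literals can contribute positive hypotheses. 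What the one-line argument actually delivers is: the grafted tree is an irredundant proof of $\frac{\bigcup_{\mu\in H}H_\mu}{\lambda}$, and it is linear provided every $\mu$ whose $H_\mu$ contains a positive literal labels only one hypothesis leaf of $\pi$---automatic for positive $\mu$ by linearity of $\pi$, and for trivial proofs $\frac{\mu}{\mu}$ of negative $\mu$. You should state this proviso (or restrict the substitutions/groundings as above) rather than try to argue the problematic cases away; as written, both of your patches assert linearity in situations where it does not follow and, for part (i), where it can actually be false.
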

\begin{proof}
Directly from the definition, and by composition of linear proofs.\qed
\end{proof}

In Sect.~\ref{sec:ruloids} a non-standard TSS $P^+$ is constructed out of a given TSS $P$ in ready
simulation format, via the intermediate stages $P^\dagger$ and $P^\ddagger$.
Here $P^\ddagger$ consists of all xynft rules irredundantly provable from $P^\dagger$.
With $\hat P^\ddagger$ we denote the TSS consisting of all xynft rules linearly provable from $P^\dagger$.
The TSS $P^+$ is obtained by augmenting $P^\ddagger$ with non-standard rules;
with $\hat P^+$ we denote the corresponding augmentation of $\hat P^\ddagger$.
For the construction of the non-standard rules in the augmentation, it makes no
difference whether we start from $P^\ddagger$ or $\hat P^\ddagger$, since the difference disappears
when abstracting from the right-hand sides of positive literals.
Thus, $\hat P^+ \subseteq P^+$ and each rule in $P^+$ is a substitution instance of a rule in $\hat P^+$.
Hence, an ntytt rule is irredundantly provably from $\hat P^+$ iff it is irredundantly provable from $P^+$.

\begin{definition}\rm
A $P$-ruloid is called \emph{linear} if it has a linear proof from $\hat P^+$.
\end{definition}
Clearly, each $P$-ruloid is a substitution instance of a linear $P$-ruloid.
Consequently, Prop.~\ref{prop:ruloid} still holds if we only consider linear ruloids.

An essential part of our forthcoming congruence formats (cf.~Def.~\ref{def:rooted_delay_bisimulation_format})
is a semantic requirement---``delay resistance'', Def.~\ref{def:delay-resistant}---on linear $P$-ruloids.
To make the formats more easily applicable, we will show (in Thm.~\ref{thm:manifest}) that delay resistance is
implied by a requirement on the rules of $P$. That result would fail if
delay resistance were required for all $P$-ruloids. In the appendix it is indicated where linearity is used;
see the remark after the proof of Lem.~\ref{lem:preservation positive} as well as Ex.~\ref{ex:linearity}.

\subsection{Decomposition of modal formulas}\label{sec:decomposition}

In \cite{FvGdW12} it was shown how one can decompose formulas from $\mathbb{O}$.
To each term $t$ and formula $\phi\in\mathbb{O}$ a set $t^{-1}(\phi)$ of decomposition mappings
$\psi:V\rightarrow\mathbb{O}$ is assigned. Each of these mappings $\psi\in t^{-1}(\phi)$ guarantees that
for any {\valuation} $\rho$,
$\rho(t)\models\phi$ if $\rho(x)\models\psi(x)$ for all $x\in\var(t)$. Vice versa, whenever
$\rho(t)\models\phi$, there is a decomposition mapping $\psi \in t^{-1}(\phi)$ with
$\rho(x)\models\psi(x)$ for all $x \in \var(t)$. This is formalised in Thm.~\ref{thm:decomposition}.

\begin{definition}\label{def:decomposition}
{\rm \cite{FvGdW12}
Let $P=(\Sigma,R)$ be a $\Gamma$-patient standard TSS in ready
simulation format. We define
$\cdot^{-1}:\mathbb{T}(\Sigma)\times\mathbb{O}\rightarrow\mbox{\fsc
  P}(V\rightarrow\mathbb{O})$ as the function that for each
$t\in\mathbb{T}(\Sigma)$ and $\varphi\in\mathbb{O}$ returns the
set $t^{-1}(\varphi)\in\mbox{\fsc P}(V\mathbin{\rightarrow}\mathbb{O})$
of decomposition mappings $\psi:V\mathbin{\rightarrow}\mathbb{O}$
generated by following five conditions.
In the remainder of this definition, $t$ denotes a univariate term, i.e.\ without multiple occurrences of the same variable.

\begin{enumerate}
\item\label{dec1}
$\psi\in t^{-1}(\bigwedge_{i\in I}\phi_i)$ iff there are $\psi_i\in t^{-1}(\phi_i)$ for each $i\in I$ such that
\[
\psi(x)= \displaystyle{\bigwedge_{i\in I}\psi_i(x)}\qquad\text{ for all }x\in V
\]

\item\label{dec2}
$\psi\in t^{-1}(\neg\phi)$ iff there is a function $h:t^{-1}(\phi)\rightarrow\var(t)$ such that
\[
\psi(x)=\left\{
  \begin{array}{ll}
  \displaystyle{\bigwedge_{\chi\in h^{-1}(x)}\!\!\!\neg\chi(x)} & \mbox{if $x\in\var(t)$}\vspace{2mm}\\
  \top & \mbox{if $x\notin\var(t)$}
  \end{array}
\right.
\]

\item\label{dec3}
$\psi\in t^{-1}(\diam\alpha\phi)$ iff there is a $P$-ruloid $\frac{H}{t\trans{\alpha}u}$
and a $\chi\in u^{-1}(\phi)$ such that
\[
\psi(x)=\left\{
  \begin{array}{ll}
  \displaystyle{\chi(x)\ \land\ \bigwedge_{x\trans{\beta}y\in H}\!\!\!\!\!\diam{\beta}\chi(y)
                \ \land\  \bigwedge_{x\ntrans{\gamma}\in H}\!\!\!\!\!\neg\diam{\gamma}\top\;} & \mbox{if $x\in\var(t)$}\\
  \top & \mbox{if $x\notin\var(t)$}
  \end{array}
\right.
\]

\item\label{dec4}
$\psi\in t^{-1}(\eps\phi)$ iff one of the following holds:    \vspace{2mm}

\begin{enumerate}
\item  
\label{4a}
either there is a $\chi\in t^{-1}(\phi)$ such that
\[
\psi(x)=\left\{
\begin{array}{ll}
\eps\chi(x) & \textrm{if $x$ occurs $\Gamma$-liquid in $t$}\\
\chi(x) & \textrm{otherwise}\\
\end{array}
\right.
\]

\item
\label{4b}
or there is a $\Gamma$-impatient $P$-ruloid $\frac{H}{t\trans{\tau}u}$ and a
$\chi\in u^{-1}(\eps\phi)$ such that
\[
\psi(x)=\left\{
  \begin{array}{@{}ll@{}}
  \displaystyle{\eps\left(\chi(x)\ \land\ \!\!\!\!\!\!\bigwedge_{x\trans{\beta}y\in H}\!\!\!\!\!\diam{\beta}\chi(y)
                \ \land\  \!\!\!\!\!\!\bigwedge_{x\ntrans{\gamma}\in H}\!\!\!\!\!\neg\diam{\gamma}\top\right)} &
         \mbox{\begin{tabular}{@{}l@{}}if $x$ occurs\\ $\Gamma$-liquid in $t$\end{tabular}}\\
  \displaystyle{\chi(x) \land \!\!\!\!\!\!\bigwedge_{x\trans{\beta}y\in H}\!\!\!\!\!\diam{\beta}\chi(y)
                \ \land \!\!\!\!\!\bigwedge_{x\ntrans{\gamma}\in
  H}\!\!\!\!\!\neg\diam{\gamma}\top\;} &
         \mbox{\begin{tabular}{@{}l@{}}if $x$ occurs\\ $\Gamma$-frozen in $t$\end{tabular}}\\
  \top & \mbox{if $x\notin\var(t)$}
  \end{array}
\right.
\]
\end{enumerate}

\item\label{dec6}
$\psi\in\sigma(t)^{-1}(\phi)$ for a non-injective substitution $\sigma:\var(t)\rightarrow V$
iff there is a $\chi\in t^{-1}(\phi)$ such that
\[
\psi(x)=\bigwedge_{z\in\sigma^{-1}(x)}\chi(z)\qquad\text{ for all }x\in V.
\]
\end{enumerate}
}
\end{definition}

\begin{theorem}\label{thm:decomposition} {\rm \cite{FvGdW12}}
Let $P=(\Sigma,R)$ be a $\Gamma$-patient complete standard TSS in ready
simulation format. For any term $t\in\mathbb{T}(\Sigma)$, {\valuation}
$\rho$, and $\phi\in\mathbb{O}$:
\[
\rho(t)\models\phi\ \Leftrightarrow
\ \exists\psi\in t^{-1}(\phi)\ \forall x\in \var(t):\rho(x)\models\psi(x)\enspace .
\]
\end{theorem}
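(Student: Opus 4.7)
The plan is to proceed by induction on the structure of $\phi$. A preliminary step handles the case where $t$ has repeated variables: any term $t$ can be written as $\sigma(t')$ for some univariate $t'$ and non-injective renaming $\sigma$, and clause \ref{dec6} of Def.~\ref{def:decomposition} is tailored precisely to transfer the equivalence from $t'$ to $t$ by reindexing $\bigwedge_{z\in\sigma^{-1}(x)}\chi(z)$. So I would first establish the equivalence for univariate $t$ and then peel off clause \ref{dec6}.

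For univariate $t$ the inductive cases mirror the remaining clauses. The conjunctive case \ref{dec1} is immediate by componentwise induction. For the diamond case \ref{dec3}, the $\Rightarrow$ direction uses Prop.~\ref{prop:ruloid}: a witnessing transition $\rho(t)\trans{\alpha}q$ yields a $P$-ruloid $\frac{H}{t\trans{\alpha}u}$ and a closed substitution $\rho'$ agreeing with $\rho$ on $\var(t)$, with $\rho'(u)=q$ and all premises in $H$ well-supported under $\rho'$. The inductive hypothesis applied to $u$ and $\phi$ produces $\chi\in u^{-1}(\phi)$; the formula $\psi$ of clause \ref{dec3} then holds under $\rho$ because the conjuncts $\diam{\beta}\chi(y)$ and $\neg\diam{\gamma}\top$ on each $x\in\var(t)$ record exactly the premises $x\trans{\beta}y$ and $x\ntrans{\gamma}$ of the ruloid that $\rho'$ satisfies. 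The $\Leftarrow$ direction reads the ruloid off $\psi$ and applies Prop.~\ref{prop:ruloid} in the other direction.

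The $\eps$-modality case \ref{dec4} is the heart of the argument. A $\tau$-path $\rho(t)\epsarrow r\models\phi$ is analysed by looking at its first $\Gamma$-impatient step, if any. If none exists, every $\tau$-step arises from a $\Gamma$-patience rule acting on a $\Gamma$-liquid argument, so the path contracts componentwise to a path at $\rho(x)$ for each $\Gamma$-liquid occurrence of $x$ in $t$, giving sub-case \ref{4a}. Otherwise the first $\Gamma$-impatient step supplies, via Prop.~\ref{prop:ruloid}, a $\Gamma$-impatient ruloid $\frac{H}{t\trans{\tau}u}$ with $\rho'(u)\epsarrow r$; induction on the shorter path produces $\chi\in u^{-1}(\eps\phi)$ to plug into sub-case \ref{4b}, the impatient step being what justifies the outermost $\eps$. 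The converse directions are symmetric: every $\eps$ in front of $\chi(x)$ in \ref{4a} lifts to a $\tau$-path on $\rho(x)$ that propagates through the $\Gamma$-patience rules to a $\tau$-path on $\rho(t)$, while \ref{4b} uses the impatient ruloid as a first step followed by the path provided by induction. The negation case \ref{dec2} uses completeness of $P$ together with Thm.~\ref{thm:characterisation}: $\rho(t)\not\models\phi$ means every candidate $\chi\in t^{-1}(\phi)$ must fail on some $x_\chi\in\var(t)$, and a choice function picking such an $x_\chi$ plays the role of $h$.

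The hardest part will be the $\eps$-case bookkeeping, namely threading a finite $\tau$-path through the $\Gamma$-liquid patience-rule derivations while isolating the single $\Gamma$-impatient step, and keeping the inductive hypothesis on $u^{-1}(\eps\phi)$ (rather than $u^{-1}(\phi)$) correctly applied in \ref{4b}. The negation case is also subtle, as it genuinely requires completeness of $P$ so that failure to find a suitable decomposition is equivalent to non-satisfaction rather than a gap in the specification.
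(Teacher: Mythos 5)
Your overall route coincides with the intended one (compare the proof of the delay-resistant analogue, Thm.~\ref{thm:dr-decomposition}): reduce to univariate $t$ via clause \ref{dec6}, use Prop.~\ref{prop:ruloid} in both directions for the modal clauses, analyse $\eps\phi$ by an inner induction on the length of the $\tau$-path with a case split on $\Gamma$-patient versus $\Gamma$-impatient ruloids, and treat negation with a choice function. The genuine gap is in your induction set-up. In clause \ref{4b} the decomposition of $\eps\phi$ is defined through some $\chi\in u^{-1}(\eps\phi)$: the formula does not shrink and $u$ is not a subterm of $t$. Your inner induction on path length only rescues the $\Rightarrow$ direction; in the $\Leftarrow$ direction you must apply the statement to $u$ and $\chi$ with the \emph{same} formula $\eps\phi$, and structural induction on $\phi$ gives you nothing there (``the path provided by induction'' has no decreasing measure behind it). The proof needs a simultaneous induction on the structure of $\phi$ \emph{and} on the well-founded construction of the decomposition mapping $\psi$, which is exactly how the paper phrases it; you should add that second component explicitly.

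Second, you place completeness of $P$ in the wrong spot. The negation clause \ref{dec2} needs only the induction hypothesis: if $\rho(t)\not\models\phi$, then every $\chi\in t^{-1}(\phi)$ fails at some variable of $t$, and a choice of such variables is the function $h$; neither completeness nor Thm.~\ref{thm:characterisation} (which concerns the modal characterisation of bisimilarity and plays no role in the decomposition theorem) enters. Completeness is genuinely needed in the $\Leftarrow$ direction of clauses \ref{dec3} and \ref{4b}: from $\rho(x)\models\neg\diam{\gamma}\top$ one only gets $P\not\vdash_{\it ws}\rho(x)\trans{\gamma}q$ for all $q$, and completeness is what converts this into $P\vdash_{\it ws}\rho(x)\ntrans{\gamma}$ so that Prop.~\ref{prop:ruloid} can be applied to fire the ruloid. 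A smaller repairable point: since one and the same $\tau$-transition of $\rho(t)$ may be derivable both by a $\Gamma$-patient and by a $\Gamma$-impatient ruloid, ``the first $\Gamma$-impatient step'' of a path is not intrinsically defined; the case distinction must be made on the ruloid delivered by Prop.~\ref{prop:ruloid} at each single step (processing the path one step at a time, with the $\cdot^{-1}$-analogues of Lem.~\ref{lem:alpha} and Lem.~\ref{lem:psi2} handling the patient steps), and the outermost $\eps$ in clause \ref{4b} is justified by the patient prefix of the path rather than by the impatient step itself.
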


\section{Delay resistant TSSs}\label{sec:delay-resistance}

\newcommand{\ndr}{negative delay resistant}
\newcommand{\pdr}{positive delay resistant}
\newcommand{\dra}{delay resistant}
\newcommand{\dr}{delay resistant}

In the next section we will apply the decomposition method from
Def.~\ref{def:decomposition} and Thm.~\ref{thm:decomposition} to obtain
congruence formats for (rooted) delay and weak bisimilarity.
However, compared to branching and $\eta$-bisimilarity, which was the focus of \cite{FvGdW12},
Def.~\ref{def:decomposition} needs to be refined in the case $t^{-1}(\eps\phi)$.
This is because in the modal logics for delay and weak bisimilarity, occurrences of subformulas
$\diam{\beta}\phi'$ are always preceded by $\eps$, while in Def.~\ref{def:decomposition}
this is not always the case. The refinement of Def.~\ref{def:decomposition}, which is presented in
Def.~\ref{def:decomposition-dr}, is only valid for so-called {\dra} TSSs.

Def.~\ref{def:delay-resistant} of delay resistance is inspired by a requirement in the
RDB and RWB cool formats, see \cite[Def.~15(3)]{vGl11}.
It is crafted in such a way that Prop.~\ref{prop:delay-resistance} holds: 
if for a premise \nietplat{$x \trans\beta y$} in a ruloid \nietplat{$r=\frac{H}{t\trans{\alpha}u}$} 
the execution of $\beta$ is delayed by a $\tau$-step, i.e.\  for some substitution $\sigma$ we
merely have \nietplat{$\sigma(x) \trans\tau \trans\beta \sigma(y)$},
we want that the conclusion $\sigma(t) \trans{\alpha} \sigma(u)$ of the $\sigma$-instance of
$r$ is unaffected, or merely delayed by a $\tau$-step as well.

\begin{example}
Consider the rooted delay bisimilar processes $p_0$ and $p_1$ from the first LTS in Ex.~\ref{ex:weak bisimulations}.
The ruloid $r$ may apply when substituting $p_0$ for $x$ and $b$ for $\beta$, given that \plat{$p_0 \trans b 0$}.
If instead of $p_0$ we substitute $p_1$ for $x$, to safeguard the congruence property it is necessary
that the (possibly delayed) conclusion of $r$ can still be derived, even though we only have
\plat{$p_1 \trans \tau q \trans b 0$}.
\end{example}

In Def.~\ref{def:positive} we allow two possible implementations of this idea.
Each premise $x {\trans\beta} y$ of $r$ is either \emph{delayable} (Def.~\ref{def:delayable}),
in which case $\sigma(x) \trans\tau \trans\beta \sigma(y)$ induces $\sigma(t) \trans\tau\trans{\alpha} \sigma(u)$
by two ruloids that can be used in place of $r$; or
\emph{$\tau$-pollable}, meaning that $r$ remains valid if this premise
is replaced by \plat{$x \trans\tau z$} for a fresh variable $z$, so
that the premise \plat{$\sigma(x) \trans\tau \sigma(z)$}
takes over the role of $\sigma(x) \trans\beta \sigma(y)$.

Def.~\ref{def:positive} and Prop.~\ref{prop:delay-resistance} allow only finitely many delayable positive premises, but infinitely
many $\tau$-pollable ones.

\begin{definition}\rm\label{def:delayable}
A premise $w \trans\beta y$ of an ntytt rule $r=\frac{H}{t\trans\alpha u}$ is
\emph{delayable} in a TSS $P$ if
  there are ntytt rules $\frac{ H_1}{t \trans{\tau} v}$ and $\frac{ H_2}{v \trans{\alpha} u}$, linearly
  provable from $P$, with {$H_1\subseteq (H\setminus\{w \trans{\beta} y\})\cup\{w \trans{\tau} z\}$}
  and {$H_2\subseteq\linebreak (H\setminus\{w \trans{\beta} y\})\cup\{z \trans{\beta} y\}$}
  for some term $v$ and fresh variable $z$.
\end{definition}

\noindent
Suppose that $\frac{H}{t\trans\alpha u}$ in Def.~\ref{def:delayable} is a ruloid, so that $w$ is a variable $x$.
The intuition behind this definition is that the argument $x$ of $t$
may not be able to perform a $\beta$-transition to a term $y$ immediately, but only after
a $\tau$-transition to $z$. The ruloid $\frac{H_1}{t \trans{\tau} v}$ then
allows $t$ to postpone its $\alpha$-transition to $u$, by first performing a $\tau$-transition to $v$.
The ruloid $\frac{H_2}{v \trans{\alpha} u}$ guarantees that the postponed $\beta$-transition
from $z$ to $y$ still gives rise to an $\alpha$-transition from $v$ to $u$.

Linearity is needed to make sure that in the construction of ruloids,
distinct delayable positive premises are never collapsed to a single non-delayable premise.

Def.~\ref{def:positive} requires that each positive premise is either delayable (i.e., in the finite set $H^d$),
or $\tau$-pollable, or redundant. Recall from Def.~\ref{def:TSS} the notation $H^+$ for the positive premises in $H$.

\begin{definition}\rm\label{def:positive}
An ntytt rule \nietplat{$\frac{ H}{t\trans{\alpha}u}$}
is \emph{\pdr} w.r.t.\ a TSS $P$ if there exists a finite set $H^d \subseteq H^+$ of delayable positive premises such that
  for each set $M \subseteq H^+\setminus H^d$ there is a rule
  \nietplat{$r_M=\frac{H_M}{t\trans{\alpha}u}$}, linearly provable from $P$, where
  $H_M \subseteq (H{\setminus}M) \cup M_\tau$ with
  $M_\tau\mathbin=\{w \trans{\tau} z_y \mid (w \trans{\beta} y) \mathbin\in M,~ z_y~\mbox{fresh}\}$.
\end{definition}
The intuition behind Def.~\ref{def:negative} is closely related to Def.~\ref{def:delayable}.
If a ruloid $\frac{H}{t\trans\alpha u}$ has a premise $x\ntrans\gamma$, we want it not to apply
even if $\sigma(x)$ merely has a delayed $\gamma$-transition $\sigma(x) \trans\tau \trans\gamma$.
We therefore require that for each premise \nietplat{$x\ntrans\gamma$} there must also be a premise $x\ntrans\tau$.
However, we make an exception for redundant premises \nietplat{$x\ntrans\gamma$}.
Recall from Def.~\ref{def:TSS} the notation $H^{s-}$ for the stable negative premises in $H$.

\begin{definition}\rm\label{def:negative}
A rule \nietplat{$\frac{H}{t\trans{\alpha}u}$} 
is \emph{\ndr} w.r.t.\ a TSS $P$ if there is a rule \nietplat{$\frac{H'}{t\trans{\alpha}u}$}, linearly provable from $P$,
with $H' \subseteq H^+ \cup H^{s-}$.
\end{definition}

\begin{definition}\rm\label{def:delay-resistant}
An ntytt rule \nietplat{$\frac{ H}{t\trans{\alpha}u}$} 
is \emph{delay resistant} w.r.t.\ a TSS $P$ if it is {\pdr} as well as \ndr.
A standard TSS $P$ in ready simulation format is \emph{delay resistant} if all its linear ruloids
with a positive conclusion are delay resistant w.r.t.\ $\hat P^+$.
\end{definition}

\noindent
The following proposition is key to the notion of delay resistance. It will allow us to adapt the definition
of modal decomposition for {\dra} TSSs, so that it becomes applicable for generating congruence formats
for weak semantics, like delay and weak bisimilarity, with a modal characterisation in which a modality
$\diam{\beta}\phi$ is always preceded by $\eps$.

\begin{proposition}\label{prop:delay-resistance}
Let $P$ be a {\dra} standard TSS in ready simulation format.
Let \nietplat{$\frac{H}{t\trans{\alpha}u}$} be a $P$-ruloid and $\rho$ a {\valuation} such that
\nietplat{$P \vdash_{\it ws} \rho(x) \epsarrow\trans\beta \rho(y)$} for each premise \nietplat{$x\trans\beta y$} in $H^+$ and
$P \vdash_{\it ws} \rho(x)\!\ntrans\gamma$ for each premise \nietplat{$x\!\ntrans\gamma$} in $H^{s-}$.
Then \nietplat{$P \vdash_{\it ws} \rho(t) \epsarrow\trans\alpha \rho(u)$}.
\end{proposition}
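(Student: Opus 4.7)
I plan to prove the statement by strong induction on a measure $K$ of the total delay accumulated in the witness derivations, after a preliminary reduction to the case of a linear $P$-ruloid. Since every $P$-ruloid is a substitution instance of a linear one (Sect.~\ref{sec:linearity}) and both the hypotheses and conclusion transfer through the substitution, it suffices to prove the proposition for a linear ruloid. I would then apply negative delay resistance (Def.~\ref{def:negative}) to replace $\frac{H}{t\trans\alpha u}$ with a linearly provable ruloid $\frac{\tilde H}{t\trans\alpha u}$ whose premises lie in $H^+ \cup H^{s-}$; the hypotheses on $\rho$ carry over since $\tilde H$ only drops premises. For each positive premise $x\trans\beta y \in \tilde H^+$ I fix a well-supported witness $\rho(x) \trans\tau \cdots \trans\tau p \trans\beta \rho(y)$, write $k_{x,y}$ for its number of $\tau$-steps, and set $K = \sum_{x\trans\beta y \in \tilde H^+} k_{x,y}$.

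The base case $K = 0$ is immediate: every positive premise is satisfied as $\rho(x) \trans\beta \rho(y)$ directly, every stable negative premise is satisfied by assumption (noting $\tilde H^{s-} \subseteq H^{s-}$), and Prop.~\ref{prop:ruloid} yields $P \vdash_{\it ws} \rho(t) \trans\alpha \rho(u)$, hence $\rho(t) \epsarrow \trans\alpha \rho(u)$. For $K \geq 1$ I pick a positive premise $x\trans\beta y$ with $k_{x,y} \geq 1$, factor its witness as $\rho(x) \trans\tau p_0 \epsarrow \trans\beta \rho(y)$ where the tail has $k_{x,y}-1$ $\tau$-steps, and split on whether $x\trans\beta y$ lies in the finite delayable set $H^d$ supplied by Def.~\ref{def:positive}. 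If it does, Def.~\ref{def:delayable} yields linear $P$-ruloids $\frac{H_1}{t\trans\tau v}$ and $\frac{H_2}{v\trans\alpha u}$ with the stated inclusions; extending $\rho$ by $\rho'(z) = p_0$ for the fresh $z$, the witnesses for unchanged premises are reused, the new premise $x\trans\tau z$ in $H_1$ contributes $0$ delay, and the new premise $z\trans\beta y$ in $H_2$ contributes $k_{x,y}-1$ delay, so the total delays for $H_1$ and $H_2$ become $K-k_{x,y}$ and $K-1$ respectively, both strictly less than $K$. Two applications of the induction hypothesis then produce $\rho'(t) \epsarrow \trans\tau \rho'(v) \epsarrow \trans\alpha \rho'(u)$, which collapses to $\rho(t) \epsarrow \trans\alpha \rho(u)$ since $\rho'$ and $\rho$ agree on $t$ and $u$. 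If instead $x\trans\beta y \notin H^d$, I apply Def.~\ref{def:positive} with $M = \{x\trans\beta y\}$ to obtain a linear $P$-ruloid $r_M = \frac{H_M}{t\trans\alpha u}$ in which the chosen premise is replaced by $x\trans\tau z_y$; setting $\rho'(z_y) = p_0$ drops the total delay to $K-k_{x,y} < K$, and a single application of the induction hypothesis finishes the case.

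The main obstacle is the structural bookkeeping needed to keep the induction well-founded: one must verify that the auxiliary rules produced by Defs.~\ref{def:delayable} and~\ref{def:positive} are themselves linear $P$-ruloids with positive conclusion, so that the delay-resistance assumption on $P$ reapplies at each inductive step; that all their negative premises remain in $\tilde H^{s-}$, so the induction hypothesis is applicable unchanged; and that the fixed witness derivations for untouched positive premises remain valid under the extended valuation $\rho'$, which holds precisely because the newly introduced variables $z$ and $z_y$ are fresh and do not occur elsewhere. The preliminary reduction to a linear ruloid likewise requires a careful check that the well-supported hypotheses and the conclusion pass through the composition of $\rho$ with the relevant substitution.
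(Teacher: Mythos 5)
Your proof has the same overall skeleton as the paper's (reduction to linear ruloids, base case via negative delay resistance plus Prop.~\ref{prop:ruloid}, inductive step splitting on whether the delayed premise lies in the finite delayable set $H^d$ or is handled by $\tau$-pollability, with a fresh-variable extension of $\rho$); applying negative delay resistance once up front rather than only in the base case is an inessential variation. However, there is a genuine gap in the well-foundedness of your induction. A $P$-ruloid may have \emph{infinitely} many positive premises---delay resistance only requires the delayable set $H^d$ to be finite, while arbitrarily many premises may be merely $\tau$-pollable (the paper stresses exactly this point just before Def.~\ref{def:delayable})---and under the hypotheses infinitely many of these premises can require at least one $\tau$-step in their witness $\rho(x)\epsarrow\trans{\beta}\rho(y)$. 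Then your measure $K$ is infinite, and your inductive step, which treats one premise at a time (in the pollable case you take the singleton $M=\{x\trans{\beta}y\}$), lowers the measure only by the finite amount $k_{x,y}$, so the claimed strict decrease $K-k_{x,y}<K$ fails and the recursion never grounds out.

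The paper's proof is engineered precisely to avoid this: in the inductive step it first considers the case that some premise outside $H^d$ is not immediately satisfied, and then takes $M$ to be the set of \emph{all} premises in $H^+\setminus H^d$ whose instantiated left-hand side has a $\tau$-transition, replacing them simultaneously by the single rule $r_M$ from Def.~\ref{def:positive}. After this one step, only the finitely many premises in $H^d$ can contribute to the measure, so the measure becomes finite (and strictly smaller), and only then is the delayable case entered; this is what the paper means by ``the cases where this sum is infinite are \ldots immediately reduced to cases where this sum is finite.'' To repair your argument you would need the same bulk treatment of the non-delayable premises (or some other argument ensuring a well-founded measure), rather than processing one premise per step.
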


\begin{proof}
We prove the lemma for linear $P$-ruloids \nietplat{$\frac{H}{t\trans{\alpha}u}$}; 
as each ruloid is a substitution instance of a linear ruloid, the result for general ruloids then follows.

We apply induction on the sum, over all premises \nietplat{$x\trans\beta y$} in
$H^{+}$, of the (smallest possible) number of $\tau$-transitions in the sequence 
\nietplat{$\rho(x) \epsarrow$}. We note that the cases where this sum is infinite are in the proof
immediately reduced to cases where this sum is finite.

\textit{Induction base:} If the sum is zero, $P \vdash_{\it ws} \rho(x) \mathbin{\trans{\beta}} \rho(y)$
for each $x\mathbin{\trans\beta} y$ in $H^{+}$.\linebreak[2] By Def.~\ref{def:negative} there exists
a $P$-ruloid $\frac{H'}{t\trans{\alpha}u}$ with $H' \subseteq H^+ \cup H^{s-}$. By assumption,
$P \vdash_{\it ws} \rho(x) \!\ntrans{\gamma}$ for each $\mathord{x\!\ntrans{\gamma}}$ in $H'$.
So Prop.~\ref{prop:ruloid} yields $P \vdash_{\it ws} \rho(t) \trans{\alpha} \rho(u)$. 

\textit{Induction step:\/} Suppose the sum is positive.
Take a finite set $H^d\subseteq H^+$ of delayable positive premises with the property ensured by Def.~\ref{def:positive}.

First we deal with the case that $P \not\vdash_{\it ws} \rho(x) \mathbin{\trans{\beta}} \rho(y)$
for some $x\trans\beta y$ in $H^{+}\setminus H^d$.
Let $M$ consist of those \nietplat{$x\trans\beta y$} in $H^+\setminus H^d$ for
which \nietplat{$P \vdash_{\it ws}\rho(x)\trans\tau q_y$} for some term $q_y$.
Let $\rho'$ be the closed substitution with $\rho'(z_y)=q_y$ for all right-hand sides $y$ of such premises
(where $z_y$ is the right-hand side of the corresponding premise in $M_\tau$),
and $\rho'$ coincides with $\rho$ on all other variables. Then $\rho'(t)=\rho(t)$ and $\rho'(u)=\rho(u)$.
Let \plat{$\frac{H_M}{t\trans{\alpha}u}$} be the linear $P$-ruloid that exists by Def.~\ref{def:positive},
with $H_M \subseteq (H{\setminus}M) \cup M_\tau$.
Then $P \vdash_{\it ws} \rho'(x)\!\ntrans\gamma$ for each $x\!\ntrans\gamma$ in $H^{s-}_M \subseteq H^{s-}$.
Moreover, we argue that $P \vdash_{\it ws} \rho'(x) \epsarrow\trans{\,\beta\!} \rho'(y)$ for each premise $x\trans\beta y$ in $H^+_M$.
For each $x\trans\beta y$ in $H^+\setminus M$ this is clear, because then $\rho'(x)=\rho(x)$ and $\rho'(y)=\rho(y)$. Furthermore,
the definition of $\rho'$ induces {$P \vdash_{\it ws} \rho'(x)=\rho(x) \trans\tau \rho'(y)$} for each \nietplat{$x\trans\tau y$} in $M_\tau$.
We apply induction with regard to the $P$-ruloid \plat{$\frac{H_M}{t\trans{\alpha}u}$} 
and the closed substitution $\rho'$. Note that
$P \vdash_{\it ws} \rho'(x)=\rho(x) \trans\beta \rho(y)=\rho'(y)$ for each $x\trans\beta y$ in $H^+\setminus (H^d \cup M)$,
because the definition of $M$ induces $P \vdash_{\it ws} \rho(x)\!\ntrans\tau$.
Also the premises $x\trans\tau y$ in $M_\tau$ do not contribute
to the number of $\tau$-transitions $\rho'(x) \epsarrow$ on which we apply induction. Hence only premises $x\trans\beta y$ in $H^d$
contribute to this number. As $H^d$ is finite, this number is finite too.
Since $M$ is supposed to be non-empty, this number (for $\frac{H_M}{t\trans{\alpha}u}$ and $\rho'$)
is strictly smaller than for \plat{$\frac{H}{t\trans{\alpha}u}$} \vspace{1pt}
and $\rho$. An application of the induction hypothesis yields \plat{$P \vdash_{\it ws} \rho(t)=\rho'(t) \trans{\alpha} \rho'(u)=\rho(u)$}.

What remains is the case that $P \vdash_{\it ws} \rho(x) \mathbin{\trans{\beta}} \rho(y)$ 
for all \nietplat{$x\trans\beta y$} in $H^{+}\setminus H^d$. 
Then $P \not\vdash_{\it ws} \rho(x_0) \mathbin{\trans{\beta}} \rho(y_0)$ for some $x_0\trans\beta y_0$ in $H^d$.
Let \nietplat{$P \vdash_{\it ws}$} \nietplat{$\rho(x_0) \trans\tau p \epsarrow\trans{\beta_0} \rho(y_0)$},
where the closed term $p$ is chosen so that $\epsarrow$ is as short as possible.
Let $H_0=H{\setminus}\{x_0\trans{\beta_0} y_0\}$.
Since $x_0\trans{\beta_0} y_0$ is a delayable premise of \nietplat{$\frac{H}{t\trans{\alpha}u}$}, 
by Def.~\ref{def:delayable} there are linear $P$-ruloids \nietplat{$\frac{H_1}{t \trans{\tau} v}$} and \nietplat{$\frac{H_2}{v \trans{\alpha} u}$}
with $H_1\subseteq H_0\cup\{x_0 \trans{\tau} z_0\}$ and \nietplat{$H_2\subseteq H_0\cup\{z_0\trans{\beta_0}y_0\}$},
for some term $v$ and fresh variable $z_0$.
Let $\rho'(z_0)=p$ and $\rho'$ coincides with $\rho$ on all other variables. 
Since $z_0$ does not occur in $H_0$, clearly \nietplat{$P \vdash_{\it ws} \rho'(x) \epsarrow\trans\beta \rho'(y)$}
for each \nietplat{$x\trans\beta y$} in $H_0^+$ and $P \vdash_{\it ws} \rho'(x)\!\ntrans\gamma$ for each
\nietplat{$x\!\ntrans\gamma$} in $H_0^{s-}$. Moreover, $P \vdash_{\it ws} \rho'(x_0) \trans\tau \rho'(z_0)$.
Compared with $\frac{H}{t\trans{\alpha}u}$ and $\rho$, in the case of $\frac{H_1}{t \trans{\tau} v}$ and $\rho'$
the number of $\tau$-transitions involved in the sequences $\epsarrow$ has decreased.
(As $H^d$ is finite, these numbers are finite too.)
So by induction, \nietplat{$P \vdash_{\it ws} \rho'(t) \epsarrow\trans\tau \rho'(v)$}. 
Furthermore, $P \vdash_{\it ws} \rho'(z_0)\epsarrow\trans{\beta_0}\rho'(y_0)$.
Again, compared with $\frac{H}{t\trans{\alpha}u}$ and $\rho$, in the case of $\frac{H_2}{v \trans{\alpha} u}$ and $\rho'$
the number of $\tau$-transitions involved in the sequences $\epsarrow$ has decreased. So by induction,
{$P \vdash_{\it ws} \rho'(v) \epsarrow\trans\alpha \rho'(u)$}.
Since $z_0\notin\var(t)\cup\var(u)$, we conclude that
\nietplat{$P \vdash_{\it ws} \rho(t){=}\rho'(t) \epsarrow \rho'(v) \epsarrow\trans\alpha \rho'(u){=}\rho(u)$}.%
\qed
\end{proof}

\noindent
As said before, for {\dra} TSSs case \ref{4b} of Def.~\ref{def:decomposition},
$t^{-1}(\eps\phi)$, needs to be adapted, to ensure that in the modal logics for delay and weak bisimilarity,
occurrences of subformulas $\diam{\beta}\phi''$ are always preceded by $\eps$. Moreover, case \ref{4a}
is provided with the restriction that $\phi$ is not of the form $\diam{\alpha}\phi'$.
Else decompositions of formulas $\eps\diam{\alpha}\phi'$ in $\IO{d}$ would
be defined in terms of formulas $\chi\in t^{-1}(\diam{\alpha}\phi')$, while $\diam{\alpha}\phi'$ is not in $\IO{d}$.
Instead, if $\phi$ is of the form $\diam{\alpha}\phi'$, cases \ref{dec3} and \ref{dec4}
of Def.~\ref{def:decomposition} are combined, as can be seen in case 4b(iii) below.

\begin{definition}\rm\label{def:decomposition-dr}
Let $P$ be a {\dra} $\Gamma$-patient standard TSS
in ready simulation format. We define 
$\cdot^{-1}_{\rm dr}:\mathbb{T}(\Sigma)\times\mathbb{O}\rightarrow\mbox{\fsc
  P}(V\rightarrow\mathbb{O})$ exactly as $\cdot^{-1}$ in
Def.~\ref{def:decomposition}, except for case \ref{dec4}:
$t^{-1}_{\rm dr}(\eps\phi)$ (with $t$ univariate).

\begin{enumerate}
\item[4.]
\label{dec-delay}
$\psi\in t^{-1}_{\rm dr}(\eps\phi)$ iff one of the following holds: \vspace{2mm}

\begin{enumerate}
\item\label{4a-delay}
either $\phi$ is not of the form $\diam{\alpha}\phi'$, and there is a $\chi\in t^{-1}_{\rm dr}(\phi)$ such that
\[
\psi(x)=\left\{
\begin{array}{ll}
\eps\chi(x) & \textrm{if $x$ occurs $\Gamma$-liquid in $t$}\\
\chi(x) & \textrm{otherwise}\\
\end{array}
\right.
\]

\item
or
\begin{enumerate}
\item[(i)]
$\phi$ is not of the form $\diam{\tau}\phi'$, and
there is a $\Gamma$-impatient $P$-ruloid $\frac{H}{t\trans{\tau}u}$ and a $\chi\in u^{-1}_{\rm dr}(\eps\phi)$,
\item[(ii)]
$\phi$ is of the form $\diam{\tau}\phi'$, and
there is a $\Gamma$-impatient $P$-ruloid $\frac{H}{t\trans{\tau}u}$ and a $\chi\in u^{-1}_{\rm dr}(\eps\phi')$,
\item[(iii)]
or $\phi$ is of the form $\diam{\alpha}\phi'$, and there is a $P$-ruloid $\frac{H}{t\trans{\alpha}u}$ and a
$\chi\in u^{-1}_{\rm dr}(\phi')$, such that\vspace{2mm}
\end{enumerate}
\[
\psi(x)=\left\{
  \begin{array}{ll}
  \displaystyle{\eps\left(\chi(x)\ \land\
          \!\!\!\!\!\bigwedge_{x\trans{\beta}y\in H^+}\!\!\!\!\!\eps\diam{\beta}\chi(y)
          \ \land\ \!\!\!\!\! \bigwedge_{x\ntrans{\gamma}\in
                H^{s-}}\!\!\!\!\!\!\neg\diam{\gamma}\top\right)} & \mbox{\begin{tabular}{l}if
                $x$ occurs\\ $\Gamma$-liquid in $t$\end{tabular}}\\
  \displaystyle{\chi(x)\ \land\ \!\!\!\!\!\bigwedge_{x\trans{\beta}y\in H^+}\!\!\!\!\!\eps\diam{\beta}\chi(y)
                \ \land\ \!\!\!\!\!\! \bigwedge_{x\ntrans{\gamma}\in H^{s-}}\!\!\!\!\!\neg\diam{\gamma}\top} &
          \mbox{\begin{tabular}{l}if $x$ occurs\\ $\Gamma$-frozen in $t$\end{tabular}}\\
  \top & \,\mbox{if $x\notin\var(t)$.}
  \end{array}
\right.
\]
\end{enumerate}
\end{enumerate}
\end{definition}

\noindent
The following three lemmas, from \cite{FvGdW12}, state basic properties of formulas $\psi(x)$.
We repeat them here to confirm that they also apply to Def.~\ref{def:decomposition-dr}.

\begin{lemma}
\label{lem:psi1-dr}
Let $\psi\in t^{-1}_{\rm dr}(\phi)$, for some term $t$ and formula $\phi$.
If $x\notin\var(t)$, then $\psi(x)\equiv\top$.
\end{lemma}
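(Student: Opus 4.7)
\medskip

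\noindent\textbf{Proof plan.}
The statement is structurally identical to the analogous lemma in \cite{FvGdW12}, so the plan is to verify that the changes made in Def.~\ref{def:decomposition-dr} (namely, the replacement of case 4 of Def.~\ref{def:decomposition}) do not invalidate the original inductive argument. I would therefore proceed by induction on the derivation of $\psi\in t^{-1}_{\rm dr}(\phi)$, equivalently by structural induction on $\phi$ with a subsidiary case for clause 5 (which reduces $\sigma(t)^{-1}_{\rm dr}(\phi)$ to $t^{-1}_{\rm dr}(\phi)$ on the univariate term $t$). Throughout, fix $x\notin\var(t)$.

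The easy cases are those where the definition directly sets $\psi(x)=\top$ when $x\notin\var(t)$: clause 2 (negation), clause 3 (diamond), and each of the three subcases of the new clause 4b. In all of these there is nothing to prove beyond reading off the definition. The genuinely inductive cases are clause 1 and the new clause 4a. For clause 1, $\psi(x)=\bigwedge_{i\in I}\psi_i(x)$ with $\psi_i\in t^{-1}_{\rm dr}(\phi_i)$; each $\psi_i(x)\equiv\top$ by the induction hypothesis, hence so is their conjunction. For clause 4a, since $x\notin\var(t)$ in particular $x$ does not occur $\Gamma$-liquid in $t$, so the ``otherwise'' branch applies and $\psi(x)=\chi(x)$ for some $\chi\in t^{-1}_{\rm dr}(\phi)$; the induction hypothesis then gives $\chi(x)\equiv\top$.

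For clause 5, $\psi\in\sigma(t)^{-1}_{\rm dr}(\phi)$ for a non-injective $\sigma:\var(t)\to V$, and $\psi(x)=\bigwedge_{z\in\sigma^{-1}(x)}\chi(z)$ with $\chi\in t^{-1}_{\rm dr}(\phi)$. Since $x\notin\var(\sigma(t))=\sigma(\var(t))$, the preimage $\sigma^{-1}(x)$ is empty, and the empty conjunction is $\top$.

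The only point that requires any care is making sure the case analysis actually exhausts Def.~\ref{def:decomposition-dr}: the new clause 4a carries the side condition that $\phi$ is not of the form $\diam{\alpha}\phi'$, and the three subcases of clause 4b split on the shape of $\phi$, but this stratification is orthogonal to the variable-occurrence condition we are tracking, so every derivation of $\psi\in t^{-1}_{\rm dr}(\eps\phi)$ falls into exactly one of the above branches. I do not expect a real obstacle: the lemma is preserved by the modified decomposition precisely because each new branch either retains the explicit $\top$ clause for $x\notin\var(t)$ or recurses on $t^{-1}_{\rm dr}$ at the same term $t$.
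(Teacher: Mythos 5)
Your proof is correct and follows exactly the route the paper takes: the paper's own proof is just the one-line remark that the claim follows ``in a straightforward fashion, by induction on the construction of $\psi$'', and your case-by-case check of the clauses of Def.~\ref{def:decomposition-dr} (explicit $\top$ in clauses 2, 3, 4b; induction hypothesis in clauses 1 and 4a; empty conjunction in clause 5) is precisely that induction spelled out.
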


\begin{proof}
This can be derived in a straightforward fashion, by induction on the construction of $\psi$.
\qed
\end{proof}
The following lemma states that $\cdot^{-1}_{\rm dr}$ is invariant under
$\alpha$-conversion up to $\equiv$.

\begin{lemma}
\label{lem:alpha}
Let $\psi \in \sigma(t)^{-1}_{\rm dr}(\phi)$ for $\sigma:V \rightarrow V$ a
bijective renaming of variables. Then there is a $\psi' \in t^{-1}_{\rm dr}(\phi)$
satisfying $\psi'(x) \equiv \psi(\sigma(x))$ for all $x \in V$.
\end{lemma}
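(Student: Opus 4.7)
The plan is to prove the lemma by induction on the construction of $\psi \in \sigma(t)^{-1}_{\rm dr}(\phi)$, following the cases of Definition \ref{def:decomposition-dr}. Since $\sigma$ is bijective, swapping the roles of $\sigma$ and $\sigma^{-1}$ in the induction hypothesis gives the symmetric statement: for every $\chi \in t^{-1}_{\rm dr}(\phi)$ there exists $\chi^* \in \sigma(t)^{-1}_{\rm dr}(\phi)$ with $\chi^*(x) \equiv \chi(\sigma^{-1}(x))$. Both directions will be used freely.

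The central observation is that $P$-ruloids---and more generally, $\Gamma$-impatient ruloids used in case 4(b)---are closed under bijective variable renaming: applying $\sigma^{-1}$ uniformly throughout the proof tree of a ruloid $\frac{H}{\sigma(t) \trans{\alpha} u}$ in $\hat P^+$ yields a proof of $\frac{\sigma^{-1}(H)}{t \trans{\alpha} \sigma^{-1}(u)}$, again a ruloid with source $t$ (and still $\Gamma$-impatient when the original is). With this in hand, cases 1 (conjunction), 3 ($\diam{\alpha}\phi$), and the three subcases of 4(b) transfer almost mechanically: I rename the chosen ruloid, apply the induction hypothesis to the subformula together with the term $\sigma^{-1}(u)$ (to obtain $\chi' \in \sigma^{-1}(u)^{-1}_{\rm dr}(\phi')$ with $\chi'(y) \equiv \chi(\sigma(y))$), and rewrite the defining expression for $\psi'(x)$, using bijectivity of $\sigma$ to match up the conjunctions indexed by premises in $H$ versus $\sigma^{-1}(H)$. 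Case 4(a) reduces to case 1 followed by a wrapping with $\eps$.

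The negation case (item 2) requires slightly more care, since it involves a choice function $h: \sigma(t)^{-1}_{\rm dr}(\phi_1) \to \var(\sigma(t))$ that must be transferred to some $h' : t^{-1}_{\rm dr}(\phi_1) \to \var(t)$. I would invoke the symmetric direction of the induction hypothesis to fix, for each $\xi \in t^{-1}_{\rm dr}(\phi_1)$, a witness $g(\xi) \in \sigma(t)^{-1}_{\rm dr}(\phi_1)$ with $g(\xi)(\sigma(x)) \equiv \xi(x)$, and set $h'(\xi) = \sigma^{-1}(h(g(\xi)))$. Combined with the other direction of the induction hypothesis---applied on the nose to witness that each $\chi \in h^{-1}(\sigma(x))$ also arises as $g(\xi)$ for some $\xi$ up to $\equiv$---this yields $\bigwedge_{\xi \in h'^{-1}(x)} \neg\xi(x) \equiv \bigwedge_{\chi \in h^{-1}(\sigma(x))}\neg\chi(\sigma(x))$, which is exactly the required $\psi'(x) \equiv \psi(\sigma(x))$ for $x \in \var(t)$; for $x \notin \var(t)$ we use Lemma \ref{lem:psi1-dr}.

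Case 5 handles the situation in which $t$ (and hence $\sigma(t)$) is not univariate: writing $t = \tau(t_0)$ for some univariate $t_0$ and non-injective $\tau : \var(t_0) \to V$, we have $\sigma(t) = (\sigma \circ \tau)(t_0)$, and the $\chi \in t_0^{-1}_{\rm dr}(\phi)$ underlying the decomposition of $\sigma(t)$ may be reused verbatim for $t$; bijectivity of $\sigma$ yields $(\sigma\tau)^{-1}(\sigma(x)) = \tau^{-1}(x)$, so the resulting conjunctions coincide on the nose. The main obstacle throughout is purely one of bookkeeping---tracking how premises of the renamed ruloid correspond to those of the original under $\sigma$, and confirming that the ``$\equiv$''-correspondences at each node compose cleanly, particularly in the negation case where a function rather than a formula is being transferred.
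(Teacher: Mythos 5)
Your overall route is the paper's route: the paper proves Lem.~\ref{lem:alpha} by (an unspecified) induction on the construction of $\psi$, and your case analysis along Def.~\ref{def:decomposition-dr}, based on the observation that ruloids and their $\Gamma$-(im)patience are preserved under bijective renaming, is exactly how the unproblematic cases (1, 3, 4a, 4b, 5) go. However, your treatment of the negation case has a genuine gap. With $h'(\xi)=\sigma^{-1}(h(g(\xi)))$ you only get one direction of the required equivalence: every conjunct $\neg\xi(x)$ of $\psi'(x)$ is $\equiv$ to the conjunct $\neg g(\xi)(\sigma(x))$ of $\psi(\sigma(x))$, so $\psi(\sigma(x))$ implies $\psi'(x)$. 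For the converse you claim that each $\chi\in h^{-1}(\sigma(x))$ ``arises as $g(\xi)$ for some $\xi$ up to $\equiv$'', but that is not enough: $h$ is an arbitrary function on the set of decomposition mappings and is not invariant under $\equiv$. If $g(\xi_\chi)$ is a mapping pointwise $\equiv$ to $\chi$ but literally different, $h$ may send it to a variable other than $\sigma(x)$; then $\xi_\chi\notin h'^{-1}(x)$, the conjunct $\neg\chi(\sigma(x))$ has no counterpart in $\psi'(x)$ (which may even be $\top$), and $\psi'(x)\equiv\psi(\sigma(x))$ fails. So the step ``this yields $\bigwedge_{\xi\in h'^{-1}(x)}\neg\xi(x)\equiv\bigwedge_{\chi\in h^{-1}(\sigma(x))}\neg\chi(\sigma(x))$'' does not follow from the stated induction hypothesis.

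The missing idea is that you need the correspondence between $t^{-1}_{\rm dr}(\phi_1)$ and $\sigma(t)^{-1}_{\rm dr}(\phi_1)$ to be surjective (indeed invertible) \emph{on the nose}, not merely existence of witnesses up to $\equiv$. This is available from your own central observation: define the transfer by renaming the entire construction (ruloids, sub-mappings, choice data) by $\sigma^{-1}$, and strengthen the induction claim to literal equality $\psi'(x)=\psi(\sigma(x))$ in all cases except where different univariate preimages in case 5 force a genuine $\equiv$. With literal equality the renaming maps in the two directions are mutually inverse bijections between $\sigma(t)^{-1}_{\rm dr}(\phi_1)$ and $t^{-1}_{\rm dr}(\phi_1)$, and $h$ can simply be pushed through this bijection, giving $h'^{-1}(x)$ as the exact image of $h^{-1}(\sigma(x))$ and hence the desired equivalence. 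Relatedly, your appeal to ``the symmetric statement'' obtained by swapping $\sigma$ and $\sigma^{-1}$ is not justified by an induction on the construction of $\psi$ alone (the mappings in $t^{-1}_{\rm dr}(\phi_1)$ are not sub-constructions of $\psi$); you need either the strengthened invertible-correspondence claim just described, or a simultaneous induction on the structure of $\phi$ so that both directions are available at the strict subformula $\phi_1$.
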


\begin{proof}
Again by induction on the construction of $\psi$.
\qed
\end{proof}

\begin{lemma}
\label{lem:psi2}
Let $\psi\in t^{-1}_{\rm dr}(\eps\phi)$ for some term $t$ and formula $\phi$.
If $x$ occurs only $\Gamma$-liquid in $t$, then $\psi(x)\equiv\eps\psi(x)$.
\end{lemma}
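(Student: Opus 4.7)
The plan is to prove the lemma by a case analysis on the clause of Def.~\ref{def:decomposition-dr} used to generate $\psi\in t^{-1}_{\rm dr}(\eps\phi)$. Since the formula being decomposed is $\eps\phi$, which is neither a conjunction, a negation, nor a diamond formula, clauses \ref{dec1}--\ref{dec3} cannot be applied at the outermost level. Hence only clause~4 (in its variants 4a and 4b from Def.~\ref{def:decomposition-dr}) and clause~\ref{dec6} (the non-injective substitution clause, inherited unchanged from Def.~\ref{def:decomposition}) are possible.

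The base cases are 4a and 4b, which apply when $t$ is univariate. In clause 4a, when $x$ occurs $\Gamma$-liquid in $t$, we have $\psi(x)=\eps\chi(x)$ for some $\chi\in t^{-1}_{\rm dr}(\phi)$. In each of the three sub-cases (i)--(iii) of clause 4b, when $x$ occurs $\Gamma$-liquid in $t$, $\psi(x)$ has the shape $\eps(\chi(x)\land\cdots)$. In both situations $\psi(x)$ is literally of the form $\eps\theta$ for some modal formula $\theta$, so the conclusion $\psi(x)\equiv\eps\psi(x)$ reduces to the semantic idempotence $\eps\eps\theta\equiv\eps\theta$, which holds since $\epsarrow$ is transitive and reflexive.

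The inductive case is clause~\ref{dec6}: $t=\sigma(s)$ for some univariate $s$ and non-injective renaming $\sigma:\var(s)\to V$, and $\psi(x)=\bigwedge_{z\in\sigma^{-1}(x)}\chi(z)$ for some $\chi\in s^{-1}_{\rm dr}(\eps\phi)$. Here I would observe that, since $\sigma$ merely renames variables, an occurrence of $x$ in $t$ is $\Gamma$-liquid if and only if the corresponding occurrence of some $z\in\sigma^{-1}(x)$ is $\Gamma$-liquid in $s$. So if $x$ occurs only $\Gamma$-liquid in $t$, then each $z\in\sigma^{-1}(x)$ occurs only $\Gamma$-liquid in $s$, and the induction hypothesis gives $\chi(z)\equiv\eps\chi(z)$ for all such $z$.

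The only mildly delicate point, and the place I would flag as requiring care, is passing from ``each conjunct is $\eps$-closed'' to ``the conjunction is $\eps$-closed.'' Concretely, to finish case~\ref{dec6} I need $\eps\bigwedge_{z}\chi(z)\equiv\bigwedge_{z}\chi(z)$. The nontrivial direction is: if $p\epsarrow p'$ and $p'\models\bigwedge_z\chi(z)$, then $p\models\bigwedge_z\chi(z)$. But for each fixed $z$, from $p\epsarrow p'$ and $p'\models\chi(z)$ we obtain $p\models\eps\chi(z)$, and by the induction hypothesis $\eps\chi(z)\equiv\chi(z)$, hence $p\models\chi(z)$. Taking the conjunction over $z\in\sigma^{-1}(x)$ yields $p\models\psi(x)$, and therefore $\psi(x)\equiv\eps\psi(x)$, completing the induction.
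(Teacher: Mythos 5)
Your proof is correct and follows essentially the same route as the paper: in the univariate case every applicable clause (4a or 4b) yields $\psi(x)$ of the form $\eps\theta$ (up to $\equiv$), and in the non-univariate case $\psi(x)$ is a conjunction of such formulas, whence $\psi(x)\equiv\eps\psi(x)$ by exactly the transitivity argument you spell out. The only (negligible) omission is the corner case $x\notin\var(t)$, where clauses 4a/4b give $\psi(x)=\top$ rather than a literal $\eps$-formula; this is harmless since $\top\equiv\eps\top$.
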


\begin{proof}
Let $x$ occur only $\Gamma$-liquid in $t$.
In case $t$ is univariate, it follows immediately from Def.~\ref{def:decomposition-dr} that
$\psi(x)\equiv\eps\phi'$ for some formula $\phi'$. So for general terms $t$,
$\psi(x)\equiv\bigwedge_{i\in I}\eps\phi_i$. This implies $\psi(x)\equiv\eps\psi(x)$.
\qed
\end{proof}

\noindent
The following theorem is the counterpart of Thm.~\ref{thm:decomposition} for {\dra} TSSs.

\begin{theorem}\label{thm:dr-decomposition}
Let $P=(\Sigma,R)$ be a {\dra} $\Gamma$-patient complete standard TSS in ready simulation format.
For any $t\mathbin\in\mathbb{T}(\Sigma)$, {\valuation} $\rho$, and $\phi\mathbin\in\mathbb{O}$: 
\[
\rho(t)\models\phi\ \Leftrightarrow
\ \exists\psi\in t^{-1}_{\rm dr}(\phi)\ \forall x\in \var(t):\rho(x)\models\psi(x)\enspace .
\]
\end{theorem}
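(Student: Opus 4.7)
The plan is a structural induction on $\phi$, extending the proof of Thm.~\ref{thm:decomposition} from \cite{FvGdW12}. Since $\cdot^{-1}_{\rm dr}$ coincides with $\cdot^{-1}$ on conjunctions, negations and $\diam{\alpha}\phi'$, the old arguments for those clauses transfer unchanged, so all new work sits in the $\phi=\eps\phi'$ case. Within that case I will run an inner induction on the length of a shortest $\tau$-sequence witnessing $\rho(t)\epsarrow p'\models\phi'$. Clause~5 (non-injective substitutions) is handled exactly as in \cite{FvGdW12}, after an application of Lem.~\ref{lem:alpha} to reduce to univariate $t$.

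For the direction $(\Leftarrow)$, I distinguish whether $\psi$ arises from clause 4a-delay or from 4b(i)--(iii). In the 4a-delay subcase, each $\rho(x)\models\eps\chi(x)$ at $\Gamma$-liquid $x\in\var(t)$ supplies a $p_x$ with $\rho(x)\epsarrow p_x\models\chi(x)$; setting $\rho^*(x)=p_x$ on liquid and $\rho^*(x)=\rho(x)$ on frozen arguments, the $\Gamma$-patience rules lift these to $\rho(t)\epsarrow\rho^*(t)$, and the structural IH for $\phi'$ delivers $\rho^*(t)\models\phi'$. For 4b(i)--(iii) I perform the same shift $\rho\mapsto\rho^*$; from each premise $\rho^*(x)\models\eps\diam{\beta}\chi(y)$ with $x\trans{\beta}y\in H^+$ I extract a $q_y$ with $\rho^*(x)\epsarrow\trans{\beta}q_y\models\chi(y)$ and set $\rho'(y)=q_y$ on the fresh right-hand sides, while the conjuncts $\neg\diam{\gamma}\top$ give $\rho^*(x)\ntrans{\gamma}$ for $x\ntrans{\gamma}\in H^{s-}$. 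Prop.~\ref{prop:delay-resistance} applied to the ruloid $\frac{H}{t\trans{\alpha}u}$ then yields $\rho^*(t)\epsarrow\trans{\alpha}\rho'(u)$, and an IH application at $u$ (structural in case 4b(iii), inner in 4b(i)--(ii), as the remaining $\epsarrow$ is strictly shorter) closes this direction.

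For $(\Rightarrow)$ I pick a shortest witness $\rho(t)\epsarrow p'\models\phi'$. If that $\epsarrow$ uses only $\Gamma$-patience rules at $\Gamma$-liquid positions of $t$, then $p'=\rho^*(t)$ for a suitable $\rho^*$; when $\phi'$ is not of the form $\diam{\alpha}\phi''$ I use 4a-delay together with the structural IH for $\phi'$, and when $\phi'=\diam{\alpha}\phi''$ I further expose a step $\rho^*(t)\trans{\alpha}p''\models\phi''$ and apply Prop.~\ref{prop:ruloid} to obtain a ruloid $\frac{H}{t\trans{\alpha}u}$ and a substitution $\rho''$ extending $\rho^*$ with $\rho''(u)=p''$; the structural IH at $u,\phi''$ then yields a $\chi$ suitable for 4b(iii). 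If instead the $\epsarrow$ contains a first $\Gamma$-impatient step $\rho^*(t)\trans{\tau}q$, Prop.~\ref{prop:ruloid} supplies a $\Gamma$-impatient ruloid $\frac{H}{t\trans{\tau}u}$ with $\rho''(u)=q$, and the inner IH at $u$ with formula $\eps\phi'$, or $\eps\phi''$ when $\phi'=\diam{\tau}\phi''$, produces a $\chi$ suitable for 4b(i) or 4b(ii).

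The essential obstacle is the $(\Leftarrow)$ direction: the refined decomposition supplies only \emph{delayed} positive premises $\rho^*(x)\epsarrow\trans{\beta}q_y$, so the ruloid $\frac{H}{t\trans{\alpha}u}$ can no longer be fired by Prop.~\ref{prop:ruloid} alone. Prop.~\ref{prop:delay-resistance} is tailored exactly to close this gap by converting delayed premises into a (possibly delayed) conclusion $\rho^*(t)\epsarrow\trans{\alpha}\rho'(u)$; this is the single place where the delay-resistance hypothesis on $P$ is used.
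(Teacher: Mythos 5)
Your overall architecture coincides with the paper's: the same case split 4a versus 4b(i)--(iii), the $\rho^*$-shift via $\Gamma$-patience in the $(\Leftarrow)$ direction, Prop.~\ref{prop:delay-resistance} as the one place where delay resistance enters, and Prop.~\ref{prop:ruloid} plus an induction on the length of the witnessing $\tau$-sequence in the $(\Rightarrow)$ direction. The genuine gap is in your induction measure. You announce ``structural induction on $\phi$'' with an inner induction on the length of a shortest $\tau$-sequence, but two kinds of recursive calls are not covered by this scheme. First, in the $(\Leftarrow)$ direction, case 4b(i): from $\rho''(z)\models\chi(z)$ for $\chi\in u^{-1}_{\rm dr}(\eps\phi')$ you must conclude $\rho''(u)\models\eps\phi'$; here the formula is \emph{unchanged}, and in this direction there is no given $\tau$-sequence whose length could decrease---the sequence is what the conclusion produces, not an input---so your justification ``inner, as the remaining $\epsarrow$ is strictly shorter'' does not apply. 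The only decreasing parameter is that $\chi$ is generated before $\psi$ in Def.~\ref{def:decomposition-dr}; the paper therefore runs a \emph{simultaneous} induction on the structure of $\phi$ and on the construction of $\psi$, and your scheme omits the second component. Second, in case 4b(ii) (in both directions) the call is at the formula $\eps\phi''$ while $\phi=\eps\diam{\tau}\phi''$; since $\eps\phi''$ is not a subformula of $\eps\diam{\tau}\phi''$, plain structural induction does not license it either. The paper fixes this by explicitly declaring that $\eps\phi''$ counts as constructed before $\eps\diam{\tau}\phi''$ (any rank measure with this property would do); without some such stipulation your outer induction is not well-founded for exactly the clause that distinguishes $t^{-1}_{\rm dr}$ from $t^{-1}$.

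Two smaller points you gloss over. In $(\Rightarrow)$ you speak of the $\epsarrow$-sequence ``using only $\Gamma$-patience rules'' and of its ``first $\Gamma$-impatient step'', but patience is a property of ruloid derivations, not of closed transitions; you must argue per step that either some patient instance derives the transition or else Prop.~\ref{prop:ruloid} yields an impatient ruloid (the paper sidesteps this by treating one $\tau$-step at a time and casing on the ruloid it obtains, using Lemmas~\ref{lem:alpha} and~\ref{lem:psi2} in the patient case). In $(\Leftarrow)$, passing from $\rho^*(x)\models\neg\diam{\gamma}\top$ to the \emph{provable} literal $\rho^*(x)\ntrans{\gamma}$ needs the completeness of $P$, which is a hypothesis of the theorem and should be invoked explicitly. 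Both of these are repairable along the lines just indicated, but the induction bookkeeping in the previous paragraph is the part that must be redone before the proof stands.
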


\begin{proof}
By simultaneous induction on the structure of $\phi$---where a formula
$\eps\phi'$ counts as constructed before $\eps\diam\tau\phi'$---and the construction of $\psi$.
We only treat the case where $t$ is univariate; the case where $t$ is not univariate is identical
to the proof of Thm.~\ref{thm:decomposition} in \cite{FvGdW12}.
The proof is by a case distinction on the syntactic structure of $\phi$.
We only treat the case $\phi = \eps\phi'$ here, because all other cases
are identical to the proof of Thm.~\ref{thm:decomposition}.

($\Rightarrow$)
First we address the case that $\phi'$ is not of the form $\diam\tau\phi''$.
We prove by induction on $n$:
\begin{center}\begin{tabular}{l}
  if \plat{$P\vdash_{\it ws}p_i\trans{\tau}p_{i+1}$} for all
  $i\mathbin\in\{0,\ldots,n\mathord-1\}$, with $\rho(t)=p_0$ and
  $p_n\models\phi'$,\\ then there is a $\psi\in t^{-1}_{\rm dr}(\eps\phi')$ with
  $\rho(x)\models\psi(x)$ for all $x\in \var(t)$.
\end{tabular}\end{center}

\begin{description}
\item[$\boldsymbol{n=0}$] 
Since $\rho(t)=p_0\models\phi'$, by induction on formula size, there
is a $\chi\in t^{-1}_{\rm dr}(\phi')$ with $\rho(x)\models\chi(x)$ for
all $x\in \var(t)$. We distinguish two cases.

\item[{\sc Case 1:}] $\phi'$ is not of the form $\diam{\alpha}\phi''$.
Define $\psi\in t^{-1}_{\rm dr}(\eps\phi')$ as in
Def.~\ref{def:decomposition-dr}.4a, using $\chi$. Then clearly
$\rho(x)\models\psi(x)$ for all $x\in \var(t)$.

\item[{\sc Case 2:}] $\phi'$ is of the form $\diam{\alpha}\phi''$.
By Def.~\ref{def:decomposition}.\ref{dec3}
there is a $P$-ruloid $\frac{H}{t\trans{\alpha}u}$
and a $\xi\in u^{-1}_{\rm dr}(\phi'')$ such that
\[
\chi(x)=\left\{
  \begin{array}{ll}
  \displaystyle{\xi(x)\ \land\ \bigwedge_{x\trans{\beta}y\in H}\!\!\!\!\!\diam{\beta}\xi(y)
                \ \land\  \bigwedge_{x\ntrans{\gamma}\in H}\!\!\!\!\!\neg\diam{\gamma}\top\;} & \mbox{if $x\in\var(t)$}\\
  \top & \mbox{if $x\notin\var(t)$}
  \end{array}
\right.
\]
Such a formula remains valid for $\rho(x)$ if the conjuncts $\diam{\beta}\chi(y)$ are
weakened to $\eps\diam{\beta}\chi(y)$, if some conjuncts $\neg\diam{\gamma}\top$
are dropped (namely the ones for which \plat{$x\!\ntrans\gamma \;\notin H^{s-}$}),
and if the entire formula is prefixed by $\eps$.
Hence there is a $\psi\in t^{-1}_{\rm dr}(\eps\phi')$, defined
according to Def.~\ref{def:decomposition-dr}.4b(iii), such that
$\rho(x)\models\psi(x)$ for all $x\in \var(t)$.

\item[$\boldsymbol{n>0}$] 
Since \plat{$P\vdash_{\it ws}\rho(t)\trans{\tau}p_1$}, by Prop.\
\ref{prop:ruloid} there is a $P$-ruloid $\frac{H}{t\trans{\tau}u}$ and
a {\valuation} $\rho'$ with $P\vdash_{\it ws}\rho'(\mu)$ for all $\mu\in H$,
$\rho'(t)=\rho(t)$, i.e.\ $\rho'(x)=\rho(x)$ for all $x\mathbin\in \var(t)$,
and $\rho'(u)\mathbin=p_1$. Since $P \vdash_{\it ws}
\rho'(u)\mathbin=p_1\trans{\tau}\cdots\trans{\tau}p_n\models\phi'$,
by induction on $n$ there is a $\chi\in u^{-1}_{\rm dr}(\eps\phi')$ with
$\rho'(z)\models\chi(z)$ for each $z\in \var(u)$. Moreover, by
Lem.~\ref{lem:psi1-dr}, $\rho'(z)\models\chi(z)\equiv\top$ for each
$z\notin\var(u)$. We distinguish two cases.

\item[{\sc Case 1:}]
$\frac{H}{t\trans{\tau}u}$ is $\Gamma$-impatient.\vspace{1pt}
Define $\psi \mathbin\in t^{-1}_{\rm dr}(\eps\phi')$ as in
Def.~\ref{def:decomposition-dr}.4b(i), using $\frac{H}{t \trans{\tau} u}$ and $\chi$. 
Let $x\in\var(t)$.\vspace{1pt}
For each \plat{$x\trans{\beta}y\in H$}, \plat{$P \vdash_{\it ws} \rho'(x)
  \trans{\beta}\rho'(y)$} and $\rho'(y) \models \chi(y)$, so $\rho'(x)
\models\diam{\beta}\chi(y)$, and thus certainly
$\rho'(x)\models\eps\diam{\beta}\chi(y)$. Moreover, for each
\plat{$x\!\ntrans{\gamma}\;\in H^{s-}\subseteq H$}, \plat{$P \vdash_{\it ws} \rho'(x)
  \ntrans{\gamma}$}, so the consistency of $\vdash_{\it ws}$ yields
$P\not\vdash_{\it ws}\rho'(x)\trans{\gamma}q$ for all $q$, and thus
$\rho'(x) \models \neg\diam{\gamma}\top$. Hence $\rho(x) \mathbin= \rho'(x)
\models \psi(x)$. (In case the occurrence of $x$ in $t$ is
$\Gamma$-liquid, note that $p\models\xi$ implies $p\models\eps\xi$.)

\item[{\sc Case 2:}]
$\frac{H}{t\trans{\tau}u}$ is $\Gamma$-patient.
  Using that $t$ is univariate, $H$ must be of the form
{$\{x_0\mathbin{\trans{\tau}}y_0\}$}, with $u\mathbin=t[y_0/x_0]$, and the
unique occurrence of $y_0$ in $u$ being $\Gamma$-liquid.
Let $\sigma: V\mathbin\rightarrow V$ be the bijection that swaps $x_0$
  and $y_0$, so that $u=\sigma(t)$.  According to
  Lem.~\ref{lem:alpha}, there is a $\chi'\in t^{-1}_{\rm dr}(\eps\phi')$
  satisfying $\chi'(x) \equiv \chi(\sigma(x))$ for all $x \mathbin\in V$.
  For each $x\in \var(t)\backslash\{x_0\}$,
  $\rho(x)=\rho'(x)\models\chi(x)\equiv\chi'(x)$, so
  $\rho(x)\models\chi'(x)$.
  Furthermore, \plat{$P\vdash_{\it ws}\rho'(x_0)\trans{\tau}\rho'(y_0)\models\chi(y_0)$}.
  By Lem.~\ref{lem:psi2},
  $\chi(y_0)\mathbin\equiv\eps\chi(y_0)$. Hence
  $\rho(x_0)=\rho'(x_0)\models\chi(y_0)\equiv\chi'(x_0)$, so
  $\rho(x_0)\models\chi'(x_0)$.
\end{description}

\noindent
Next we address the case $\phi' = \diam\tau\phi''$.
We prove by induction on $n$:
\begin{center}\begin{tabular}{l}
if \plat{$P\vdash_{\it ws}p_i\trans{\tau}p_{i+1}$}
for all $i\mathbin\in\{0,\ldots,n\}$, with $\rho(t)=p_0$ and $p_{n+1}\models\phi''$,\\
then there is a $\psi\in t^{-1}_{\rm dr}(\eps\diam\tau\phi'')$ with $\rho(x)\models\psi(x)$ for all $x\in \var(t)$.
\end{tabular}\end{center}

\begin{description}
\item[$\boldsymbol{n=0}$] 
Since \plat{$P \vdash_{\it ws} \rho(t) \trans{\tau} p_1$} with $p_1 \models \phi''$,
by Prop.\ \ref{prop:ruloid} there is a $P$-ruloid $\frac{H}{t \trans{\tau} u}$
and a {\valuation} $\rho'$ with $P \vdash_{\it ws} \rho'(\mu)$ for
$\mu\mathbin\in H$, \mbox{$\rho'(t)\mathbin=\rho(t)$}, i.e.\ $\rho'(x)
\mathbin= \rho(x)$ for all $x\mathbin\in \var(t)$, and $\rho'(u)\mathbin=p_1$. Since
\mbox{$\rho'(u) \models \phi''$}, by induction on formula construction
there is a $\chi \in u^{-1}_{\rm dr} (\phi'')$ with
\mbox{$\rho'(z) \models \chi(z)$} for each $z \mathbin\in \var(u)$. Moreover, by
Lem.~\ref{lem:psi1-dr}, $\rho'(z) \models \chi(z)\mathbin\equiv\top$ for each $z
\mathbin{\notin} \var(u)$.

Define $\psi \in t^{-1}_{\rm dr}(\eps\diam\tau\phi'')$ as in
Def.~\ref{def:decomposition-dr}.4b(iii), using $\frac{H}{t \trans{\tau} u}$ and $\chi$.
 Let $x \in \var(t)$. For each\vspace{1pt}
\plat{$x\trans{\beta}y\in H$}, \plat{$P \vdash_{\it ws} \rho'(x)
  \trans{\beta} \rho'(y) \models \chi(y)$}, so $\rho'(x)\models\diam{\beta}\chi(y)$,
\vspace{1pt}and thus $\rho'(x)\models\eps\diam{\beta}\chi(y)$. Moreover, for each
\plat{$x\ntrans{\gamma}\;\in H^{s-}\subseteq H$}, \plat{$P \vdash_{\it ws} \rho'(x)
  \ntrans{\gamma}$}, so the consistency of $\vdash_{\it ws}$ yields
$P\not\vdash_{\it ws}\rho'(x)\trans{\gamma}q$ for all $q$, and thus
$\rho'(x) \models \neg\diam{\gamma}\top$. Hence $\rho(x) = \rho'(x)\models \psi(x)$.

\item[$\boldsymbol{n>0}$] 
Since \plat{$P\vdash_{\it ws}\rho(t)\trans{\tau}p_1$}, by Prop.\
\ref{prop:ruloid} there is a $P$-ruloid $\frac{H}{t\trans{\tau}u}$ and
a {\valuation} $\rho'$ with $P\vdash_{\it ws}\rho'(\mu)$ for all $\mu\in H$,
$\rho'(t)=\rho(t)$, i.e.\ $\rho'(x)=\rho(x)$ for all $x\mathbin\in \var(t)$,
and $\rho'(u)=p_1$. We distinguish two cases.

\item[{\sc Case 1:}]
$\frac{H}{t\trans{\tau}u}$ is $\Gamma$-impatient.
Since \mbox{$\rho'(u) \models \eps\phi''$}, by induction on formula construction
there is a $\chi \in u^{-1}_{\rm dr} (\eps\phi'')$ with
\mbox{$\rho'(z) \models \chi(z)$} for each $z \mathbin\in \var(u)$. Moreover, by
Lem.~\ref{lem:psi1-dr}, $\rho'(z) \models \chi(z)\mathbin\equiv\top$ for each
$z \mathbin{\notin} \var(u)$.
Define $\psi \in t^{-1}_{\rm dr}(\eps\diam\tau\phi'')$ as in
Def.~\ref{def:decomposition-dr}.4b(ii), using \plat{$\frac{H}{t \trans{\tau} u}$} and $\chi$. Let $x \in \var(t)$.
That $\rho(x) = \rho'(x)\models \psi(x)$ follows exactly as in the case $n=0$ above.

\item[{\sc Case 2:}]
$\frac{H}{t\trans{\tau}u}$ is $\Gamma$-patient. Since $P \vdash_{\it ws}
\rho'(u)\mathbin=p_1\trans{\tau}\cdots\trans{\tau}p_{n+1}\models\phi''$,
by induction on $n$ there is a $\chi\in u^{-1}_{\rm dr}(\eps\diam\tau\phi'')$ with
$\rho'(z)\models\chi(z)$ for each $z\in \var(u)$.
Using that $t$ is univariate, $H$ must be of the form
\plat{$\{x_0\trans{\tau}y_0\}$}, with $u\mathbin=t[y_0/x_0]$, and the
unique occurrence of $y_0$ in $u$ being $\Gamma$-liquid.
Let $\sigma: V\mathbin\rightarrow V$ be the bijection that swaps $x_0$
  and $y_0$, so that $u=\sigma(t)$.  According to
  Lem.~\ref{lem:alpha}, there is a $\chi'\in t^{-1}_{\rm dr}(\eps\diam\tau\phi'')$
  satisfying $\chi'(x) \equiv \chi(\sigma(x))$ for all $x \mathbin\in V$.
  For each $x\in \var(t)\backslash\{x_0\}$,
  $\rho(x)=\rho'(x)\models\chi(x)\equiv\chi'(x)$, so
  $\rho(x)\models\chi'(x)$.  Furthermore, \plat{$P\vdash_{\it
  ws}\rho'(x_0)\trans{\tau}\rho'(y_0)\models\chi(y_0)$}.
  By Lem.~\ref{lem:psi2},
  $\chi(y_0)\mathbin\equiv\eps\chi(y_0)$. Hence
  $\rho(x_0)=\rho'(x_0)\models\chi(y_0)\equiv\chi'(x_0)$, so
  $\rho(x_0)\models\chi'(x_0)$.
\end{description}
\vspace{2mm}

($\Leftarrow$) Let $\psi\in t^{-1}_{\rm dr}(\eps\phi')$ with
$\rho(x)\models\psi(x)$ for all $x\in\var(t)$. The case where $\psi$ is defined according
to Def.~\ref{def:decomposition-dr}.4a is identical to the treatment of this case in the
proof of Thm.~\ref{thm:decomposition}. We focus on the case where $\psi$ is defined according 
to Def.~\ref{def:decomposition-dr}.4b. Then there are either (case~4b(i)) a $\Gamma$-impatient $P$-ruloid
$\frac{H}{t\trans{\tau}u}$ and a $\chi\in u^{-1}_{\rm dr}(\eps\phi')$,
or (case~4b(ii) with $\phi'=\eps\phi''$) a $\Gamma$-impatient $P$-ruloid
$\frac{H}{t\trans{\tau}u}$ and a $\chi\in u^{-1}_{\rm dr}(\eps\phi'')$,
or (case 4b(iii) with $\phi'=\diam{\alpha}\phi''$) a $P$-ruloid $\frac{H}{t\trans{\alpha}u}$ and a $\chi\in u^{-1}_{\rm dr}(\phi'')$,
such that $\psi$ is defined by
\[
\psi(x)=\left\{
  \begin{array}{ll}
  \eps\psi'(x) & \mbox{if $x$ occurs $\Gamma$-liquid in $t$}\vspace{1mm}\\
  \psi'(x) & \mbox{if $x$ occurs $\Gamma$-frozen in $t$}\vspace{1mm}\\
  \top & \mbox{if $x\notin\var(t)$}
  \end{array}
\right.
\]
where
\[
\psi'(x)~=~\chi(x)\ \land\
          \!\!\!\!\!\bigwedge_{x\trans{\beta}y\in H^+}\!\!\!\!\!\eps\diam{\beta}\chi(y)
          \ \land\ \!\!\!\!\! \bigwedge_{x\ntrans{\gamma}\in
                H^{s-}}\!\!\!\!\!\neg\diam{\gamma}\top\enspace .
\]
So for each $x$ that occurs $\Gamma$-liquid in $t$, $\rho(x)\models\eps\psi'(x)$,
i.e.\ for some $p_x$ we have {$P\vdash_{\it ws}\rho(x)\epsarrow p_x\models\psi'(x)$}.
Moreover, $\rho(x)\models\psi'(x)$ for each $x$ that occurs $\Gamma$-frozen in $t$.
Define $\rho'(x)=p_x$ if $x$ occurs $\Gamma$-liquid in $t$, and
$\rho'(x)=\rho(x)$ otherwise. Since $P$ is $\Gamma$-patient,
$P\vdash_{\it ws}\rho(t)\epsarrow \rho'(t)$. Furthermore, $\rho'(x)\models\psi'(x)$
for all $x\in \var(t)$ implies: $\rho'(x)\models\chi(x)$ for all $x\in \var(t)$;
for each $x \trans{\beta}y \in H^+$ we have $\rho'(x)\models\eps\diam{\beta}\chi(y)$,
i.e.\ $P \vdash_{\it ws} \rho'(x) \epsarrow\trans{\beta} p_y \models \chi(y)$
for some $p_y$; and for each $\mathord{x\ntrans{\gamma}}\in H^{s-}$
we have $P \not\vdash_{\it ws} \rho'(x) \trans{\gamma} q$ for all $q$,
so $P \vdash_{\it ws} \rho'(x) \ntrans{\gamma}$ by the completeness of $P$.
Define $\rho''(x) = \rho'(x)$ and $\rho''(y) = p_y$ for all $x \mathbin\in \var(t)$
and $x\trans{\beta}y \in H$.
Then $\rho''(z) \models \chi(z)$ for all $z\mathbin\in\var(u)$.
Moreover, $P \vdash_{\it ws} \rho''(x) \epsarrow\trans\beta \rho''(y)$
for each premise $x\trans\beta y$ in $H^+$, whereas
$P \vdash_{\it ws} \rho''(x)\!\ntrans\gamma$ for each premise $x\!\ntrans\gamma$ in $H^{s-}$.
\begin{description}
\item[{\sc Case} {\rm 4b(i):}]
Prop.~\ref{prop:delay-resistance} yields
$P \vdash_{\it ws} \rho''(t) \epsarrow \trans{\tau} \rho''(u)$.
By induction on the construction of $\psi$, $\rho''(u) \models \eps\phi'$.
Since \nietplat{$P\vdash_{\it ws}\rho(t)\epsarrow \rho'(t)$} and $\rho'(t)=\rho''(t)$,
it follows that $\rho(t) \models \eps\phi'$.
\item[{\sc Case} {\rm 4b(ii):}]
Prop.~\ref{prop:delay-resistance} yields
$P \vdash_{\it ws} \rho''(t) \epsarrow \trans{\tau} \rho''(u)$.
By induction on formula structure, $\rho''(u) \models \eps\phi''$.
Since \nietplat{$P\vdash_{\it ws}\rho(t)\epsarrow \rho'(t)$} and $\rho'(t)=\rho''(t)$,
it follows that $\rho(t) \models \eps\diam\tau\phi''$.
\item[{\sc Case} {\rm 4b(iii):}]
Prop.~\ref{prop:delay-resistance} yields
$P \vdash_{\it ws} \rho''(t) \epsarrow \trans{\alpha} \rho''(u)$.
By induction on formula structure, $\rho''(u) \models \phi''$.
Since \nietplat{$P\vdash_{\it ws}\rho(t)\epsarrow \rho'(t)$} and $\rho'(t)=\rho''(t)$, it follows that $\rho(t) \models \eps\diam{\alpha}\phi''$.
\qed
\end{description}
\end{proof}

\section{Rooted delay and weak bisimilarity as a congruence}\label{sec:congruence}

A behavioural equivalence $\sim$ is a {\em congruence} for a function symbol
$f$ defined on an LTS if $p_i\sim q_i$ for all $i\in\{1,\ldots,\ar(f)\}$
implies that $f(p_1,\ldots,p_{\ar(f)})\sim f(q_1,\ldots,q_{\ar(f)})$.
We call $\sim$ a congruence \emph{for a TSS} $(\Sigma,R)$, if it is a
congruence for all function symbols from the signature $\Sigma$ with respect to the LTS
generated by $(\Sigma,R)$. This is the case if for any open term
$t \in \mathbb{T}(\Sigma)$ and
any closed substitutions $\rho,\rho':V \rightarrow \mathbb{T}$ we have that
$$\forall x\in\var(t):~~~ \rho(x) \sim \rho'(x) ~~\Rightarrow~~ \rho(t)
\sim \rho'(t)\;.$$
A {\em congruence format} for $\sim$ is a list of syntactic
restrictions on TSSs, such that $\sim$ is guaranteed to be a
congruence for any TSS satisfying these restrictions.

We proceed to apply the decomposition method from the previous section to derive congruence formats
for delay bisimulation and rooted delay bisimulation semantics. The idea behind the construction of these congruence formats is that 
it must be guaranteed that a formula from the characterising logic of the equivalence
under consideration is always decomposed into formulas from this same logic. We prove that
the delay bisimulation format guarantees that a formula from $\IO{d}$ is always decomposed into
formulas from $\IO{d}^\equiv$ (see Prop.~\ref{prop:rdelay-rdelay}).
Likewise, the rooted delay bisimulation format guarantees that a formula from $\IO{rd}$ is always
decomposed into formulas from $\IO{rd}^\equiv$ (see Prop.~\ref{prop:rooted_delay_preservation}).
This implies the desired congruence results (see Thm.~\ref{thm:delay-congruence} and Thm.~\ref{thm:rooted-delay-congruence}, respectively).

At the end of this section it is sketched how these results can be transposed to (rooted) weak bisimilarity,
by adding one condition to the congruence format for (rooted) delay bisimilarity.

\subsection{Congruence format for rooted delay bisimilarity}\label{sec:formats}

We recall the notion of a rooted branching bisimulation safe rule, which underlies
the rooted branching bisimulation format from \cite{FvGdW12}. The congruence format
for rooted delay bisimilarity is obtained by additionally requiring delay resistance.

We assume two predicates on arguments of function symbols from \cite{Fok00,FvGdW12}.
The predicate $\Lambda$ marks arguments that contain processes that
have started executing (but may currently be unable to execute).
The predicate $\aleph$ marks arguments that contain
processes that can execute immediately.
For example, in process algebra, $\Lambda$ and $\aleph$ hold for the arguments of the
merge $t_1\|t_2$, and for the first argument of sequential composition $t_1\cdot t_2$;
they can contain processes that started to execute in the past, and these processes
can continue their execution immediately. On the other hand, $\Lambda$ and $\aleph$
typically do not hold for the second argument of sequential composition; it contains a process
that did not yet start to execute, and cannot execute immediately (in absence of the empty process).
$\Lambda$ does not hold and $\aleph$ holds for the arguments of
alternative composition $t_1+t_2$; they contain processes that did not yet
start to execute, but that can start executing immediately.

\begin{definition}\label{def:rooted_branching_bisimulation_safe}\rm\cite{FvGdW12}\,
A standard ntytt rule $r=\frac{H}{t\trans\alpha u}$ is {\em rooted branching bisimulation safe}
w.r.t.\ $\aleph$ and $\Lambda$ if it satisfies the following conditions.
Let $x\in\var(t)$.
\begin{enumerate}
\vspace{-.5ex}
\item \label{rhs}
Right-hand sides of positive premises occur only $\Lambda$-liquid in $u$.
\vspace{-.5ex}
\item \label{Lambda}
If $x$ occurs only $\Lambda$-liquid in $t$, then $x$ occurs only $\Lambda$-liquid in $r$.
\vspace{-.5ex}
\item \label{aleph}
If $x$ occurs only $\aleph$-frozen in $t$, then $x$ occurs only $\aleph$-frozen in $H$.
\vspace{-.5ex}
\item \label{main}
If $x$ has exactly one $\aleph$-liquid occurrence in $t$, which is also $\Lambda$-liquid,
then $x$ has at most one $\aleph$-liquid occurrence in $H$, which must be in a positive premise.
If moreover this premise is labelled $\tau$, then $r$ must be $\aL$-patient.
\setcounter{saveenumi}{\theenumi}
\vspace{-.5ex}
\end{enumerate}
\end{definition}

\begin{definition}\rm\label{def:rooted_delay_bisimulation_format}
A standard TSS $P$ is in {\em rooted delay bisimulation format} if it
is in ready simulation format and \dr,
and, for some $\aleph$ and $\Lambda$, it is $\aL$-patient and all its transition rules are
rooted branching bisimulation safe w.r.t.\ $\aleph$ and $\Lambda$.

This TSS is in {\em delay bisimulation format} if moreover $\Lambda$ is universal.
\end{definition}

\begin{remark}
If a standard TSS $P$ is in rooted delay bisimulation format, then there
are smallest predicates $\aleph_0$ and $\Lambda_0$ such that $P$ is in
rooted delay bisimulation format w.r.t.\ $\aleph_0$ and $\Lambda_0$.
Namely the $\Lambda_0$-liquid arguments are \emph{generated} by
conditions \ref{rhs} and \ref{Lambda} of Def.~\ref{def:rooted_branching_bisimulation_safe};
they are the smallest collection of arguments such that these two requirements are satisfied.
Likewise the $\aleph_0$-liquid arguments are generated by condition \ref{aleph},
which can be read as ``If $x$ occurs $\aleph$-liquid in $H$, then the
unique occurrence of $x$ in $t$ is $\aleph$-liquid.'' For any
standard TSS $P$ in ready simulation format, $\aleph_0$ and $\Lambda_0$ are determined in this way,
and whether $P$ is in rooted delay bisimulation format then depends solely on whether it is
delay resistant and $\aleph_0\mathord\cap\Lambda_0$-patient, and condition \ref{main} of
Def.~\ref{def:rooted_branching_bisimulation_safe} is satisfied by all rules in $P$.
\end{remark}

\subsection{Preservation of syntactic restrictions}
\label{sec:preservation}

In the definition of modal decomposition, we did not use the rules from the original {\dra}
standard TSS $P$, but the $P$-ruloids. Therefore we must verify that if $P$
is in rooted delay bisimulation format, then the $P$-ruloids are rooted branching
bisimulation safe (Prop.~\ref{prop:preservation_bra_bisimulation_safe}).
In the proof of this preservation result, rules with a negative conclusion
play an important role. For this reason, the notion of rooted branching bisimulation
safety is extended to non-standard rules.

\begin{definition}\label{def:nonstandard-branching}\rm\cite{FvGdW12}\,
An ntytt rule $r=\frac{H}{t\ntrans\alpha}$ is
{\em rooted branching bisimulation safe} w.r.t.\ $\aleph$ and $\Lambda$ if it satisfies conditions
\ref{Lambda} and \ref{aleph} of Def.~\ref{def:rooted_branching_bisimulation_safe}.
\end{definition}

\begin{proposition}{\rm\cite{FvGdW12}}
\label{prop:preservation_bra_bisimulation_safe}
Let $P$ be a standard TSS in ready simulation format, in which each transition rule is
rooted branching bisimulation safe w.r.t.\ $\aleph$ and $\Lambda$.
Then each $P$-ruloid is rooted branching bisimulation safe w.r.t.\ $\aleph$ and $\Lambda$.
\end{proposition}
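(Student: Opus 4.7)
The plan is to trace rooted branching bisimulation safety through the three-stage construction of $P$-ruloids sketched in Sect.~\ref{sec:ruloids}: first the passage $P \leadsto P^\dagger$ to decent ntyft format, then the passage $P^\dagger \leadsto P^\ddagger$ (and $\hat P^\ddagger$) that reduces left-hand sides of positive premises to variables, and finally the augmentation $P^\ddagger \leadsto P^+$ (and $\hat P^\ddagger \leadsto \hat P^+$) with non-standard rules whose conclusions are negative. At each stage I would verify that conditions \ref{rhs}--\ref{main} of Def.~\ref{def:rooted_branching_bisimulation_safe} for standard rules, and the extension in Def.~\ref{def:nonstandard-branching} for non-standard ones, are preserved. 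Since any $P$-ruloid is by definition an nxytt rule irredundantly (equivalently, linearly) provable from $P^+$, a concluding induction on the depth of such a proof will finish the argument.

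The first stage, $P \leadsto P^\dagger$, only replaces free variables by closed terms and replaces variable sources by $f(x_1,\ldots,x_n)$ with fresh $x_i$; neither operation changes where variables occur $\Lambda$-liquid or $\aleph$-liquid in the premises or conclusion (any new closed term has no variables to track, and the fresh variables $x_i$ occupy all argument positions of $f$). Conditions \ref{rhs}--\ref{main} therefore survive verbatim. The second stage, $P^\dagger \leadsto \hat P^\ddagger$, is the critical technical step: a rule $r=\frac{H}{t\trans\alpha u}$ with a premise $f(t_1,\ldots,t_n)\trans\beta y$ is combined with a rule $r'=\frac{H'}{f(x_1,\ldots,x_n)\trans\beta t'}$ into a rule whose premise set is $(H\setminus\{f(t_1,\ldots,t_n)\trans\beta y\})\cup H'[t_i/x_i]$ and whose conclusion is $t\trans\alpha u[t'[t_i/x_i]/y]$. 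Here I would proceed by induction on such combination steps. The key observation is that by condition \ref{rhs} applied to $r$, the variable $y$ occurs only $\Lambda$-liquid in $u$, so substituting $t'[t_i/x_i]$ for $y$ places $\Lambda$-liquid occurrences from $t'$ into $\Lambda$-liquid positions of $u$; hence $\Lambda$-liquid occurrences in the new conclusion come either from $u$ itself or from $\Lambda$-liquid occurrences in $t'$, and conditions \ref{rhs} and \ref{Lambda} for $r'$ then pass through. A similar bookkeeping, using condition \ref{aleph} (and hence the location of $\aleph$-liquid occurrences in $H'$), yields \ref{aleph} for the combined rule. For condition \ref{main} one has to count $\aleph$-liquid occurrences of each $x$ in the merged premise set; here the fact that $r$ has at most one $\aleph$-liquid occurrence of any variable in a positive premise (whose right-hand side is $\Lambda$-liquid hence $\aleph$-liquid only at the points dictated by \ref{rhs}) ensures the combined occurrence count stays bounded by one, and the $\aL$-patience clause for $\tau$-labelled premises of $r'$ transfers appropriately.

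The third stage adds, for each family of standard rules $\frac{H_j}{f(x_1,\ldots,x_n)\trans\alpha t_j}$ in $\hat P^\ddagger$, a non-standard rule $\frac{H}{f(x_1,\ldots,x_n)\ntrans\alpha}$ whose premises are chosen as denials of one premise from each $H_j$. Only conditions \ref{Lambda} and \ref{aleph} are required (Def.~\ref{def:nonstandard-branching}); condition \ref{aleph} follows because each selected denial originates from a premise of some rule satisfying \ref{aleph}, while condition \ref{Lambda} is automatic since the source $f(x_1,\ldots,x_n)$ has each argument occurring once. Finally, an irredundant (linear) proof from $\hat P^+$ of an nxytt rule composes safe rules along its tree; the inductive closure argument used in the second stage applies again, yielding a rooted branching bisimulation safe $P$-ruloid. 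The step I expect to be the genuine obstacle is the second one: keeping conditions \ref{rhs} and \ref{main} simultaneously under control through possibly infinite iterations of the premise-reducing transformation, since two $\aleph$-liquid occurrences of the same variable can only be avoided by invoking \ref{rhs} of the outer rule together with \ref{aleph} of the inner rule in a careful interleaving; this is where the construction from \cite{FvG96} feeding into \cite{FvGdW12} must be used with exactly the right bookkeeping.
\qed
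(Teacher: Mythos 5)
Your overall route is the same one the paper (via its citation of \cite{FvGdW12}) takes: push safety through the three stages $P \leadsto P^\dagger \leadsto P^\ddagger \leadsto P^+$ and then through the proofs of nxytt rules from $P^+$. The first and third stages are indeed routine, although your justification of condition~\ref{Lambda} for the non-standard rules is wrong as stated: that condition constrains the occurrences of $x$ in the \emph{premises}, so it is not ``automatic'' from the source being $f(x_1,\ldots,x_{\ar(f)})$; it holds because each selected premise (and hence its denial, which has the same left-hand side) comes from a rule that itself satisfies condition~\ref{Lambda} --- the same reason you give for condition~\ref{aleph}.

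The genuine gap is the second stage, which you yourself flag as ``the genuine obstacle'' and then do not carry out. This is precisely the content of Lem.~\ref{lem:preservation_branching_bisimulation_safe} (preservation of rooted branching bisimulation safety under irredundant provability from a TSS in decent ntyft format), which the paper names as the crucial step and imports from \cite{FvGdW12}; the paper organises both your stage two and your concluding induction as two applications of that single lemma, rather than as an induction on the \cite{FvG96} combination steps. The sketch you offer in its place does not go through: ``whose right-hand side is $\Lambda$-liquid hence $\aleph$-liquid'' is a non sequitur ($\Lambda$-liquidity says nothing about $\aleph$), and condition~\ref{main} is a \emph{conditional} statement, so bounding the $\aleph$-liquid occurrences of a source variable $x$ in the merged premise set $H'[t_i/x_i]$ requires a case analysis: whether the unique $\aleph$-liquid occurrence of $x$ in the premises of the outer rule lies in the replaced premise or elsewhere, an appeal to condition~\ref{aleph} of the inner rule when the relevant argument of $f$ is $\aleph$-frozen, an appeal to conditions~\ref{Lambda} and~\ref{main} of the inner rule when it is $\aL$-liquid, all combined with decency of the rules involved --- and on top of that the $\aL$-patience clause of condition~\ref{main} must be transferred to the composed rule when the surviving positive premise is labelled $\tau$, a point your proposal does not address at all. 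The kind of interleaved bookkeeping actually needed is exemplified in this paper by the proof of Lem.~\ref{lem:preservation_smooth}; without an argument of that calibre for conditions~\ref{rhs}--\ref{main}, the core of the proposition remains unproved.
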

The following lemma is a crucial step in the proof of Prop.~\ref{prop:preservation_bra_bisimulation_safe}.

\begin{lemma}{\rm\cite{FvGdW12}}
\label{lem:preservation_branching_bisimulation_safe}
Let $P$ be a TSS in decent ntyft format, in which each transition rule is
rooted branching bisimulation safe w.r.t.\ $\aleph$ and $\Lambda$.
Then any ntytt rule irredundantly provable from $P$ is
rooted branching bisimulation safe w.r.t.\ $\aleph$ and $\Lambda$.%
\end{lemma}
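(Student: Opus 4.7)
My plan is to argue by induction on the structure of the irredundant proof tree of the ntytt rule in question, say $\frac{H}{\lambda}$. In the base case the proof is a single hypothesis node, giving the trivial rule $\frac{\lambda}{\lambda}$; the four conditions of Def.~\ref{def:rooted_branching_bisimulation_safe} can be verified directly, since source and conclusion coincide (so there is nothing for any condition to rule out).

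For the inductive step, the root labelled $\lambda$ is derived by applying a substitution instance $\sigma(r)$ of some rule $r = \frac{K'}{\lambda'}$ of $P$, with children labelled by the literals of $K = \sigma(K')$. Each child is the root of a subtree that irredundantly proves an ntytt rule $\frac{H_\mu}{\mu}$ for $\mu \in K$, with $H = \bigcup_{\mu \in K} H_\mu$. By the induction hypothesis each $\frac{H_\mu}{\mu}$ is rooted branching bisimulation safe, and so is $r$ by assumption. It remains to lift safety to $\frac{H}{\lambda}$.

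For conditions~\ref{rhs}, \ref{Lambda} and~\ref{aleph} my approach is to systematically track the flow of $\Lambda$-liquid/frozen and $\aleph$-liquid/frozen occurrences: safety of $r$ governs occurrences of the source-variables of $r$ in $K'$ and $\lambda'$; the substitution $\sigma$ controls how these occurrences project onto occurrences of variables of $\var(\lambda)$; and the induction hypothesis controls what happens inside each sub-rule $\frac{H_\mu}{\mu}$. Here I would invoke Lem.~\ref{lem:preservation_decency} (preservation of decency) to guarantee that the right-hand sides of positive premises in $r$ do not clash with source variables or with premise variables in other sub-proofs, so the separate contributions can be combined without interference.

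The hard part will be condition~\ref{main}, the uniqueness requirement for $\aleph$-liquid occurrences together with the $\aL$-patience clause. For a variable $x \in \var(t)$ with a unique $\aleph$-liquid-and-$\Lambda$-liquid occurrence in $t$ (the source of $\lambda$), this occurrence must arise by composing a unique such occurrence of some source-variable $x'$ of $r$ in the source of $r$ with a unique such occurrence of $x$ in $\sigma(x')$. Applying condition~\ref{main} to $r$ then isolates at most one positive premise of $r$ whose left-hand side contains $x'$ $\aleph$-liquid, and by induction applied to the corresponding $\frac{H_\mu}{\mu}$, $x$ has at most one $\aleph$-liquid occurrence in $H_\mu$, located in a positive premise; in the remaining sub-proofs $x$ occurs only $\aleph$-frozen (by condition~\ref{aleph} for other premises of $r$). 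Assembling these pieces gives uniqueness of an $\aleph$-liquid occurrence of $x$ in $H$. For the patience clause, if the isolated premise is labelled $\tau$ then so is the relevant premise of $r$ and the relevant premise inside $H_\mu$, forcing both $r$ and the sub-derivation to be $\aL$-patient; a composition of $\aL$-patient derivations remains $\aL$-patient, so $\frac{H}{\lambda}$ is $\aL$-patient too. The trickiest bookkeeping will be in this assembly step, where several potential sources of multiplicity (from $r$, from $\sigma$, and from the sub-proofs) must be simultaneously ruled out.
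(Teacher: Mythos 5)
Your proposal is correct and follows essentially the same route as the proof of this lemma in \cite{FvGdW12}, which the present paper mirrors in its proof of the analogous Lem.~\ref{lem:preservation_smooth}: structural induction on the irredundant proof tree, with Lem.~\ref{lem:preservation_decency} supplying decency of the sub-rules and conditions \ref{Lambda}--\ref{main} of Def.~\ref{def:rooted_branching_bisimulation_safe} for the bottom rule localizing the $\aleph$-liquid occurrences exactly as you describe. Your treatment of the patience clause of condition~\ref{main} (a $\tau$-labelled isolated premise forces the sub-derivation and hence the bottom rule to be $\aL$-patient, and compositions of $\aL$-patient derivations remain $\aL$-patient) is likewise the argument used there.
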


\subsection{Preservation of modal characterisations}\label{sec:bra}

Consider a standard TSS that is in rooted delay bisimulation format, w.r.t.\ some $\aleph$ and $\Lambda$.
Def.~\ref{def:decomposition-dr} yields decomposition mappings $\psi\mathbin\in
t^{-\!1}_{\rm dr}\hspace{-1pt}(\phi)$, with $\Gamma=\aL$. In this section we will first prove that
if $\phi\in\IO{d}$, then $\psi(x)\in\IO{d}^{\equiv}$ if $x$ occurs only
$\Lambda$-liquid in $t$. (That is why in the delay bisimulation
format, $\Lambda$ must be universal.) Next we will prove that if
$\phi\in\IO{rd}$, then $\psi(x)\in\IO{rd}^{\equiv}$ for all variables
$x$. From these preservation results we will, in
Sect.~\ref{sec:congruenceresults}, deduce the promised congruence
results for delay bisimilarity and rooted delay bisimilarity, respectively.

\begin{proposition}\label{prop:rdelay-rdelay}
Let $P$ be a {\dra}, $\aL$-patient standard TSS in ready simulation format,
in which each transition rule is rooted branching bisimulation safe w.r.t.\ $\aleph$ and $\Lambda$.
For any term $t$ and variable $x$ that occurs only $\Lambda$-liquid in $t$:
\[
\phi\in\IO{d}\ \Rightarrow\ \forall\psi\in t^{-1}_{\rm dr}(\phi):\psi(x)\in\IO{d}^\equiv\enspace .
\]
\end{proposition}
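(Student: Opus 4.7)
The plan is to proceed by induction on the construction of $\psi$ as given by Def.~\ref{def:decomposition-dr}, with ancillary structural induction on $\phi$. Throughout I exploit that $\IO{d}^\equiv$ is closed under $\bigwedge$, $\neg$, and $\eps$, and that $\IO{d}$ contains $\eps\diam{a}\phi'$ only for concrete $a\in A$. The cases $\phi = \bigwedge\phi_i$, $\phi = \neg\phi'$, and the non-univariate case~5 of Def.~\ref{def:decomposition} follow directly: the IH on the sub-decompositions combined with the closure properties of $\IO{d}^\equiv$ gives the result. So I will focus on the case where $t$ is univariate and $\phi = \eps\phi'$.

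Because $\phi\in\IO{d}$, either $\phi'\in\IO{d}$ or $\phi' = \diam{a}\phi''$ with $a\in A$ concrete; in particular $\phi'$ is never of the form $\diam\tau\phi''$, so case~4b(ii) of Def.~\ref{def:decomposition-dr} does not arise. In case~4a the conclusion is immediate: $\psi(x)$ is either $\eps\chi(x)$ or $\chi(x)$ for some $\chi\in t^{-1}_{\rm dr}(\phi')$, and the IH on $\chi$ combined with closure under $\eps$ finishes this case.

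The heart of the proof is case~4b, which supplies a $P$-ruloid $r=\frac{H}{t\trans{\gamma}u}$. By Prop.~\ref{prop:preservation_bra_bisimulation_safe}, $r$ is rooted branching bisimulation safe w.r.t.\ $\aleph$ and $\Lambda$. Conditions~\ref{rhs} and~\ref{Lambda} of Def.~\ref{def:rooted_branching_bisimulation_safe} ensure that each right-hand side $y$ of a positive premise occurs only $\Lambda$-liquid in $u$, and that $x$ occurs only $\Lambda$-liquid in $u$, so the IH on $\chi$ gives $\chi(x),\chi(y)\in\IO{d}^\equiv$. I would then split on whether the unique occurrence of $x$ in the univariate $t$ is $\Gamma$-liquid or $\Gamma$-frozen, where $\Gamma=\aL$. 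When $x$ is $\Gamma$-frozen it is in fact $\aleph$-frozen (since $x$ is $\Lambda$-liquid); by condition~\ref{aleph} $x$ cannot appear $\aleph$-liquid in $H$, hence not as the LHS of any premise, so both big conjunctions in $\psi(x)$ are empty and $\psi(x)\equiv\chi(x)\in\IO{d}^\equiv$. When $x$ is $\Gamma$-liquid, condition~\ref{main} confines every $\aleph$-liquid occurrence of $x$ in $H$ to a positive premise (hence no negative premise has $x$ as LHS, killing the conjunction over $H^{s-}$) and forces any $\tau$-labelled positive premise from $x$ to occur only when $r$ is $\Gamma$-patient. In case~4b(i), $r$ is $\Gamma$-impatient by hypothesis; in case~4b(iii), the conclusion $t\trans{\gamma}u$ has $\gamma=a\in A$ concrete (since $\phi'=\diam{a}\phi''$), and $\Gamma$-patience rules have $\tau$-labelled conclusions, so $r$ cannot be $\Gamma$-patient. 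Either way, the positive premises from $x$ carry only concrete labels, the conjuncts $\eps\diam{a}\chi(y)$ are directly in $\IO{d}$, and wrapping the resulting conjunction in the outer $\eps$ keeps $\psi(x)$ in $\IO{d}^\equiv$.

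The main obstacle is precisely this $\Gamma$-liquid sub-case: both the negative-premise conjuncts $\neg\diam{\gamma}\top$ and potential $\tau$-labelled positive-premise conjuncts $\eps\diam{\tau}\chi(y)$ would in general fall outside $\IO{d}^\equiv$, so the whole plan hinges on eliminating them. The elimination rests on condition~\ref{main} of Def.~\ref{def:rooted_branching_bisimulation_safe} together with the key observation that in case~4b(iii) the ruloid's conclusion is forced to be concrete by $\phi\in\IO{d}$, and hence $r$ cannot be a $\Gamma$-patience rule.
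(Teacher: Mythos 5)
Your proposal is correct and follows essentially the same route as the paper's proof: simultaneous induction on $\phi$ and the construction of $\psi$, with the decisive step being Prop.~\ref{prop:preservation_bra_bisimulation_safe} together with conditions \ref{rhs}--\ref{main} of Def.~\ref{def:rooted_branching_bisimulation_safe}, splitting on whether the unique occurrence of $x$ is $\aleph$-frozen (condition \ref{aleph} empties the conjunctions) or $\aleph$-liquid (condition \ref{main} plus $\aL$-impatience of the ruloid---automatic in case 4b(iii) since its conclusion carries a concrete label---excludes negative and $\tau$-labelled premises from $x$). The only difference is presentational: you merge the cases $\eps\phi'$ and $\eps\diam{a}\phi'$, which the paper treats separately but resolves by the identical argument.
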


\begin{proof}
We apply simultaneous induction on the structure of $\phi\in\IO{d}$
and the construction of $\psi$. Let $\psi\in t^{-1}_{\rm dr}(\phi)$, and let
$x$ occur only $\Lambda$-liquid in $t$. First we treat the case where
$t$ is univariate. If $x\mathbin{\notin}\var(t)$, then by
Lem.~\ref{lem:psi1-dr}, $\psi(x)\mathbin{\equiv}\top\mathbin\in\IO{d}^\equiv$.
So suppose $x$ has exactly one, $\Lambda$-liquid occurrence in $t$.

We need to consider the four possible syntactic forms of $\phi$ in the BNF grammar
of $\IO{d}$ in Def.~\ref{def:modal-characterisations}.
\begin{itemize}
\item $\phi=\bigwedge_{i\in I}\phi_i$ with $\phi_i\in\IO{d}$ for each $i\in I$. By Def.~\ref{def:decomposition}.\ref{dec1}, $\psi(x)=\bigwedge_{i\in I}\psi_i(x)$ with $\psi_i\in t^{-1}_{\rm dr}(\phi_i)$ for each $i\in I$. By  induction on formula structure, $\psi_i(x)\in\IO{d}^\equiv$ for each $i\in I$, so $\psi(x)\in\IO{d}^\equiv$.\vspace{1mm}

\item $\phi=\neg\phi'$ with $\phi'\in\IO{d}$. By Def.~\ref{def:decomposition}.\ref{dec2}, there is a function $h:t^{-1}_{\rm dr}(\phi')\rightarrow \var(t)$ such that $\psi(x)=\bigwedge_{\chi\in h^{-1}(x)}\neg\chi(x)$. By  induction on formula structure, $\chi(x)\mathbin\in\IO{d}^\equiv$ for each $\chi\in h^{-1}(x)$, so $\psi(x)\in\IO{d}^\equiv$.\vspace{1mm}

\item
$\phi=\eps\phi'$ with $\phi'\mathbin\in\IO{d}$ (which implies that $\phi'$ is not of the form $\diam{a}\phi''$). According to Def.~\ref{def:decomposition-dr}.4, we can distinguish two cases.\vspace{1mm}

\begin{description}
\item[{\sc Case 1:}] $\psi(x)$ is defined on the basis of Def.~\ref{def:decomposition-dr}.4a.
  Then either $\psi(x)=\eps\chi(x)$ or $\psi(x)=\chi(x)$ for some $\chi\in t^{-1}_{\rm dr}(\phi')$.
  By  induction on formula structure, $\chi(x)\in\IO{d}^\equiv$. So $\psi(x)\in\IO{d}^\equiv$.\vspace{1mm}

\item[{\sc Case 2:}] $\psi(x)$ is defined on the basis of
  Def.~\ref{def:decomposition-dr}.4b, employing an $\aL$-impatient $P$-ruloid
  \plat{$\frac{H}{t\trans{\tau}u}$} and a $\chi\in u^{-1}_{\rm dr}(\eps\phi')$.
  The rest of this case proceeds exactly as the next case below.
\end{description}

\item
$\phi=\eps\diam{a}\phi'$ with $\phi'\in\IO{d}$. Then $\psi(x)$ is defined on the basis of Def.~\ref{def:decomposition-dr}.4(b),
employing either an $\aL$-impatient $P$-ruloid $\frac{H}{t\trans{\tau}u}$ and a $\chi\in u^{-1}_{\rm dr}(\eps\diam{a}\phi')$,
or a $P$-ruloid $\frac{H}{t\trans{a}u}$ and a $\chi\in u^{-1}_{\rm dr}(\phi')$.
By Prop.~\ref{prop:preservation_bra_bisimulation_safe} we can assume that $\frac{H}{t\trans{\tau}u}$ or $\frac{H}{t\trans{a}u}$
is rooted branching bisimulation safe w.r.t.\ $\aleph$ and $\Lambda$. Since the occurrence of $x$ in $t$ is
$\Lambda$-liquid, by condition \ref{Lambda} of Def.~\ref{def:rooted_branching_bisimulation_safe}, $x$ occurs only
$\Lambda$-liquid in $u$. Moreover, by condition~\ref{rhs} of Def.~\ref{def:rooted_branching_bisimulation_safe}, variables in ${\it rhs}(H)$
occur only $\Lambda$-liquid in $u$. So by induction on the construction of $\psi$ or on formula size, $\chi(x)\in\IO{d}^\equiv$,
and $\chi(y)\in\IO{d}^\equiv$ for each \nietplat{$x\trans{\beta}y\in H$}. We distinguish two cases.\vspace{1mm}

\begin{description}
\item[{\sc Case 1:}]
The occurrence of $x$ in $t$ is $\aleph$-frozen. By condition \ref{aleph} of
Def.~\ref{def:rooted_branching_bisimulation_safe}, $x$ does not occur in $H$.
Hence $\psi(x)=\chi(x)\in\IO{d}^\equiv$.\vspace{1mm}
\item[{\sc Case 2:}]
The occurrence of $x$ in $t$ is $\aleph$-liquid. Since moreover the occurrence of $x$ in $t$ is $\Lambda$-liquid,
by condition \ref{main} of Def.~\ref{def:rooted_branching_bisimulation_safe},
$x$ is the left-hand side of at most one premise in $H$, which is positive. And since
$\frac{H}{t\trans{\tau}u}$ or $\frac{H}{t\trans{a}u}$ is $\aL$-impatient, this positive premise does not carry the label $\tau$.
Hence $\psi(x)=\eps\chi(x)$ or $\psi(x)=\eps(\chi(x)\wedge\eps\diam{b}\chi(y))$ with $b\in A$ and \plat{$x\trans b y\in H$}.
In either case, $\psi(x)\in\IO{d}^\equiv$.
\end{description}
\end{itemize}
Finally, we treat the case where $t$ is not univariate. Then $t=\sigma(u)$ for some univariate term $u$ and non-injective mapping $\sigma:\var(u)\rightarrow V$. By Def.~\ref{def:decomposition}.\ref{dec6}, $\psi(x)=\bigwedge_{z\in\sigma^{-1}(x)}\chi(z)$ for some $\chi\in u^{-1}_{\rm dr}(\phi)$. Since $u$ is univariate, and for each $z\in\sigma^{-1}(x)$ the occurrence in $u$ is $\Lambda$-liquid, $\chi(z)\in\IO{d}^\equiv$ for all $z\in\sigma^{-1}(x)$. Hence $\psi(x)\in\IO{d}^\equiv$.
\qed
\end{proof}

\begin{proposition}\label{prop:rooted_delay_preservation}
Let $P$ be a {\dra}, $\aL$-patient standard TSS in ready simulation format,
in which each transition rule is rooted branching bisimulation safe w.r.t.\ $\aleph$ and $\Lambda$.
For any term $t$ and variable $x$:
\[
\phi\in\IO{rd}\ \Rightarrow\ \forall\psi\in t^{-1}_{\rm dr}(\phi):\psi(x)\in\IO{rd}^\equiv\enspace .
\]
\end{proposition}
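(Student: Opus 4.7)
The plan is to mirror the induction scheme of Prop.~\ref{prop:rdelay-rdelay}: simultaneous induction on the structure of $\phi\in\IO{rd}$ and the construction of $\psi\in t^{-1}_{\rm dr}(\phi)$. First I would reduce to univariate $t$ using Def.~\ref{def:decomposition}.\ref{dec6}; dispatch $x\notin\var(t)$ via Lem.~\ref{lem:psi1-dr}; and handle the productions $\phi=\bigwedge_{i\in I}\phi_i$ and $\phi=\neg\phi'$ by routine induction, since $\IO{rd}$ is closed under finite conjunction and negation.

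The heart of the proof is the production $\phi=\eps\diam{\alpha}\hat\phi$ with $\hat\phi\in\IO{d}$ and $\alpha\in A\cup\{\tau\}$ (together with the analogous production $\phi=\eps\phi'$, arising via the clause $\hat\phi\in\IO{d}$, when $\phi'\in\IO{d}$ is not of the form $\diam{\alpha}\phi''$). Def.~\ref{def:decomposition-dr}.4b supplies a $P$-ruloid $\frac{H}{t\trans{\tau}u}$ or $\frac{H}{t\trans{\alpha}u}$ and a decomposition $\chi$ of some formula in $\IO{d}$---namely $\eps\diam{\alpha}\hat\phi$ (case 4b(i), requiring $\alpha\in A$), or $\eps\hat\phi$ (case 4b(ii), with $\alpha=\tau$), or $\hat\phi$ (case 4b(iii)). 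By Prop.~\ref{prop:preservation_bra_bisimulation_safe} the ruloid is rooted branching bisimulation safe, so condition~\ref{rhs} forces right-hand sides $y$ of positive premises to occur only $\Lambda$-liquid in $u$, whence Prop.~\ref{prop:rdelay-rdelay} yields $\chi(y)\in\IO{d}^\equiv$; each conjunct $\eps\diam{\beta}\chi(y)$ of $\psi(x)$ therefore lies in $\IO{rd}^\equiv$ via the clause $\eps\diam{\beta}\hat\phi'$. For $\chi(x)$ itself, when $x$ is $\Lambda$-liquid in $t$ condition~\ref{Lambda} propagates $\Lambda$-liquidity to $u$, so Prop.~\ref{prop:rdelay-rdelay} again gives $\chi(x)\in\IO{d}^\equiv$; when $x$ is $\Lambda$-frozen in $t$, the IH applied to the strictly smaller formula decomposed by $\chi$ (using the ordering from Thm.~\ref{thm:dr-decomposition} for the 4b(i) self-recursion) gives $\chi(x)\in\IO{rd}^\equiv$. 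For $\aL$-liquid $x$, conditions~\ref{aleph} and~\ref{main} together with $\aL$-impatience of the ruloid (automatic in 4b(i), 4b(ii), and 4b(iii) with $\alpha\in A$) rule out negative premises with $x$ on the left, permit at most one positive such premise, and force its label into $A$; the outer $\eps$ then returns a formula in $\IO{d}^\equiv\subset\IO{rd}^\equiv$. For $\aL$-frozen $x$ there is no outer $\eps$, so the conjunction lies directly in $\IO{rd}^\equiv$.

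Two items are new with respect to Prop.~\ref{prop:rdelay-rdelay} and I expect them to be the main obstacles. First, for $x$ that is $\aleph$-liquid but $\Lambda$-frozen in $t$, condition~\ref{main} no longer rules out negative premises with $x$ on the left, so $\psi(x)$ can contain conjuncts $\neg\diam{\gamma}\top$. Here I appeal to the very definition of $H^{s-}$: whenever $x\ntrans{\gamma}\in H^{s-}$, also $x\ntrans{\tau}\in H^{s-}$, so the conjunct $\neg\diam{\tau}\top$ is present; under it, $\diam{\gamma}\top$ and $\eps\diam{\gamma}\top$ coincide semantically (no $\tau$-transitions are available), and hence $\bigwedge_{x\ntrans{\gamma}\in H^{s-}}\neg\diam{\gamma}\top\equiv\bigwedge_{x\ntrans{\gamma}\in H^{s-}}\neg\eps\diam{\gamma}\top$, which lies in $\IO{rd}^\equiv$. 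Second, subcase~4b(iii) can invoke an $\aL$-patient $P$-ruloid when $\alpha=\tau$, a possibility absent from Prop.~\ref{prop:rdelay-rdelay} (where $\alpha\in A$ forces $\aL$-impatience). The rigid shape of patience ruloids over univariate $t$ rescues the argument: such a ruloid necessarily has $H=\{x_i\trans{\tau}y\}$ with $u=t[y/x_i]$ for some $\aL$-liquid variable $x_i$ of $t$, so $x_i\notin\var(u)$, Lem.~\ref{lem:psi1-dr} gives $\chi(x_i)\equiv\top$, and $\psi(x_i)$ collapses to $\eps\diam{\tau}\chi(y)$, a formula directly in $\IO{rd}^\equiv$; the remaining variables are then treated exactly as in the $\aL$-impatient analysis above.
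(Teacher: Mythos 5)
Your proposal is correct and follows essentially the same route as the paper's proof: the same simultaneous induction on $\phi$ and the construction of $\psi$, the same appeals to Prop.~\ref{prop:preservation_bra_bisimulation_safe} and to Prop.~\ref{prop:rdelay-rdelay} (via conditions \ref{rhs} and \ref{Lambda}), the same replacement of $\neg\diam{\gamma}\top$ by $\neg\eps\diam{\gamma}\top$ justified by the presence of $x\ntrans{\tau}$ in $H^{s-}$, and the same collapse of $\psi(x)$ to $\eps\diam{\tau}\chi(y)$ via Lem.~\ref{lem:psi1-dr} in the $\aL$-patient subcase. The only differences are presentational (the paper enters that subcase through condition \ref{main} applied to a $\tau$-labelled premise of an $\aL$-liquid variable, and your attribution of the formula-ordering trick to 4b(i) rather than 4b(ii) is a harmless slip given the simultaneous induction you set up).
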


\begin{proof}
We apply simultaneous induction on the structure of $\phi\in\IO{rd}$ and the construction of $\psi$.
Let $\psi\in t^{-1}_{\rm dr}(\phi)$. We restrict
attention to the case where $t$ is univariate; the general case then
follows just as at the end of the proof of Prop.~\ref{prop:rdelay-rdelay}. If
$x\notin\var(t)$, then by Lem.~\ref{lem:psi1-dr}, $\psi(x)\equiv\top\in\IO{rd}^\equiv$. So suppose
$x$ occurs once in $t$.

\begin{itemize}
\item The cases $\phi=\bigwedge_{i\in I}\phi_i$ and
$\phi=\neg\phi'$ proceed as in the proof of Prop.~\ref{prop:rdelay-rdelay}, replacing $\IO{d}$ by $\IO{rd}$.

\item
$\phi=\eps\diam\alpha\phi'$ with $\phi'\in\IO{d}$.
According to Def.~\ref{def:decomposition-dr}.4(b),
\[
\psi(x)=\left\{
\begin{array}{ll}
\displaystyle{\eps\left(\chi(x)\ \land\
\!\!\!\!\!\bigwedge_{x\trans{\beta}y\in H^+}\!\!\!\!\!\eps\diam{\beta}\chi(y)\ \land\ \!\!\!\!\! \bigwedge_{x\ntrans{\gamma}\in
                H^{s-}}\!\!\!\!\!\neg\diam{\gamma}\top\right)} & \mbox{\begin{tabular}{l}if
$x$ occurs\\ $\aL$-liquid in $t$\end{tabular}}\\
\displaystyle{\chi(x)\ \land\ \!\!\!\!\!\bigwedge_{x\trans{\beta}y\in H^+}\!\!\!\!\!\eps\diam{\beta}\chi(y)\ \land\ \!\!\!\!\! \bigwedge_{x\ntrans{\gamma}\in
                H^{s-}}\!\!\!\!\!\neg\diam{\gamma}\top} &
\mbox{\begin{tabular}{l}if $x$ occurs\\ $\aL$-frozen in $t$\end{tabular}}
\end{array}
\right.
\]
where there is a $P$-ruloid $\frac{H}{t\trans{\alpha}u}$ with $\chi\in u^{-1}_{\rm dr}(\phi')$,
or there is an $\aleph\mathord\cap\Lambda$-impatient $P$-ruloid \plat{$\frac{H}{t\trans{\tau}u}$}
with $\chi\mathbin\in u^{-1}_{\rm dr}(\xi)$. Here $\xi=\eps\diam\alpha\phi'$ if $\alpha\neq\tau$
and $\xi=\eps\phi'$ if $\alpha=\tau$. In either case $\xi\in\IO{d}$.

If $H^{s-}$ contains a negative premise with left-hand side $x$, then by
the definition of $H^{s-}$, it also contains $x\!\ntrans{\tau}$.
Clearly $p\models \neg\diam{\tau}\top\,\land\,\neg\diam{\gamma}\top$ if and only if $p\models \neg\diam{\tau}\top\,\land\,\neg\eps\diam{\gamma}\top$,
and moreover $p\models \neg\diam{\tau}\top$ if and only if $p\models \neg\eps\diam{\tau}\top$, for any process $p$ and action $\gamma$.
This implies that the conjuncts $\bigwedge_{x\ntrans{\gamma}\in H^{s-}}\neg\diam{\gamma}\top$ in $\psi(x)$ can be replaced by
\[
\bigwedge_{x\ntrans{\gamma}\in H^{s-}}\!\!\!\!\!\neg\eps\diam{\gamma}\top\enspace .
\]
By induction on the construction of $\psi$ or on formula size, $\chi(x)\in\IO{rd}^\equiv$.
By Prop.~\ref{prop:preservation_bra_bisimulation_safe} we can assume that \plat{$\frac{H}{t\trans{\tau}u}$} or \plat{$\frac{H}{t\trans\alpha u}$}
is rooted branching bisimulation safe w.r.t.\ $\aleph$ and $\Lambda$; so by condition \ref{rhs}
of Def.~\ref{def:rooted_branching_bisimulation_safe}, variables in ${\it rhs}(H)$ 
occur only $\Lambda$-liquid in $u$. Hence, by
Prop.~\ref{prop:rdelay-rdelay}, $\chi(y)\in\IO{d}^\equiv$ for each $x\trans{\beta}y\in H$.
In case the occurrence of $x$ in $t$ is $\aL$-frozen,
this immediately yields $\psi(x)\in\IO{rd}^\equiv$.

So suppose the occurrence of $x$ in $t$ is $\aL$-liquid.
By condition \ref{Lambda} of Def.~\ref{def:rooted_branching_bisimulation_safe},
$x$ occurs only $\Lambda$-liquid in $u$, so by Prop.~\ref{prop:rdelay-rdelay}, $\chi(x)\in\IO{d}^\equiv$.
And by condition \ref{main} of Def.~\ref{def:rooted_branching_bisimulation_safe},
$x$ is the left-hand side of at most one premise in $H$, which is positive. So either 
$\psi(x)=\eps\chi(x)$, or $\psi(x)=\eps(\chi(x)\wedge\eps\diam\beta\chi(y))$ with $x\trans{\beta}y\in H$.
In the first case, and in the second case with $\beta\neq \tau$,
this immediately yields $\psi(x)\in\IO{d}^\equiv \subset \IO{rd}^\equiv$.

Consider the second case with $\beta=\tau$. By condition \ref{main} of Def.~\ref{def:rooted_branching_bisimulation_safe},
\plat{$\frac{H}{t\trans{\alpha}u}$ or $\frac{H}{t\trans{\tau}u}$} is $\aL$-patient.
So clearly this $P$-ruloid it is of the form $\frac{x\trans\tau y}{C[x]\trans\tau C[y]}$.
Since $t$ is univariate, it follows that $x\mathbin{\notin}\var(u)$.
Hence, by Lem.~\ref{lem:psi1-dr}, $\chi(x)\equiv\top$. Thus
$\psi(x) \equiv\eps\diam{\beta}\chi(y)\in\IO{rd}^\equiv$.\vspace{1mm}

\item
$\phi\in\IO{d}$. The cases $\phi=\bigwedge_{i\in I}\phi_i$ and
$\phi=\neg\phi'$ proceed as in the proof of Prop.~\ref{prop:rdelay-rdelay},
replacing $\IO{d}^\equiv$ by $\IO{rd}^\equiv$, and
the case $\phi\mathbin=\eps\diam{a}\phi'$ was already treated above.
The remaining case is $\phi=\eps\phi''$ with $\phi''\in\IO{d}$ not of the form $\diam{\alpha}\phi'$.
If the occurrence of $x$ in $t$ is $\Lambda$-liquid, then by Prop.~\ref{prop:rdelay-rdelay},
$\psi(x)\in\IO{d}^\equiv\subset\IO{rd}^\equiv$. So we can assume that this occurrence is
$\Lambda$-frozen. According to Def.~\ref{def:decomposition-dr}.4 we can distinguish two cases.

\begin{description}
\item[{\sc Case 1:}] $\psi(x)$ is defined on the basis of case 4a.
  Then $\psi(x)=\chi(x)$ for some $\chi\in t^{-1}_{\rm dr}(\phi'')$.
  By induction on formula structure, $\psi(x)=\chi(x)\in\IO{rd}^\equiv$.

\item[{\sc Case 2:}] $\psi(x)$ is defined on the basis of case 4b, employing an $\aleph\mathord\cap\Lambda$-impatient $P$-ruloid
  \plat{$\frac{H}{t\trans{\tau}u}$} and a $\chi\in u^{-1}_{\rm dr}(\eps\phi'')$. Then
  $$\psi(x) = \chi(x)\ \land\ \!\!\!\!\!\!\bigwedge_{x\trans{\beta}y\in H^+}\!\!\!\!\!\eps\diam{\beta}\chi(y)\ \land\
                \!\!\!\!\! \bigwedge_{x\ntrans{\gamma}\in H^{s-}}\!\!\!\!\!\neg\diam{\gamma}\top\enspace .$$
  By induction on the construction of $\psi$, $\chi(x)\in\IO{rd}^\equiv$.
  By Prop.~\ref{prop:preservation_bra_bisimulation_safe} we can assume that $\frac{H}{t\trans{\tau}u}$
  is rooted branching bisimulation safe w.r.t.\
  $\aleph$ and $\Lambda$; so by condition \ref{rhs} of Def.~\ref{def:rooted_branching_bisimulation_safe},
  variables in ${\it rhs}(H)$ occur only $\Lambda$-liquid in $u$. 
  Hence, by Prop.~\ref{prop:rdelay-rdelay}, $\chi(y)\in\IO{d}^\equiv$ for each $x\trans{\beta}y\in H$.
  And similar as in the previous case $\phi=\eps\diam\alpha\phi'$ with $\phi'\in\IO{d}$ we can argue that the conjuncts
  $\neg\diam{\gamma}\top$ in $\psi(x)$ can be replaced by $\neg\eps\diam{\gamma}\top$.
  Hence $\psi(x)\in\IO{rd}^\equiv$.
\qed
\end{description}
\end{itemize}
\end{proof}

\subsection{Congruence for rooted delay bisimilarity}
\label{sec:congruenceresults}

Now we are in a position to prove the promised congruence results for $\bis{d}$ and $\bis{rd}$.

\begin{theorem}\label{thm:delay-congruence}
Let $P$ be a complete standard TSS in delay bisimulation format.
Then $\bis{d}$ is a congruence for $P$.
\end{theorem}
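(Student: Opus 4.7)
The plan is to reduce congruence to preservation of the modal characterisation, and then invoke the decomposition theorem together with the preservation result for $\IO{d}$.

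First I would unfold the definition of congruence. I need to show that for any open term $t \in \mathbb{T}(\Sigma)$ and any two closed substitutions $\rho, \rho'$ with $\rho(x) \bis{d} \rho'(x)$ for all $x \in \var(t)$, we have $\rho(t) \bis{d} \rho'(t)$. By Thm.~\ref{thm:characterisation}, $\bis{d}$ coincides with $\sim_{\IO{d}}$, so it suffices to show that $\rho(t)$ and $\rho'(t)$ satisfy the same formulas in $\IO{d}$ (equivalently, the same formulas in $\IO{d}^\equiv$).

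Next I would pick an arbitrary $\phi \in \IO{d}$ and reduce $\rho(t)\models\phi$ to a statement about the variables of $t$ using the decomposition machinery. Since $P$ is in delay bisimulation format, it is complete, standard, in ready simulation format, \dr, and $\aL$-patient for some $\aleph$ and $\Lambda$, with all rules rooted branching bisimulation safe. Hence Thm.~\ref{thm:dr-decomposition} applies with $\Gamma = \aL$, yielding
\[
\rho(t)\models\phi\ \Leftrightarrow\ \exists\psi\in t^{-1}_{\rm dr}(\phi)\ \forall x\in \var(t):\rho(x)\models\psi(x).
\]
The same equivalence holds for $\rho'$. So it is enough to establish that for each $\psi \in t^{-1}_{\rm dr}(\phi)$ and each $x \in \var(t)$, $\rho(x)\models\psi(x) \Leftrightarrow \rho'(x)\models\psi(x)$.

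The key step is now Prop.~\ref{prop:rdelay-rdelay}. Because $P$ is in the delay bisimulation format, the predicate $\Lambda$ is universal, so every variable $x$ occurs only $\Lambda$-liquid in $t$. Prop.~\ref{prop:rdelay-rdelay} therefore applies for every $x \in \var(t)$ and yields $\psi(x) \in \IO{d}^\equiv$ for every $\psi \in t^{-1}_{\rm dr}(\phi)$. Combined with $\rho(x) \bis{d} \rho'(x)$, which by Thm.~\ref{thm:characterisation} gives $\rho(x) \sim_{\IO{d}} \rho'(x)$, and the trivial observation that $\sim_{\IO{d}} \,=\, \sim_{\IO{d}^\equiv}$, we conclude $\rho(x)\models\psi(x) \Leftrightarrow \rho'(x)\models\psi(x)$.

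I do not anticipate a genuine obstacle here: all the heavy lifting has been done in Thm.~\ref{thm:dr-decomposition} and Prop.~\ref{prop:rdelay-rdelay}, and the congruence theorem is essentially the payoff of stringing these together. The only thing worth a moment's care is verifying the hypotheses under which Prop.~\ref{prop:rdelay-rdelay} applies (delay resistance, $\aL$-patience, ready simulation format, rooted branching bisimulation safety), all of which are immediate from the definition of delay bisimulation format, and noting that universality of $\Lambda$ is precisely what allows the proposition to be invoked for every variable rather than just for $\Lambda$-liquid ones.
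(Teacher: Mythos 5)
Your proposal is correct and follows essentially the same route as the paper's proof: reduce to the modal characterisation via Thm.~\ref{thm:characterisation}, decompose $\phi\in\IO{d}$ with Thm.~\ref{thm:dr-decomposition} (taking $\Gamma=\aL$), and use Prop.~\ref{prop:rdelay-rdelay} together with the universality of $\Lambda$ to conclude $\psi(x)\in\IO{d}^\equiv$ and transfer satisfaction from $\rho$ to $\rho'$. No gaps.
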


\begin{proof}
By Def.~\ref{def:rooted_delay_bisimulation_format} each rule of $P$
is rooted branching bisimulation safe w.r.t\ some $\aleph$ and the
universal predicate $\Lambda$, and $P$ is \dr, $\aL$-patient and in 
ready simulation format.

Let $\rho,\rho'$ be {\valuation}s and $t$ a term. Suppose that
$\rho(x) \bis{d} \rho'(x)$ for all $x\in\var(t)$;
we need to prove that then $\rho(t) \bis{d} \rho'(t)$.

Let $\rho(t)\models\phi \in \IO{d}$.
By Thm.~\ref{thm:dr-decomposition}, taking $\Gamma=\aL$, there is a $\psi
\in t^{-1}_{\rm dr}(\phi)$ with $\rho(x) \models \psi(x)$ for all $x\in\var(t)$.
Since $x$ occurs $\Lambda$-liquid in $t$ (because $\Lambda$ is universal),
by Prop.~\ref{prop:rdelay-rdelay}, $\psi(x) \in\IO{d}^\equiv$ for all $x\in\var(t)$.
By Thm.~\ref{thm:characterisation}, $\rho(x) \bis{d} \rho'(x)$
implies $\rho(x) \sim_{\IO{d}^\equiv} \rho'(x)$ for all $x \in \var(t)$.
So $\rho'(x)\models\psi(x)$ for all $x \in \var(t)$. Therefore, by Thm.~\ref{thm:dr-decomposition},
$\rho'(t)\models\phi$.  Likewise, $\rho'(t)\models\phi \in \IO{d}$ implies $\rho(t)\models\phi$.
So $\rho(t) \sim_{\IO{d}} \rho'(t)$. Hence, by
Thm.~\ref{thm:characterisation}, $\rho(t) \bis{d} \rho'(t)$.
\qed
\end{proof}

\noindent
We can follow the same approach to prove that the rooted delay bisimulation format
guarantees that $\bis{rd}$ is a congruence.

\begin{theorem}\label{thm:rooted-delay-congruence}
Let $P$ be a complete standard TSS in rooted delay bisimulation format.
Then $\bis{rd}$ is a congruence for $P$.
\end{theorem}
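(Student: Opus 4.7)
The plan is to follow essentially the same template as the proof of Theorem~\ref{thm:delay-congruence}, but substituting Prop.~\ref{prop:rooted_delay_preservation} for Prop.~\ref{prop:rdelay-rdelay} and the logic $\IO{rd}$ for $\IO{d}$. The point is that the rooted delay bisimulation format was designed precisely so that the rooted preservation result applies to \emph{every} variable $x$, not only to the $\Lambda$-liquid ones; hence we no longer need to assume that $\Lambda$ is universal, as was required for unrooted delay bisimilarity.

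First I would unpack Def.~\ref{def:rooted_delay_bisimulation_format}: since $P$ is in rooted delay bisimulation format, it is a delay resistant, $\aL$-patient standard TSS in ready simulation format, and each of its transition rules is rooted branching bisimulation safe w.r.t.\ some predicates $\aleph$ and $\Lambda$. Combined with completeness, this means the hypotheses of both Thm.~\ref{thm:dr-decomposition} (with $\Gamma=\aL$) and Prop.~\ref{prop:rooted_delay_preservation} are met.

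Next, fix closed substitutions $\rho,\rho'$ and a term $t\in\mathbb{T}(\Sigma)$ with $\rho(x)\bis{rd}\rho'(x)$ for all $x\in\var(t)$. To show $\rho(t)\bis{rd}\rho'(t)$ it suffices, by Thm.~\ref{thm:characterisation}, to show $\rho(t)\sim_{\IO{rd}}\rho'(t)$. So suppose $\rho(t)\models\phi$ for some $\phi\in\IO{rd}$. By Thm.~\ref{thm:dr-decomposition}, there is a decomposition $\psi\in t^{-1}_{\rm dr}(\phi)$ with $\rho(x)\models\psi(x)$ for every $x\in\var(t)$. By Prop.~\ref{prop:rooted_delay_preservation}, each $\psi(x)$ lies in $\IO{rd}^\equiv$. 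From $\rho(x)\bis{rd}\rho'(x)$ and Thm.~\ref{thm:characterisation} we get $\rho(x)\sim_{\IO{rd}}\rho'(x)$, hence also $\rho(x)\sim_{\IO{rd}^\equiv}\rho'(x)$, so $\rho'(x)\models\psi(x)$ for all $x\in\var(t)$. Applying the converse direction of Thm.~\ref{thm:dr-decomposition} yields $\rho'(t)\models\phi$. The symmetric argument, starting from $\rho'(t)\models\phi\in\IO{rd}$, gives $\rho(t)\models\phi$. Therefore $\rho(t)\sim_{\IO{rd}}\rho'(t)$, and an appeal to Thm.~\ref{thm:characterisation} completes the proof.

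There is no real obstacle: the nontrivial work has already been done in Prop.~\ref{prop:rooted_delay_preservation} (which in turn leans on Prop.~\ref{prop:preservation_bra_bisimulation_safe} and on the adapted decomposition Def.~\ref{def:decomposition-dr}, whose soundness relies on delay resistance via Prop.~\ref{prop:delay-resistance}). The only mildly subtle point worth flagging in the write-up is that the move from $\sim_{\IO{rd}}$ to $\sim_{\IO{rd}^\equiv}$ is free, as remarked just after Def.~\ref{def:modal-characterisations}, and that---unlike the delay case---no appeal to universality of $\Lambda$ is made, so the argument goes through for arbitrary $\aleph,\Lambda$ witnessing the format.
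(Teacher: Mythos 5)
Your proposal is correct and follows exactly the route the paper itself indicates: repeat the argument of Thm.~\ref{thm:delay-congruence}, replacing Prop.~\ref{prop:rdelay-rdelay} by Prop.~\ref{prop:rooted_delay_preservation} and $\IO{d}$ by $\IO{rd}$, so that universality of $\Lambda$ is no longer needed. No gaps; the details you spell out (Thm.~\ref{thm:dr-decomposition} with $\Gamma=\aL$, Thm.~\ref{thm:characterisation}, and the harmless passage to $\IO{rd}^\equiv$) match the paper's intended proof.
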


\noindent
The proof of Thm.~\ref{thm:rooted-delay-congruence} is similar to the one of Thm.~\ref{thm:delay-congruence},
except that Prop.~\ref{prop:rooted_delay_preservation} is applied instead of
Prop.~\ref{prop:rdelay-rdelay}; therefore $x$ need not occur $\Lambda$-liquid in $t$,
which is why universality of $\Lambda$ can be dropped.

\subsection{Rooted weak bisimilarity as a congruence}
\label{sec:weak-congruence}

We now proceed to derive a congruence format for rooted weak bisimilarity.
It is obtained from the congruence format from \cite{FvGdW12} for rooted $\eta$-bisimilarity by
additionally requiring delay resistance. The format for rooted $\eta$-bisimilarity in turn is obtained
by strengthening condition \ref{rhs} in the definition of rooted branching bisimulation safeness.

\begin{definition}\label{def:rooted_eta_bisimulation_safe} 
{\rm\cite{FvGdW12}
An ntytt rule $r=\frac{H}{t\trans\alpha u}$ is {\em rooted $\eta$-bisimulation safe}
w.r.t.\ $\aleph$ and $\Lambda$ if it satisfies conditions
\ref{Lambda}--\ref{main} of Def.~\ref{def:rooted_branching_bisimulation_safe}, together with:
\begin{itemize}
\item[$\ref*{rhs}'$.] Right-hand sides of positive premises occur only $\aL$-liquid in $u$.
\end{itemize}
}
\end{definition}

\begin{definition}\label{def:weak_bisimulation_format}
{\rm
A standard TSS is in {\em rooted weak bisimulation format} if it is
in ready simulation format and \dr,
and, for some $\aleph$ and $\Lambda$, it is $\aL$-patient and contains
only rules that are rooted $\eta$-bisimulation safe w.r.t.\ $\aleph$ and $\Lambda$.

This TSS is in {\em weak bisimulation format} if moreover $\Lambda$ is universal.
}
\end{definition}
The proofs that these formats
guarantee that [rooted] weak bisimilarity is a congruence are largely identical to the proofs for the
[rooted] delay bisimulation format. We will therefore only explain where these proofs differ.

For non-standard ntytt rules, the notion of rooted $\eta$-bisimulation safeness coincides with the
notion of rooted branching bisimulation safeness (see Def.~\ref{def:nonstandard-branching}).

\begin{proposition}{\rm\cite{FvGdW12}}
\label{prop:preservation_eta_bisimulation_safe}
Let $P$ be a TSS in ready simulation format, in which each transition rule is
rooted $\eta$-bisimulation safe w.r.t.\ $\aleph$ and $\Lambda$.
Then each $P$-ruloid is rooted $\eta$-bisimulation safe w.r.t.\ $\aleph$ and $\Lambda$.
\end{proposition}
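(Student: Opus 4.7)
The plan is to mirror the proof of Prop.~\ref{prop:preservation_bra_bisimulation_safe} from \cite{FvGdW12}, but with the strengthened right-hand-side condition $\ref*{rhs}'$ in place of \ref{rhs}. Since conditions \ref{Lambda}--\ref{main} are identical in Def.~\ref{def:rooted_branching_bisimulation_safe} and Def.~\ref{def:rooted_eta_bisimulation_safe}, their preservation through the ruloid construction is already known; the work is thus concentrated on showing preservation of $\ref*{rhs}'$.

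Recall from Sect.~\ref{sec:ruloids} that the $P$-ruloids arise from $P$ via three transformations: (a) the conversion of $P$ to a decent-ntyft TSS $P^\dagger$ by instantiating free variables and replacing variable sources by $f(x_1,\ldots,x_{\ar(f)})$; (b) the reduction of $P^\dagger$ to an xynft TSS $P^\ddagger$ (equivalently $\hat P^\ddagger$ if we restrict to linear proofs) by inlining positive premises with non-variable left-hand sides; and (c) the augmentation with non-standard rules to obtain $P^+$. I would show by induction on the structure of these transformations that $\ref*{rhs}'$ is preserved. For (a), substituting closed terms for free variables or $f(\vec x)$ for a source variable preserves $\ref*{rhs}'$ trivially, since it does not alter where right-hand sides of positive premises appear in the conclusion. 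For (c), the non-standard rules have conclusion $f(\vec x)\ntrans\alpha$ and fall under Def.~\ref{def:nonstandard-branching}, which only requires \ref{Lambda} and \ref{aleph}, so $\ref*{rhs}'$ is vacuous.

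The genuine content lies in step (b), which should be handled by an analogue of Lem.~\ref{lem:preservation_branching_bisimulation_safe} stating: if $P$ is in decent ntyft format and each rule is rooted $\eta$-bisimulation safe w.r.t.\ $\aleph$ and $\Lambda$, then so is every ntytt rule irredundantly provable from $P$. This I would prove by induction on the irredundant proof tree. The inductive step considers a rule $\frac{H}{t\trans\alpha u}$ obtained by composing a ``top'' rule $\frac{K}{f(\vec x)\trans\alpha u}$ with derivations of its positive premises $x_i\trans{\beta_i} y_i$ by rules $\frac{H_i}{v_i\trans{\beta_i}y_i}$. For $\ref*{rhs}'$ I would need to check that each $y_i$ occurs only $\aL$-liquid in the eventual conclusion $u$ after substituting $v_i$ for $x_i$. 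Since the top rule is rooted $\eta$-bisimulation safe, every right-hand side $y_i$ of a positive premise in $K$ occurs $\aL$-liquid in $u$; this $\aL$-liquid occurrence is preserved under substitution of variables and hence under plugging in the $v_i$ without disturbing the positions of the $y_i$. Moreover, right-hand sides coming from the inlined derivations $H_i$ remain $\aL$-liquid in $u$ by the induction hypothesis applied to $\frac{H_i}{v_i\trans{\beta_i}y_i}$, combined with the fact that (by $\ref*{rhs}'$ on the top rule) the variable $x_i$ being replaced by $v_i$ occurs $\aL$-liquid in $u$, so that every $\aL$-liquid position of $v_i$ maps to an $\aL$-liquid position of $u$.

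The step I expect to be the main obstacle is precisely this compositionality argument: verifying that the $\aL$-liquid positions compose correctly under the inlining transformation from $P^\dagger$ to $\hat P^\ddagger$, especially in the transfinite iteration used when left-hand sides of positive premises are reduced to variables. The technical key is the standard observation that $\Gamma$-liquidity, for any predicate $\Gamma$, is closed under substitution in the following sense: if $z$ occurs $\Gamma$-liquid in $v$ and $v$ is substituted into a $\Gamma$-liquid position of a larger term $w$, then $z$ occurs $\Gamma$-liquid in the result. Applied to $\Gamma = \aL$, this delivers the inductive step and completes the proof. \qed
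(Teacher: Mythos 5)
The paper itself gives no proof of this proposition: it is imported from \cite{FvGdW12}, and your overall strategy --- reduce everything to a preservation lemma for ntytt rules irredundantly provable from a decent ntyft TSS (the analogue of Lem.~\ref{lem:preservation_branching_bisimulation_safe}), then run through the three stages $P^\dagger$, $P^\ddagger$, $P^+$ of the ruloid construction, noting that conditions \ref{Lambda}--\ref{main} are already covered by the branching case and that condition~\ref{rhs}$'$ is vacuous for the non-standard rules of Def.~\ref{def:nonstandard-branching} --- is exactly the structure of the cited proof (compare the in-paper proofs of Lem.~\ref{lem:preservation_smooth} and Cor.~\ref{cor:smooth}).

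However, the inductive step as you wire it contains a genuine flaw. In an irredundant proof from a decent ntyft TSS, the rule used at the bottom has source $f(x_1,\ldots,x_{\ar(f)})$, premises $v_k\trans{\beta_k}y_k$ with arbitrary left-hand sides but variable right-hand sides, and target $v$, and it is applied under a substitution $\sigma$; the subproofs derive rules $\frac{H_k}{\sigma(v_k)\trans{\beta_k}\sigma(y_k)}$, where $\sigma(y_k)$ is in general a \emph{term}, not a variable. Your formulation keeps $y_i$ fixed and substitutes only $v_i$ for $x_i$, so your induction hypothesis is applied to a rule whose target is the lone variable $y_i$, where condition~\ref{rhs}$'$ says nothing --- this skips precisely the case carrying the content. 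To compensate you invoke ``by condition~\ref{rhs}$'$ on the top rule, the variable $x_i$ being replaced by $v_i$ occurs $\aL$-liquid in $u$''; but condition~\ref{rhs}$'$ concerns right-hand sides of positive premises, not source variables, and the claim is false in general: the rule $\frac{x_1\trans{\ell}y}{x_1\cdot x_2\trans{\ell}y\cdot x_2}$ satisfies condition~\ref{rhs}$'$ while $x_2$ occurs $\aL$-frozen in the target. Fortunately that claim is also unnecessary. Since the derived rule is ntytt, its premises' right-hand sides do not occur in its source, hence not in any $\sigma(x_i)$ nor (by decency of the bottom rule) in any $\sigma(v_k)$; so a variable $z\in{\it rhs}(H_k)$ can occur in $u=\sigma(v)$ only inside $\sigma(y_k)$, substituted at occurrences of $y_k$ in $v$. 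Now the induction hypothesis applied to $\frac{H_k}{\sigma(v_k)\trans{\beta_k}\sigma(y_k)}$ gives that $z$ occurs only $\aL$-liquid in $\sigma(y_k)$, condition~\ref{rhs}$'$ of the bottom rule gives that $y_k$ occurs only $\aL$-liquid in $v$, and your closing observation that $\Gamma$-liquidity composes under substitution finishes the step (right-hand sides coming from subderivations of negative premises do not occur in $u$ at all, for the same reasons). With the step rewired in this way your proof goes through and coincides with the proof of \cite{FvGdW12}.
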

 
\noindent
The proof of the following proposition is very similar to the proof of the corresponding
Prop.~\ref{prop:rdelay-rdelay} for the rooted delay bisimulation format.

\begin{proposition}\label{prop:rweak-rweak}
Let $P$ be a {\dra}, $\aL$-patient standard TSS in ready simulation format,
in which each transition rule is rooted $\eta$-bisimulation safe w.r.t.\ $\aleph$ and $\Lambda$.
For any term $t$ and variable $x$ that occurs only $\Lambda$-liquid in $t$:
\[
\phi\in\IO{w}\ \Rightarrow\ \forall\psi\in t^{-1}_{\rm dr}(\phi):\psi(x)\in\IO{w}^\equiv\enspace .
\]
\end{proposition}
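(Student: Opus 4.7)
The plan is to mirror the proof of Prop.~\ref{prop:rdelay-rdelay}, substituting $\IO{w}$ for $\IO{d}$ throughout and invoking Prop.~\ref{prop:preservation_eta_bisimulation_safe} in place of Prop.~\ref{prop:preservation_bra_bisimulation_safe} so that each $P$-ruloid used in decomposition is rooted $\eta$-bisimulation safe w.r.t.\ $\aleph$ and $\Lambda$. I would proceed by simultaneous induction on the structure of $\phi\in\IO{w}$ and the construction of $\psi\in t^{-1}_{\rm dr}(\phi)$. The non-univariate case of $t$ reduces via Def.~\ref{def:decomposition}.\ref{dec6} to the univariate one, and $x\notin\var(t)$ is handled by Lem.~\ref{lem:psi1-dr}, so I would assume $t$ univariate with a unique, $\Lambda$-liquid occurrence of $x$, and then case split on the productions of the $\IO{w}$-grammar.

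Conjunction and negation proceed verbatim. For $\phi=\eps\phi'$ with $\phi'\in\IO{w}$, note that $\phi'$ is never of the form $\diam\alpha\phi''$ since the grammar of $\IO{w}$ never begins with $\diam$, so among cases 4a, 4b(i), 4b(ii), 4b(iii) of Def.~\ref{def:decomposition-dr} only 4a and 4b(i) can produce $\psi$; case 4a is immediate from the inductive hypothesis plus closure of $\IO{w}^\equiv$ under $\eps$. For $\phi=\eps\diam{a}\eps\phi'$ with $a\in A$ and $\phi'\in\IO{w}$, only cases 4b(i) and 4b(iii) apply. In both remaining uses of 4b, conditions \ref{Lambda}, \ref{aleph} and \ref{main} of Def.~\ref{def:rooted_branching_bisimulation_safe} license exactly the same simplifications as in Prop.~\ref{prop:rdelay-rdelay}: in the $\aL$-liquid case at most one premise of $H$ has $x$ as left-hand side (positive, and in 4b(iii) not labelled $\tau$) and no negative premise does, while in the $\aL$-frozen-but-$\Lambda$-liquid case $x$ does not appear in $H$ at all, so $\psi(x)\equiv\chi(x)$.

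The main obstacle, and the only essentially new ingredient compared with Prop.~\ref{prop:rdelay-rdelay}, is ensuring that conjuncts $\eps\diam{\beta}\chi(y)$ appearing in $\psi(x)$ lie in $\IO{w}^\equiv$. Since $\IO{w}$ permits diamond modalities only in the shape $\eps\diam{a}\eps\phi''$, I need $\chi(y)\equiv\eps\chi(y)$. This is precisely where the strengthened right-hand-side condition~$\ref*{rhs}'$ of Def.~\ref{def:rooted_eta_bisimulation_safe}---the one ingredient in which rooted $\eta$-bisimulation safety differs from rooted branching bisimulation safety---enters: it forces every right-hand side $y$ of a positive premise to occur only $\aL$-liquid in $u$. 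Since in all relevant subcases $\chi\in u^{-1}_{\rm dr}(\eps\xi)$ for some $\xi$, Lem.~\ref{lem:psi2} (with $\Gamma=\aL$) yields $\chi(y)\equiv\eps\chi(y)$, and combined with the inductive hypothesis $\chi(y)\in\IO{w}^\equiv$ this gives $\eps\diam{\beta}\chi(y)\equiv\eps\diam{\beta}\eps\chi(y)\in\IO{w}^\equiv$ (absorbing $\beta=\tau$ via $\eps\diam{\tau}\eps\phi''\equiv\eps\phi''$). Together with the inductive hypothesis on $\chi(x)$, this shows $\psi(x)\in\IO{w}^\equiv$ in every subcase, completing the induction.
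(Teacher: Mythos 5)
Your route is the paper's: mirror the proof of Prop.~\ref{prop:rdelay-rdelay}, invoke Prop.~\ref{prop:preservation_eta_bisimulation_safe} instead of Prop.~\ref{prop:preservation_bra_bisimulation_safe}, and observe that in every relevant 4b-subcase the mapping $\chi$ decomposes a formula of the form $\eps\xi$, so that the strengthened condition $\ref*{rhs}'$ of Def.~\ref{def:rooted_eta_bisimulation_safe} together with Lem.~\ref{lem:psi2} (with $\Gamma=\aL$) gives $\chi(y)\equiv\eps\chi(y)$, turning the conjuncts $\eps\diam{b}\chi(y)$ into $\eps\diam{b}\eps\chi(y)\in\IO{w}^\equiv$. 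That is precisely the one additional observation the paper makes on top of Prop.~\ref{prop:rdelay-rdelay}, and your treatment of the conjunction, negation, 4a and $\aL$-frozen cases is the paper's as well.

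There is, however, one step that would fail as written: the parenthetical appeal to the ``equivalence'' $\eps\diam{\tau}\eps\phi''\equiv\eps\phi''$ to absorb a possible conjunct with $\beta=\tau$. This equivalence is false: a process without $\tau$-transitions satisfies $\eps\phi''$ whenever it satisfies $\phi''$, but never satisfies $\eps\diam{\tau}\eps\phi''$. Worse, $\eps\diam{\tau}\eps\top$ is not invariant under $\bis{w}$ (it distinguishes a deadlocked process from a weakly bisimilar one whose only transition is a $\tau$-step), so a $\tau$-labelled conjunct would in general lie outside $\IO{w}^\equiv$ altogether, and no rewriting can rescue it. What is needed instead is that $\beta=\tau$ simply cannot occur, and you argued this only for subcase 4b(iii). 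The missing argument is the same one used in {\sc Case 2} of the $\eps\diam{a}\phi'$ case of Prop.~\ref{prop:rdelay-rdelay}: in 4b(i) the employed $P$-ruloid is $\aL$-impatient by definition of Def.~\ref{def:decomposition-dr}, and in 4b(iii) its conclusion carries a label $a\neq\tau$, so it cannot be $\aL$-patient either; hence the last sentence of condition \ref{main} of Def.~\ref{def:rooted_branching_bisimulation_safe} forbids the unique positive premise with left-hand side $x$ (whose occurrence in $t$ is $\aL$-liquid in the case at hand) from being labelled $\tau$. This also covers the 4b(i) subcase of $\phi=\eps\phi'$, which your remark about 4b(iii) does not reach. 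With that replacement for the bogus equivalence, your proof coincides with the paper's.
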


\noindent
In the case $\phi=\eps\diam{a}\eps\phi'$ with $\phi'\in\IO{w}$ the proof of
Prop.~\ref{prop:rweak-rweak} slightly deviates from the case $\phi=\eps\diam{a}\phi'$
with $\phi'\in\IO{d}$ in the proof of Prop.~\ref{prop:rdelay-rdelay}. At the end of the latter case,
in {\sc Case 2} where the occurrence of $x$ in univariate term $t$ is $\aleph$-liquid,
it may be that $\psi(x)=\eps(\chi(x)\wedge\eps\diam{b}\chi(y))$ with \plat{$x\trans b y\in H$}.
The additional observation we make in the proof of Prop.~\ref{prop:rweak-rweak} is that owing to
the stronger condition $\ref*{rhs}'$, $y$ only occurs $\aL$-liquid in $u$.
So according to Lem.~\ref{lem:psi2} with $\Gamma=\aL$, $\chi(y)\equiv\eps\chi(y)$.
Hence, $\psi(x)\equiv\eps(\chi(x)\wedge\eps\diam{b}\eps\chi(y))\in\IO{w}^\equiv$.

The same difference with the proof of Prop.~\ref{prop:rdelay-rdelay} appears in the case
$\phi=\eps\phi'$ with $\phi'\in\IO{w}$, {\sc Case 2}. For the rest, the proofs proceed in exactly the same way.

The proof of the following proposition is very similar to the proof of the corresponding
Prop.~\ref{prop:rooted_delay_preservation} for the rooted delay bisimulation format.

\begin{proposition}\label{prop:rooted_weak_preservation}
Let $P$ be a {\dra}, $\aL$-patient standard TSS in ready simulation format,
in which each transition rule is rooted $\eta$-bisimulation safe w.r.t.\ $\aleph$ and $\Lambda$.
For any term $t$ and variable $x$:
\[
\phi\in\IO{rw}\ \Rightarrow\ \forall\psi\in t^{-1}_{\rm dr}(\phi):\psi(x)\in\IO{rw}^\equiv\enspace .
\]
\end{proposition}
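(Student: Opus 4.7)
The plan is to mirror the proof of Proposition~\ref{prop:rooted_delay_preservation} very closely, replacing $\IO{d}$, $\IO{rd}$ with $\IO{w}$, $\IO{rw}$ throughout and calling Prop.~\ref{prop:rweak-rweak} instead of Prop.~\ref{prop:rdelay-rdelay}, and Prop.~\ref{prop:preservation_eta_bisimulation_safe} instead of Prop.~\ref{prop:preservation_bra_bisimulation_safe}. I would proceed by simultaneous induction on the structure of $\phi\in\IO{rw}$ and the construction of $\psi\in t^{-1}_{\rm dr}(\phi)$, first reducing the general case to univariate $t$ using Def.~\ref{def:decomposition}.\ref{dec6} exactly as at the end of the proof of Prop.~\ref{prop:rdelay-rdelay}. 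If $x\notin\var(t)$ then $\psi(x)\equiv\top\in\IO{rw}^\equiv$ by Lem.~\ref{lem:psi1-dr}, so one may assume $x$ occurs (exactly once) in $t$.

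The cases $\phi=\bigwedge_{i\in I}\phi_i$ and $\phi=\neg\phi'$ are immediate from Def.~\ref{def:decomposition}.\ref{dec1},~\ref{dec2} and the induction hypothesis. For $\phi\in\IO{w}$, if $x$ occurs $\Lambda$-liquid in $t$ then Prop.~\ref{prop:rweak-rweak} delivers $\psi(x)\in\IO{w}^\equiv\subset\IO{rw}^\equiv$. Otherwise $x$ occurs $\Lambda$-frozen and the argument splits on whether $\psi$ comes from Def.~\ref{def:decomposition-dr}.4a (trivial, by induction) or 4b with an $\aL$-impatient $P$-ruloid $\frac{H}{t\trans{\tau}u}$ and $\chi\in u^{-1}_{\rm dr}(\eps\phi'')$; for this second case, Prop.~\ref{prop:preservation_eta_bisimulation_safe} ensures that $\frac{H}{t\trans{\tau}u}$ is rooted $\eta$-bisimulation safe, so by condition~$\ref*{rhs}'$ every variable in ${\it rhs}(H)$ occurs $\aL$-liquid (in particular $\Lambda$-liquid) in $u$, whence Prop.~\ref{prop:rweak-rweak} gives $\chi(y)\in\IO{w}^\equiv$ for each $x\trans\beta y\in H^+$. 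As in the rooted delay case, since $H^{s-}$ is stable each conjunct $\neg\diam\gamma\top$ can be rewritten as $\neg\eps\diam\gamma\top$ (using $\neg\diam\tau\top$), giving $\psi(x)\in\IO{rw}^\equiv$.

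The crucial case is $\phi=\eps\diam\alpha\eps\hat\phi$ with $\hat\phi\in\IO{w}$. Here $\psi(x)$ is built from either a $P$-ruloid $\frac{H}{t\trans\alpha u}$ with $\chi\in u^{-1}_{\rm dr}(\eps\hat\phi)$, or an $\aL$-impatient $P$-ruloid $\frac{H}{t\trans\tau u}$ with $\chi\in u^{-1}_{\rm dr}(\xi)$ where $\xi=\eps\diam\alpha\eps\hat\phi$ or $\xi=\eps\eps\hat\phi$, all of which lie in $\IO{w}$. By Prop.~\ref{prop:preservation_eta_bisimulation_safe} this ruloid is rooted $\eta$-bisimulation safe, so condition~$\ref*{rhs}'$ forces every right-hand side $y$ of a positive premise to occur only $\aL$-liquid in $u$; hence Prop.~\ref{prop:rweak-rweak} yields $\chi(y)\in\IO{w}^\equiv$, and by Lem.~\ref{lem:psi2} with $\Gamma=\aL$ we even have $\chi(y)\equiv\eps\chi(y)$, so $\eps\diam\beta\chi(y)\equiv\eps\diam\beta\eps\chi(y)\in\IO{w}^\equiv$. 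When the occurrence of $x$ in $t$ is $\aL$-frozen, combining this with $\chi(x)\in\IO{rw}^\equiv$ (by induction) and the same rewrite $\neg\diam\gamma\top\leadsto\neg\eps\diam\gamma\top$ yields $\psi(x)\in\IO{rw}^\equiv$. When $x$ occurs $\aL$-liquid, conditions \ref{Lambda} and \ref{main} of Def.~\ref{def:rooted_branching_bisimulation_safe} show that $x$ occurs only $\Lambda$-liquid in $u$ (so $\chi(x)\in\IO{w}^\equiv$ by Prop.~\ref{prop:rweak-rweak}) and $x$ heads at most one premise of $H$, which is positive; hence either $\psi(x)=\eps\chi(x)\in\IO{w}^\equiv$, or $\psi(x)=\eps(\chi(x)\wedge\eps\diam\beta\chi(y))$ with $x\trans\beta y\in H$. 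For $\beta\in A$ we rewrite $\eps\diam\beta\chi(y)\equiv\eps\diam\beta\eps\chi(y)$ as above, landing in $\IO{w}^\equiv\subset\IO{rw}^\equiv$. For $\beta=\tau$, condition~\ref{main} forces the ruloid to be $\aL$-patient, i.e.\ of the form $\frac{x\trans\tau y}{C[x]\trans\tau C[y]}$; univariateness of $t$ gives $x\notin\var(u)$, so $\chi(x)\equiv\top$ by Lem.~\ref{lem:psi1-dr} and $\psi(x)\equiv\eps\eps\diam\tau\eps\chi(y)\equiv\eps\diam\tau\eps\chi(y)\in\IO{rw}^\equiv$.

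The only real subtlety, as in Prop.~\ref{prop:rooted_delay_preservation}, is handling the $\beta=\tau$ subcase where a premise labelled $\tau$ could threaten membership in $\IO{rw}$; this is again resolved by using the $\aL$-patience forced by condition~\ref{main} to eliminate $\chi(x)$ via Lem.~\ref{lem:psi1-dr}. Everywhere $\IO{d}$ appeared in the rooted delay proof, one now uses $\IO{w}$ together with the extra rewrite $\chi(y)\equiv\eps\chi(y)$ supplied by condition~$\ref*{rhs}'$ and Lem.~\ref{lem:psi2}, which is precisely the mechanism that turns each bare $\diam\beta$ into $\diam\beta\eps$ as required by the grammar of $\IO{w}$ and $\IO{rw}$.
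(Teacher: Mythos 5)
Your proposal is correct and follows essentially the same route as the paper: it replays the proof of Prop.~\ref{prop:rooted_delay_preservation} with $\IO{w}$, $\IO{rw}$, Prop.~\ref{prop:rweak-rweak} and Prop.~\ref{prop:preservation_eta_bisimulation_safe}, and uses exactly the mechanism the paper singles out, namely condition~$\ref*{rhs}'$ together with Lem.~\ref{lem:psi2} (with $\Gamma=\aL$) to obtain $\chi(y)\equiv\eps\chi(y)$ so that each $\eps\diam{\beta}\chi(y)$ becomes $\eps\diam{\beta}\eps\chi(y)$, plus the rewriting of the stable negative conjuncts into $\neg\eps\diam{\gamma}\top\equiv\neg\eps\diam{\gamma}\eps\top\in\IO{rw}$. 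Your treatment of the $\beta=\tau$ subcase via $\aL$-patience and Lem.~\ref{lem:psi1-dr} also matches the paper's argument, so no gaps remain.
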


\noindent
Again the only real difference with the proof of Prop.~\ref{prop:rooted_delay_preservation} is that
we need to exploit the stronger condition $\ref*{rhs}'$ of Def.~\ref{def:rooted_eta_bisimulation_safe}:
for each $P$-ruloid $\frac{H}{t\trans{\alpha}u}$, 
each $y\in{\it rhs}(H)$ can occur only $\aL$-liquid in
$u$; so by Lem.~\ref{lem:psi2} with $\Gamma=\aL$, $\chi\in u^{-1}_{\rm dr}(\eps\phi)$ implies $\chi(y)\mathbin\equiv\eps\chi(y)$.
Moreover, we need to observe that $\neg\eps\diam{\gamma}\top\mathbin\equiv\neg\eps\diam{\gamma}\eps\top\mathbin\in\IO{rw}$.

The proofs of the following congruence theorems for weak bisimilarity are almost identical
to the proofs of the corresponding congruence theorems for delay bisimilarity.

\begin{theorem}
Let $P$ be a complete standard TSS in weak bisimulation format.
Then \mbox{}$\bis{w}$ is a congruence for $P$.
\end{theorem}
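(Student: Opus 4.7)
The plan is to mimic the proof of Theorem~\ref{thm:delay-congruence} verbatim, replacing delay bisimilarity by weak bisimilarity and swapping in the weak-bisimulation analogues of the supporting results. Specifically, unpacking Def.~\ref{def:weak_bisimulation_format} gives that $P$ is in ready simulation format, is delay resistant and $\aL$-patient for some $\aleph$ and (universal) $\Lambda$, and all its rules are rooted $\eta$-bisimulation safe w.r.t.\ $\aleph$ and $\Lambda$. These are exactly the hypotheses of Prop.~\ref{prop:rweak-rweak} and of Thm.~\ref{thm:dr-decomposition} with $\Gamma = \aL$.

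Given {\valuation}s $\rho, \rho'$ and a term $t$ with $\rho(x) \bis{w} \rho'(x)$ for all $x \in \var(t)$, I would show $\rho(t) \sim_{\IO{w}} \rho'(t)$ by a symmetric argument. For one direction, assume $\rho(t) \models \phi$ with $\phi \in \IO{w}$. By Thm.~\ref{thm:dr-decomposition} there is a $\psi \in t^{-1}_{\rm dr}(\phi)$ with $\rho(x) \models \psi(x)$ for every $x \in \var(t)$. Since $\Lambda$ is universal, every such $x$ occurs only $\Lambda$-liquid in $t$, so Prop.~\ref{prop:rweak-rweak} yields $\psi(x) \in \IO{w}^\equiv$ for all $x \in \var(t)$. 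By Thm.~\ref{thm:characterisation}, $\rho(x) \bis{w} \rho'(x)$ entails $\rho(x) \sim_{\IO{w}^\equiv} \rho'(x)$, hence $\rho'(x) \models \psi(x)$ for all $x \in \var(t)$. Applying Thm.~\ref{thm:dr-decomposition} in the opposite direction gives $\rho'(t) \models \phi$. The reverse implication is symmetric, so $\rho(t) \sim_{\IO{w}} \rho'(t)$, and a final appeal to Thm.~\ref{thm:characterisation} yields $\rho(t) \bis{w} \rho'(t)$.

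Nothing in this argument is subtle; all the real work lives in Prop.~\ref{prop:rweak-rweak} (which in turn relies on the refined condition $\ref*{rhs}'$ of Def.~\ref{def:rooted_eta_bisimulation_safe} and on Lem.~\ref{lem:psi2}) and in the {\dr} decomposition theorem Thm.~\ref{thm:dr-decomposition}. Thus the only potential pitfall is making sure that the hypotheses of those two results are genuinely available under the weak bisimulation format, and that the universality of $\Lambda$ is invoked precisely to guarantee the side condition ``$x$ occurs only $\Lambda$-liquid in $t$'' of Prop.~\ref{prop:rweak-rweak}---exactly as in the delay case.
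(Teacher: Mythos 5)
Your proposal is correct and matches the paper's intent exactly: the paper itself proves this theorem by declaring the argument ``almost identical'' to Thm.~\ref{thm:delay-congruence}, with Prop.~\ref{prop:rweak-rweak} substituted for Prop.~\ref{prop:rdelay-rdelay} and $\IO{w}$ for $\IO{d}$. Your unpacking of Def.~\ref{def:weak_bisimulation_format} and your use of the universality of $\Lambda$ to satisfy the $\Lambda$-liquid side condition are precisely the intended steps.
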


\begin{theorem}
\label{thm:rooted-weak-congruence}
Let $P$ be a complete standard TSS in rooted weak bisimulation format.
Then $\bis{rw}$ is a congruence for $P$.
\end{theorem}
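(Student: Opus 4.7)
The plan is to follow the same template as the proof of Theorem~\ref{thm:rooted-delay-congruence}, simply replacing the preservation result for $\IO{rd}$ by the corresponding one for $\IO{rw}$, namely Prop.~\ref{prop:rooted_weak_preservation}. All the heavy lifting has been done already: the modal decomposition theorem (Thm.~\ref{thm:dr-decomposition}) applies because $P$ is a {\dra}, $\aL$-patient, complete, standard TSS in ready simulation format; and Prop.~\ref{prop:rooted_weak_preservation} guarantees that the decomposition stays inside $\IO{rw}^\equiv$.

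Concretely, I would proceed as follows. Fix an $\aleph$ and $\Lambda$ witnessing that $P$ is in rooted weak bisimulation format, and set $\Gamma := \aL$. Let $t \in \mathbb{T}(\Sigma)$ be a term and $\rho,\rho'$ be {\valuation}s with $\rho(x) \bis{rw} \rho'(x)$ for every $x \in \var(t)$; we must show $\rho(t) \bis{rw} \rho'(t)$. Since $\bis{rw}$ is characterised modally (Thm.~\ref{thm:characterisation}), it suffices by symmetry to prove that $\rho(t) \models \phi$ implies $\rho'(t) \models \phi$ for every $\phi \in \IO{rw}$.

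Given such a $\phi$, Thm.~\ref{thm:dr-decomposition} yields a decomposition $\psi \in t^{-1}_{\rm dr}(\phi)$ with $\rho(x) \models \psi(x)$ for all $x \in \var(t)$. By Prop.~\ref{prop:rooted_weak_preservation}, every such $\psi(x)$ lies in $\IO{rw}^\equiv$; note that, in contrast to Thm.~\ref{thm:delay-congruence} and the unrooted case, we do not need $\Lambda$ to be universal here, because the rooted preservation result makes no $\Lambda$-liquidity assumption on the occurrences of $x$ in $t$. The assumption $\rho(x) \bis{rw} \rho'(x)$ together with Thm.~\ref{thm:characterisation} gives $\rho(x) \sim_{\IO{rw}} \rho'(x)$, and hence $\rho(x) \sim_{\IO{rw}^\equiv} \rho'(x)$, so $\rho'(x) \models \psi(x)$ for every $x \in \var(t)$. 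A second application of Thm.~\ref{thm:dr-decomposition} (in the direction~$\Leftarrow$) then delivers $\rho'(t) \models \phi$, as required.

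There is no real obstacle in this proof: all the subtleties have been absorbed into Prop.~\ref{prop:rooted_weak_preservation} (which exploits condition~$\ref*{rhs}'$ of Def.~\ref{def:rooted_eta_bisimulation_safe} via Lem.~\ref{lem:psi2} to turn each $\chi(y)$ into $\eps\chi(y)$ and each $\neg\diam{\gamma}\top$ into an $\IO{rw}$-admissible $\neg\eps\diam{\gamma}\eps\top$) and into Thm.~\ref{thm:dr-decomposition} (which requires delay resistance of $P$). The only care needed is to check that the hypotheses of both theorems are met, which is immediate from Def.~\ref{def:weak_bisimulation_format} and the assumed completeness of $P$; the rest is a mechanical bi-implication argument of the shape already used in Thm.~\ref{thm:delay-congruence} and Thm.~\ref{thm:rooted-delay-congruence}.
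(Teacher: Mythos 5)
Your proposal is correct and follows essentially the same route as the paper, which itself proves this theorem by mirroring the proof of Thm.~\ref{thm:delay-congruence}/Thm.~\ref{thm:rooted-delay-congruence} with Prop.~\ref{prop:rooted_weak_preservation} in place of the delay-preservation results. Your observations that universality of $\Lambda$ is not needed here and that the hypotheses of Thm.~\ref{thm:dr-decomposition} follow from Def.~\ref{def:weak_bisimulation_format} plus completeness match the paper's intent exactly.
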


\subsection{Counterexamples}
\label{sec:counterexamples}

In \cite{FvGdW12} it was shown that none of the syntactic requirements of the rooted branching bisimulation format
in Def.~\ref{def:rooted_branching_bisimulation_safe} can be omitted, and that the presence of $\aL$-patience rules is crucial.
Here we present a sequence of examples to show that none of the requirements that make up delay resistance
is redundant to guarantee that rooted delay bisimilarity is a congruence.

All TSSs in this section are standard, complete, in ready-simulation format and $\aL$-patient, and their rules are rooted branching bisimulation safe.

\begin{example}
\label{exa:negative}
Let $f$ be a unary function symbol with an $\aleph$-liquid, $\Lambda$-frozen argument, defined by the rule
$$\frac{x\ntrans{a}}{f(x)\trans{b}0}$$
This rule is \pdr, but fails to be \ndr.

Consider the LTS consisting of the transitions $p_0\trans{\tau}p_0$, $p_0\trans{a}0$, $p_1\trans{\tau}q$ and $q\trans{a}0$.
Note that $p_0\bis{rd}p_1$. However, $f(p_0)$ exhibits no transitions, while \plat{$f(p_1)\trans{b}0$}.
So \nietplat{$f(p_0)\notbis{rd}f(p_1)$}. Hence rooted delay bisimilarity is not a congruence.
\end{example}

\begin{example}
Let $f$ be a unary function symbol with an $\aleph$-liquid, $\Lambda$-frozen argument, defined by the rule
$$\frac{x\trans{a}y}{f(x)\trans{b}0}$$
This TSS is \ndr, but not \pdr.

Consider the LTS from Ex.~\ref{exa:negative}.
We have $f(p_0)\trans{b}0$, while the process $f(p_1)$ does not exhibit any transitions.
So $f(p_0)\notbis{rd}f(p_1)$. Hence rooted delay bisimilarity is not a congruence.
\end{example}

\noindent
The following example shows that the requirement that $H^d$ is finite
in Def.~\ref{def:positive} of positive delay resistance is essential.

\begin{example}
Let $A=\{a_k\mid k\in\mathbb{Z}_{>0}\}\cup\{b\}$, and let there be binary function symbols $f_k$ for all $k\in\mathbb{Z}_{>0}$,
of which both arguments are $\aleph$-liquid and only the second argument is $\Lambda$-liquid. They are defined by the rules
$$\frac{x_1\trans{\tau}y}{f_k(x_1,x_2)\trans{\tau}f_\ell(x_1,y)}\mbox{~~(for all $\ell>k$)}\qquad\qquad\frac{x_2\trans{\tau}y}{f_k(x_1,x_2)\trans{\tau}f_k(x_1,y)}\vspace{1mm}$$
$$\frac{\{x_1\trans{a_\ell}y_\ell\mid\ell>k\}\cup\{x_2\trans{a_k}y_k\}}{f_k(x_1,x_2)\trans{b}0}$$
where $\omega_1$, $\omega_2$ and $\omega_3$ are constants with $\omega_2\trans{a_k}0$ and $\omega_3\trans{a_k}0$ for all $k\geq 1$,
and $\omega_1\trans{\tau}\omega_3$ and $\omega_2\trans{\tau}\omega_3$. Clearly, $\omega_1\bis{rd}\omega_2$.

With the exception of the rule with infinitely many premises, its rules are \dr. That one rule is \ndr, and all its premises are delayable.
With regard to Def.~\ref{def:positive} it only violates the
requirement that $H^d$ needs to be chosen finite.
As a result Prop.~\ref{prop:delay-resistance} is violated.
Namely, although there are sequences $\omega_1\epsarrow\trans{a_{\ell}}$ for all $\ell\geq 1$,
there is no sequence \nietplat{$f_{k}(\omega_1,\omega_1)\epsarrow\trans{b}$} for any $k\geq 1$.
On the other hand, \nietplat{$f_{k}(\omega_2,\omega_2)\trans{b}0$} for all $k\geq 1$.
So rooted delay bisimilarity is not a congruence. 
\end{example}

\noindent
The following example shows that the strengthening of condition \ref{rhs} to condition $\ref*{rhs}'$ in
Def.~\ref{def:rooted_eta_bisimulation_safe} is essential for the rooted weak bisimulation format.

\begin{example}
Let $f$ be a unary function symbol with an $\aL$-liquid argument,
and $g$ a unary function symbol with an $\aleph$-frozen, $\Lambda$-liquid argument. They are defined by the rules
$$\frac{x\trans{a}y}{f(x)\trans{a}g(y)}\qquad\qquad\frac{x\trans{\tau}y}{f(x)\trans{\tau}f(y)}\qquad\qquad\frac{~}{g(x)\trans{a}x}$$
This TSS violates condition $\ref{rhs}'$ of Def.~\ref{def:rooted_eta_bisimulation_safe} (due to the first rule).
On the other hand, it is in rooted delay bisimulation format w.r.t.\ $\aleph$ and $\Lambda$, and thereby \dr.

Consider the following LTS, which was already depicted in Sect.~\ref{sec:equivalences_terms}:
$p_0\trans{a}q$, $p_0\trans{a}0$, $p_1\trans{a}q$, $q\trans{\tau}0$ and $q\trans{b}0$.
We have $p_0\bis{rw}p_1$ (but $p_0\notbis{rd}p_1$). The LTS rooted in $f(p_0)$ and $f(p_1)$ is as follows.

\vspace{4mm}

\centerline{\begin{picture}(0,0)%
\includegraphics{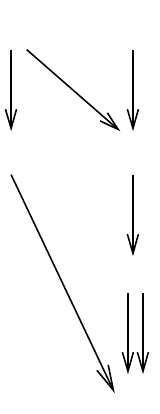}%
\end{picture}%
\setlength{\unitlength}{3947sp}%
\begingroup\makeatletter\ifx\SetFigFont\undefined%
\gdef\SetFigFont#1#2#3#4#5{%
  \reset@font\fontsize{#1}{#2pt}%
  \fontfamily{#3}\fontseries{#4}\fontshape{#5}%
  \selectfont}%
\fi\endgroup%
\begin{picture}(756,1967)(3697,-3081)
\put(3796,-1886){\makebox(0,0)[b]{\smash{{\SetFigFont{10}{12.0}{\rmdefault}{\mddefault}{\updefault}{\color[rgb]{0,0,0}$g(0)$}%
}}}}
\put(4351,-2461){\makebox(0,0)[b]{\smash{{\SetFigFont{10}{12.0}{\rmdefault}{\mddefault}{\updefault}{\color[rgb]{0,0,0}$q$}%
}}}}
\put(4326,-1886){\makebox(0,0)[b]{\smash{{\SetFigFont{10}{12.0}{\rmdefault}{\mddefault}{\updefault}{\color[rgb]{0,0,0}$g(q)$}%
}}}}
\put(4414,-2086){\makebox(0,0)[lb]{\smash{{\SetFigFont{10}{12.0}{\rmdefault}{\mddefault}{\updefault}{\color[rgb]{0,0,0}$a$}%
}}}}
\put(3796,-1261){\makebox(0,0)[b]{\smash{{\SetFigFont{10}{12.0}{\rmdefault}{\mddefault}{\updefault}{\color[rgb]{0,0,0}$f(p_0)$}%
}}}}
\put(4326,-1261){\makebox(0,0)[b]{\smash{{\SetFigFont{10}{12.0}{\rmdefault}{\mddefault}{\updefault}{\color[rgb]{0,0,0}$f(p_1)$}%
}}}}
\put(4051,-1486){\makebox(0,0)[lb]{\smash{{\SetFigFont{10}{12.0}{\rmdefault}{\mddefault}{\updefault}{\color[rgb]{0,0,0}$a$}%
}}}}
\put(4351,-3026){\makebox(0,0)[b]{\smash{{\SetFigFont{10}{12.0}{\rmdefault}{\mddefault}{\updefault}{\color[rgb]{0,0,0}$0$}%
}}}}
\put(4414,-1486){\makebox(0,0)[lb]{\smash{{\SetFigFont{10}{12.0}{\rmdefault}{\mddefault}{\updefault}{\color[rgb]{0,0,0}$a$}%
}}}}
\put(3712,-1486){\makebox(0,0)[rb]{\smash{{\SetFigFont{10}{12.0}{\rmdefault}{\mddefault}{\updefault}{\color[rgb]{0,0,0}$a$}%
}}}}
\put(4276,-2710){\makebox(0,0)[rb]{\smash{{\SetFigFont{10}{12.0}{\rmdefault}{\mddefault}{\updefault}{\color[rgb]{0,0,0}$\tau$}%
}}}}
\put(4438,-2710){\makebox(0,0)[lb]{\smash{{\SetFigFont{10}{12.0}{\rmdefault}{\mddefault}{\updefault}{\color[rgb]{0,0,0}$b$}%
}}}}
\put(3925,-2461){\makebox(0,0)[rb]{\smash{{\SetFigFont{10}{12.0}{\rmdefault}{\mddefault}{\updefault}{\color[rgb]{0,0,0}$a$}%
}}}}
\end{picture}%
}

\vspace{4mm}

\noindent
$f(p_0)\notbis{rw}f(p_1)$ because the transition $f(p_0)\trans{a}g(0)$ cannot be mimicked by $f(p_1)$.
Hence rooted weak bisimilarity is not a congruence.
\end{example}

\section{Checking for delay resistance} \label{sec:checking-delay-resistance}

On top of the congruence formats for (rooted) delay and weak bisimilarity we have
imposed delay resistance, a non-syntactic restriction that is based on concepts from \cite{Blo95,vGl11}.
To show that a TSS $P$ is delay resistant one has to establish a property for each
$P$-ruloid, and a non-trivial TSS has infinitely many of them.
This section introduces tools that lighten the burden of checking that a TSS is delay resistant.
In particular, syntactic criteria are proposed which imply that a TSS is \dr.

\subsection[Delay resistance w.r.t.\ Lambda]{Delay resistance w.r.t.\ $\Lambda$}\label{sec:Lambda}

We introduce the notion ``{\dr} w.r.t.\ $\Lambda$'', in which the conditions of
Defs.~\ref{def:delayable},~\ref{def:positive} and~\ref{def:negative} need to be checked only for premises containing
a variable that occurs only $\Lambda$-frozen in the source. We show that delay resistance w.r.t.~$\Lambda$,
together with the $\aL$-patience rules, conditions \ref{aleph} and \ref{main} of
Def.~\ref{def:rooted_branching_bisimulation_safe}, and one additional condition
(see Def.~\ref{def:smooth}) imply delay resistance.
So in the context of our congruence formats, when incorporating the additional condition,
delay resistance w.r.t.\ $\Lambda$ is sufficient to check that a TSS is \dr.
Moreover, it follows that the delay and weak bisimulation formats do not require delay resistance at all,
since with $\Lambda$ universal there are no $\Lambda$-frozen occurrences.

\begin{definition}\rm\label{def:delay-resistance-Lambda}
Given a predicate $\Lambda$ on the arguments of function symbols, 
a premise \nietplat{$w \trans\beta y$} or \nietplat{$w\ntrans\beta$}
of a transition rule \nietplat{$\frac{H}{t \trans\alpha u}$}
is called $\Lambda$-liquid if all variables in $w$ occur
$\Lambda$-liquid in $t$. Let $H^\Lambda$ be the set of $\Lambda$-liquid premises in $H$.

An ntytt rule \nietplat{$\frac{ H}{t\trans{\alpha}u}$} 
is \emph{{\pdr} w.r.t.\ $\Lambda$} and a TSS $P$ if there exists a finite set $H^d \subseteq H^+$ of delayable positive premises such that
  for each set $M \subseteq H^+{\setminus}(H^d \cup H^\Lambda)$ there is a rule
  \nietplat{$r_M=\frac{H_M}{t\trans{\alpha}u}$}, linearly provable from $P$, where
  $H_M \subseteq (H{\setminus}M) \cup M_\tau$.

A rule \nietplat{$\frac{H}{t\trans{\alpha}u}$} 
is \emph{{\ndr} w.r.t.}\ $\Lambda$ and $P$ if
there is a rule \nietplat{$\frac{H'}{t\trans{\alpha}u}$}, linearly provable from $P$,
with $H' \subseteq H^+ \cup H^{s-} \cup H^\Lambda$.

An ntytt rule \nietplat{$\frac{ H}{t\trans{\alpha}u}$} 
is \emph{delay resistant w.r.t.\ $\Lambda$} and $P$
if it is {\pdr} as well as {\ndr} w.r.t.\ $\Lambda$ and $P$.
A standard TSS $P$ in ready simulation format is \emph{delay resistant w.r.t.\ $\Lambda$} if all its linear ruloids
with a positive conclusion are delay resistant w.r.t.\ $\Lambda$ and $\hat P^+$.
\end{definition}

\noindent
By taking $\Lambda:=\emptyset$ we retrieve the notion of delay resistance from Def.~\ref{def:delay-resistant}.

The syntactic condition in the following definition prevents that a running process $x$ is tested twice.

\begin{definition}\label{def:smooth}
{\rm
Given a standard ntytt rule $r=\frac{H}{t\trans\alpha u}$, we define the following syntactic condition:
\begin{enumerate}
\setcounter{enumi}{\thesaveenumi}
\setcounter{saveenumi}{\theenumi}
\item \label{smooth}
If $x$ has exactly one occurrence in $t$, which is $\Lambda$-liquid, and an $\aleph$-liquid occurrence in $H$,
then these are the only two occurrences of $x$ in $r$.
\end{enumerate}
}
\end{definition}
For non-standard rules we take this condition to be vacuously satisfied.

In line with Prop.~\ref{prop:preservation_bra_bisimulation_safe}, it can be proved that condition \ref{smooth}
is preserved by the construction of ruloids.

\begin{lemma}
\label{lem:preservation_smooth}
Let $P$ be a TSS in decent ntyft format, in which each transition rule is
rooted branching bisimulation safe and satisfies condition \ref{smooth} of Def.~\ref{def:smooth} w.r.t.\ $\aleph$ and $\Lambda$.
Then any ntytt rule irredundantly provable from $P$
satisfies condition \ref{smooth} of Def.~\ref{def:smooth} w.r.t.\ $\aleph$ and $\Lambda$.
\end{lemma}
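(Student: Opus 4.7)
The plan is to mirror the proof of Lem.~\ref{lem:preservation_branching_bisimulation_safe} from~\cite{FvGdW12}, which establishes the analogous preservation statement for conditions \ref{rhs}--\ref{main} of Def.~\ref{def:rooted_branching_bisimulation_safe}. The argument proceeds by induction on the structure of an irredundant proof tree witnessing that $\frac{H}{t\trans{\alpha}u}$ is provable from $P$. The safeness hypothesis supplies the base case, and the inductive step is obtained by verifying that condition \ref{smooth} is stable under the composition of proof trees that underlies irredundant provability.

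For the base case the proof tree has depth one, and the derived ntytt rule is a substitution instance $\sigma(r')$ of some $r' \in P$. The verification of condition \ref{smooth} rests on the fact that a $\Lambda$-liquid (resp.\ $\aleph$-liquid) occurrence of a variable $x$ in $\sigma(r')$ corresponds to a $\Lambda$-liquid (resp.\ $\aleph$-liquid) occurrence of some variable $y$ in $r'$ together with an occurrence of $x$ in $\sigma(y)$. A variable $x$ satisfying the antecedent of condition \ref{smooth} in $\sigma(r')$ therefore arises from a variable $y$ satisfying the antecedent of condition \ref{smooth} in $r'$, and condition \ref{smooth} on $r'$ then restricts where $x$ can occur throughout $\sigma(r')$.

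In the inductive step the root of the proof tree applies a substitution instance $\sigma(r')$ of some rule $r' = \frac{K}{f(x_1,\ldots,x_{\ar(f)})\trans{\alpha}w} \in P$, and each positive premise of $\sigma(r')$ is supplied by a subproof. By the induction hypothesis the ntytt rule extracted from each such subproof satisfies condition \ref{smooth}. The rule under consideration is obtained by stitching $\sigma(r')$ together with these subproof rules in the standard manner, and verifying condition \ref{smooth} for this composite rule amounts to tracing, for each variable $x$ satisfying the antecedent of the condition in the composite source, where $x$ may occur throughout.

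The main obstacle is this last tracing step. An $\aleph$-liquid occurrence of $x$ in the composite premise set may a priori originate from $\sigma(r')$ or from one of the subproof rules, and condition \ref{smooth} demands that, together with the unique $\Lambda$-liquid source occurrence, this is the only occurrence of $x$ anywhere in the composite rule. Handling this requires combining condition \ref{smooth} applied to $r'$ with the induction hypothesis applied to the relevant subproof, and leveraging the preservation of conditions \ref{rhs}--\ref{main} (Prop.~\ref{prop:preservation_bra_bisimulation_safe}) to control how $\aleph$- and $\Lambda$-liquidity interact through composition. This closely parallels the case analysis by which condition \ref{main} is treated in the proof of Lem.~\ref{lem:preservation_branching_bisimulation_safe}.
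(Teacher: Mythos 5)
Your overall strategy is indeed the paper's: structural induction on the irredundant proof, with Lem.~\ref{lem:preservation_decency} and Lem.~\ref{lem:preservation_branching_bisimulation_safe} guaranteeing that the rules $r_k=\frac{H_k}{\sigma(v_k)\trans{\beta_k}\sigma(y_k)}$ and $r_\ell=\frac{H_\ell}{\sigma(w_\ell)\ntrans{\gamma_\ell}}$ obtained from the subproofs are decent ntytt and rooted branching bisimulation safe, and the induction hypothesis giving condition~\ref{smooth} for them. The genuine gap is that your proposal stops exactly where the content of the lemma lies: the ``tracing step'' is declared an obstacle and only its ingredients are named, but the chain of implications is not carried out, and it is not a routine mirror of the condition-\ref{main} case. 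Concretely, for a variable $x$ with a unique $\Lambda$-liquid occurrence in $t=\sigma(f(x_1,\ldots,x_{\ar(f)}))$, lying inside $\sigma(x_{i_0})$ with $\Lambda(f,i_0)$, and an $\aleph$-liquid occurrence in some $H_m$, one must: use condition~\ref{aleph} of the \emph{sub-rule} $r_m$ to lift that occurrence to the source $\sigma(v_m)$ (or $\sigma(w_m)$) of $r_m$, and hence, by decency of the rule $r'$ at the root, obtain an $\aleph$-liquid occurrence of $x_{i_0}$ in a premise of $r'$; apply condition~\ref{smooth} of $r'$ to the variable $x_{i_0}$; use conditions \ref{aleph}/\ref{main} of $r'$ to conclude that this premise is \emph{positive}, i.e.\ $m\in K$ --- this is essential, since for the non-standard rules $r_\ell$ condition~\ref{smooth} is vacuous and the induction hypothesis gives no control over the occurrences of $x$ in $H_\ell$; and use condition~\ref{Lambda} of $r'$ to see that the occurrence of $x_{i_0}$ in $v_m$ is $\Lambda$-liquid, so that $x$ has exactly one, $\Lambda$-liquid occurrence in $\sigma(v_m)$ and the induction hypothesis (condition~\ref{smooth} for $r_m$) becomes applicable, bounding $x$ to a single occurrence in $H_m$ and excluding it from $\sigma(y_m)$; decency then excludes $x$ from the remaining $r_k$, $r_\ell$ and from $u=\sigma(v)$. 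Your phrase about ``leveraging the preservation of conditions \ref{rhs}--\ref{main} to control how $\aleph$- and $\Lambda$-liquidity interact'' does not pin down these steps, and it is precisely the treatment of negative premises and the $\Lambda$-liquidity needed to trigger the induction hypothesis where a naive parallel to the condition-\ref{main} case would break down.

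Two smaller inaccuracies: the induction basis is the one-node proof marked ``hypothesis'', yielding the rule $\frac{t\trans{\alpha}u}{t\trans{\alpha}u}$ with $t,u$ distinct variables; a proof whose root instantiates a rule of $P$ (with its premises discharged by hypotheses or subproofs) belongs to the induction step, not the base case. Also, the safety of the sub-rules should be invoked via Lem.~\ref{lem:preservation_branching_bisimulation_safe} (preservation under irredundant provability from a decent ntyft TSS), not via Prop.~\ref{prop:preservation_bra_bisimulation_safe}, which concerns $P$-ruloids of a TSS in ready simulation format.
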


\begin{proof}
Let an ntytt rule $\frac{H}{t\trans{\alpha} u}$
be irredundantly provable from $P$, by means of a proof $\pi$. We prove,
using structural induction with respect to $\pi$, that this rule satisfies condition \ref{smooth}
of Def.~\ref{def:smooth} w.r.t.\ $\aleph$ and $\Lambda$.

\vspace{2mm}

\noindent
{\em Induction basis}:
Suppose $\pi$ has only one node, marked ``hypothesis''. Then $\frac{H}{t\trans{\alpha} u}$ equals
$\frac{t\trans{\alpha}u}{t\trans{\alpha}u}$ (so $t$ and $u$ are distinct variables).
This rule satisfies condition \ref{smooth} of Def.~\ref{def:smooth} w.r.t.\ $\aleph$ and $\Lambda$.

\vspace{2mm}

\noindent
{\em Induction step}:
Let $r\in R$ be the rule and $\sigma$ the substitution used at the bottom of $\pi$. By assumption,
$r$ is decent, ntyft, and rooted branching bisimulation safe w.r.t.\ $\aleph$ and $\Lambda$.
Let
\[
\{v_k\trans{\beta_k}y_k\mid k\in K\}\cup\{w_\ell\ntrans{\gamma_\ell}\mid \ell\in L\}
\]
be the set of premises of $r$, and
\[
f(x_1,\ldots,x_{\ar(f)})\trans{\alpha}v
\]
the conclusion of $r$. Then $\sigma(f(x_1,\ldots,x_{\ar(f)}))=t$ and $\sigma(v)=u$. Moreover, rules
$r_k = \frac{H_k}{\sigma(v_k)\trans{\beta_k}\sigma(y_k)}$ for each $k\in K$ and
$r_\ell = \frac{H_\ell}{\sigma(w_\ell)\ntrans{\gamma_\ell}}$ for each $\ell\in L$ are irredundantly provable
from $P$ by means of strict subproofs of $\pi$, where $H=\bigcup_{k\in K}H_k\cup\bigcup_{\ell\in L}H_\ell$.

As $r$ is decent, $\var(v_k) \subseteq \{x_1,\ldots,x_{\ar(f)}\}$, so $\var(\sigma(v_k)) \subseteq
\var(t)$ for each $k\in K$. Likewise,\linebreak $\var(\sigma(w_\ell))\subseteq\var(t)$ for each $\ell\in
L$. From ${\it rhs}(H) \cap \var(t) = \emptyset$ it follows that ${\it rhs}(H_k) \cap
\var(\sigma(v_k)) = \emptyset$ for each $k\in K$, and ${\it
  rhs}(H_\ell)\cap\var(\sigma(w_\ell))=\emptyset$ for each $\ell\in L$. So for each $k\in K$ and
$\ell\in L$, the rules $r_k$ and $r_\ell$ are ntytt rules. By Lem.~\ref{lem:preservation_decency},
they are decent. By Lem.~\ref{lem:preservation_branching_bisimulation_safe} they are rooted branching bisimulation safe w.r.t.\ $\aleph$ and $\Lambda$.
And by induction, they satisfy condition \ref{smooth} of Def.~\ref{def:smooth} w.r.t.\ $\aleph$ and $\Lambda$.

Suppose that $x$ has exactly one occurrence in $t$, which is $\Lambda$-liquid, and an
$\aleph$-liquid occurrence in $H$. Then there is an $i_0\in\{1,\ldots,\ar(f)\}$ with
$\Lambda(f,i_0)$ such that $x$ has exactly one occurrence in
$\sigma(x_{i_0})$, which is $\Lambda$-liquid; moreover,
$x\notin\var(\sigma(x_i))$ for each $i\neq i_0$. And
$x$ occurs $\aleph$-liquid in the left-hand side of a premise in $H_{m}$ for some $m\in K\cup L$.
Since $r_{m}$ is rooted branching bisimulation safe w.r.t.\ $\aleph$ and $\Lambda$,
by condition \ref{aleph} of Def.~\ref{def:rooted_branching_bisimulation_safe} or Def.~\ref{def:nonstandard-branching},
$x$ must occur $\aleph$-liquid in $\sigma(v_{m})$ or $\sigma(w_{m})$.
By the decency of $r$, this implies that $x_{i_0}$ occurs $\aleph$-liquid in $v_{m}$ or $w_{m}$.
Since $r$ is rooted branching bisimulation safe w.r.t.\ $\aleph$ and $\Lambda$, and $\Lambda(f,i_0)$,
by condition \ref{smooth} of Def.~\ref{def:smooth}, this
is the only occurrence of $x_{i_0}$ in the $v_k$ or $w_\ell$ for each
$k\in K$ and $\ell\in L$, and $x_{i_0}\notin\var(v)$.
Moreover, either by condition \ref{aleph} (if $\neg\aleph(f,i_0)$) or condition \ref{main} (if
$\aleph(f,i_0)$, using that $\Lambda(f,i_0)$) of Def.~\ref{def:rooted_branching_bisimulation_safe}, $m \in K$.
And by condition \ref{Lambda} of Def.~\ref{def:rooted_branching_bisimulation_safe}, the occurrence
of $x_{i_0}$ in $v_{m}$ is $\Lambda$-liquid.
It follows that $x$ has exactly one occurrence in $\sigma(v_{m})$,
which is $\Lambda$-liquid, and that $x$ does not occur in $\sigma(v_k)$ for
each $k\in K\backslash\{m\}$ and $\sigma(w_\ell)$ for each $\ell\in L$.
So $x$ does not occur in $r_k$ for each $k\in K\backslash\{m\}$ and
$r_\ell$ for each $\ell\in L$, in view of the decency of these rules,
and the fact that $x$ cannot occur in ${\it rhs}(H)$.
That is, $x$ does not occur in $H_k$ and $\sigma(y_k)$ for each $k\in K\backslash\{m\}$, and $H_\ell$ for
each $\ell\in L$. And since $r_{m}$ satisfies
condition \ref{smooth} of Def.~\ref{def:smooth}, while $x$ occurs $\aleph$-liquid in $H_m$ and
has exactly one occurrence in $\sigma(v_m)$, which is $\Lambda$-liquid,
$x$ has only one occurrence in $H_{m}$, and $x\notin\var(\sigma(y_{m}))$. 
Concluding, $x$ has only one occurrence in $H$; and since
$x_{i_0}\notin\var(v)$, $x\notin\var(\sigma(x_i))$ for each $i\neq
i_0$, and $x\notin\var(\sigma(y_k))$ for each $k\in K$, it follows that $x\notin\var(\sigma(v))$.
\qed
\end{proof}

\begin{corollary}
\label{cor:smooth}
Let $P$ be a standard TSS in ready simulation format, in which each rule is rooted branching bisimulation
safe and satisfies condition \ref{smooth} of Def.~\ref{def:smooth} w.r.t.\ $\aleph$ and $\Lambda$.
Then any $P$-ruloid satisfies condition \ref{smooth} of Def.\,\ref{def:smooth} w.r.t.\ $\aleph$ and $\Lambda$.
\end{corollary}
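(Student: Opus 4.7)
The plan is to reduce Cor.~\ref{cor:smooth} to Lem.~\ref{lem:preservation_smooth} in exactly the same way that Prop.~\ref{prop:preservation_bra_bisimulation_safe} reduces to Lem.~\ref{lem:preservation_branching_bisimulation_safe}. Recall that $P$-ruloids are, by definition, the decent nxytt rules irredundantly provable from $P^+$, and that $P^+$ is obtained from $P$ via the intermediate stages $P^\dagger$ (decent ntyft format) and $P^\ddagger$ (xynft format), followed by the addition of non-standard rules with a negative conclusion. Since Lem.~\ref{lem:preservation_smooth} already covers the final step (passing from $P^+$ to its irredundantly provable ntytt rules), it suffices to verify that each transformation in the construction of $P^+$ preserves condition \ref{smooth} of Def.~\ref{def:smooth}, in parallel with the preservation of rooted branching bisimulation safety established in the proof of Prop.~\ref{prop:preservation_bra_bisimulation_safe} in \cite{FvGdW12}.

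First I would check the conversion from $P$ to $P^\dagger$ of \cite{GV92}: free variables in a rule are replaced by closed terms, and ntyxt rules with a variable source are instantiated to ntyft rules by substituting $f(x_1,\ldots,x_{\ar(f)})$ for that variable, with fresh $x_i$. Such substitutions only replace variables by terms without altering the pattern of $\aleph$- or $\Lambda$-liquid occurrences of any remaining variable, and condition \ref{smooth} is trivially inherited. Next comes the transformation to $P^\ddagger$ via the construction of \cite{FvG96}: a premise $f(t_1,\ldots,t_n)\trans{\alpha}y$ in a rule $r$ is eliminated by splicing in another rule $\frac{H}{f(x_1,\ldots,x_n)\trans{\alpha}t}$, substituting $t_i$ for $x_i$ and $t$ for $y$. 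Here I would use the fact that both rules involved are already rooted branching bisimulation safe and satisfy condition \ref{smooth}, and argue --- just as for the preservation of safety --- that condition \ref{smooth} survives the splice: any variable $x$ with a unique, $\Lambda$-liquid occurrence in the source of the spliced rule retains at most its single $\aleph$-liquid occurrence in the composed premise set, since in both constituent rules the pattern of occurrences is constrained by conditions \ref{Lambda}, \ref{aleph}, \ref{main} and \ref{smooth}. The augmentation to $P^+$ then only adds rules with a negative conclusion, for which condition \ref{smooth} is vacuous by convention.

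Once these preservation properties are in place, $P^+$ is a TSS in decent ntyft format whose rules are all rooted branching bisimulation safe and satisfy condition \ref{smooth} w.r.t.\ $\aleph$ and $\Lambda$. Applying Lem.~\ref{lem:preservation_smooth} to $P^+$ yields that every ntytt rule irredundantly provable from $P^+$ satisfies condition \ref{smooth}; in particular every $P$-ruloid does.

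The main obstacle I anticipate is the splicing step $P^\dagger\to P^\ddagger$: here one must verify carefully that a variable which has a single $\Lambda$-liquid occurrence in the source of the composite rule does not suddenly acquire a second $\aleph$-liquid occurrence in the combined premise set. This requires tracking how the variables $x_i$ of the auxiliary rule inherit the liquidity status of the subterms $t_i$ of the spliced-away premise, and exploiting condition \ref{main} of Def.~\ref{def:rooted_branching_bisimulation_safe} to rule out multiple $\aleph$-liquid occurrences of the same variable across the constituent premise sets --- in essence the same bookkeeping as in the induction step of the proof of Lem.~\ref{lem:preservation_smooth}, but now performed on the finitary splice rather than on an irredundant proof tree. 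Everything else is routine adaptation of the corresponding preservation argument for rooted branching bisimulation safety. \qed
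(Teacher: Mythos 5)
Your overall strategy coincides with the paper's: push condition \ref{smooth} through the three-stage construction of $P^+$ and then invoke Lem.~\ref{lem:preservation_smooth}, exactly mirroring how Prop.~\ref{prop:preservation_bra_bisimulation_safe} rests on Lem.~\ref{lem:preservation_branching_bisimulation_safe}. The one place where you diverge is the step from $P^\dagger$ to $P^\ddagger$, which you propose to handle by a direct analysis of the splicing construction and which you flag as the main outstanding obstacle. The paper avoids this entirely: as recorded in Sect.~\ref{sec:ruloids}, the rules of $P^\ddagger$ are exactly the xynft rules irredundantly provable from $P^\dagger$, and since $P^\dagger$ is in decent ntyft format with all rules rooted branching bisimulation safe and satisfying condition \ref{smooth}, Lem.~\ref{lem:preservation_smooth} applied to $P^\dagger$ already yields condition \ref{smooth} for every rule of $P^\ddagger$. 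So the bookkeeping you anticipate (tracking liquidity through a single splice and then through its transfinite repetition) is unnecessary; carried out in full it would essentially re-prove the lemma in a different guise, and as it stands this step is the only unproved part of your argument. Apart from this, your treatment of the conversion to $P^\dagger$, of the added non-standard rules (for which condition \ref{smooth} is vacuous), and the final application of Lem.~\ref{lem:preservation_smooth} to $P^+$ is exactly the paper's proof.
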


\begin{proof}
We recall from Sect.~\ref{sec:ruloids}, that the standard TSS $P$ can be transformed into a
TSS $P^+$ in decent ntyft format;
the $P$-ruloids are those decent nxytt rules that are irredundantly provable from $P^+$.

As the rules of $P$ are rooted branching bisimulation safe and satisfying condition \ref{smooth} of
Def.~\ref{def:smooth} w.r.t.\ $\aleph$ and $\Lambda$, then so are the rules in $P^+$.
Namely, as described in Sect.~\ref{sec:ruloids}, the rules in $P^+$ are
constructed in three steps. The first step (the conversion of $P$ to
decent ntyft format) clearly preserves the
rooted branching bisimulation format, as well as condition \ref{smooth}.
The second step (the construction to reduce
left-hand sides of positive premises to variables) yields an
intermediate TSS, all of whose rules are irredundantly provable from $P$,
and thus is covered by Lem.~\ref{lem:preservation_branching_bisimulation_safe} for rooted branching bisimulation safety, and by 
Lem.~\ref{lem:preservation_smooth} for condition~\ref{smooth}.
The the final step adds rules with negative conclusions to the TSS;
as pointed out in the proof of Prop.~\ref{prop:preservation_bra_bisimulation_safe}
(in~\cite{FvGdW12}), these added rules are also rooted branching 
bisimulation safe. Trivially, they satisfy condition \ref{smooth}.

Since the rules in $P^+$ are rooted branching bisimulation safe and satisfy condition \ref{smooth}
w.r.t.\ $\hspace{-1pt}\aleph\hspace{-1pt}$ and $\hspace{-1pt}\Lambda$,
by Lem.~\ref{lem:preservation_smooth}, each $P$-ruloid satisfies condition
\ref{smooth} w.r.t.\ $\hspace{-.5pt}\aleph\hspace{-.5pt}$ and $\Lambda$.
\qed
\end{proof}

\noindent
The next lemma is the crucial step in showing that delay resistance w.r.t.\ $\Lambda$,
together with the $\aL$-patience rules, conditions \ref{aleph} and \ref{main} of
Def.~\ref{def:rooted_branching_bisimulation_safe}, and condition~\ref{smooth} of Def.~\ref{def:smooth},
implies delay resistance.

\begin{lemma}\label{lem:delay-resistant}
Let $P$ be an $\aL$-patient TSS,
and let $r=\frac{H}{t \trans\alpha u}$ be an xyntt rule with $t$ univariate, linearly provable from $P$,
that is rooted branching bisimulation safe and satisfies condition \ref{smooth} of
Def.~\ref{def:smooth} w.r.t.\ $\aleph$ and $\Lambda$.
If $r$ is {\dr} w.r.t.\ $\Lambda$ and $P$, then it is {\dr} w.r.t.\ $P$.
\end{lemma}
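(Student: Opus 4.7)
The plan is to take $H^d := H^d_\Lambda \cup H^\Lambda$, where $H^d_\Lambda$ is the finite set of delayable premises witnessing that $r$ is \dr\ w.r.t.\ $\Lambda$, and to verify that $H^d$ is itself a finite set of delayable premises witnessing full delay resistance of $r$ w.r.t.\ $P$. Thus it suffices to establish two intermediate facts: that $H^\Lambda$ is finite and contained in $H^+$, and that each premise in $H^\Lambda$ is delayable in $P$.

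For the first fact, take any premise in $H^\Lambda$ with left-hand side variable $x$. Since $t$ is univariate, $x$ has a single occurrence in $t$; it is $\Lambda$-liquid by the definition of $H^\Lambda$, and $\aleph$-liquid by condition~\ref{aleph} of Def.~\ref{def:rooted_branching_bisimulation_safe} (since $x$ trivially occurs $\aleph$-liquid in $H$ as the left-hand side of a premise), so this occurrence is $\aL$-liquid. Condition~\ref{main} then forces every $\aleph$-liquid occurrence of $x$ in $H$ to lie in a positive premise, with at most one such occurrence. This immediately yields $H^\Lambda \subseteq H^+$, because a negative $\Lambda$-liquid premise would contribute a forbidden $\aleph$-liquid occurrence; and because distinct premises of $H^\Lambda$ must therefore have distinct left-hand sides, all lying in $\var(t)$, it follows that $H^\Lambda$ is finite.

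For the second fact, fix $p = x \trans\beta y \in H^\Lambda$ and pick a fresh variable $z$; set $v := t[z/x]$. Since $x$ occurs $\aL$-liquid in $t$, iterated application of the $\aL$-patience rules of $P$---by induction on the depth at which $x$ appears in $t$---yields a linear proof from $P$ of $\frac{x\trans\tau z}{t\trans\tau v}$. For the companion rule required by Def.~\ref{def:delayable}, apply the renaming $\sigma := [z/x]$ to $r$. Condition~\ref{smooth} of Def.~\ref{def:smooth} asserts that $x$ has exactly two occurrences in $r$, in $t$ and as the left-hand side of $p$, whence $\sigma(r) = \frac{(H\setminus\{p\})\cup\{z\trans\beta y\}}{v\trans\alpha u}$, linearly provable from $P$ by Lem.~\ref{linear proof composition}. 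Together these two rules witness the delayability of $p$.

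Finally I assemble the conclusion. Since $H^d_\Lambda$ and $H^\Lambda$ are finite sets of delayable premises, so is $H^d$. For every $M \subseteq H^+ \setminus H^d$ one has $M \cap H^\Lambda = \emptyset$, so the rule $r_M$ furnished by positive delay resistance w.r.t.\ $\Lambda$ is already of the shape required by Def.~\ref{def:positive}. For negative delay resistance, $H^\Lambda \subseteq H^+$ implies $H^+ \cup H^{s-} \cup H^\Lambda = H^+ \cup H^{s-}$, so the witness from negative delay resistance w.r.t.\ $\Lambda$ meets Def.~\ref{def:negative} verbatim. The main obstacle, as I see it, is the finiteness argument in the second paragraph: without both condition~\ref{main} and the univariateness of $t$, $H^\Lambda$ could a priori be infinite, which would prevent its absorption into the finite $H^d$; condition~\ref{smooth} then plays the dual role of ensuring that the renaming $\sigma$ leaves $u$ and the remaining premises of $r$ untouched, which is what makes $\sigma(r)$ come out with exactly the premise set prescribed by Def.~\ref{def:delayable}.
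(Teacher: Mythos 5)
Your proof is correct and takes essentially the same route as the paper's: you use univariateness of $t$ together with conditions \ref{aleph} and \ref{main} of Def.~\ref{def:rooted_branching_bisimulation_safe} to conclude that $H^\Lambda$ consists of finitely many positive premises (which settles the negative part and the finiteness needed for the positive part), and condition \ref{smooth} of Def.~\ref{def:smooth} plus the $\aL$-patience rules to establish delayability of each such premise via $v=t[z/x]$, a fresh $z$, and Lem.~\ref{linear proof composition}. The only cosmetic difference is that you make the absorption of $H^\Lambda$ into $H^d$ and the transfer of the rules $r_M$ and $H'$ explicit, which the paper leaves implicit in its opening ``it suffices to show'' sentences.
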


\begin{proof}
To show that $r$ is {\pdr} w.r.t.\ $P$ it suffices to show that $H^\Lambda \cap H^+$ is finite, and that all premises in $H^\Lambda\cap H^+$ are delayable.
To show that $r$ is {\ndr} w.r.t.\ $P$ it suffices to show that $H^\Lambda$ does not contain negative premises at all.

By definition, for each premise \nietplat{$x \trans\beta y$} or \nietplat{$x\ntrans\beta$} in $H^\Lambda$,
$x$ occurs $\Lambda$-liquid in $t$. By condition \ref{aleph} of Def.~\ref{def:rooted_branching_bisimulation_safe},
the (unique) occurrence of $x$ in $t$ is also $\aleph$-liquid.
By condition \ref{main} of Def.~\ref{def:rooted_branching_bisimulation_safe},
$H^\Lambda$ contains only one premise with left-hand side $x$, which must be positive.
So $H^\Lambda$ contains no negative premises.
And since $\var(t)$ is finite, it follows that $H^\Lambda\cap H^+$ is finite too.

Let \nietplat{$H=H_0 \uplus \{x \trans\beta y\}$} where $x$ occurs $\aL$-liquid in $t$.
We need to show that there exist xyntt rules $\frac{ H_1}{t \trans{\tau} v}$ 
and $\frac{ H_2}{v \trans{\alpha} u}$, linearly provable from $P$, with $H_1\subseteq H_0 \cup\{x \trans{\tau} z\}$
  and $H_2\subseteq H_0 \cup\{z \trans{\beta} y\}$ for some term $v$ and fresh variable $z$.
By condition \ref{smooth} of Def.~\ref{def:smooth},
$x$ does not occur in $u$ or $H_0$. Let $v$ be obtained from $t$ by
substituting a fresh variable $z$ for $x$.
Then $\frac{H_2}{v\trans{\alpha}u}$ is a substitution instance of $\frac{H}{t\trans{\alpha}u}$,
and hence linearly provable from $P$, using Lem.~\ref{linear proof composition}.
As \nietplat{$\frac{x \trans{\tau} z}{t \trans{\tau} v}$} is $\aL$-patient, it is also linearly provable from $P$.
\qed
\end{proof}

\noindent
The following lemmas ensure that to verify delay resistance of a TSS $P$, it suffices to check
delay resistance w.r.t.\ $P$ for linear $P$-ruloids with a univariate source.

\begin{lemma}\label{lem:univariate}
Let $P$ be a TSS in ntyft format. Any ntytt rule linearly provable from $P$ is the instance under
a substitution $\sigma:V\rightarrow V$ of a rule, linearly provable from $P$, with a univariate source $t$.
Moreover, $\textrm{dom}(\sigma) = \var(t)$.
\end{lemma}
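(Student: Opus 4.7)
The plan is to prove the lemma by induction on the height of the given linear proof $\pi$ of $r$ from $P$, generalising slightly so that the inductive statement applies to any rule (positive or negative conclusion) linearly provable from $P$, not only to rules in ntytt form.

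In the base case $\pi$ is either a single hypothesis labelled $\lambda = t\trans{\alpha}y$---so that $r = \frac{\{\lambda\}}{\lambda}$---or a substitution instance $\tau(r_0)$ of a premise-free rule $r_0 = \frac{\emptyset}{f(x_1,\ldots,x_n)\trans{\alpha}v}$ of $P$. In either situation a univariate source is obtained by replacing every variable occurrence by a fresh variable: inside $t$ in the hypothesis sub-case, and inside each $\tau(x_i)$---using fresh variables that are pairwise disjoint across the $x_i$'s---in the axiom sub-case. The substitution $\sigma$ inverts this renaming on the fresh variables and is the identity elsewhere, so that $\textrm{dom}(\sigma)$ is precisely the set of variables in the new univariate source.

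For the inductive step, the root of $\pi$ applies some rule $r_0 = \frac{K}{f(x_1,\ldots,x_n)\trans{\alpha}v}$ of $P$ under a substitution $\tau$, with a linear subproof $\pi_\mu$ of $\tau(\mu)$ for each premise $\mu \in K$. Applying the induction hypothesis to each $\pi_\mu$ yields a linearly provable rule $r_\mu^*$ with univariate source, together with a variable renaming $\sigma_\mu$ such that $\sigma_\mu(r_\mu^*)$ is the rule proved by $\pi_\mu$. By a preliminary renaming I arrange that the $r_\mu^*$'s involve pairwise disjoint variable pools, and I choose $\tau'$ exactly as in the axiom sub-case so that $t^* := f(\tau'(x_1),\ldots,\tau'(x_n))$ is univariate. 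Writing $s_\mu$ for the left-hand side of the literal $\mu$, the source of $r_\mu^*$ has the same term skeleton as $\tau(s_\mu)$, and $\tau'(s_\mu)$ shares that skeleton; so I can define a variable-to-variable substitution $\delta_\mu$ that maps the source of $r_\mu^*$ onto $\tau'(s_\mu)$ and agrees with $\sigma_\mu$ on every other variable of $r_\mu^*$. By Lem.~\ref{linear proof composition}, each $\delta_\mu(r_\mu^*)$ is linearly provable; gluing these children under $r_0$ with substitution $\tau'$ (again using Lem.~\ref{linear proof composition} and the disjointness arranged above) produces the required rule $r^*$ with univariate source $t^*$. Because $\sigma\circ\delta_\mu=\sigma_\mu$ on $\var(r_\mu^*)$ by construction and $\sigma\circ\tau'=\tau$ on $\var(r_0)$, one obtains $\sigma(r^*) = r$ and $\textrm{dom}(\sigma) = \var(t^*)$.

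The hard part will be the bookkeeping of all these renamings: a single global $\sigma$, with domain exactly $\var(t^*)$, must agree with each separately-obtained $\sigma_\mu$ after composition with $\delta_\mu$, while simultaneously the hypothesis sets of the $\delta_\mu(r_\mu^*)$ must remain pairwise disjoint as demanded by Lem.~\ref{linear proof composition}. Both concerns are addressed by using pairwise disjoint pools of fresh variables for each subproof's univariate source and by defining $\delta_\mu$ to coincide with $\sigma_\mu$ on variables outside the source of $r_\mu^*$.
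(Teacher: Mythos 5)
Your overall route—induction on the linear proof, with a base case for hypotheses and axioms and a gluing step at the bottom rule—is exactly the intended one (the paper leaves this proof as a routine induction), and your base cases are fine; but the inductive step has a genuine gap in the gluing. With $\tau'$ chosen ``exactly as in the axiom sub-case'' you have $\tau'(y_\mu)=\tau(y_\mu)$ for the right-hand side $y_\mu$ of a positive premise $\mu=s_\mu\trans{\beta}y_\mu$ of $r_0$, so the child label demanded by the bottom instance of $r_0$ under $\tau'$ is $\tau'(s_\mu)\trans{\beta}\tau(y_\mu)$, whereas the rule you actually have available, $\delta_\mu(r_\mu^*)$, concludes $\tau'(s_\mu)\trans{\beta}\delta_\mu(u_\mu^*)$, where $u_\mu^*$ is the target of $r_\mu^*$. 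These right-hand sides differ whenever $u_\mu^*$ contains a variable of the univariate source of $r_\mu^*$ (on such variables $\delta_\mu$ returns fresh variables of $t^*$, while $\sigma_\mu$, whose image is $\tau(y_\mu)$, returns the original ones)—a completely typical situation, e.g.\ for subproofs using a patience rule or a rule like $\frac{x\trans{a}y}{f(x)\trans{b}x}$, whose targets propagate source variables. Concretely, with rules $\frac{x\trans{a}y}{f(x)\trans{b}x}$ and $\frac{x\trans{b}y}{g(x)\trans{d}y}$ the ntytt rule $\frac{k(z,z)\trans{a}y_1}{g(f(k(z,z)))\trans{d}k(z,z)}$ has a linear proof; your construction takes $t^*=g(f(k(z_1,z_2)))$ and assembles $r^*=\frac{k(z_1,z_2)\trans{a}y_1}{g(f(k(z_1,z_2)))\trans{d}k(z,z)}$, whose required child $f(k(z_1,z_2))\trans{b}k(z,z)$ is not the conclusion $f(k(z_1,z_2))\trans{b}k(z_1,z_2)$ of the renamed subderivation; indeed this $r^*$ is not linearly provable at all, since any derivation of a $b$-transition of $f(k(z_1,z_2))$ forces the target $k(z_1,z_2)$. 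So Lem.~\ref{linear proof composition} cannot be applied as described, and the assembled rule need not be provable.

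The repair is to define the substitution used at the bottom to map each $y_\mu$ to $\delta_\mu(u_\mu^*)$ (the target of the renamed subderivation) instead of to $\tau(y_\mu)$: then the children match, the glued rule is linearly provable, and your verification still goes through, since $\sigma\circ\delta_\mu=\sigma_\mu$ on $\var(r_\mu^*)$ gives $\sigma(\delta_\mu(u_\mu^*))=\sigma_\mu(u_\mu^*)=\tau(y_\mu)$, hence $\sigma\circ\tau'=\tau$ on $\var(r_0)$ and $\sigma(r^*)=r$ with $\textrm{dom}(\sigma)=\var(t^*)$. Note also that the lemma assumes only ntyft (not decent ntyft) format, so $s_\mu$ may contain right-hand-side variables $y_\nu$ of other premises or free variables; after the repair $\tau'(s_\mu)$ then depends on the values $\delta_\nu(u_\nu^*)$, so the $\delta_\mu$ and the bottom substitution must be defined together (e.g.\ along the dependency order of premises), a point your skeleton argument glosses over—harmless in the paper's application, where $\hat P^+$ is decent, but needed for the lemma as stated.
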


\begin{proof}
Straightforward by induction on the linear proof of the rule.
\qed
\end{proof}

\begin{lemma}\label{lem:substitution}
Let \nietplat{$r\mathbin=\frac{H}{t\trans\alpha u}$} be an ntytt rule, and $\sigma:\var(t)\rightarrow V$.\\[1pt]
If $r$ is  {\dr} w.r.t.\ a TSS $P$, then so is the rule
\nietplat{$\sigma(r)\mathbin=\frac{\sigma(H)}{\sigma(t)\trans\alpha \sigma(u)}$}.
\end{lemma}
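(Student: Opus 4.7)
The plan is to separately verify positive and negative delay resistance for $\sigma(r)$, in each case by starting from the witnesses guaranteed for $r$ and closing them under $\sigma$ via Lem.~\ref{linear proof composition}.

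First I would record a book-keeping observation. Since $r$ is an ntytt rule, the right-hand sides of positive premises of $r$ are distinct variables not occurring in its source $t$; in particular they lie outside $\mathrm{dom}(\sigma)=\var(t)$ and are hence fixed by $\sigma$. Consequently, $\sigma$ induces a bijection between $H^+$ and $\sigma(H)^+$, and more generally if $w\trans\beta y\in H$ and $w'\trans{\beta'}y'\in H$ satisfy $\sigma(w\trans\beta y)=\sigma(w'\trans{\beta'}y')$ then they are already equal. The same remark shows $\sigma(H\setminus M)\subseteq \sigma(H)\setminus\sigma(M)$ and, dually for the full equality of sets of positive premises. Finally, since $\sigma$ commutes with the formation of stable negative premises, $\sigma(H^{s-})\subseteq\sigma(H)^{s-}$.

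For positive delay resistance, take the finite set $H^d\subseteq H^+$ guaranteed by Def.~\ref{def:positive} for $r$ and let $(\sigma(H))^d:=\sigma(H^d)$, which is finite. To see each $\sigma(w\trans\beta y)=\sigma(w)\trans\beta y\in\sigma(H^d)$ is delayable in $\sigma(r)$ w.r.t.\ $P$, take the rules $\frac{H_1}{t\trans{\tau}v}$ and $\frac{H_2}{v\trans{\alpha}u}$ witnessing delayability of $w\trans\beta y$ for $r$. By Lem.~\ref{linear proof composition}, $\frac{\sigma(H_1)}{\sigma(t)\trans{\tau}\sigma(v)}$ and $\frac{\sigma(H_2)}{\sigma(v)\trans{\alpha}\sigma(u)}$ are linearly provable from $P$; the inclusions on $\sigma(H_1)$ and $\sigma(H_2)$ follow from applying $\sigma$ to the respective inclusions for $H_1,H_2$, using that the fresh variable $z$ is not in $\mathrm{dom}(\sigma)$. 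Next, given $M'\subseteq\sigma(H)^+\setminus\sigma(H^d)$, by the bijection above there is a unique $M\subseteq H^+\setminus H^d$ with $\sigma(M)=M'$. Delay resistance of $r$ supplies a linearly provable rule $r_M=\frac{H_M}{t\trans{\alpha}u}$ with $H_M\subseteq(H\setminus M)\cup M_\tau$. Applying Lem.~\ref{linear proof composition} yields $\frac{\sigma(H_M)}{\sigma(t)\trans{\alpha}\sigma(u)}$ linearly provable, and $\sigma(H_M)\subseteq(\sigma(H)\setminus M')\cup M'_\tau$ by the bookkeeping, with $M'_\tau$ using the freshness of the variables $z_y$.

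For negative delay resistance, Def.~\ref{def:negative} gives a rule $\frac{H'}{t\trans{\alpha}u}$ linearly provable from $P$ with $H'\subseteq H^+\cup H^{s-}$. Applying $\sigma$ via Lem.~\ref{linear proof composition} yields $\frac{\sigma(H')}{\sigma(t)\trans{\alpha}\sigma(u)}$ linearly provable, with $\sigma(H')\subseteq\sigma(H^+)\cup\sigma(H^{s-})\subseteq\sigma(H)^+\cup\sigma(H)^{s-}$. Combining the two parts, $\sigma(r)$ is delay resistant w.r.t.\ $P$. The only delicate point throughout is the bijection argument on positive premises, which is where the ntytt shape of $r$ (distinct, fresh right-hand sides) is genuinely used; once this is in place, everything else reduces to applying $\sigma$ under Lem.~\ref{linear proof composition}.\qed
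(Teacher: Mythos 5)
Your proof is correct and follows essentially the same route as the paper: negative delay resistance by applying $\sigma$ to the witness rule via Lem.~\ref{linear proof composition}, and positive delay resistance by taking $\sigma(H^d)$ as the new finite set, using that $\sigma$ fixes the right-hand sides of positive premises (hence is injective on $H^+$) to pull back each $M'\subseteq\sigma(H)^+\setminus\sigma(H^d)$, and transporting the delayability and $r_M$ witnesses under $\sigma$. The extra bookkeeping you spell out (the inclusion $\sigma(H\setminus M)\subseteq\sigma(H)\setminus\sigma(M)$, stability of negative premises under $\sigma$, freshness of the $z_y$) is exactly what the paper's terser proof leaves implicit.
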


\begin{proof}
Let $r$ be {\dr} w.r.t.\ $P$. 
Since $r$ is {\ndr} w.r.t.\ $P$, there exists a rule \nietplat{$\frac{H'}{t\trans\alpha u}$}, 
linearly provable from $P$, with
$H' \subseteq H^+\cup H^{s-}$. By Lem.~\ref{linear proof composition}, the rule
\nietplat{$\frac{\sigma(H')}{\sigma(t)\trans\alpha \sigma(u)}$} 
is also linearly provable from $P$,
and $\sigma(H') \subseteq \sigma(H)^+\cup \sigma(H)^{s-}$. It follows that $\sigma(r)$ is {\ndr} w.r.t.\ $P$.

As $\sigma$ does not affect ${\it rhs}(H)$, $\sigma$ is injective on $H^+$.
Let $H^d \subseteq H^+$ be the finite set of delayable premises that exists by Def.~\ref{def:positive}.
Take $\sigma(H)^d := \sigma(H^d)$. Then any subset of $\sigma(H)^+\setminus 
\sigma(H)^d$ can be written as $\sigma(M)$ with $M\subseteq H\setminus H^d$.
By Def.~\ref{def:positive} there exists a rule $r_M=\frac{H_M}{t\trans\alpha u}$,
linearly provable from $P$,
where $H_M \subseteq (H{\setminus}M)\cup M_\tau$. Hence the rule
\nietplat{$\sigma(r_M)=\frac{\sigma(H_M)}{\sigma(t)\trans\alpha \sigma(u)}$}, 
where $\sigma(H_M) \subseteq (\sigma(H){\setminus}\sigma(M))\cup \sigma(M)_\tau$, is
linearly provable from $P$ by Lem.~\ref{linear proof composition}.

It remains to show that the premises in $\sigma(H^d)$ are delayable.
This follows immediately from the delayability of the premises in $H^d$, by applying $\sigma$.
\qed
\end{proof}

\noindent
The next proposition states that delay resistance w.r.t.\ $\Lambda$, together with
the presence of the $\aL$-patience rules, rooted branching bisimulation safeness and
condition \ref{smooth} of Def.~\ref{def:smooth}, is sufficient to guarantee delay resistance.

\begin{proposition}\label{prop:delay-resistant}
Let $P$ be a standard TSS in ready simulation format, in which each transition rule is rooted branching bisimulation
safe and satisfies condition \ref{smooth} of Def.~\ref{def:smooth} w.r.t.\ $\aleph$ and $\Lambda$.
Let moreover $P$ be $\aL$-patient and {\dr} w.r.t.\ $\Lambda$. Then $P$ is \dr.
\end{proposition}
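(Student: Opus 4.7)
The strategy is to verify Def.~\ref{def:delay-resistant} directly by showing that every linear $P$-ruloid $r$ with positive conclusion is delay resistant w.r.t.\ $\hat P^+$. The key reduction is Lem.~\ref{lem:delay-resistant}, which upgrades delay resistance w.r.t.\ $\Lambda$ to full delay resistance provided the ruloid is xyntt with a univariate source, rooted branching bisimulation safe, and satisfies condition~\ref{smooth} of Def.~\ref{def:smooth}.

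First I would pick an arbitrary linear $P$-ruloid $r=\frac{H}{t\trans{\alpha}u}$. The rules of $\hat P^+$ with a positive conclusion lie in $\hat P^\ddagger$, which is in xynft (hence ntyft) format, so Lem.~\ref{lem:univariate} applied to $\hat P^+$ produces a renaming $\sigma:V\to V$ and a linear $P$-ruloid $r'=\frac{H'}{t'\trans{\alpha}u'}$ with univariate source $t'$ such that $r=\sigma(r')$; being nxytt, $r'$ is in particular xyntt. The three remaining hypotheses of Lem.~\ref{lem:delay-resistant} are then verified for $r'$ as follows: rooted branching bisimulation safety w.r.t.\ $\aleph,\Lambda$ follows from Prop.~\ref{prop:preservation_bra_bisimulation_safe}; condition~\ref{smooth} from Cor.~\ref{cor:smooth}; and $\aL$-patience of $\hat P^+$ from the fact that the $\aL$-patience rules of $P$ already occur in $P^\dagger$ and admit trivial one-step linear proofs, so they belong to $\hat P^\ddagger\subseteq\hat P^+$. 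Finally, the standing hypothesis that $P$ is delay resistant w.r.t.\ $\Lambda$ gives that $r'$, as a linear $P$-ruloid with positive conclusion, is delay resistant w.r.t.\ $\Lambda$ and $\hat P^+$.

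An application of Lem.~\ref{lem:delay-resistant} therefore yields that $r'$ is delay resistant w.r.t.\ $\hat P^+$, and Lem.~\ref{lem:substitution} (instantiated with the TSS $\hat P^+$ and the renaming $\sigma$) transports this conclusion to $r=\sigma(r')$, completing the verification of Def.~\ref{def:delay-resistant}. I expect the only real obstacle to be purely administrative: tracking the distinction between the TSSs $P$, $P^+$ and $\hat P^+$, and confirming that each preservation result from Sect.~\ref{sec:preservation} and Sect.~\ref{sec:checking-delay-resistance} carries over from $P^+$ to its linear-proof restriction $\hat P^+$; all substantive work has already been absorbed into the lemmas that precede the proposition.
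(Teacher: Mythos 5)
Your proposal is correct and follows essentially the same route as the paper's own proof: reduce to a linear ruloid with univariate source via Lem.~\ref{lem:univariate} and transport back via Lem.~\ref{lem:substitution}, establish rooted branching bisimulation safety and condition~\ref{smooth} via Prop.~\ref{prop:preservation_bra_bisimulation_safe} and Cor.~\ref{cor:smooth}, note the $\aL$-patience of $\hat P^+$, and conclude with Lem.~\ref{lem:delay-resistant}. Your explicit care in instantiating the lemmas with $\hat P^+$ rather than $P$ is, if anything, slightly more precise than the paper's wording, but the substance is identical.
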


\begin{proof}
Let \nietplat{$r=\frac{H}{t \trans\alpha u}$} 
be a linear $P$-ruloid, i.e.\ an nxytt rule, linearly
provable from the TSS $\hat P^+$ constructed in Sec.~\ref{sec:ruloids}. We need to show that $r$ is \dr.
Since $P$ is $\aL$-patient, so is $\hat P^+$.
Using Lemmas~\ref{lem:univariate} and ~\ref{lem:substitution} (and that $\hat P^+$ is in ntyft format)
we may restrict attention to the case that $t$ is univariate.
By Prop.\,\ref{prop:preservation_bra_bisimulation_safe}, $r$ is rooted branching bisimulation safe
w.r.t.\ $\aleph$ and $\Lambda$, and by Cor.~\ref{cor:smooth} it moreover satisfies condition
\ref{smooth} of Def.~\ref{def:smooth} w.r.t.\ $\aleph$ and $\Lambda$. 
By assumption, $r$ is {\dr} w.r.t.\ $\Lambda$ and $P$.
The result now follows from Lem.~\ref{lem:delay-resistant}.
\qed
\end{proof}
If $\Lambda$ is universal, all premises are $\Lambda$-liquid.
Hence in the presence of condition \ref{smooth} of Def.~\ref{def:smooth}, the requirement of delay
resistance can be dropped from the delay and weak bisimulation formats.

\begin{definition}\rm\label{def:delay_bisimulation_format}
A standard TSS $P$ is in {\em syntactic delay bisimulation format} if it is in ready simulation format,
and, for some $\aleph$ and the universal predicate $\Lambda$, it is $\aL$-patient and all its transition rules are
rooted branching bisimulation safe and satisfy condition \ref{smooth} of Def.~\ref{def:smooth}
w.r.t.\ $\aleph$ and $\Lambda$.

The {\em syntactic weak bisimulation format} is defined likewise, but using condition $\ref*{rhs}'$
of Def.~\ref{def:rooted_eta_bisimulation_safe} instead of condition $\ref{rhs}$
of Def.~\ref{def:rooted_branching_bisimulation_safe}.
\end{definition}

\begin{corollary}
Let $P$ be a complete standard TSS in syntactic delay bisimulation format.
Then $\bis{d}$ (as well as $\bis{rd}$) is a congruence for $P$.

Let $P$ be a complete standard TSS in syntactic weak bisimulation format.
Then $\bis{w}$ (as well as $\bis{rw}$) is a congruence for $P$.
\end{corollary}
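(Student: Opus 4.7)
The plan is to reduce the syntactic formats to the semantic (delay / weak) bisimulation formats established earlier, and then invoke Theorems~\ref{thm:delay-congruence} and \ref{thm:rooted-delay-congruence} (and their weak counterparts following Thm.~\ref{thm:rooted-weak-congruence}). The only gap between the syntactic formats of Def.~\ref{def:delay_bisimulation_format} and the semantic formats of Defs.~\ref{def:rooted_delay_bisimulation_format} and \ref{def:weak_bisimulation_format} is delay resistance, which the syntactic formats do not demand. So the whole task is to derive delay resistance from the syntactic hypotheses, and for this I would appeal to Prop.~\ref{prop:delay-resistant}.

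The key observation is that in both syntactic formats $\Lambda$ is taken to be universal, so every variable occurs $\Lambda$-liquid in every term and hence \emph{every} premise of \emph{every} rule is $\Lambda$-liquid in the sense of Def.~\ref{def:delay-resistance-Lambda}. In particular, for any (linear) $P$-ruloid $\frac{H}{t\trans{\alpha}u}$ we have $H^\Lambda = H$. Consequently, delay resistance w.r.t.~$\Lambda$ is trivially satisfied: for positive delay resistance w.r.t.~$\Lambda$ take $H^d := \emptyset$, so that $H^+\setminus(H^d\cup H^\Lambda)=\emptyset$ and only $M=\emptyset$ needs a witness, for which the ruloid itself (being linearly provable from $\hat P^+$) suffices; and for negative delay resistance w.r.t.~$\Lambda$ the inclusion $H \subseteq H^+\cup H^{s-}\cup H^\Lambda$ is automatic, so again the ruloid itself is the witness.

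With delay resistance w.r.t.~$\Lambda$ in hand, Prop.~\ref{prop:delay-resistant} applies directly: the TSS is in ready simulation format and $\aL$-patient, its rules are rooted branching bisimulation safe (condition $\ref*{rhs}'$ of Def.~\ref{def:rooted_eta_bisimulation_safe} is strictly stronger than condition \ref{rhs}, so the weak case also satisfies this hypothesis) and they satisfy condition \ref{smooth} of Def.~\ref{def:smooth}. Thus $P$ is delay resistant. Therefore a TSS in syntactic delay bisimulation format is in (rooted) delay bisimulation format, and a TSS in syntactic weak bisimulation format is in (rooted) weak bisimulation format, both with $\Lambda$ universal.

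Given this, the congruence claims follow immediately: Thm.~\ref{thm:delay-congruence} yields that $\bis{d}$ is a congruence, and since the delay bisimulation format (with $\Lambda$ universal) is in particular the rooted delay bisimulation format, Thm.~\ref{thm:rooted-delay-congruence} yields that $\bis{rd}$ is a congruence as well. The analogous pair of theorems for weak bisimilarity then handles $\bis{w}$ and $\bis{rw}$. There is no genuine obstacle here beyond the bookkeeping of Prop.~\ref{prop:delay-resistant}; the main conceptual point is simply that universality of $\Lambda$ collapses $H^\Lambda$ to $H$ and thereby trivialises delay resistance w.r.t.~$\Lambda$.
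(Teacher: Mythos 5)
Your proof is correct and follows exactly the route the paper intends: universality of $\Lambda$ makes every premise $\Lambda$-liquid, so delay resistance w.r.t.\ $\Lambda$ holds vacuously, Prop.~\ref{prop:delay-resistant} then yields delay resistance, and the congruence theorems for the (rooted) delay and weak bisimulation formats finish the argument. Your remark that condition $\ref*{rhs}'$ implies condition \ref{rhs}, so the weak case also meets the hypotheses of Prop.~\ref{prop:delay-resistant}, is the right bookkeeping detail.
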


\noindent
The following example from \cite{vGl11} shows that in the above corollary (as well as in
Prop. \ref{prop:delay-resistant}), condition \ref{smooth} of Def.\ \ref{def:smooth} cannot be omitted.

\begin{example}
The operator $s$ of \cite[Sec.\ 10(4)]{GlWe96} allows a process (its argument) to
proceed normally, but in addition can report that the process is ready
to perform a visible action, without actually doing it.  It supposes
an alphabet $A := {\cal L} \uplus \{\mbox{\footnotesize Can do `$a$'}
\mid a \in {\cal L}\}$ for some nonempty set ${\cal L}$, and its rules are
$$\frac{x \trans{\alpha} y}{s(x) \trans{\alpha} s(y)}~(\alpha\in A\cup\{\tau\})
~~~~~~~~~~
  \frac{x \trans{a} y}{s(x) \stackrel{\raisebox{-2pt}[0pt][0pt]
  {\tiny Can do	`$a$'}}
  {-\!\!\!-\!\!\!-\!\!\!\longrightarrow} s(x)}~(a\in {\cal L})$$
Consider the TSS that consists of these two rules together with the transitions from the first LTS in Ex.~\ref{ex:weak bisimulations},
defining the rooted delay bisimilar processes $p_0$ and $p_1$.
This TSS is in syntactic delay bisimulation format, except that it violates condition \ref{smooth} of Def.\ \ref{def:smooth}: by the first rule for $s$
the argument of $s$ is $\Lambda$-liquid, so in the second rule $x$ occurs $\Lambda$-liquid in the
source, $\aleph$-liquid in the premise, and also in the target.

On this TSS $\bis{d}$ and $\bis{w}$ are not congruences, for $p_0 \bis{d} p_1$ whereas
$s(p_0) \notbis{w} s(p_1)$. Namely, only $s(p_0)$ can report ``can do
{`$b$'}'' and then do $a$ (as depicted in \cite[Fig.~7]{GlWe96}). Thus
condition \ref{smooth} cannot be skipped from the syntactic delay and weak bisimulation formats.
\end{example}

\subsection{Semi-syntactic criteria for delay resistance}

We now introduce requirements on the rules of a TSS $P$
that imply delay resistance of $P$. This yields what could be called
\emph{semi-syntactic} congruence formats. They are not purely syntactic, because
one of the conditions (in Def.~\ref{def:manifestly delayable}) requires the existence of certain
linearly provable rules;
however, all conditions need to be checked for rules in $P$ only (rather than for $P$-ruloids).

\begin{definition}\rm\label{def:manifestly delayable}
A premise \nietplat{$w \trans\beta y$} of an ntytt rule \nietplat{$r=\frac{H}{t\trans\alpha u}$} 
is \emph{manifestly delayable} in a TSS $P=(\Sigma,R)$ if, for some term $v$ and fresh variable $z$,
there is a transition rule $\frac{H_1}{t \trans{\tau}v}$ in $\overline R$,
as well as an ntytt rule $\frac{ H_2}{v \trans{\alpha} u}$ linearly
provable from $P$, with \nietplat{$H_1\subseteq (H{\setminus}\{w \trans{\beta} y\})\cup\{w \trans{\tau} z\}$}
and \nietplat{$H_2\subseteq (H{\setminus}\{w \trans{\beta} y\})\cup\{z \trans{\beta} y\}$}.
\end{definition}
Here $\overline R$ denotes the set $R$ of transition rules up to a bijective renaming of variables.
The difference with Def.~\ref{def:delayable} is that here the rule $\frac{H_1}{t \trans{\tau} v}$ needs to be
in $\overline R$, rather than merely being linearly provable from $P$.
So clearly each manifestly delayable premise is delayable.

\begin{definition}\rm
A transition rule $\frac{H}{t\trans\alpha u}$ is \emph{manifestly negative delay resistant}, or more briefly
\emph{negative-stable}, if for every premise $w\ntrans\alpha$ in $H\!$,
also \nietplat{$w\ntrans\tau$} is in $H\!$. A TSS is \emph{negative-stable} if all its rules are.
\end{definition}
The difference with Def.~\ref{def:negative} is that here the requirement also applies to redundant
premises \nietplat{$w\ntrans\alpha$}. Clearly each negative-stable rule is {\ndr} w.r.t.\ any TSS.

\begin{definition}\rm\label{def:manifestly delay resistant}
An ntytt rule \nietplat{$\frac{ H}{t\trans{\alpha}u}$} 
is \emph{manifestly delay resistant}
w.r.t.\ \Brac{a predicate $\Lambda$ and} a TSS $P=(\Sigma,R)$ if it is negative-stable
and there exists a finite set $H^d \subseteq H^+$ of manifestly delayable
positive premises, such that for each set $M \subseteq H^+{\setminus}(H^d \Brac{\cup H^\Lambda})$ there is a rule
\nietplat{$r_M=\frac{H_M}{t\trans{\alpha}u}$} in $\overline R$, where $H_M \subseteq (H{\setminus}M) \cup M_\tau$.
\end{definition}
Again, the rule $r_M$ needs to be in $\overline R$, rather than merely being linearly provable from $P$.
The material in this section and in the appendix comes in two flavours: incorporating a predicate
$\Lambda$ on arguments of function symbols, or omitting it.  The latter is equivalent to taking
$\Lambda=\emptyset$. Notationally, we will capture both by putting the optional material, pertaining
to $\Lambda$, between square brackets.

Clearly, a manifestly delay resistant rule w.r.t.\ \Brac{$\Lambda$ and} $P$ is delay resistant
w.r.t.\ \Brac{$\Lambda$ and} $P$ (cf.\ Def.~\ref{def:delay-resistant} \Brac{or Def.~\ref{def:delay-resistance-Lambda}}).

\begin{definition}\rm\label{def:manifestly}
A standard TSS $P$ in decent ntyft format is \emph{manifestly delay resistant} \Brac{w.r.t.\ $\Lambda$}
if all its transition rules are manifestly delay resistant w.r.t.\ \Brac{$\Lambda$ and} $P$.
A standard TSS $P$ in ready simulation format is \emph{manifestly delay resistant} \Brac{w.r.t.\ $\Lambda$} if its conversion
$P^\dagger$ to decent ntyft format (see Sec.~\ref{sec:ruloids}) is manifestly delay resistant \Brac{w.r.t.\ $\Lambda$}.
\end{definition}
Note that in contrast to the notion of a delay resistant TSS from Def.~\ref{def:delay-resistant},
here the property is only required for the rules of $P$, instead of all linear $P$-ruloids.
The following theorem, whose proof is presented in Appendix~\ref{app:manifest}, provides a semi-syntactic version
of all our congruence formats.

\begin{theorem}\label{thm:manifest}
Any manifestly delay resistant standard TSS in ready simulation format is delay resistant.
\end{theorem}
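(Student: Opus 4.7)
I would prove the stronger claim that every ntytt rule linearly provable from $\hat P^+$ is delay resistant w.r.t.\ $\hat P^+$; since every linear $P$-ruloid is such a rule (with the additional nxytt shape), this suffices. The argument proceeds by induction on the structure of the linear proof from $\hat P^+$, with Lem \ref{linear proof composition} handling composition at each step.

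For the base case I need every rule in $\hat P^+$ itself to be delay resistant w.r.t.\ $\hat P^+$. The non-standard rules in $\hat P^+\setminus\hat P^\ddagger$ carry negative conclusions, for which the positive/negative delay-resistance conditions of Def \ref{def:delay-resistant} are vacuous. The substantive case is the xynft rules of $\hat P^\ddagger$, which are linearly provable from $P^\dagger$. Here I would run an auxiliary induction on the linear proof from $P^\dagger$. The starting point is an individual rule of $P^\dagger$, which by Def \ref{def:manifestly} has manifest witnesses in $\overline{P^\dagger}$; composing these witnesses in lock-step with the ongoing left-hand-side reduction (which is precisely what turns $P^\dagger$ rules into $\hat P^\ddagger$ rules) yields rules in $\hat P^\ddagger\subseteq\hat P^+$ that serve as the ``linearly provable from $\hat P^+$'' witnesses required by Defs \ref{def:delayable}, \ref{def:positive} and \ref{def:negative}.

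For the inductive step of the outer induction, consider a rule $r=\frac{H}{\lambda}$ obtained by a linear proof with bottom rule $r_0=\frac{K}{\lambda}\in\hat P^+$ and sub-proofs $\frac{H_\mu}{\mu}$ for each $\mu\in K$, each of which concludes a rule delay resistant w.r.t.\ $\hat P^+$ by induction hypothesis. To witness delay resistance of $r$, one composes: for each delayable premise $\mu_0\in K$ of $r_0$, with splitting $\frac{K_1}{t\trans\tau v}$ and $\frac{K_2}{v\trans\alpha u}$, and each delayable premise $\nu\in H_{\mu_0}^+$ of the corresponding sub-proof with its own splitting, I would glue these splittings via Lem \ref{linear proof composition} into a splitting of $r$ at $\nu$. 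The $\tau$-pollability rules $r_M$ of $r_0$ compose similarly with pollings of the sub-proofs. Negative-stability transfers directly, since negative premises propagated up from sub-proofs retain their companion $\ntrans\tau$ premises.

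The main obstacle is twofold: first, the careful bookkeeping to show that the finite set $H^d$ of delayable premises for the composed rule can itself be chosen finite (which follows from finiteness at each layer, since $K$ and each $H_\mu^d$ are finite); second, handling the left-hand-side reduction in the base case for $\hat P^\ddagger$, where manifest witnesses in $\overline{P^\dagger}$ must be transported through the reduction so as to yield linearly provable witnesses over $\hat P^+$ with the correctly aligned premise sets. The latter is exactly the point at which the ``manifestness'' of witnesses, i.e.\ their membership in $\overline R$ rather than mere provability, is essential: it lets one pick them up and process them in parallel with the rule being reduced, which bare delay resistance would not support.
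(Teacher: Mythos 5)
There is a genuine gap, and it is twofold. First, your outer induction invariant---``every ntytt rule linearly provable from $\hat P^+$ is delay resistant w.r.t.\ $\hat P^+$''---is too weak to close. In the inductive step you need to pollute premises that sit \emph{inside} a subproof, and when such a premise is replaced by a $\tau$-premise with a fresh right-hand side, the conclusion of that subproof may no longer be derivable with its original label and target: in general it can only be replaced by a $\tau$-transition to some other term $w$, which in turn forces the corresponding premise of the bottom rule to be polluted as well. Positive delay resistance (Def.~\ref{def:positive}) only licenses pollution of premises \emph{outside} the finite delayable set $H^d$ and never allows the conclusion to change, so the induction hypothesis does not supply what your phrase ``pollings of the sub-proofs'' needs. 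The paper repairs this by strengthening the invariant to ``positive delay resistant \emph{and} $\tau$-pollable'' in the sense of Def.~\ref{def:pollable} (arbitrary subsets of positive premises may be polluted, at the cost of possibly turning the conclusion into $t\trans{\tau}w$), proving the corresponding base-case fact that manifest delay resistance implies manifest $\tau$-pollability (Lem.~\ref{lem:pollable}), transferring $\tau$-pollability to the non-standard rules of $\hat P^+$ (Lem.~\ref{lem:transfer}---for these rules $\tau$-pollability is \emph{not} vacuous, even though delay resistance is), and only then running the proof-tree induction (Lem.~\ref{lem:preservation positive}). None of this auxiliary notion appears in your plan.

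Second, your claim that ``negative-stability transfers directly, since negative premises propagated up from sub-proofs retain their companion $\ntrans{\tau}$ premises'' is false for derivations over $\hat P^+$. The unstable negative premises of a ruloid typically come from the \emph{non-standard} rules $\frac{N}{t\ntrans{\gamma}}$ used at internal nodes, whose premises are denials of one premise of each xynft rule with matching conclusion; a denial $x\ntrans{\beta}$ obtained this way carries no companion $x\ntrans{\tau}$. Hence ruloids are generally not negative-stable, and negative delay resistance must be \emph{established}, not propagated: the paper does this by a separate simultaneous induction (Lem.~\ref{lem:negative}) that pairs each negative subgoal $\sigma(t_\ell)\ntrans{\gamma_\ell}$ with a companion subproof of $\sigma(t_\ell)\ntrans{\tau}$ (available because $\hat P^\ddagger$ is negative-stable) and merges the two negative ruloids into one with only positive and stable negative premises via Lem.~\ref{lem:tau-negative}---a step that again hinges on the $\tau$-pollability of $\hat P^\ddagger$ and on the precise construction of the non-standard rules. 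Your proposal has no counterpart to this argument, so the negative half of delay resistance for ruloids remains unproven. The overall skeleton you chose (lift manifest witnesses through linear provability by composing proofs, as in Lem.~\ref{linear proof composition}) does match the paper's, but without the $\tau$-pollability invariant and the paired treatment of $\ntrans{\alpha}$/$\ntrans{\tau}$ conclusions the proof does not go through.
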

The following variant of this theorem, whose proof is also presented in Appendix~\ref{app:manifest}, mixes in the
insights of Sec.~\ref{sec:Lambda}, and provides semi-syntactic versions of our congruence formats
that are normally easier to apply.

\begin{theorem}\label{thm:manifest-Lambda}
Let $P$ be a standard TSS in ready simulation format, in which each transition rule is rooted branching bisimulation
safe and satisfies condition \ref{smooth} of Def.~\ref{def:smooth} w.r.t.\ $\aleph$ and $\Lambda$.
Let moreover $P$ be $\aL$-patient and manifestly {\dr} w.r.t.\ $\Lambda$. Then $P$ is \dr.
\end{theorem}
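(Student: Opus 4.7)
The plan is to combine Proposition \ref{prop:delay-resistant} with a $\Lambda$-relativised version of Theorem \ref{thm:manifest}. Under the hypotheses of the theorem, Proposition \ref{prop:delay-resistant} reduces the goal to showing that $P$ is delay resistant w.r.t.\ $\Lambda$ in the sense of Definition \ref{def:delay-resistance-Lambda}; that is, every linear $P$-ruloid with a positive conclusion is delay resistant w.r.t.\ $\Lambda$ and $\hat P^+$. All hypotheses other than manifest delay resistance w.r.t.\ $\Lambda$ are consumed at this reduction step, so the remaining work is to derive delay resistance w.r.t.\ $\Lambda$ from manifest delay resistance w.r.t.\ $\Lambda$.

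To obtain this, I would reproduce the proof of Theorem \ref{thm:manifest} (given in the appendix) with the bracketed $\Lambda$-clauses of Definitions \ref{def:manifestly delay resistant} and \ref{def:manifestly} consistently activated. Concretely, one tracks manifest delay resistance w.r.t.\ $\Lambda$ through the three-stage construction of $\hat P^+$ from $P$ in Section \ref{sec:ruloids}: first, the conversion $P \leadsto P^\dagger$ to decent ntyft format, where manifest delay resistance is preserved by inspection because closed instantiation of free variables and splitting a variable source by all function symbols in $\Sigma$ preserves the witness sets $H^d$ and the bypass rules $r_M$ while also preserving negative-stability; second, the reduction $P^\dagger \leadsto \hat P^\ddagger$ in which non-variable left-hand sides of positive premises are eliminated, where the witnesses from the component rules are composed using Lemma \ref{linear proof composition}; and third, the augmentation by non-standard rules with negative conclusions, which are trivially negative-stable and carry no positive premises. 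At each stage the permitted witnesses widen from $\overline R$ to rules linearly provable from $\hat P^+$, so the resulting linear $P$-ruloids satisfy the linearly-provable variant of Definition \ref{def:manifestly delay resistant} w.r.t.\ $\Lambda$ and $\hat P^+$, which is precisely delay resistance w.r.t.\ $\Lambda$.

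The main obstacle is the second stage. When a positive premise $f(t_1,\ldots,t_n)\trans{\alpha}y$ of a rule $r$ is resolved using a rule $\frac{H'}{f(x_1,\ldots,x_n)\trans{\alpha}v}$, the finite sets of manifestly delayable premises and the $\tau$-polling bypass rules $r_M$ of the two components must be composed into witnesses for the resulting composite rule. The $\Lambda$-relativisation is delicate because $\Lambda$-liquidity in the composite source does not decompose straightforwardly: a premise that is not $\Lambda$-liquid in either component may become $\Lambda$-liquid in the composite, or conversely. Here condition \ref{Lambda} of Definition \ref{def:rooted_branching_bisimulation_safe}, together with its preservation under composition via Lemma \ref{lem:preservation_branching_bisimulation_safe}, ensures that the $\Lambda$-liquidity pattern of a variable in the composite source propagates correctly into the components, so that any premise in $H^\Lambda$ of the composite can either be traced back to one that is $\Lambda$-liquid in a component or is eliminated by the composition. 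Once this preservation argument has been run through all three stages, an appeal to Proposition \ref{prop:delay-resistant} closes the proof.
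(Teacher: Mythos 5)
Your opening reduction is sound in itself: under the stated hypotheses, Prop.~\ref{prop:delay-resistant} does reduce the theorem to showing that $P$ is delay resistant w.r.t.\ $\Lambda$, so what remains is to lift manifest delay resistance w.r.t.\ $\Lambda$ from the rules of $P$ to all linear $P$-ruloids. But the step you treat as a routine relativisation of Thm.~\ref{thm:manifest} is exactly where the argument breaks, and the paper says so explicitly (see the discussion preceding Lem.~\ref{lem:preservation positive-Lambda}): the strategy of replacing ``w.r.t.\ $P$'' by ``w.r.t.\ $\Lambda$ and $P$'' in both the antecedent and the conclusion of Lem.~\ref{lem:preservation positive} fails. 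The reason is that in the $\Lambda$-relativised notions the $\Lambda$-liquid premises are simply \emph{exempt}, and this exemption does not compose through the second stage of the construction. Two concrete failure modes: (i) a premise that is $\Lambda$-liquid in an inner rule $r_\mu$ (hence about which the induction hypothesis says nothing) can become $\Lambda$-frozen in the composite, namely when the substitution places it under a $\Lambda$-frozen argument position of the outer source; the composite rule must then supply delayability or $\tau$-pollability for it, and neither the induction hypothesis nor your appeal to condition \ref{Lambda} and Lem.~\ref{lem:preservation_branching_bisimulation_safe} produces such a witness---those lemmas only track how $\Lambda$-liquidity propagates, they do not manufacture the missing rules. (ii) Dually, when a premise $\mu$ of the outer rule is $\Lambda$-liquid (hence exempt in the outer rule's relativised pollability/delay resistance), $\tau$-polluting premises inside its sub-derivation may force the inner rule's conclusion label to switch to $\tau$ (the second alternative of Def.~\ref{def:pollable}), and then there is no variant of the outer rule that accepts a $\tau$-labelled premise in place of $\mu$; the composition is stuck.

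The paper's fix is not to preserve the $\Lambda$-relativised property but to \emph{eliminate} $\Lambda$ during the lifting: in Lem.~\ref{lem:preservation positive-Lambda} the argument of Lem.~\ref{lem:delay-resistant} is integrated into the induction, using the $\aL$-patience rules together with conditions \ref{aleph} and \ref{smooth} (and Lem.~\ref{lem:ruloids}, plus a strengthened induction invariant on where delayable premises live in subproofs) to construct, for exactly the problematic premises above, the required $t\trans{\tau}$-rules and delayability witnesses. This yields Prop.~\ref{prop:eliminating-Lambda}: $\hat P^\ddagger$ is manifestly delay resistant \emph{outright}, after which the proof concludes verbatim as in Thm.~\ref{thm:manifest} (Lems.~\ref{lem:transfer}, \ref{lem:preservation positive} and \ref{lem:negative}), with no appeal to Prop.~\ref{prop:delay-resistant}. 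So your proposal has a genuine gap at the second stage; closing it requires essentially the machinery of Lem.~\ref{lem:preservation positive-Lambda}, not the safety-preservation lemmas you cite, and once that machinery is in place the final detour through delay resistance w.r.t.\ $\Lambda$ becomes unnecessary.
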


\begin{definition}\rm\label{def:manifest_rooted_delay_bisimulation_format}
A standard TSS $P$ is in {\em manifest rooted delay bisimulation format} if it
is in ready simulation format, and, for some $\aleph$ and $\Lambda$,
it is $\aL$-patient and manifestly delay resistant w.r.t.\ $\Lambda$, and it only contains transition rules that are
rooted branching bisimulation safe and satisfy condition \ref{smooth} of
Def.~\ref{def:smooth} w.r.t.\ $\aleph$ and $\Lambda$.

The {\em manifest rooted weak bisimulation format} is defined likewise, but using condition $\ref*{rhs}'$
of Def.~\ref{def:rooted_eta_bisimulation_safe} instead of condition $\ref{rhs}$
of Def.~\ref{def:rooted_branching_bisimulation_safe}.
\end{definition}

\begin{corollary}\label{cor:manifest}
Let $P$ be a complete standard TSS in manifest rooted delay bisimulation format.
Then \mbox{}$\bis{rd}$ is a congruence for $P$.

Let $P$ be a complete standard TSS in manifest rooted weak bisimulation format.
Then $\bis{rw}$ is a congruence for $P$.
\end{corollary}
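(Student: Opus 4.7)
The plan is to reduce the corollary directly to the congruence theorems for the (non-manifest) rooted delay and weak bisimulation formats, namely Thm.~\ref{thm:rooted-delay-congruence} and Thm.~\ref{thm:rooted-weak-congruence}. The only gap between the manifest formats and their non-manifest counterparts is the replacement of the semantic delay-resistance requirement by the semi-syntactic ``manifestly delay resistant w.r.t.\ $\Lambda$'' condition augmented with condition~\ref{smooth} of Def.~\ref{def:smooth}, so essentially all the work has already been done in Thm.~\ref{thm:manifest-Lambda}.

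First I would verify that a TSS $P$ in manifest rooted delay bisimulation format satisfies all the hypotheses of Thm.~\ref{thm:manifest-Lambda}: $P$ is standard, in ready simulation format, $\aL$-patient, each rule is rooted branching bisimulation safe and satisfies condition~\ref{smooth} w.r.t.\ $\aleph$ and $\Lambda$, and $P$ is manifestly delay resistant w.r.t.\ $\Lambda$. All of these come straight from Def.~\ref{def:manifest_rooted_delay_bisimulation_format}. Applying Thm.~\ref{thm:manifest-Lambda} then yields that $P$ is delay resistant in the sense of Def.~\ref{def:delay-resistant}.

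Combining delay resistance with the remaining requirements of Def.~\ref{def:manifest_rooted_delay_bisimulation_format} (ready simulation format, $\aL$-patience, and rooted branching bisimulation safety of all rules w.r.t.\ $\aleph$ and $\Lambda$), we see that $P$ meets exactly the conditions of the rooted delay bisimulation format (Def.~\ref{def:rooted_delay_bisimulation_format}); the extra condition~\ref{smooth} of the manifest format is simply discarded at this point. Since $P$ is complete, Thm.~\ref{thm:rooted-delay-congruence} immediately gives that $\bis{rd}$ is a congruence for $P$.

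The proof of the second statement is identical, except that condition~\ref{rhs} of Def.~\ref{def:rooted_branching_bisimulation_safe} is replaced everywhere by condition~$\ref*{rhs}'$ of Def.~\ref{def:rooted_eta_bisimulation_safe}, so rules in $P$ are rooted $\eta$-bisimulation safe rather than merely rooted branching bisimulation safe. Thm.~\ref{thm:manifest-Lambda} still applies (its statement does not depend on which of the two right-hand-side conditions is used, since rooted $\eta$-bisimulation safe rules are a fortiori rooted branching bisimulation safe), delivering delay resistance; $P$ then fits the rooted weak bisimulation format of Def.~\ref{def:weak_bisimulation_format}, and Thm.~\ref{thm:rooted-weak-congruence} finishes the argument. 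There is no real obstacle here: the corollary is essentially a bookkeeping statement, and the only mildly delicate point is checking that in the $\eta$-safe case the hypothesis on rules is strong enough to invoke Thm.~\ref{thm:manifest-Lambda}, which it is because condition~\ref{rhs} is implied by condition~$\ref*{rhs}'$.
\qed
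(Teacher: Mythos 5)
Your proposal is correct and follows exactly the route the paper intends: apply Thm.~\ref{thm:manifest-Lambda} to obtain delay resistance (noting that condition~$\ref*{rhs}'$ implies condition~\ref{rhs}, so the weak case is covered too), observe that $P$ then meets Def.~\ref{def:rooted_delay_bisimulation_format} resp.\ Def.~\ref{def:weak_bisimulation_format}, and conclude with Thm.~\ref{thm:rooted-delay-congruence} resp.\ Thm.~\ref{thm:rooted-weak-congruence}. No gaps.
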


\noindent
For most applications, to check delay resistance it suffices to check the following, simpler property.

\begin{definition}\rm
A standard ntytt rule is \emph{simply $\Lambda$-delay resistant} in a TSS $P$
if it is negative-stable and has finitely many positive premises, all of which are either manifestly
delayable in $P$ or $\Lambda$-liquid. 
A standard TSS in decent ntyft format is \emph{simply $\Lambda$-delay resistant} if all its transition rules are.
\end{definition}
Clearly, a simply $\Lambda$-delay resistant transition rule \plat{$r=\frac{H}{t\trans\alpha u}$} is manifestly delay
resistant w.r.t.\ $\Lambda$, by taking $H^d:= H^{+}\setminus H^\Lambda$ and $r_\emptyset:=r$.

\subsection{Syntactic criteria for delay resistance} \label{sec:getting-rid}

We show how delay resistance can be replaced by additional syntactic requirements.
Def.~\ref{def:rooted_delay_bisimulation_format} is adapted as follows. On the one hand the requirement that the TSS is {\dr}
is dropped. On the other hand, rules must satisfy condition \ref{smooth} of Def.~\ref{def:smooth},
and the TSS must be in nxytt format and negative-stable.
And there are additional syntactic restrictions if a $\Lambda$-frozen argument of the source is tested in the premises
(condition \ref{strengthened-3}).
Furthermore, there is a syntactic requirement with regard to predicates $\Delta_\alpha\subseteq\aL$ (condition \ref{strengthened-4}).

\begin{definition}\label{def:strengthened_delay_bisimulation_format}
{\rm
A standard TSS $P\mathbin=(\Sigma,R)$ is in {\em syntactic rooted delay bisimulation format} if,
for some $\aleph$ and $\Lambda$ and predicates $\Delta_\alpha\subseteq\aL$ where $\alpha$ ranges over $A\cup\{\tau\}$:
\begin{enumerate}
\item \label{strengthened-1}
$P$ is in decent nxytt format and $\aL$-patient.
\item \label{strengthened-2}
Each rule in $R$ is rooted branching bisimulation safe and
negative-stable, satisfies condition \ref{smooth} of Def.~\ref{def:smooth} w.r.t.\ $\aleph$ and $\Lambda$,
and has finitely many positive premises.
\item \label{strengthened-3}
If $R$ contains a rule $\frac{H\uplus\{x_i\trans{\beta}y\}}{f(x_1,\ldots,x_{\ar(f)})\trans{\alpha}u}$
where $\neg\Lambda(f,i)$, then:
\begin{enumerate}
\item \label{strengthened-3a}
$\beta=\alpha$;
\item \label{strengthened-3b}
$R$ contains a rule \nietplat{$\frac{H' \cup \{x_i\trans{\tau}y\}}
{f(x_1,\ldots,x_{\ar(f)})\trans{\tau}u}$} with $H'\subseteq H$; and\vspace{1mm}
\item \label{strengthened-3c}
$y$ has exactly one, $\Delta_\alpha$-liquid occurrence in $u$.
\end{enumerate}
\item \label{strengthened-4}
If $\Delta_\alpha(f,i)$, then $R$ contains
$\frac{x_i\trans{\alpha}y}{f(x_1,\ldots,x_i,\ldots,x_{\ar(f)})\trans{\alpha}f(x_1,\ldots,y,\ldots,x_{\ar(f)})}$.
\end{enumerate}
$P$ is in {\em syntactic rooted weak bisimulation format} if its rules
moreover satisfy condition \ref*{rhs}$'$ of Def.~\ref{def:smooth}.
}
\end{definition}

\noindent
The introduction of predicates $\Delta_\alpha\subseteq\aL$ is of practical importance.
If in Def.~\ref{def:strengthened_delay_bisimulation_format} one would replace the occurrences of $\Delta_\alpha$ by $\aL$,
then for instance the \emph{encapsulation} operator $\partial_H$, which blocks all actions in the set $H$, would violate
condition \ref{strengthened-4} of Def.~\ref{def:strengthened_delay_bisimulation_format}. Namely, the argument of $\partial_H$ is
$\aL$-liquid, but there is no rule $\frac{x\trans a y}{\partial_H(x)\trans a\partial_H(y)}$ if $a\in H$.

Actually, in many applications $\Delta_\alpha$ can be empty, as a rule that tests an $\aleph$-liquid, $\Lambda$-frozen
argument of the source in practice tends to have a single $y$ as right-hand side of the conclusion, so that
condition \ref{strengthened-3c} of Def.~\ref{def:strengthened_delay_bisimulation_format}
is trivially satisfied; a notable example is the rule $\frac{x_1\trans\alpha y}{x_1+x_2\trans\alpha y}$ for alternative composition.

\begin{proposition}
\label{prop:strengthened}
Let $P$ be a TSS in syntactic rooted delay bisimulation format.
Then it is in manifest rooted delay bisimulation format.
\end{proposition}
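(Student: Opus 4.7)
The plan is to fix an arbitrary rule $r = \frac{\mathcal{H}}{f(x_1,\ldots,x_n)\trans\alpha u}$ of $P$ and verify manifest delay resistance w.r.t.\ $\Lambda$ and $P$ (Def.~\ref{def:manifestly delay resistant}); the remaining ingredients of the manifest rooted delay bisimulation format ($\aL$-patience, rooted branching bisimulation safety, and condition~\ref{smooth}) are inherited directly from conditions~\ref{strengthened-1} and~\ref{strengthened-2}. Negative-stability is part of condition~\ref{strengthened-2}, so the only substantive task is to exhibit a finite $H^d\subseteq\mathcal{H}^+$ of manifestly delayable premises together with the companion rules $r_M$.

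First I would set $H^d := \mathcal{H}^+\setminus\mathcal{H}^\Lambda$, which is finite by condition~\ref{strengthened-2}, and note that this makes $\mathcal{H}^+\setminus(H^d\cup\mathcal{H}^\Lambda)=\emptyset$; hence the only $M$ for which an $r_M$ is demanded is $M=\emptyset$, and we take $r_\emptyset := r$ itself. The real work is then to verify that every premise $x_i\trans\beta y\in H^d$ is manifestly delayable in the sense of Def.~\ref{def:manifestly delayable}. By the choice of $H^d$, $x_i$ occurs $\Lambda$-frozen in the source, so condition~\ref{strengthened-3} supplies $\beta=\alpha$, a companion rule $\frac{H'\cup\{x_i\trans\tau y\}}{f(x_1,\ldots,x_n)\trans\tau u}\in R$ with $H'\subseteq\mathcal{H}\setminus\{x_i\trans\beta y\}$, and a unique, $\Delta_\alpha$-liquid occurrence of $y$ in $u$.

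Pick a fresh variable $z$ and set $v := u[z/y]$. Because the rules are decent nxytt, $y$ cannot occur in the source, in any left-hand side of a premise (no lookahead), or as right-hand side of any other positive premise (distinctness of positive right-hand sides); hence $\alpha$-converting $y$ to $z$ in the companion rule yields $\frac{H'\cup\{x_i\trans\tau z\}}{f(x_1,\ldots,x_n)\trans\tau v}\in\overline R$, which is the first rule required by Def.~\ref{def:manifestly delayable}. For the second rule, I would exploit condition~\ref{strengthened-4}: since $y$ sits at a $\Delta_\alpha$-liquid position in $u$, a short structural induction along the $\Delta_\alpha$-liquid path from $y$ up to the root of $u$ threads the patience-for-$\alpha$ rules of condition~\ref{strengthened-4} together (via Lem.~\ref{linear proof composition}) into a proof of $\frac{z\trans\alpha y}{v\trans\alpha u}$ from $P$; linearity is automatic because $y$ has just one occurrence in $u$, so the single hypothesis $z\trans\alpha y=z\trans\beta y$ appears only once in the proof tree.

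The main delicate point will be the bookkeeping around the $\alpha$-renaming of $y$ to $z$ in the companion rule, which rests squarely on the decent nxytt discipline securing that $y$ occurs nowhere except in $\{x_i\trans\tau y\}$ and in $u$. Once that is in place the rest is routine; and since $P$ already has each source of the form $f(x_1,\ldots,x_n)$, its conversion $P^\dagger$ required by Def.~\ref{def:manifestly} coincides with $P$ up to harmless renaming, so manifest delay resistance at the TSS level follows immediately from the rule-by-rule verification above.
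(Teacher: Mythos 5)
Your proof is correct and follows essentially the same route as the paper's: it reduces the task to showing that every positive premise with a $\Lambda$-frozen left-hand side is manifestly delayable (the paper phrases this as showing $P$ is simply $\Lambda$-delay resistant, with $H^d:=H^+\setminus H^\Lambda$ and $r_\emptyset:=r$), and then obtains the two rules required by Def.~\ref{def:manifestly delayable} exactly as the paper does, taking $v:=u[z/y]$, getting the $\tau$-rule from condition~\ref{strengthened-3b} with $z$ substituted for $y$, and deriving $\frac{z\trans\alpha y}{v\trans\alpha u}$ from the condition~\ref{strengthened-4} rules using $\beta=\alpha$ and the unique $\Delta_\alpha$-liquid occurrence of $y$ in $u$. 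Your additional bookkeeping (decency guaranteeing $y$ does not occur in $H'$, and linearity of the composed proof) only spells out steps the paper leaves implicit.
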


\begin{proof}
Let $\aleph$, $\Lambda$ and $\Delta_\gamma$ for all $\gamma\in A\cup\{\tau\}$ be such that $P$ satisfies the restrictions
in Def.~\ref{def:strengthened_delay_bisimulation_format}.
It suffices to show that $P$ is simply $\Lambda$-delay resistant.
Consider a rule $\frac{H\uplus\{x_i\trans\beta y\}}{f(x_1,\dots,x_n)\trans\alpha u}$ of $P$ with
$\neg \Lambda(f,i)$. By condition~\ref{strengthened-3a} of Def.~\ref{def:strengthened_delay_bisimulation_format},
$\beta=\alpha$. And by condition~\ref{strengthened-3c} of Def.~\ref{def:strengthened_delay_bisimulation_format},
$y$ has exactly one, $\Delta_\alpha$-liquid occurrence in $u$.
It suffices to show that $x_i\trans\beta y$ is manifestly delayable in $P$.
So, for some term $v$ and fresh variable $z$, there must be a rule
$\frac{H_1}{f(x_1,\dots,x_n) \trans{\tau} v}$ in $\overline R$,
as well as a rule $\frac{H_2}{v \trans{\alpha} u}$ linearly
provable from $P$, with \nietplat{$H_1\subseteq H\cup\{x_i \trans{\tau} z\}$}
and \nietplat{$H_2\subseteq H\cup\{z \trans{\beta} y\}$}.
Let $v$ be obtained by substituting $z$ for $y$ in $u$.
The first of these rules exists by condition~\ref{strengthened-3b} of
Def.~\ref{def:strengthened_delay_bisimulation_format}, substituting $z$ for $y$.
The second is the rule $\frac{z \trans{\beta}y}{v \trans{\alpha}u}$,
which can be derived by condition~\ref{strengthened-4} of Def.~\ref{def:strengthened_delay_bisimulation_format}.
Here we use that $\beta=\alpha$ and $y$ has exactly one, $\Delta_\alpha$-liquid occurrence in $u$.
\qed
\end{proof}

\noindent
Prop.~\ref{prop:strengthened}, together with Cor.~\ref{cor:manifest}, gives rise to the following corollary.

\begin{corollary}\label{cor:strenghtened}
Let the complete standard TSS $P$ be in syntactic rooted delay bisimulation format.
Then $\bis{rd}$ is a congruence for $P$.

Let the complete standard TSS $P$ be in syntactic rooted weak bisimulation format.
Then $\bis{rw}$ is a congruence for $P$.
\end{corollary}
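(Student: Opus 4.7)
The plan is to observe that Cor.~\ref{cor:strenghtened} is a direct consequence of two previously established results: Prop.~\ref{prop:strengthened}, which bridges the syntactic format to the manifest format, and Cor.~\ref{cor:manifest}, which lifts manifest delay resistance to the congruence conclusion. So the proof will essentially consist of chaining these two results, once for the delay case and once for the weak case.

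For the first statement, I would simply invoke Prop.~\ref{prop:strengthened} to conclude that any $P$ in syntactic rooted delay bisimulation format is in manifest rooted delay bisimulation format; then Cor.~\ref{cor:manifest} (first part) gives that $\bis{rd}$ is a congruence for $P$. Since Def.~\ref{def:strengthened_delay_bisimulation_format} explicitly requires $\aL$-patience, rooted branching bisimulation safety, condition~\ref{smooth} of Def.~\ref{def:smooth}, and ready simulation format, no additional verification is needed beyond citing Prop.~\ref{prop:strengthened}.

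For the second statement, I would mirror the first step with an analogue of Prop.~\ref{prop:strengthened} for the weak format. Inspection of the proof of Prop.~\ref{prop:strengthened} shows that it uses only conditions \ref{strengthened-1}--\ref{strengthened-4} of Def.~\ref{def:strengthened_delay_bisimulation_format}, which are common to both the syntactic rooted delay and syntactic rooted weak bisimulation formats; it never appeals to condition~\ref{rhs} of Def.~\ref{def:rooted_branching_bisimulation_safe}. Since both Def.~\ref{def:strengthened_delay_bisimulation_format} and Def.~\ref{def:manifest_rooted_delay_bisimulation_format} extend to the weak case by replacing condition~\ref{rhs} with condition~$\ref*{rhs}'$ of Def.~\ref{def:rooted_eta_bisimulation_safe}, this stronger condition is preserved from the hypothesis to the conclusion of the analogue of Prop.~\ref{prop:strengthened}. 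Hence any $P$ in syntactic rooted weak bisimulation format is in manifest rooted weak bisimulation format, so by Cor.~\ref{cor:manifest} (second part), $\bis{rw}$ is a congruence for $P$.

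The only technical point worth spelling out is the verbatim transfer of Prop.~\ref{prop:strengthened}'s argument to the weak setting, but this is essentially bookkeeping: the construction of the manifestly delayable witnesses (the rule $\frac{H'\cup\{x_i\trans\tau y\}}{f(x_1,\ldots,x_n)\trans\tau u}$ from condition~\ref{strengthened-3b}, and the $\aL$-patience rule invoked via condition~\ref{strengthened-4}) does not interact with the liquidity of right-hand sides of premises. Thus I do not anticipate a genuine obstacle; the proof of the corollary reduces to a two-line citation for the delay case and, for the weak case, a remark that the proof of Prop.~\ref{prop:strengthened} goes through verbatim under condition~$\ref*{rhs}'$.
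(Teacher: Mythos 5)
Your proposal is correct and matches the paper's own argument, which is precisely to chain Prop.~\ref{prop:strengthened} with Cor.~\ref{cor:manifest}; your explicit remark that the proof of Prop.~\ref{prop:strengthened} never uses condition~\ref{rhs} and hence transfers verbatim to the weak format (with condition~$\ref*{rhs}'$ carried along into the manifest rooted weak format) is exactly the bookkeeping the paper leaves implicit.
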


\section{Applications}
\label{sec:applications}

In this section we revisit some applications of our congruence formats that were already considered in \cite{FvGdW12}:
the basic process algebra BPA$_{\varepsilon\delta\tau}$, extended with
binary Kleene star as an example where the predicates $\Delta_\alpha$ from Def.~\ref{def:strengthened_delay_bisimulation_format}
are non-empty, and initial priority because it includes negative premises. We also consider a
deadlock test that is outside the syntactic rooted delay bisimulation format.
In all these cases our formats provide congruence results for rooted delay and weak bisimilarity, while
they are outside the congruence formats for rooted delay and weak bisimilarity from \cite{Blo95,vGl11}.

The TSSs in this section are all $\aL$-patient and in decent xynft format.

\subsection{Basic process algebra}
\label{sec:bpa}

Consider the basic process algebra BPA$_{\varepsilon\delta\tau}$, consisting of: constants from an alphabet ${\it Act}\cup\{\tau\}$;
the empty process $\varepsilon$; the deadlock $\delta$;
alternative composition $t_1+t_2$; and sequential composition $t_1\cdot t_2$. Let $\ell$ range over ${\it Act}\cup\{\tau\}$
and $\alpha$ over ${\it Act}\cup\{\tau,\surd\}$. The transition rules are: \vspace{-1ex}
$$\frac{~}{\ell\trans\ell \varepsilon}\qquad\frac{~}{\varepsilon\trans\surd\delta}\qquad
\frac{x_1\trans\alpha y}{x_1+x_2\trans\alpha y}\qquad\frac{x_2\trans\alpha y}{x_1+x_2\trans\alpha y}$$
$$\frac{x_1\trans\ell y}{x_1\cdot x_2\trans\ell y\cdot x_2}\qquad
\frac{x_1\trans\surd y_1~~~x_2\trans\alpha y_2}{x_1\cdot x_2\trans\alpha y_2}$$
To show that $\bis{rd}$ and $\bis{rw}$ are congruences,
we argue that this TSS satisfies the conditions of Def.~\ref{def:strengthened_delay_bisimulation_format}.
In \cite{FvGdW12} it was shown that it is in rooted $\eta$-bisimulation format, with $\aleph$ and $\Lambda$ defined as follows.
Since the arguments of alternative and sequential composition can all execute immediately,
$\aleph$ holds for all these arguments. Since only the first argument of sequential composition can contain running processes,
it is the only argument for which $\Lambda$ holds.
Since the TSS is positive, it surely is negative-stable.
With regard to condition \ref{strengthened-2} of Def.~\ref{def:strengthened_delay_bisimulation_format},
we still need to check that the rules satisfy condition \ref{smooth} of Def.~\ref{def:smooth}:
only the two rules for sequential composition contain a $\Lambda$-liquid occurrence of a variable, $x_1$, in their source;
and in both cases $x_1$ has only one other occurrence in the rule, in the left-hand side of a premise.
Condition \ref{strengthened-3} of Def.~\ref{def:strengthened_delay_bisimulation_format} needs to be verified with regard to the two rules for
alternative composition and the second rule for sequential composition, since in these rules a $\Lambda$-frozen argument of
the source is tested in a premise. It is not hard to see that condition \ref{strengthened-3} is satisfied for these rules, where we
can take $\Delta_\gamma=\emptyset$ for all $\gamma$. Hence condition \ref{strengthened-4} of Def.~\ref{def:strengthened_delay_bisimulation_format}
is trivially satisfied.

Concluding, by Cor.~\ref{cor:strenghtened} rooted delay and weak bisimilarity are congruences for BPA$_{\varepsilon\delta\tau}$.

\subsection{Binary Kleene star}\label{sec:binary Kleene star}

The {\em binary Kleene star} $t_1{}^\ast t_2$ \cite{Kle56} repeatedly executes
$t_1$ until it executes $t_2$. This operational behaviour is captured by the
following rules, which are added to the rules for BPA$_{\varepsilon\delta\tau}$.\vspace{-1ex}
\[
{\displaystyle\frac{x_1\trans \ell y}{x_1{}^\ast x_2\trans \ell y\cdot(x_1{}^\ast x_2)}}
\hspace{2cm}{\displaystyle\frac{x_2\trans{\alpha} y}{x_1{}^\ast x_2\trans{\alpha} y}}\\
\]
Again, to show that $\bis{rd}$ and $\bis{rw}$ are congruences,
we argue that the resulting TSS satisfies the conditions of Def.~\ref{def:strengthened_delay_bisimulation_format}.
In \cite{FvGdW12} it was shown that it is in rooted $\eta$-bisimulation format,
if we take the arguments of the binary Kleene star to be $\Lambda$-frozen (they do not contain
running processes) and $\aleph$-liquid (they can start executing immediately).\pagebreak[2]
Since the arguments of the binary Kleene star are $\Lambda$-frozen, condition
condition \ref{smooth} of Def.~\ref{def:smooth} is trivially satisfied.
Condition \ref{strengthened-3} of Def.~\ref{def:strengthened_delay_bisimulation_format}
needs to be verified for the two rules for binary Kleene star.
It is easy to see that conditions \ref{strengthened-3}(a,b) are satisfied by both rules, and
that the second rule for binary Kleene star trivially satisfies condition \ref{strengthened-3}(c).
In view of the latter condition with regard to the first rule for binary Kleene star,
we mark the first argument of sequential composition by $\Delta_\ell$ for all $\ell\in {\it Act}\cup\{\tau\}$.
No other arguments are marked by the $\Delta_\gamma$.
It is easy to see that condition \ref{strengthened-4} of Def.~\ref{def:strengthened_delay_bisimulation_format}
is satisfied with respect to the $\Delta_\gamma$.
(Note that for this last condition it is essential that the first argument of sequential composition is not marked by $\Delta_\surd$.)

Concluding, by Cor.~\ref{cor:strenghtened}
rooted delay and weak bisimilarity are congruences for BPA$_{\varepsilon\delta\tau}$ with the
binary Kleene star.

\subsection{Initial priority}\label{sec:initial priority}

{\em Initial priority} is a unary function that assumes an ordering on atomic actions.
The term $\theta(t)$ executes the transitions of $t$,
with the restriction that an initial transition \nietplat{$t\trans {\ell} t_1$} only gives
rise to an initial transition \nietplat{$\theta(t)\trans \ell t_1$} 
if there does not exist an initial transition \nietplat{$t\trans{\ell'} t_2$} with $\ell<\ell'$.
This intuition is captured by the first rule for the initial priority operator below,
which is added to the rules for BPA$_{\epsilon\delta\tau}$.
\[
\frac{x\trans \ell y~~~~~~~~x\ntrans{\ell'}\mbox{ for all } \ell'>\ell}{\theta(x)\trans \ell y}
\hspace{2cm}
\frac{x\trans \surd y}{\theta(x)\trans \surd y}
\]

\vspace{2mm}
\noindent
We take the argument of initial priority to be $\Lambda$-frozen (it does not contain
running processes) and $\aleph$-liquid (it can start executing immediately).
In \cite{FvGdW12} it was observed that the resulting TSS is in rooted $\eta$-bisimulation format,
irrespective of the ordering on atomic actions.

If we take $\tau$ to be greater than all atomic actions in ${\it Act}$, then
both rules are negative-stable, because instances of the first
rule for initial priority with a premise $x\ntrans{a}$ for some $a\in {\it Act}$ are guaranteed to also contain
the premise $x\ntrans{\tau}$. In fact it is sufficient to require $\forall\ell:(\exists\ell':\ell'>\ell)\Rightarrow \tau>\ell$.

To show that $\bis{rd}$ and $\bis{rw}$ are congruences,
we argue that the TSS satisfies the conditions of Def.~\ref{def:strengthened_delay_bisimulation_format}.
Condition \ref{smooth} of Def.~\ref{def:smooth} is trivially satisfied by the rules for initial priority, because its argument is $\Lambda$-frozen.
Condition \ref{strengthened-3} of Def.~\ref{def:strengthened_delay_bisimulation_format}
needs to verified with regard to the two rules for initial priority,
since in these rules the $\Lambda$-frozen argument of the source is tested in a premise.
It is not hard to see that condition \ref{strengthened-3} is satisfied for these rules, where we
can take $\Delta_\gamma=\emptyset$ for all $\gamma$. In particular, condition \ref{strengthened-3}(b)
is satisfied by the first rule for initial priority, because this rule with $\ell=\tau$ contains no negative
premises. Since the $\Delta_\gamma$ are empty, condition \ref{strengthened-4} of Def.~\ref{def:strengthened_delay_bisimulation_format}
is trivially satisfied.

Concluding, if $\tau$ is greater than all atomic actions in ${\it Act}$,
rooted delay and weak bisimilarity are congruences for BPA$_{\epsilon\delta\tau}$ with initial priority.

We note that if $\tau$ is smaller than some atomic action $a$ in ${\it Act}$, then
rooted delay and weak bisimilarity are not congruences for BPA$_{\epsilon\delta\tau}$ with initial priority.
For example, consider $\tau\cdot a$ and $(\tau\cdot a)+a$. These process terms are rooted delay bisimilar.
However, the transition $\theta(\tau\cdot a)\trans\tau a$ cannot be mimicked by $\theta((\tau\cdot a)+a)$,
as the latter term can only perform an $a$-transition to $\varepsilon$. So these terms are not rooted weakly bisimilar.

\subsection{Deadlock testing}

Finally we give an example that is outside the format from Def.~\ref{def:strengthened_delay_bisimulation_format}, but that
is covered by the more general format induced by Thm.~\ref{thm:manifest}. Let ${\it yes},{\it no}\in{\it Act}$.
The unary operator $f$, defined by the following two rules, tests whether its argument is a deadlock.\pagebreak[2]
\[
\frac{x\trans \alpha y}{f(x)\trans{{\it no}}\delta}
\hspace{2cm}
\frac{x\ntrans\alpha y \mbox{ for all }\alpha}{f(x)\trans{{\it yes}}\delta}
\]

\noindent
The argument of $f$ is $\Lambda$-frozen and $\aleph$-liquid. Clearly the first rule violates condition \ref{strengthened-3}
of Def.~\ref{def:strengthened_delay_bisimulation_format}.

The TSS is complete, in rooted $\eta$-bisimulation format and manifestly delay resistant.
In particular, for both rules, $H^d:=\emptyset$, while $r_\emptyset$ is the rule itself.
Furthermore, for the first rule, $r_{\{x\trans \alpha y\}}$ is $\frac{x\trans \tau y}{f(x)\trans{{\it no}}\delta}$.
So according to Thm.~\ref{thm:manifest} the TSS is delay resistant. Hence, by Thms.~\ref{thm:rooted-delay-congruence} and
\ref{thm:rooted-weak-congruence}, rooted delay and weak bisimilarity are congruences
for BPA$_{\epsilon\delta\tau}$ with deadlock testing.

\section{Conclusions} \label{sec:conclusions}

We have extended the method from \cite{FvGdW12} for modal decomposition and the derivation of congruence formats
so that it applies to delay and weak bisimilarity. This research line gives a deeper insight into the link between
modal logic and structural operational semantics, and provides a framework for the derivation of congruence formats
for the spectrum of weak semantics from \cite{vGl93}.

Admittedly, the whole story is quite technical and intricate. Partly this is because we build on a rich body of earlier work
in the realm of structural operational semantics: the notions of well-supported proofs and complete TSSs from \cite{vGl04}
(or actually \cite{GRS91} in logic programming); the ntyft format from \cite{BolG96,Gro93}; the transformation to ruloids,
which for the main part goes back to \cite{FvG96}; and the work on modal decomposition and congruence formats from \cite{BFvG04}
and \cite{FvGdW12}.

In spite of these technicalities, we have arrived at a relatively simple framework for the derivation of congruence
formats for weak semantics. Namely, for this one only needs to: (1) provide a modal characterisation of the weak semantics under
consideration; (2) study the class of modal formulas that result from decomposing this modal characterisation, and formulate
syntactic restrictions on TSSs to bring this class of modal formulas within the original modal characterisation; and (3) check
that these syntactic restrictions are preserved under the transformation to ruloids. As shown in Sect.~\ref{sec:weak-congruence},
steps (2) and (3) are very similar in structure for delay and weak bisimilarity. And as said, the end results are congruence formats
that are more general and at the same time more elegant than existing congruence formats for these semantics in the literature.

Our intention is to carve out congruence formats for all weak semantics in the spectrum from \cite{vGl93} that have reasonable congruence properties.
The work presented in this paper constitutes an essential step in this direction, as the majority (103 out of 155) of the weak semantics in this spectrum have
a modal characterisation that contains modalities $\eps\diam{a}\phi$. However, further work is needed to cover the entire spectrum in \cite{vGl93}.
In the follow-up paper \cite{FvGL17} another significant step in this direction is made by
extending the current framework to stability-respecting and divergence-preserving semantics.

In \cite{FvG17} the framework for concrete semantics was extended with lookahead; an open question is to do the same for weak semantics.
For future research it would also be interesting to see whether the bridge between modal logic and congruence formats
could be employed in the realm of logics and semantics for e.g.\ probabilities and security. As a first step in this direction,
in \cite{GF12,CGT16} the decomposition method for Hennessy-Milner logic was lifted to probabilistic systems.

\newpage
\appendix

\section{Modal Characterisations}\label{app:modal}

We first prove the first part of Thm.~\ref{thm:characterisation}, which states that $\IO{w}$ is a modal
characterisation of weak bisimilarity. We need to prove, given an LTS $(\mathbb{P},\rightarrow)$, that $p\bis{w}q\Leftrightarrow p\sim_{\IO{w}}q$ for all $p,q\in\mathbb{P}$.

\begin{proof}
($\Rightarrow$) Suppose $p\bis{w}q$, and $p\models\phi$ for some $\phi\in\IO{w}$. We prove $q\models\phi$, by structural induction on $\phi$. The reverse implication ($q\models\phi$ implies $p\models\phi$) follows by symmetry.
\begin{itemize}
\item
$\phi=\bigwedge_{i\in I}\phi_i$. Then $p\models\phi_i$ for $i\in I$. By induction $q\models\phi_i$ for $i\in I$, so $q\models\bigwedge_{i\in I}\phi_i$.
\item
$\phi=\neg\phi'$. Then $p\not\models\phi'$. By induction $q\not\models\phi'$, so $q\models\neg\phi'$.
\item
$\phi=\eps\phi'$. Then $p \epsarrow p' \models\phi'$ for some term $p'$.
Since $p\bis{w}q$, according to Def.~\ref{def:bb}, $q\epsarrow q'$ for some $q'$ with $p'\bis{w}q'$.
 Since $p'\models\phi'$, by induction, $q'\models\phi'$.
 Hence $q\models\eps\phi'$.
\item
$\phi=\eps\diam{a}\eps\phi'$, with $a\in A$. Then $p \epsarrow\trans{a}\epsarrow p' \models\phi'$
  for some term $p'$.
 Since $p\bis{w}q$, according to Def.~\ref{def:bb}, $q\epsarrow\trans{a}\epsarrow q'$ for some $q'$
 with $p'\bis{w}q'$. Since $p'\models\phi'$, by induction, $q'\models\phi'$.
 Hence $q\models\eps\diam{a}\eps\phi'$.
\end{itemize}
We conclude that $p\sim_{\IO{w}}q$.

\vspace{2mm}
\noindent
($\Leftarrow$)
We prove that $\sim_{\IO{w}}$ is a weak bisimulation. The relation is clearly symmetric. Let
$p\sim_{\IO{w}}q$. Suppose \nietplat{$p\trans{\alpha}p'$}. If $\alpha=\tau$ and $p'\sim_{\IO{w}}q$, then
the first condition of Def.~\ref{def:bb} is fulfilled. So we can assume that either (i)
$\alpha\neq\tau$ or (ii) $p'\not\sim_{\IO{w}}q$. Let
\[
\begin{array}{lcl}
Q' &:=& \{q'\in\mathbb{P}\mid q\epsarrow \trans\alpha \epsarrow q'\land p'\not\sim_{\IO{w}}q'\}\;.
\end{array}
\]
For each $q'\in Q'$, let $\psi_{q'}$ be a formula in $\IO{w}$ such that $p'\models\psi_{q'}$ and $q'\not\models\psi_{q'}$. We define
\[
\psi=\bigwedge_{q'\in Q'}\psi_{q'}\;.
\]
Clearly, $\psi\in\IO{w}$ and $p'\models\psi$. Moreover, $q'\models\psi$ for no $q'\in Q'$.
We distinguish two cases.
\begin{enumerate}
\item
$\alpha\neq\tau$. Since $p\models\eps\diam\alpha\eps\psi\in\IO{w}$ and $p\sim_{\IO{w}} q$, also
  $q\models\eps\diam\alpha\eps\psi$. Hence \nietplat{$q\epsarrow \trans\alpha \epsarrow q'$} with
  $q'\models\psi$. It follows that $q'\notin Q'$ and thus $p'\sim_{\IO{w}}q'$.
\item
$\alpha=\tau$ and $p'\not\sim_{\IO{w}}q$. Let $\tilde\phi\in\IO{w}$ such that $p'\models\tilde\phi$
  and $p,q\not\models\tilde\phi$. Since $p\models\eps(\tilde\phi\land\psi)\in\IO{w}$ and
  $p\sim_{\IO{w}}q$, also $q\models\eps(\tilde\phi\land\psi)$. So
  \nietplat{$q\epsarrow q'$} with $q'\models \tilde\phi\land\psi$.
  It follows that $q'\notin Q'$.
  As $q\not\models\tilde\phi$ we have $q'\neq q$ and thus $q\epsarrow\trans\tau\epsarrow q'$.
  Hence $p'\sim_{\IO{w}}q'$.
\end{enumerate}
Both cases imply that the second condition of Def.~\ref{def:bb} is fulfilled. We therefore conclude that $\sim_{\IO{w}}$ is a weak bisimulation.
\qed
\end{proof}
Using the first part of Thm.~\ref{thm:characterisation}, which was proved above, it is not hard to derive the second part of Thm.~\ref{thm:characterisation}, i.e.\ that $\IO{rw}$ is a modal characterisation of rooted weak bisimilarity.

\begin{proof}
($\Rightarrow$) Suppose $p\bis{rw}q$, and $p\models\phi$ for some $\phi\in\IO{rw}$. We prove $q\models\phi$, by structural induction on $\phi$. The reverse implication ($q\models\phi$ implies $p\models\phi$) follows by symmetry.
\begin{itemize}
\item
The cases $\phi=\bigwedge_{i\in I}\phi_i$ and $\phi=\neg\phi'$ go exactly as in the previous proof.
\item
  $\phi=\eps\diam{\alpha}\eps\hat\phi$, with $\hat\phi\in\IO{w}$.
  Then $p \epsarrow\trans{\alpha}\epsarrow p' \models\hat\phi$ for some term $p'$.
 Since $p\bis{rw}q$, according to Def.~\ref{def:rbb}, $q\epsarrow\trans{\alpha}\epsarrow q'$ for some $q'$
 with $p'\bis{w}q'$. Since $p'\models\phi'$, by the previous result, $q'\models\phi'$.
 Hence $q\models\eps\diam{a}\eps\phi'$.
\item
  $\phi\in\IO{w}$. Since $p\bis{rw}q$ implies $p\bis{w}q$, the previously result yields $q\models\phi$.
\end{itemize}
We conclude that $p\sim_{\IO{rw}}q$.

\vspace{2mm}
\noindent
($\Leftarrow$)
We prove that $\sim_{\IO{rw}}$ is a rooted weak bisimulation. The relation is clearly symmetric. Let
$p\sim_{\IO{rw}}q$. Suppose \nietplat{$p\trans{\alpha}p'$}. Let
\[
\begin{array}{lcl}
Q' &:=& \{q'\in\mathbb{P}\mid q\epsarrow \trans\alpha \epsarrow q'\land p'\not\sim_{\IO{w}}q'\}\;.
\end{array}
\]
For each $q'\in Q'$, let $\psi_{q'}$ be a formula in $\IO{w}$ such that $p'\models\psi_{q'}$ and $q'\not\models\psi_{q'}$. We define
\[
\psi=\bigwedge_{q'\in Q'}\psi_{q'}\;.
\]
Clearly, $\psi\in\IO{w}$ and $p'\models\psi$. Moreover, $q'\models\psi$ for no $q'\in Q'$.

Since $p\models\eps\diam\alpha\eps\psi\in\IO{rw}$ and $p\sim_{\IO{rw}} q$, also
  $q\models\eps\diam\alpha\eps\psi$. Hence \nietplat{$q\epsarrow \trans\alpha \epsarrow q'$} with
  $q'\models\psi$. It follows that $q'\notin Q'$ and thus $p'\sim_{\IO{w}}q'$.
  By the previous result this implies $p'\bis{w}q'$.

Hence the condition of Def.~\ref{def:rbb} is fulfilled. We therefore conclude that $\sim_{\IO{rw}}$
is a rooted weak bisimulation.
\qed
\end{proof}
The validity of the modal characterisation of (rooted) delay bisimilarity can be proved in a similar fashion.

\section{Manifest delay resistance}\label{app:manifest}

This appendix contains the proofs of Thms.~\ref{thm:manifest} and~\ref{thm:manifest-Lambda}.

\subsection[tau-Pollable rules]{$\tau$-Pollable rules}

As a major step towards Thms.~\ref{thm:manifest} and~\ref{thm:manifest-Lambda}, we would like to
show that if all rules in a TSS $P$ are {\dr} w.r.t.\ \Brac{$\Lambda$ and} $P$, then so are all
ntytt rules that are linearly provable from $P$. However, we can prove this only if we
assume all rules in $P$ to have an additional property, which we call $\tau$-pollability w.r.t.\ \Brac{$\Lambda$ and} $P$.
In Sec.~\ref{sec:delay-resistance} we introduced the concept of a $\tau$-pollable premise in a rule $r$:
a positive premise that could be replaced by a similar premise with label $\tau$ and a fresh
right-hand side. Below, a rule will be called $\tau$-pollable if \emph{all} its premises \Brac{with
the exception of $\Lambda$-liquid ones} can be replaced in this manner; however, for standard rules
this possibly comes at the expense of changing the label of the conclusion of $r$ to $\tau$,
and its target to some term $w$, possibly containing the fresh right-hand sides mentioned above.
It turns out that in a manifestly delay resistant TSS all rules are $\tau$-pollable.

\begin{definition}\rm\label{def:pollable}
An ntytt rule \nietplat{$\frac{ H}{t\ntrans{\alpha}}$}, resp.\ \nietplat{$\frac{ H}{t\trans{\alpha}u}$},
is \emph{$\tau$-pollable} w.r.t.\ \Brac{a predicate $\Lambda$ and} a TSS $P$
if for each set $M \subseteq H^+\Brac{\setminus H^\Lambda}$ there is a rule
\nietplat{$\frac{H_M}{t\ntrans{\alpha}}$}, resp.\ 
\nietplat{$\frac{H_M}{t\trans{\alpha}u}$} or \nietplat{$\frac{H_M}{t\trans{\tau}w}$} for some term $w$,
linearly provable from $P$, where  $H_M \subseteq (H{\setminus}M) \cup M_\tau$.
\end{definition}
We recall that $M_\tau$ was defined in Def~\ref{def:positive}; it is obtained from $M$ by replacing the transition
labels by $\tau$, and the right-hand sides by fresh variables, not occurring in
\plat{$\frac{H}{t\ntrans{\alpha}}$}, resp.\ \plat{$\frac{ H}{t\trans{\alpha}u}$}.
Those variables may occur in $w$, however.

Again, we introduce a ``manifest'' version of this concept, where the required rule needs to be in
$\overline R$, rather than merely being linearly provable from $P$; we consider this notion only for a TSS as a whole.

\begin{definition}\rm\label{def:within}
A TSS $P=(\Sigma,R)$ in ntytt format is called \emph{manifestly $\tau$-pollable} \Brac{w.r.t.\ $\Lambda$} if for each
rule \nietplat{$\frac{ H}{t\ntrans{\alpha}}$} or \nietplat{$\frac{ H}{t\trans{\alpha}u}$}
in $R$, and for each set $M \subseteq H^+\Brac{\setminus H^\Lambda}$,
$\overline R$ also contains a rule \nietplat{$\frac{H_M}{t\ntrans{\alpha}}$}, resp.\ 
\nietplat{$\frac{H_M}{t\trans{\alpha}u}$} or \nietplat{$\frac{H_M}{t\trans{\tau}w}$} for some $w$,
where  $H_M \subseteq (H{\setminus}M) \cup M_\tau$.
\end{definition}
Clearly, in a manifestly $\tau$-pollable TSS $P$, each rule is $\tau$-pollable w.r.t.\ $P$.

\begin{lemma}\label{lem:pollable}
Any manifestly delay resistant \Brac{w.r.t.\ $\Lambda$} standard TSS $P=(\Sigma,R)$ in decent ntyft format is manifestly
$\tau$-pollable \Brac{w.r.t.\ $\Lambda$}.
\end{lemma}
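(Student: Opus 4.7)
The goal is: for each rule $r=\frac{H}{t\trans{\alpha}u}$ in $R$ and each $M\subseteq H^+\Brac{\setminus H^\Lambda}$, exhibit a rule in $\overline R$ with premises in $(H\setminus M)\cup M_\tau$ and conclusion either $t\trans{\alpha}u$ or $t\trans{\tau}w$ for some term $w$. Since $P$ is standard, no rule of $R$ has a negative conclusion, so the negative-conclusion alternative in Def.~\ref{def:within} is vacuous. The plan is to proceed by induction on $|M|$, quantified simultaneously over all rules in $R$; the base case $M=\emptyset$ is immediate from $r$ itself.

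For the inductive step, let $H^d$ be the finite set of manifestly delayable premises witnessing the manifest delay resistance of $r$. In the easy sub-case $M\cap H^d=\emptyset$, one has $M\subseteq H^+\setminus(H^d\Brac{\cup H^\Lambda})$, and Def.~\ref{def:manifestly delay resistant} directly supplies the required rule with conclusion $t\trans{\alpha}u$. In the other sub-case, I would pick any premise $w\trans{\beta}y\in M\cap H^d$ and invoke manifest delayability (Def.~\ref{def:manifestly delayable}) to obtain a rule $r'=\frac{H_1}{t\trans{\tau}v}$ in $\overline R$ with $H_1\subseteq(H\setminus\{w\trans{\beta}y\})\cup\{w\trans{\tau}z\}$, for some term $v$ and fresh variable $z$. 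Since $r'\in\overline R$ corresponds, up to renaming, to a rule of $R$, it is itself manifestly delay resistant, so the induction hypothesis applies to $r'$ with $M':=M\cap H_1$. The key point is $w\trans{\beta}y\in M\setminus H_1$, so $|M'|<|M|$; the hypothesis then yields a rule in $\overline R$ with conclusion $t\trans{\tau}v$ or $t\trans{\tau}w'$ and premises in $(H_1\setminus M')\cup M'_\tau$.

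A routine book-keeping calculation then shows $(H_1\setminus M')\cup M'_\tau\subseteq(H\setminus M)\cup M_\tau$. The only premise of $H_1$ not inherited from $H$ is $w\trans{\tau}z$; identifying the fresh $z$ with the variable that $M_\tau$ assigns to $w\trans{\beta}y$ places this premise inside $M_\tau$. Every other premise of $H_1\setminus M'$ lies in $H$ and, not lying in $M'=M\cap H_1$, cannot lie in $M$, so it lies in $H\setminus M$. Finally $M'\subseteq M$ gives $M'_\tau\subseteq M_\tau$. \Brac{For the $\Lambda$-variant of the claim one must additionally verify $M'\subseteq H_1^+\setminus H_1^\Lambda$: $\Lambda$-liquidity of a premise depends only on its left-hand side relative to the unchanged source $t$, so $M\cap H^\Lambda=\emptyset$ transfers to $M'\cap H_1^\Lambda=\emptyset$; the fresh premise $w\trans{\tau}z$ does not enter $M'$ in any case.}

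The main obstacle I anticipate is the fresh-variable bookkeeping: the witness $z$ produced by the delayability step and the fresh right-hand sides supplied by the recursive invocation of manifest delay resistance must be aligned with the specific fresh variables populating the fixed set $M_\tau$. This is precisely what closure of $\overline R$ under bijective variable renaming affords---one renames $z$ to the variable earmarked for $w\trans{\beta}y$ in $M_\tau$ and relabels the recursively obtained fresh variables to match their counterparts---after which the inclusion above becomes literal and the induction closes.
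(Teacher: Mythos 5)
There is a genuine gap in your induction: the measure $|M|$ does not decrease in your second sub-case when $M$ is infinite. Def.~\ref{def:within} quantifies over \emph{arbitrary} $M \subseteq H^+\Brac{\setminus H^\Lambda}$, and rules may have infinitely many positive premises---only $H^d$ is required to be finite, and the paper explicitly allows infinitely many $\tau$-pollable premises (one of the counterexamples even features a rule with infinitely many premises). If $M$ is infinite and meets $H^d$, your step removes the single premise $w\trans{\beta}y$ and recurses on $M'=M\cap H_1$; but deleting one element from an infinite set leaves $|M'|=|M|$, so the induction hypothesis is not available. Nor can you rescue termination by counting delayable premises: the recursive call is on a different rule $r'$, whose own finite set of manifestly delayable premises is unrelated to $H^d$, so a measure like $|M\cap H^d|$ need not decrease either. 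Your argument is sound only under the unstated assumption that $M$ is finite.

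The paper's proof avoids this by performing the two reductions in the opposite order. In the case $M\not\subseteq H^d$ it first applies Def.~\ref{def:manifestly delay resistant} \emph{once} to the whole (possibly infinite) set $M_1=M\setminus H^d\subseteq H^+\setminus(H^d\Brac{\cup H^\Lambda})$, obtaining a rule $\frac{H_1}{t\trans{\alpha}u}$ in $\overline R$ with $H_1\subseteq(H\setminus M_1)\cup(M_1)_\tau$; only then does it recurse, on the leftover $M_2=M\cap H^d\cap H_1^+$, which is finite and of strictly smaller cardinality than $M$. One-premise-at-a-time delaying, as in your second sub-case, is used only when $M\subseteq H^d$, hence when $M$ is finite. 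Apart from this ordering issue your proposal matches the paper's proof: the base case, the restriction to positive conclusions for a standard TSS, the use of closure of $\overline R$ under bijective renaming to align the fresh variable $z$ with the variable $z_y$ earmarked in $M_\tau$, and the transfer of $\Lambda$-liquidity to the recursive call (whose source is still $t$) are all exactly as in the paper.
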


\begin{proof}
Consider a rule $\frac{ H}{t\trans{\alpha}u}$ in $R$ and an $M \subseteq H^+\Brac{\setminus H^\Lambda}$.
We apply induction on $|M|$, taking into account that $M$ may be infinite. The induction base $M=\emptyset$ is trivial.
In the induction step, $M\neq\emptyset$.
Pick a finite set $H^d$ of manifestly delayable positive premises with the property formulated in Def.~\ref{def:manifestly delay resistant}.
First we deal with the case that $M\not\subseteq H^d$. Let $M_1 := M\setminus H^d$.
By Def.~\ref{def:manifestly delay resistant} there is a rule
\nietplat{$\frac{H_1}{t\trans{\alpha}u}$} in $\overline R$ with $H_1 \subseteq (H{\setminus}M_1) \cup (M_1)_\tau$.
Let $M_2:=M \cap H^d\cap H_1^+$. As $H_d$ is finite, so is $M_2$. Since  $M_2 \subseteq H_1^+\Brac{\setminus H_1^\Lambda}$ and
$|M_2| < |M|$, by induction $\overline R$
contains a rule \nietplat{$\frac{H_2}{t\trans{\alpha}u}$} or \nietplat{$\frac{H_2}{t\trans{\tau}w}$} for some term $w$,
where $H_2 \subseteq (H_1{\setminus}M_2) \cup (M_2)_\tau \subseteq (H{\setminus}M) \cup M_\tau$.

Next assume that $M\subseteq H^d$. As $H_d$ is finite, so is $M$. Pick a $v \trans\beta y$ in $M$.
Since it is a manifestly delayable premise of $r$, there exists a rule
\plat{$\frac{H_1}{t \trans{\tau} w}$} in $\overline R$, for some term $w$ and fresh variable $z$,
with \nietplat{$H_1\subseteq (H{\setminus}\{v \trans{\beta} y\})\cup\{v \trans{\tau} z\}$}.
Let \nietplat{$M_1 := M \cap (H_1^+\setminus \{v \trans{\tau} z\})$}. Since $M_1 \subseteq H_1^+\Brac{\setminus H_1^\Lambda}$ and
$|M_1|<|M|$, by induction $\overline R$ contains a rule \plat{$\frac{H_2}{t\trans{\tau}w}$} for some term $w$,
where $H_2 \subseteq (H_1{\setminus}M_1) \cup (M_1)_\tau \subseteq (H{\setminus}M) \cup M_\tau$.
(Here we use that there is a $v \trans{\tau} z_y$ in $M_\tau$, and that we can choose $z_y:=z$.)
\qed
\end{proof}

\subsection[Lifting manifest delay resistance]{Lifting manifest delay resistance from $P$ to $\hat P^\ddagger$}

We extend the predicate ``\pdr'' to non-standard rules by declaring it vacuously true.

\begin{lemma}\label{lem:preservation positive}
Let $P$ be a TSS in decent ntytt format, in which each transition rule is {\pdr} as well as
$\tau$-pollable w.r.t.\  $P$.
Then any ntytt rule linearly provable from $P$ is {\pdr} as well as $\tau$-pollable w.r.t.\  $P$.
\end{lemma}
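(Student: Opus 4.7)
The plan is to proceed by induction on the structure of the linear proof $\pi$ witnessing that $r=\frac{H}{\lambda}$ is provable from $P$, establishing pdr and $\tau$-pollability of the derived ntytt rule simultaneously. For the base cases: if $\pi$ is a single hypothesis node, then $r=\frac{\{\lambda\}}{\lambda}$; if $\lambda=w\trans\beta y$ is positive, set $H^d:=\{w\trans\beta y\}$ with delayability witnessed by the two hypothesis rules $\frac{w\trans\tau z}{w\trans\tau z}$ and $\frac{z\trans\beta y}{z\trans\beta y}$ (taking $v:=z$), from which pdr and $\tau$-pollability follow immediately; if $\lambda$ is negative, pdr is vacuous and $\tau$-pollability is trivial. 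A non-hypothesis singleton node is a substitution instance of an axiom of $P$ with $H^+=\emptyset$, so both properties are immediate.

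For the induction step, suppose the root of $\pi$ applies a rule $r_0\in R$ with premises $\{v_k\trans{\beta_k}y_k\}_{k\in K}\cup\{w_\ell\ntrans{\gamma_\ell}\}_{\ell\in L}$ and conclusion $\lambda_0$ under substitution $\sigma$ (so $\sigma(\lambda_0)=\lambda$), with subproofs $\pi_k$ of $\frac{H_k}{\sigma(v_k)\trans{\beta_k}\sigma(y_k)}$ and $\pi_\ell$ of $\frac{H_\ell}{\sigma(w_\ell)\ntrans{\gamma_\ell}}$; by linearity the positive parts of the $H_k,H_\ell$ are pairwise disjoint and $H=\bigcup_k H_k\cup\bigcup_\ell H_\ell$. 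The inductive hypothesis supplies finite pdr-witnesses $H_k^d$ for each $\pi_k$ and $\tau$-pollability of each $\pi_k,\pi_\ell$, and by assumption $r_0$ is pdr with finite witness $H_0^d$ and $\tau$-pollable. Set
$$H^d \;:=\; \bigcup\{H_k^d \,:\, k\in K,\ v_k\trans{\beta_k}y_k\in H_0^d\},$$
a finite union of finite sets. Each element $w\trans\beta y\in H^d$ is delayable in the derived rule: compose (via Lem.~\ref{linear proof composition}, under $\sigma$ extended by $z\mapsto v_*$) the two rules witnessing the delayability of $w\trans\beta y$ in $\pi_k$ with the two rules witnessing the delayability of $v_k\trans{\beta_k}y_k$ in $r_0$, using the remaining $\pi_{k'},\pi_\ell$ to discharge the other premises of $r_0$.

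For pdr, given $M\subseteq H^+\setminus H^d$, set $M_k:=M\cap H_k^+$ and $M_\ell:=M\cap H_\ell^+$. For $k$ with $v_k\trans{\beta_k}y_k\in H_0^d$, disjointness forces $M_k\subseteq H_k^+\setminus H_k^d$, so pdr of $\pi_k$ yields a linearly provable rule with the unchanged conclusion $\sigma(v_k)\trans{\beta_k}\sigma(y_k)$ and premises in $(H_k\setminus M_k)\cup M_{k,\tau}$. For $k$ with $v_k\trans{\beta_k}y_k\notin H_0^d$, $\tau$-pollability of $\pi_k$ yields a linearly provable rule whose conclusion is either the same or of the form $\sigma(v_k)\trans\tau w_k$; collect the latter indices in $K''$. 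For each $\ell\in L$, $\tau$-pollability of $\pi_\ell$ yields a variant with the same negative conclusion. Finally invoke pdr of $r_0$ with $M_0:=\{v_k\trans{\beta_k}y_k:k\in K''\}\subseteq H_0^+\setminus H_0^d$ to obtain a linearly provable variant of $r_0$ whose conclusion is still $\lambda_0$ and whose positive premises for $k\in K''$ have been replaced by $v_k\trans\tau z_k$ (fresh $z_k$). Let $\sigma''$ extend $\sigma$ by $z_k\mapsto w_k$; Lem.~\ref{linear proof composition}(1) gives a linear proof of the $\sigma''$-instance of this variant, and Lem.~\ref{linear proof composition}(2) composes it with the subproof-variants (auxiliary variables chosen pairwise fresh to meet the disjointness requirement) to produce the required $\frac{H_M}{\lambda}$ with $H_M\subseteq(H\setminus M)\cup M_\tau$. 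The $\tau$-pollability of the derived rule is proved by the same recipe with the pdr restrictions dropped: $M$ ranges over all of $H^+$, $\tau$-pollability is invoked uniformly, and the composed rule may carry conclusion label $\tau$ with some target term, as Def.~\ref{def:pollable} allows.

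The main obstacle is the interaction between pdr and $\tau$-pollability. To keep the derived rule's conclusion label intact, we must invoke pdr of $r_0$ (not merely its $\tau$-pollability) on those premises of $r_0$ whose subproofs $\pi_k$ have been forced to switch their conclusion label to $\tau$; but pdr of $r_0$ can only be invoked on premises outside $H_0^d$. This in turn forces pdr (rather than $\tau$-pollability) of $\pi_k$ whenever $v_k\trans{\beta_k}y_k\in H_0^d$, which forces $M_k$ to avoid $H_k^d$. Defining $H^d$ as the union of the $H_k^d$ over exactly those indices enforces this constraint, and---crucially---keeps $H^d$ finite, so the induction goes through. All remaining work is careful bookkeeping of freshness to satisfy the disjointness hypothesis of Lem.~\ref{linear proof composition}.
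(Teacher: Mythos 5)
Your proposal is correct and follows essentially the same route as the paper's proof: structural induction on the linear proof, taking $H^d$ to be the union of the subproofs' delayable-premise witnesses over the premises of the root rule that lie in its own witness set, using $\tau$-pollability of the subderivations to form the set of premises whose conclusions are forced to label $\tau$ and then positive delay resistance of the root rule on exactly that set, and establishing delayability of $H^d$ by composing the two witnessing rules from the subproof with the two from the root rule via Lem.~\ref{linear proof composition}, with linearity supplying the disjointness of the premise sets.
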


\begin{proof}
Let an ntytt rule \nietplat{$r=\frac{ H}{t\ntrans{\alpha}}$} [resp.\ \nietplat{$\frac{H}{t\trans\alpha u}$}]
be linearly provable from $P$, by means of a proof $\pi$.
We will prove, by structural induction on $\pi$, that this rule is {\pdr} as well as $\tau$-pollable w.r.t.\  $P$.

\vspace{2mm}

\noindent
{\em Induction basis}:
Suppose $\pi$ has only one node, marked ``hypothesis''. Then rule $r$ has the form
\plat{$\frac{t\ntrans{\alpha}}{t\ntrans{\alpha}}$} [resp.\
\nietplat{$\frac{t\trans{\alpha}y}{t\trans{\alpha}y}$}].\vspace{-3pt}
The non-standard rule is trivially $\tau$-pollable, by lack of positive premises.
The standard rule satisfies the requirements of Defs.~\ref{def:positive} and~\ref{def:delayable}
by taking $H^d\mathbin{=}H_\emptyset\mathbin{:=}\{t\trans{\alpha}y\}$,
$H_1\mathbin{:=}\{t\trans\tau z\}$, $v\mathbin{:=}z$ and $H_2\mathbin{:=}\{z\trans\alpha y\}$.
Furthermore, the requirement of Def.~\ref{def:pollable} is satisfied through
the rule \nietplat{$\frac{t\trans{\tau}y}{t\trans{\tau}y}$}.

\vspace{2mm}

\noindent
{\em Induction step}:
Let $r'=\frac{K}{t'\ntrans\alpha}$ [resp.\ $\frac{K}{t'\trans\alpha u'}$] be the rule and $\sigma$
the substitution used at the bottom of $\pi$---by assumption, $r'$ is decent, ntytt and {\pdr} as well as $\tau$-pollable w.r.t.\ $P$.
Then $\sigma(t')=t$ [and $\sigma(u')=u$]. Moreover, rules
\nietplat{$r_\mu \mathbin= \frac{H_\mu}{\sigma(\mu)}$} for each $\mu\mathbin\in K$ are
linearly provable from $P$ by means of strict subproofs of $\pi$, where $H=\bigcup_{\mu\in K}H_\mu$,
and the sets $H_\mu$ are pairwise disjoint.

For each $\mu\in K$, let $t_\mu$ be the left-hand side of $\mu$.
As $r'$ is decent, $\var(t_\mu)\subseteq \var(t')$, so $\var(\sigma(t_\mu))\subseteq\var(\sigma(t'))=\var(t)$.
From ${\it rhs}(H)\cap\var(t)=\emptyset$ it follows that ${\it rhs}(H_{\mu})\cap\var(\sigma(t_\mu))=\emptyset$.
So $r_{\mu}$ is an ntytt rule. By induction, $r_{\mu}$ is {\pdr} as well as $\tau$-pollable w.r.t.\ $P$.

To show that $r$ is $\tau$-pollable, pick any set $M\subseteq H^+$.
It can be written as $M= \bigcup_{\mu \in K} M_\mu$ with $M_\mu \subseteq H_\mu^+$ for all $\mu\mathbin\in K$.
For each negative premise $\mu\in K$, as $r_\mu$ is $\tau$-pollable,
a rule \nietplat{$\frac{H_\mu^M}{\sigma(\mu)}$}, with
$H_\mu^M\subseteq (H_\mu{\setminus}M_\mu) \cup (M_\mu)_\tau$, is linearly provable from $P$.
For each positive $\mu\in K$, say of the form $t_\mu \trans{\gamma_\mu} y_\mu$, as $r_\mu$ is
$\tau$-pollable, a rule \nietplat{$\frac{H_\mu^M}{\sigma(t_\mu) \trans{\gamma_\mu} \sigma(y_\mu)}$}
or \nietplat{$\frac{H_\mu^M}{\sigma(t_\mu) \trans\tau w_\mu}$}, with
$H_\mu^M\subseteq (H_\mu{\setminus}M_\mu) \cup (M_\mu)_\tau$, is linearly provable from $P$.
In the construction of the sets $(M_\mu)_\tau$ (see Def.~\ref{def:positive}), we make sure that the
fresh right-hand sides $z_y$ are all different.
Let $M^K$ be the set of literals $\mu$ from $K^+$ for which only the second of these two possibilities applies.
Since $r'$ is $\tau$-pollable, there is a rule
\nietplat{$\frac{K^M}{t'\ntrans{\alpha}}$} [resp.\
\nietplat{$\frac{K^M}{t'\trans{\alpha}u'}$} or
\nietplat{$\frac{K^M}{t'\trans{\tau}w'}$}], linearly provable from $P$, with
$K^M\mathbin\subseteq (K{\setminus} M^K) \cup M^K_\tau$. 
Lem.~\ref{linear proof composition} yields a linear proof from $P$, which uses this rule and $\sigma$
at the bottom, of a rule \nietplat{$\frac{H^M}{t\ntrans{\alpha}}$} [resp.\ \nietplat{$\frac{H^M}{t\trans{\alpha}u}$} or
\nietplat{$\frac{H^M}{t\trans{\tau}w}$}] with $H^M\subseteq \bigcup_{\mu\in K}H^M_\mu = (H{\setminus}M) \cup M_\tau$,
as required by Def.~\ref{def:pollable}.

It remains to show that $r$ is \pdr.
So assume \nietplat{$r=\frac{H}{t\trans\alpha u}$} and \nietplat{$r'=\frac{K}{t'\trans\alpha u'}$}.
Let $K^d \subseteq K^{+}$ and $H^d_\mu \subseteq H^+_\mu$ for each $\mu\in K$ be the finite sets of
delayable positive premises of $r'$ resp.\ $r_\mu$, which exist by Def.~\ref{def:positive}. Take $H^d := \bigcup _{\mu \in K^d} H_\mu^d$.
Pick any set $M\subseteq H^+\setminus H^d$.
It can be written as $M= \bigcup_{\mu \in K} M_\mu$ with $M_\mu \subseteq H^+_\mu$ for each $\mu\mathbin\in K$;
moreover, $M_\mu \subseteq H_\mu^{+}\setminus H^d_\mu$ for each \nietplat{$\mu\mathbin\in K^d$}.
For each negative premise $\mu\in K$, as $r_\mu$ is $\tau$-pollable,
a rule \nietplat{$\frac{H_\mu^M}{\sigma(\mu)}$} with
$H_\mu^M\subseteq (H_\mu\setminus M_\mu) \cup (M_\mu)_\tau$ is linearly provable from $P$.
For each $\mu\in K^{+}$, say of the form $t_\mu \trans{\gamma_\mu} y_\mu$, as $r_\mu$ is
$\tau$-pollable, a rule $\frac{H_\mu^M}{\sigma(t_\mu) \trans{\gamma_\mu} \sigma(y_\mu)}$
or $\frac{H_\mu^M}{\sigma(t_\mu) \trans\tau w_\mu}$ with $H_\mu^M \subseteq (H_\mu{\setminus}M_\mu) \cup (M_\mu)_\tau$
is linearly provable from $P$.
In the construction of the sets $(M_\mu)_\tau$ we make sure that the fresh right-hand sides $z_y$ are all different.
In the special case that $\mu\in K^d$ we have $M_\mu\subseteq H_\mu^{+}\setminus H^d_\mu$, so that
the positive delay resistance of $r_\mu$ guarantees that the first of these two possibilities applies.
Let $M^K$ be the set of $\mu\in K^+$ for which only the second possibility
applies. Then $M^K \subseteq K^+\setminus K^d$. Since $r'$ is \pdr, there is a rule
\nietplat{$\frac{K^M}{t'\trans{\alpha}u'}$}, linearly provable from $P$, with
$K^M\subseteq (K{\setminus}M^K) \cup M^K_\tau\!$.
Lem.~\ref{linear proof composition} yields a linear proof from $P$, which uses this rule and $\sigma$
at the bottom, of a rule \nietplat{$\frac{H^M}{t\trans{\alpha}u}$} with
$H^M\subseteq \bigcup_{\mu\in K}H^M_\mu = (H{\setminus}M) \cup M_\tau$, as required by
Def.~\ref{def:positive}.

It remains to show that any literal in \nietplat{$H^d$}, say of the form $w\trans\beta y$, is a delayable premise of $r$.
By the definition of $H^d$, $w\trans\beta y$ is in $H_{\mu_0}^d$ for some $\mu_0\in K^d$; say $\mu_0$ is of the form $t_0\trans\gamma y_0$.
Since $w\trans\beta y$ is a delayable premise of $r_{\mu_0}$, there are rules
$\frac{H^1_{\mu_0}}{\sigma(t_0)\trans{\tau}v}$ and $\frac{H^2_{\mu_0}}{v\trans{\gamma}\sigma(y_0)}$,
linearly provable from $P$, with $H^1_{\mu_0}\subseteq (H_{\mu_0}{\setminus}\{w \trans{\beta} y\})\cup\{w \trans{\tau} z\}$
and $H^2_{\mu_0}\subseteq (H_{\mu_0}{\setminus}\{w \trans{\beta} y\})\cup\{z \trans{\beta} y\}$
for some term $v$ and fresh variable $z$.
Likewise, since $t_0\trans\gamma y_0$ is a delayable premise of $r'$, there are rules
$\frac{K_1}{t'\trans{\tau}v'}$ and $\frac{K_2}{v'\trans{\alpha}u'}$,
linearly provable from $P$, with $K_1\subseteq (K{\setminus}\{t_0 \trans{\gamma} y_0\})\cup\{t_0 \trans{\tau} z'\}$
and \nietplat{$K_2\subseteq (K{\setminus}\{t_0 \trans{\gamma} y_0\})\cup\{z' \trans{\gamma} y_0\}$}
for some term $v'$ and fresh variable $z'$.
Without limitation of generality, we pick $z'$ so that it does not occur in $\pi$.
Let $\sigma'(z')=v$ and $\sigma'$ coincides with $\sigma$ on all other variables.
Let $L_1$ and $L_2$ denote $\bigcup_{\mu\in K_1{\setminus}\{t_0 \trans{\tau} z'\}}H_\mu \cup H^1_{\mu_0}$
resp.\ $\bigcup_{\mu\in K_2{\setminus}\{z' \trans{\gamma} y_0\}}H_\mu \cup H^2_{\mu_0}$.
Lem.~\ref{linear proof composition} yields linear proofs from $P$, which use
$\frac{K_1}{t'\trans{\tau}v'}$ and $\frac{K_2}{v'\trans{\alpha}u'}$ and $\sigma'$ at the bottom, of the rules
$\frac{L_1}{t\trans{\tau}\sigma'(v')}$ and $\frac{L_2}{\sigma'(v')\trans{\alpha}u}$.
Moreover, $L_1\subseteq (H{\setminus }\{w \trans{\beta} y\})\cup\{w \trans{\tau} z\}$ and
$L_2\subseteq (H{\setminus}\{w \trans{\beta} y\})\cup\{z \trans{\beta} y\}$.
\qed
\end{proof}
Only in the above proof, together with its forthcoming variant proving Lem.~\ref{lem:preservation positive-Lambda},
does it make a difference that linear provability is used instead
of irredundant provability. It allows us to infer that the sets $H_\mu$ are pairwise disjoint.
Without that, the last sentence in the proof would fail.
See also Ex.~\ref{ex:linearity}.

\begin{lemma}\label{lem:preservation stable}
Let $P$ be a negative-stable standard TSS in decent ntytt format.
Then any ntytt rule irredundantly provable from $P$ is negative-stable.
\end{lemma}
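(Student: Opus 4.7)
I would prove the lemma by structural induction on the irredundant proof tree $\pi$, following the pattern established in the preceding preservation results (Lem.~\ref{lem:preservation_decency}, Lem.~\ref{lem:preservation_smooth}, Lem.~\ref{lem:preservation positive}). Since negative-stability is defined only for rules with a positive conclusion, and every rule in $R$ has a positive conclusion by standardness of $P$, the induction hypothesis naturally concerns ntytt rules with positive conclusion.

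In the base case, $\pi$ is a single hypothesis node labelled by the positive literal $\lambda = t \trans{\alpha} y$, so the derived rule $\frac{\lambda}{\lambda}$ has no negative premise at all and negative-stability holds vacuously. In the inductive step, the root of $\pi$ applies a substitution instance $\sigma(r')$ of some $r' = \frac{K}{t' \trans{\alpha} u'} \in R$, and the strict subproofs yield rules $r_\mu = \frac{H_\mu}{\sigma(\mu)}$ for each $\mu \in K$, with $H = \bigcup_{\mu \in K} H_\mu$. Pick any $w \ntrans{\gamma} \in H$ and locate some $\mu_0 \in K$ with $w \ntrans{\gamma} \in H_{\mu_0}$; we need to show $w \ntrans{\tau} \in H$. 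If $\mu_0$ is positive, then $r_{\mu_0}$ has a positive conclusion, so by induction it is negative-stable and hence $w \ntrans{\tau} \in H_{\mu_0} \subseteq H$. If $\mu_0$ is negative, then standardness of $P$ forces the subproof deriving $r_{\mu_0}$ to be a single hypothesis node, so $H_{\mu_0} = \{\sigma(\mu_0)\} = \{w \ntrans{\gamma}\}$; writing $\mu_0 = t_{\mu_0} \ntrans{\gamma}$, the negative-stability of $r'$ gives $t_{\mu_0} \ntrans{\tau} \in K$, and the (again single-node) subproof for this premise contributes $w \ntrans{\tau} = \sigma(t_{\mu_0}) \ntrans{\tau}$ to $H$.

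The main subtlety lies in the second subcase of the inductive step: the induction hypothesis cannot be applied to $r_{\mu_0}$ when $\mu_0$ is negative, because negative-stability is not even defined for rules with negative conclusion. The resolution is the observation that any irredundant proof of a negative literal from a standard TSS must be a single hypothesis node, which lets us invoke the negative-stability of the bottom rule $r'$ directly to find the missing $\tau$-premise in $K$, whose own single-node subproof then transports it into $H$.
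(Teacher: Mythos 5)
Your proof is correct and follows essentially the same route as the paper's: structural induction on the irredundant proof tree, using negative-stability of the rule applied at the root together with the induction hypothesis on the subproofs of its positive premises, while the negative premises (whose subtrees are necessarily single hypothesis nodes, since $P$ is standard) carry their $\tau$-companions, instantiated by $\sigma$, directly into $H$ — a point the paper compresses into writing $H=\bigcup_{\mu\in K^+}H_\mu \cup \sigma(K^{s-})$. The only cosmetic difference is that the paper also verifies, via decency of the rules of $P$, that each subrule $r_\mu$ is again an ntytt rule before invoking the induction hypothesis; you leave this routine check implicit, and it is anyway inessential here since negative-stability does not depend on the ntytt shape, so the induction statement could simply be strengthened to all standard rules provable from $P$.
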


\begin{proof}
Let an ntytt rule \nietplat{$r=\frac{H}{t\trans\alpha u}$}
be irredundantly provable from $P$, by means of a proof $\pi$.
We will prove, by structural induction on $\pi$, that $r$ is negative-stable.

\vspace{2mm}

\noindent
{\em Induction basis}:
If $\pi$ has only one node, marked ``hypothesis'', then \nietplat{$\frac{H}{t\trans{\alpha} u}$} equals \nietplat{$\frac{t\trans{\alpha}u}{t\trans{\alpha}u}$},
which trivially is negative-stable.

\vspace{2mm}

\noindent
{\em Induction step}:
Let $r'=\frac{K}{t'\trans\alpha u'}$ be the rule and $\sigma$
the substitution used at the bottom of $\pi$---by assumption, $r'$ is decent, ntytt and negative-stable;
so $K=K^+\cup K^{s-}$.
Then $\sigma(t')=t$ and $\sigma(u')=u'$. Moreover, rules
\nietplat{$r_\mu \mathbin= \frac{H_\mu}{\sigma(\mu)}$} for each $\mu\mathbin\in K^+$ are
irredundantly provable from $P$ by means of strict subproofs of $\pi$, where $H=\bigcup_{\mu\in K^+}H_\mu \cup \sigma(K^{s-})$.

For each $\mu\mathbin\in K^+$, let $t_\mu$ be the left-hand side of $\mu$.
As $r'$ is decent, $\var(t_\mu)\subseteq \var(t')$, so $\var(\sigma(t_\mu))\subseteq\var(\sigma(t'))=\var(t)$.
From ${\it rhs}(H)\cap\var(t)=\emptyset$ it follows that ${\it rhs}(H_{\mu})\cap\var(\sigma(t_\mu))=\emptyset$.
So $r_{\mu}$ is an ntytt rule.
By induction, $r_{\mu}$ is negative-stable.
From this it follows that $r$ is negative-stable.
\qed
\end{proof}

\begin{proposition}\label{prop:manifest preservation}
Let $P$ be a standard TSS in ready simulation format. If $P$ is manifestly delay resistant,
then so is the standard TSS $\hat P^\ddagger$ in xynft format constructed in Secs.~\ref{sec:ruloids}--\ref{sec:linearity}.
\end{proposition}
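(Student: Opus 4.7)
The strategy is to apply the three preservation lemmas Lem.~\ref{lem:pollable}, Lem.~\ref{lem:preservation positive} and Lem.~\ref{lem:preservation stable} to $P^\dagger$, and then to check that the properties they supply for the rules of $\hat P^\ddagger$ upgrade to manifest delay resistance w.r.t.\ $\hat P^\ddagger$ itself.

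First I would observe that, by Def.~\ref{def:manifestly}, $P$ being manifestly delay resistant in ready simulation format means $P^\dagger$ is manifestly delay resistant in decent ntyft format. Hence every rule of $P^\dagger$ is negative-stable and \pdr w.r.t.\ $P^\dagger$, and by Lem.~\ref{lem:pollable} also $\tau$-pollable w.r.t.\ $P^\dagger$. Since $\hat P^\ddagger$ consists of xynft rules linearly provable from $P^\dagger$, Lem.~\ref{lem:preservation positive} transfers \pdr and $\tau$-pollability, and Lem.~\ref{lem:preservation stable} (using that linear proofs are irredundant) transfers negative-stability, to every rule $r \in \hat P^\ddagger$, all w.r.t.\ $P^\dagger$.

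It then remains to verify that these properties yield, for each $r = \frac{H}{t \trans\alpha u} \in \hat P^\ddagger$, manifest delay resistance w.r.t.\ $\hat P^\ddagger$. I would take the finite $H^d \subseteq H^+$ supplied by \pdr as the candidate set of manifestly delayable premises. For each $M \subseteq H^+ \setminus H^d$, the witness $r_M$ given by \pdr is linearly provable from $P^\dagger$ and shares the xynft source and target of $r$; since the premises in $H \setminus M$ retain their xynft form while those in $M_\tau$ have variable LHS and fresh variable RHS, $r_M$ is itself xynft, hence by definition lies in $\hat P^\ddagger$, as Def.~\ref{def:manifestly delay resistant} requires. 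The same observation shows that the first delay-splitting witness $\frac{H_1}{t \trans\tau v}$ for each premise in $H^d$ is xynft and thus lies in $\hat P^\ddagger$.

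The main obstacle will be the second delay-splitting witness $\frac{H_2}{v \trans\alpha u}$: Lem.~\ref{lem:preservation positive} only provides it as linearly provable from $P^\dagger$, while manifest delayability in $\hat P^\ddagger$ requires linear provability from $\hat P^\ddagger$. Because $v$ need not be of the form $f(x_1,\ldots,x_n)$, the rule itself is typically not xynft. My plan is to translate its proof into a linear proof from $\hat P^\ddagger$ by replacing each application of a $P^\dagger$-rule whose positive premises have non-variable LHS with the corresponding xynft rules of $\hat P^\ddagger$ that arise from the LHS-reduction construction of Sec.~\ref{sec:ruloids}. Since $\hat P^\ddagger$ is by definition exactly the set of xynft rules linearly provable from $P^\dagger$, such a rewriting is always available, delivering the required witness linearly provable from $\hat P^\ddagger$ and thereby completing the verification of manifest delay resistance.
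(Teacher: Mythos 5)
Your proposal is correct and follows essentially the same route as the paper: convert to $P^\dagger$, use Lem.~\ref{lem:pollable} to obtain $\tau$-pollability of its rules, transfer negative-stability, positive delay resistance and $\tau$-pollability to the rules of $\hat P^\ddagger$ via Lemmas~\ref{lem:preservation stable} and~\ref{lem:preservation positive}, and then observe that the witnessing rules $r_M$ and $\frac{H_1}{t\trans{\tau}v}$ are xynft rules linearly provable from $P^\dagger$ and hence themselves rules of $\hat P^\ddagger$. Your final paragraph is in fact more explicit than the paper's own proof, which simply states that the rules required by the definitions are xynft and thus in $\hat P^\ddagger$, without commenting on the second delayability witness $\frac{H_2}{v\trans{\alpha}u}$ whose source need not have the form $f(x_1,\ldots,x_n)$; your sketched repair (restructuring its linear proof over $P^\dagger$ into one over $\hat P^\ddagger$) is sound and can be made precise by induction on the proof, using the decomposition of Lem.~\ref{lem:ruloids} together with Lem.~\ref{linear proof composition}.
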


\begin{proof}
Let $P$ be manifestly delay resistant.
By Def.~\ref{def:manifestly}, the conversion $P^\dagger$ of $P$ to decent ntyft format, defined in
Sec.~\ref{sec:ruloids}, is manifestly delay resistant.
By Lem.~\ref{lem:pollable} it is also manifestly $\tau$-pollable.
Hence all its rules are $\tau$-pollable w.r.t.\ $P^\dagger$, and manifestly delay resistant
w.r.t.\ $P^\dagger$---thus negative-stable and {\pdr} w.r.t.\ $P^\dagger$.

The TSS $\hat P^\ddagger$ contains all xynft rules linearly provable from $P^\dagger$.
By Lem.~\ref{lem:preservation stable} those rules are negative-stable.
By Lem.~\ref{lem:preservation positive} they are {\pdr} and
$\tau$-pollable w.r.t.\ $P^\dagger$. The definitions of {\pdr} and
$\tau$-pollability of rules from $\hat P^\ddagger$ w.r.t.\ $P^\dagger$ require the existence of certain
xynft rules that are linearly provable from $P^\dagger$. By definition, these are rules of $\hat P^\ddagger$.
Hence $\hat P^\ddagger$ is manifestly delay resistant, as well as manifestly $\tau$-pollable.
\qed
\end{proof}

\subsection[Eliminating the Lambda-restriction]{Eliminating the $\Lambda$-restriction}

Prop.~\ref{prop:manifest preservation} above is a crucial step (or ``halfway marker'')
in the proof of Thm.~\ref{thm:manifest}. In this section we take a similar step
(Prop.~\ref{prop:eliminating-Lambda}) towards the proof of Thm.~\ref{thm:manifest-Lambda}.
It requires $P$ to be merely manifestly delay resistant w.r.t.\ $\Lambda$,
rather than outright manifestly delay resistant. On the other hand, it assumes the extra
antecedents of Thm.~\ref{thm:manifest-Lambda}. The conclusion of Prop.~\ref{prop:eliminating-Lambda}
is that $P$ is manifestly delay resistant. This allows us to dispense with $\Lambda$ in the
second half of the proof of Thm.~\ref{thm:manifest-Lambda},
which therefore will equal the second half of the proof of Thm.~\ref{thm:manifest}.

Our initial proof strategy was to extend Lem.~\ref{lem:preservation positive}---similar to
Lem.~\ref{lem:pollable}---by replacing ``w.r.t.\ $P$'' by ``w.r.t.\ [$\Lambda$ and] $P$''
both in the antecedent and in the conclusion. Then
Lemmas~\ref{lem:delay-resistant}--\ref{lem:substitution} would suffice to establish
Prop.~\ref{prop:eliminating-Lambda}, eliminating $\Lambda$.
This strategy failed, however. It turns out we have to integrate
the proof of Lem.~\ref{lem:delay-resistant} into the proof of the modified 
Lem.~\ref{lem:preservation positive}, and eliminate $\Lambda$ while lifting positive delay
resistance and $\tau$-pollability from the rules of $P$ to the linearly provable xynft rules.
This yields the forthcoming Lem.~\ref{lem:preservation positive-Lambda}.

\newcommand{\irr}{\vdash_{\rm lin}}		    
\newcommand{\pvb}{\vdash_{\rm lin}}		    
The following lemma is a variant of Prop.~\ref{prop:ruloid}, needed in the proof of 
Lem.~\ref{lem:preservation positive-Lambda}.
We write $P\irr r$ to say that a rule $r$ is linearly provable from a TSS $P$.

\begin{lemma}\label{lem:ruloids}
Let $P=(\Sigma,R)$ be a TSS in decent ntyft format, $t,t' \in \mathbb{T}(\Sigma)$ and $\sigma$ a substitution.
If $P\pvb r$ with $r=\frac{H}{\sigma(t)\ntrans{\alpha}}$ [resp.\ $\frac{H}{\sigma(t)\trans \alpha t'}$] an xyntt rule,
then there are a decent xyntt rule $r'=\frac{G}{t\ntrans \alpha}$ [resp.\ $\frac{G}{t\trans \alpha u}$] with $P\irr r'$
and a substitution $\sigma'$ with $\sigma'(t)=\sigma(t)$ [and $\sigma'(u)=t'$] such that $H$ can be
written as $\biguplus_{\nu \in G} H_\nu$ with $P\pvb\frac{H_\nu}{\sigma'(\nu)}$ for all $\nu\in G$.
Moreover, any proof $\pi$ of $r$ from $P$ has subproofs $\pi_\nu$ of \nietplat{$\frac{H_\nu}{\sigma'(\nu)}$};
and if neither $t$ is a variable nor $\pi$ a 1-node proof, the $\pi_\nu$ are strict subproofs of $\pi$.
\end{lemma}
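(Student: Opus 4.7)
The plan is to prove the lemma by structural induction on the proof $\pi$. In the base case, $\pi$ consists of a single hypothesis node, so $H$ is the singleton containing the root literal. If the conclusion is positive, say $\sigma(t)\trans{\alpha}t'$, the assumption that $r$ is xyntt forces the left-hand side $\sigma(t)$ of its unique positive premise to be a variable; hence $t$ itself is a variable $x$. Take $r':=\frac{x\trans{\alpha}u}{x\trans{\alpha}u}$ with $u$ fresh, $G:=\{x\trans{\alpha}u\}$, $\sigma'(x):=\sigma(x)$, $\sigma'(u):=t'$; the single subproof is $\pi$ itself. The negative case is analogous, with $r':=\frac{t\ntrans{\alpha}}{t\ntrans{\alpha}}$ and $\sigma':=\sigma$; here $t$ need not be a variable, since there are no positive premises to constrain.

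In the induction step, the root of $\pi$ applies some rule $r_0=\frac{K_0}{\lambda_0}\in R$ under a substitution $\tau$, where by the ntyft format the source of $\lambda_0$ is $f(x_1,\dots,x_n)$ with the $x_i$ pairwise distinct. If $t$ is a variable, take $r'$ and $\sigma'$ exactly as in the base case, with $\pi$ itself as the only ``subproof''. Otherwise $t=f(t_1,\dots,t_n)$, and $\sigma(t)=\tau(f(x_1,\dots,x_n))$ forces $\tau(x_i)=\sigma(t_i)$ for each $i$. Define $\sigma''(x_i):=t_i$ (identity elsewhere), so that $\tau$ agrees with $\sigma\circ\sigma''$ on $\{x_1,\dots,x_n\}$. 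For each premise $\mu$ of $r_0$, the corresponding subproof $\pi_\mu$ of $\pi$ derives $\tau(\mu)$; when $\mu=v_\mu\trans{\beta_\mu}y_\mu$ this is $\sigma(\sigma''(v_\mu))\trans{\beta_\mu}\tau(y_\mu)$, so applying the induction hypothesis to $\pi_\mu$ with term $\sigma''(v_\mu)$ and substitution $\sigma$ yields a decent xyntt rule $r'_\mu=\frac{G_\mu}{\sigma''(v_\mu)\trans{\beta_\mu}u_\mu}$ with $P\irr r'_\mu$, a substitution $\sigma'_\mu$ satisfying $\sigma'_\mu(\sigma''(v_\mu))=\sigma(\sigma''(v_\mu))$ and $\sigma'_\mu(u_\mu)=\tau(y_\mu)$, and a decomposition $H_\mu=\biguplus_{\nu\in G_\mu}H_\mu^\nu$ with $P\pvb\frac{H_\mu^\nu}{\sigma'_\mu(\nu)}$. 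Negative premises are handled symmetrically.

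To assemble $r'$, choose (by $\alpha$-renaming of each $r'_\mu$) the $u_\mu$ and the remaining right-hand-side variables of each $G_\mu$ to be fresh and pairwise disjoint, and let $\theta$ be the substitution $x_i\mapsto t_i$, $y_\mu\mapsto u_\mu$. The instance $\theta(r_0)$ has source $t$, target $\theta(s)$, and premises $\{\sigma''(v_\mu)\trans{\beta_\mu}u_\mu\}\cup\{\sigma''(w_\ell)\ntrans{\gamma_\ell}\}$; by Lem.~\ref{linear proof composition}, composing this one-step proof with the linear proofs of the $r'_\mu$ (and the analogous $r'_\ell$ for negative premises) produces a linear proof of $r':=\frac{G}{t\trans{\alpha}\theta(s)}$, where $G:=\bigcup_\mu G_\mu$. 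Define $\sigma'$ as the common extension of $\sigma|_{\var(t)}$ and the $\sigma'_\mu$ on the fresh variables; consistency on the overlap $\var(\sigma''(v_\mu))\subseteq\var(t)$ follows from $\sigma'_\mu(\sigma''(v_\mu))=\sigma(\sigma''(v_\mu))$. A direct check then yields $\sigma'(t)=\sigma(t)$, $\sigma'(\theta(s))=\tau(s)=t'$, and $H=\biguplus_{\nu\in G}H_\nu$ with each $H_\nu=H_\mu^\nu$; the subproofs $\pi_\nu$ are nested inside the strict subproofs $\pi_\mu$ of $\pi$, and are therefore themselves strict subproofs of $\pi$.

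The hard part will be the careful management of variables. Turning the rules $r'_\mu$ into a single decent xyntt rule $r'$ requires that the $u_\mu$ and the right-hand sides of each $G_\mu$ be both fresh (to preserve absence of lookahead and of free variables) and pairwise disjoint across distinct $\mu$; consistency of the pieced-together substitution $\sigma'$ then rests on the $\sigma'_\mu$ agreeing with $\sigma$ on their overlap with $\var(t)$. Linearity of $\pi$ is essential here, since it guarantees that the hypothesis occurrences partition cleanly across the subtrees $\pi_\mu$ (in particular that the positive hypothesis sets do not collide), so that the refined $H_\nu$ are pairwise disjoint and $H$ decomposes as the required disjoint union.
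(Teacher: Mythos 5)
Your proposal is correct and follows essentially the same route as the paper's proof: structural induction on the linear proof, the source variables of the root rule instantiated by the $t_i$ (your $\sigma''$ is the paper's $\rho_0$), the induction hypothesis applied to the subproofs with the outer substitution $\sigma$, right-hand-side variables chosen fresh and pairwise disjoint, and $\sigma'$ pieced together with consistency on $\var(t)$ derived from $\sigma'_\mu(\sigma''(v_\mu))=\sigma(\sigma''(v_\mu))$. The points you leave implicit (decency and xyntt-ness of the assembled rule, disjointness of the hypothesis sets) are handled at essentially the same level of detail in the paper, so no genuine gap.
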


\begin{proof}
First, suppose $t$ is a variable. By default, the decent xyntt
rule \nietplat{$\frac{t\ntrans \alpha}{t\ntrans \alpha}$} [resp.\ $\frac{t\trans \alpha
y}{t\trans \alpha y}$] is linearly provable from $P$. Let $\sigma'$
be a substitution with $\sigma'(t)=\sigma(t)$ [and $\sigma'(y)=t'$].
Clearly, $P\pvb\frac{H}{\sigma'(t\ntrans \alpha)}$ [resp.\ $P\pvb\frac{H}{\sigma'(t\trans \alpha y)}$],
taking $\pi_\nu := \pi$.

Next, suppose $t=f(t_1,\ldots,t_{{\it ar}(f)})$.
We apply structural induction on the proof $\pi$ of $r$ from $P$.

\vspace{2mm}

\noindent
{\em Induction basis}:
If $\pi$ has only one node, marked ``hypothesis'', using that $r$ is an xyntt rule, \plat{$r=\frac{t\ntrans{\alpha}}{t\ntrans{\alpha}}$}.
Take $r':=r$, $\sigma':=\sigma$, $H_\nu:=H$ and $\pi_\nu:=\pi$.

\vspace{2mm}

\noindent
{\em Induction step}:
Let $r''\in R$ be the decent ntyft rule and $\rho$ the
substitution used at the bottom of $\pi$, where $r''$ is of the form
$\frac{\{v_k\trans{\gamma_k}y_k\mid k\in K\}\cup
\{w_\ell\ntrans{\delta_\ell}\mid \ell\in L\}}
{f(x_1,\ldots,x_{{\it ar}(f)})\ntrans \alpha}$
[resp.\ $\frac{\{v_k\trans{\gamma_k}y_k\mid k\in K\}\cup
\{w_\ell\ntrans{\delta_\ell}\mid \ell\in L\}}
{f(x_1,\ldots,x_{{\it ar}(f)})\trans \alpha v}$].
Then $\rho(x_i)=\sigma(t_i)$ for $i=1,\ldots,{\it ar}(f)$, [$\rho(v)=t'$,]
and rules $\frac{H_k}{\rho(v_k)\trans{\gamma_k}\rho(y_k)}$ for $k\in K$ and
$\frac{H_\ell}{\rho(w_\ell)\ntrans{\delta\ell}}$ for $\ell\in L$ are linearly provable from $P$
by means of strict subproofs of $\pi$, where $H=\biguplus_{k\in K}H_k \cup\biguplus_{\ell\in L}H_\ell$.
Since $r''$ is decent, $\var(v_k)$ for $k\in K$ and
$\var(w_\ell)$ for $\ell\in L$ are included in $\{x_1,\ldots,x_{{\it ar}(f)}\}$.
Let $\rho_0$ be a substitution with $\rho_0(x_i)=t_i$ for $i=1,\ldots,{\it ar}(f)$.
As $\rho(x_i)=\sigma(t_i)=\sigma(\rho_0(x_i))$ for $i=1,\ldots,{\it
ar}(f)$, we have $\rho(v_k)=\sigma(\rho_0(v_k))$ for $k\in K$ and
$\rho(w_\ell)=\sigma(\rho_0(w_\ell))$ for $\ell\in L$.
So $\frac{H_k}{\sigma(\rho_0(v_k))\trans{\gamma_k}\rho(y_k)}$ for $k\in K$ and
$\frac{H_\ell}{\sigma(\rho_0(w_\ell))\ntrans{\delta_\ell}}$ for $\ell\in L$ are linearly provable from $P$
by means of strict subproofs $\pi_k$ and $\pi_\ell$ of $\pi$. According to the induction hypothesis,
for $k\in K$ there are a decent xyntt rule
$\frac{G_k}{\rho_0(v_k)\trans{\gamma_k}u_k}$ and a substitution $\sigma_k'$
with $P\irr\frac{G_k}{\rho_0(v_k)\trans{\gamma_k}u_k}$,
$\sigma_k'(\rho_0(v_k))=\sigma(\rho_0(v_k))$
and $\sigma_k'(u_k)=\rho(y_k)$, such that $H_k$ can be written
as $\biguplus_{\nu \in G_k} H_\nu$ with $P\pvb\frac{H_\nu}{\sigma'_k(\nu)}$ for all $\nu\in G_k$.
Moreover, $\pi_k$ has subproofs $\pi_\nu$ of \nietplat{$\frac{H_\nu}{\sigma'_k(\nu)}$}.
Likewise, for $\ell\in L$ there are a
decent xyntt rule $\frac{G_\ell}{\rho_0(w_\ell)\ntrans{\delta_\ell}}$
and a substitution $\sigma_\ell'$
with $P\irr\frac{G_\ell}{\rho_0(w_\ell)\ntrans{\delta_\ell}}$
and $\sigma_\ell'(\rho_0(w_\ell))=\sigma(\rho_0(w_\ell))$, such that $H_\ell$ can be written
as $\biguplus_{\nu \in G_\ell} H_\nu$ with $P\pvb\frac{H_\nu}{\sigma'_\ell(\nu)}$ for all $\nu\in G_\ell$.
Moreover, $\pi_\ell$ has subproofs $\pi_\nu$ of \nietplat{$\frac{H_\nu}{\sigma'_\ell(\nu)}$}.
As observed in \cite{BFvG04}, using that $|\Sigma|,|A|\leq |V|$,
we can choose the sets of variables in the
right-hand sides of the positive premises in the $G_k$ (for $k\in K$)
and $G_\ell$ (for $\ell\in L$) pairwise disjoint, and disjoint from
$\var(t)$. This allows us to define a substitution $\sigma'$ with:
\begin{itemize}
\item
$\sigma'(z)=\sigma(z)$ for $z\in\var(t)$;
\item
$\sigma'(z)=\sigma_k'(z)$ for right-hand sides $z$ of positive premises
in $G_k$ for $k\in K$;
\item
$\sigma'(z)=\sigma_\ell'(z)$ for right-hand sides $z$ of positive premises
in $G_\ell$ for $\ell\in L$.\pagebreak[3]
\end{itemize}
Let $G$ denote $\bigcup_{k\in K}G_k\cup\bigcup_{\ell\in L}G_\ell$.
Moreover, let $\rho_1$ be a substitution with $\rho_1(x_i)=t_i$ for
$i=1,\ldots,{\it ar}(f)$ and $\rho_1(y_k)=u_k$ for $k\in K$.
We verify that the rule \nietplat{$\frac{G}{t\ntrans \alpha}$}
[resp.\ \nietplat{$\frac{G}{t\trans \alpha\rho_1(v)}$}] together
with the substitution $\sigma'$ satisfy the desired properties.

As $\var(v_k)\subseteq\{x_1,\ldots,x_{{\it ar}(f)}\}$, it follows that
$\var(\rho_0(v_k))\subseteq\var(t)$. Since $\sigma'$ and $\sigma$ agree
on $\var(t)$, $\sigma'(\rho_0(v_k))=\sigma(\rho_0(v_k))=\sigma_k'(\rho_0(v_k))$
for $k\in K$. Thus, by the decency of $\frac{G_k}{\rho_0(v_k)\trans{\gamma_k}u_k}$,
$\sigma'$ and $\sigma'_k$ agree on all variables
occurring in this rule for $k\in K$. Likewise,
$\sigma'$ and $\sigma'_\ell$ agree on all variables occurring in
\nietplat{$\frac{G_\ell}{\rho_0(w_\ell)\ntrans{\delta_\ell}}$} for $\ell \in L$.

\begin{enumerate}
\item
$\rho_1$ and $\rho_0$ agree on $\var(v_k)\subseteq
\{x_1,\ldots,x_{{\it ar}(f)}\}$, so $\rho_1(v_k)=\rho_0(v_k)$ for $k\!\in\! K$.\vspace{-2pt}
Likewise, $\rho_1(w_\ell)\linebreak[3]=\rho_0(w_\ell)$ for $\ell\!\in\! L$.\vspace{-4pt}
Since $P\irr r''$, we have $P\irr\rho_1(r'')=
\frac{\{\rho_0(v_k)\trans{\gamma_k}u_k\mid k\in K\}\cup
\{\rho_0(w_\ell)\ntrans{\delta_\ell}\mid \ell\in L\}}{t\ntrans \alpha}$
[resp.\ $\frac{\{\rho_0(v_k)\trans{\gamma_k}u_k\mid k\in K\}\cup
\{\rho_0(w_\ell)\ntrans{\delta_\ell}\mid \ell\in L\}}{t\trans \alpha\rho_1(v)}$].\\
Furthermore, $P\irr \frac{G_k}{\rho_0(v_k)\trans{\gamma_k}u_k}$ for $k\in K$ and
$P\irr\frac{G_\ell}{\rho_0(w_\ell)\ntrans{\delta_\ell}}$ for $\ell\in L$.
As \nietplat{$G=\bigcup_{k\in K}G_k\cup\bigcup_{\ell\in L}G_\ell$}, it follows that
$P\irr\frac{G}{t\ntrans \alpha}$ [resp.\ $P\irr\frac{G}{t\trans \alpha\rho_1(v)}$].

\item
The right-hand sides of the positive premises in any $G_k$
or $G_\ell$ are distinct variables.  By construction, these sets of
variables (one for every $k\in K$ and $\ell\in L$) are pairwise
disjoint, and disjoint from $\var(t)$. Hence $\frac{G}{t\ntrans \alpha}$
[resp.\ $\frac{G}{t\trans \alpha\rho_1(v)}$] is an ntytt rule.
Since the positive premises in $G$ originate from $G_k$ (for $k\in K$) and
$G_\ell$ (for $\ell\in L$), their left-hand sides are variables.
This makes the rule an xyntt rule. The rule is decent by
Lem.~\ref{lem:preservation_decency}.

\item
Since $\sigma'$ and $\sigma$ agree on $\var(t)$, $\sigma'(t)=\sigma(t)$.

\item

[$\sigma'(\rho_1(x_i))=\sigma'(t_i)=\sigma(t_i)=\rho(x_i)$
for $i=1,\ldots,{\it ar}(f)$.
Moreover, since $\sigma'$ and $\sigma_k'$ agree on $\var(u_k)$,
$\sigma'(\rho_1(y_k))=\sigma'(u_k)=\sigma_k'(u_k)=\rho(y_k)$ for $k\in
K$. As $\var(v)\subseteq
\{x_1,\ldots,x_{{\it ar}(f)}\}\cup\{y_k\mid k\in K\}$, it follows that
$\sigma'(\rho_1(v))=\rho(v)=t'$.]

\item
$H=\biguplus_{k\in K}H_k \cup\biguplus_{\ell\in L}H_\ell =
\biguplus_{k\in K}\biguplus_{\nu \in G_k} H_\nu \cup\biguplus_{\ell\in L}\biguplus_{\nu \in G_\ell} H_\nu =
\biguplus_{\nu \in G} H_\nu$.

\item
$P\pvb\frac{H_\nu}{\sigma'(\nu)}$ for all $\nu\mathbin\in G$,
using that $\sigma'(\nu)\mathbin=\sigma'_k(\nu)$ when $\nu\mathbin\in G_k$ and
$\sigma'(\nu)\mathbin=\sigma'_\ell(\nu)$ when $\nu\mathbin\in G_\ell$.

\item
For $\nu\mathbin\in G$ the proof $\pi$ has strict subproofs $\pi_\nu$ of $\frac{H_\nu}{\sigma'(\nu)}$.
\hfill $\Box$
\end{enumerate}
\end{proof}
With this result in hand, we obtain the following variant of Lem.~\ref{lem:preservation positive}.

\begin{lemma}\label{lem:preservation positive-Lambda}
Let $P$ be an $\aL$-patient TSS in decent ntyft format, in which each transition rule is rooted branching bisimulation
safe and satisfies condition \ref{smooth} of Def.~\ref{def:smooth} w.r.t.\ $\aleph$ and $\Lambda$,
and is {\pdr} as well as $\tau$-pollable w.r.t.\ $\Lambda$ and $P$.
Then any xyntt rule linearly provable from $P$ is {\pdr} as well as $\tau$-pollable w.r.t.\ $P$.
\end{lemma}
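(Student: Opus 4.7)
The plan is to mirror the structural induction on linear proofs used in Lem.~\ref{lem:preservation positive}, adapted to the weaker hypothesis that the rules of $P$ are pdr and $\tau$-pollable only w.r.t.\ $\Lambda$ and $P$. The extra assumptions---rooted branching bisimulation safeness, condition~\ref{smooth}, and $\aL$-patience---are precisely what is needed to handle the $\Lambda$-liquid premises of $r'$ that its weaker $\tau$-pollability cannot touch. By Lem.~\ref{lem:preservation_branching_bisimulation_safe} and Lem.~\ref{lem:preservation_smooth}, rooted branching bisimulation safeness and condition~\ref{smooth} are inherited by any ntytt rule linearly (hence irredundantly) provable from $P$, so they apply throughout.

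The induction basis, a 1-node hypothesis proof, is trivial, as in Lem.~\ref{lem:preservation positive}. In the induction step, let $r' = \frac{K}{t'\trans{\alpha} u'} \in R$ with $t' = f(x_1,\ldots,x_n)$ and substitution $\sigma$ be used at the bottom of $\pi$; each subproof $\pi_\mu$ yields $r_\mu = \frac{H_\mu}{\sigma(\mu)}$, which by induction is pdr and $\tau$-pollable w.r.t.\ $P$. For $\tau$-pollability of $r$ w.r.t.\ $P$, given $M \subseteq H^+$, split $M = \bigcup_{\mu \in K} M_\mu$ with $M_\mu \subseteq H_\mu^+$, and obtain witness rules $\bar r_\mu$ from the induction hypothesis; let $M^K \subseteq K^+$ collect those $\mu$'s admitting only a $\tau$-labelled witness. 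If $M^K \cap K^\Lambda = \emptyset$, the proof of Lem.~\ref{lem:preservation positive} goes through unchanged, since the $\tau$-pollability of $r'$ w.r.t.\ $\Lambda$ and $P$ applies to $M^K \subseteq K^+ \setminus K^\Lambda$. Otherwise pick any $\mu_0 \in M^K \cap K^\Lambda$; then $\mu_0 = x_{i_0} \trans{\gamma_0} y_0$ with $\Lambda(f, i_0)$, whence $\aleph(f, i_0)$ by condition~\ref{aleph}, so the $\aL$-patience rule $\frac{x_{i_0} \trans{\tau} z_0}{t' \trans{\tau} t'[z_0/x_{i_0}]}$ lies in $P$. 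Since $\bar r_{\mu_0}$ has conclusion $\sigma(x_{i_0}) \trans{\tau} w_{\mu_0}$, composing the $\sigma$-instance of this $\aL$-patience rule (taking $\sigma(z_0) := w_{\mu_0}$) with $\bar r_{\mu_0}$ via Lem.~\ref{linear proof composition} produces a linearly provable rule with conclusion $t \trans{\tau} w$ for some $w$, and premises $H^M_{\mu_0} \subseteq (H_{\mu_0} \setminus M_{\mu_0}) \cup (M_{\mu_0})_\tau \subseteq (H \setminus M) \cup M_\tau$, the last inclusion using that the $H_\mu$ are pairwise disjoint by linearity. This fulfils the alternative form $\frac{H_M}{t \trans{\tau} w}$ of Def.~\ref{def:pollable}.

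For pdr of $r$ w.r.t.\ $P$, I proceed in two stages. First, by an analogous inductive argument (with ``$\tau$-pollable w.r.t.\ $\Lambda$ and $P$'' replaced by ``pdr w.r.t.\ $\Lambda$ and $P$''), I show $r$ is pdr w.r.t.\ $\Lambda$ and $P$; then, after reducing to the univariate case via Lem.~\ref{lem:univariate} and Lem.~\ref{lem:substitution}, I invoke Lem.~\ref{lem:delay-resistant} to upgrade pdr w.r.t.\ $\Lambda$ and $P$ to pdr w.r.t.\ $P$, which is where condition~\ref{smooth} and $\aL$-patience again play their role (via the ``$z$-for-$x$ substitution combined with an $\aL$-patience rule'' delayability witness). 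The main obstacle will be the $\Lambda$-aware inductive step for pdr, since Def.~\ref{def:positive} demands the conclusion be $t \trans{\alpha} u$ rather than $t \trans{\tau} w$, precluding the $\aL$-patience shortcut used above for $\tau$-pollability, and one must track carefully how $H^d$, $H^\Lambda$ at the $r$-level, and $K^d$, $K^\Lambda$ at the $r'$-level interact across subproofs. The key correspondence that makes this tractable is that a variable $x \in \var(\sigma(x_{i_\mu}))$ is $\Lambda$-liquid in $t = \sigma(t')$ precisely when $\mu \in K^\Lambda$ and $x$ is $\Lambda$-liquid in $\sigma(x_{i_\mu})$, which permits lifting the $\Lambda$-restriction on $M_\mu$ to a corresponding restriction on $M \cap H^\Lambda$ that is compatible with the inductive hypothesis.
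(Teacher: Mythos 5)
There are genuine gaps, and they sit exactly where you flag uncertainty. First, in your $\tau$-pollability step you assume that a premise $\mu_0\in M^K\cap K^\Lambda$ of the rule $r'\in R$ has the form $x_{i_0}\trans{\gamma_0}y_0$. But $P$ is only assumed to be in decent ntyft format, so the left-hand sides of premises of $r'$ may be arbitrary terms (only the provable rules in the lemma's conclusion are xyntt). For a non-variable left-hand side $t_{\mu_0}$ there is no $\aL$-patience rule whose premise matches the conclusion $\sigma(t_{\mu_0})\trans{\tau}w_{\mu_0}$ of your witness $\bar r_{\mu_0}$, and your composition breaks down. The paper copes with this by working (inside the induction) with univariate sources, locating the unique variable $x_0$ of $t'$ whose $\sigma$-image contains the polluted variable, and using Lem.~\ref{lem:ruloids} to extract from $\pi_{\mu_0}$ a subproof of a rule with source $\sigma(x_0)$, to which the induction hypothesis (w.r.t.\ $P$, without $\Lambda$) is applied before composing with the $\aL$-patience rule via condition~\ref{aleph}; this in turn forces induction on proof \emph{skeletons} rather than proofs, so that the univariate reduction of Lemmas~\ref{lem:univariate} and~\ref{lem:substitution} can be carried out inside the induction rather than only at the end.

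Second, and more fundamentally, your two-stage plan for positive delay resistance (prove by induction that linearly provable rules are {\pdr} w.r.t.\ $\Lambda$ and $P$, then upgrade with Lem.~\ref{lem:delay-resistant}) is precisely the strategy the authors report as having failed, and the ``main obstacle'' you acknowledge is a real gap, not bookkeeping: a premise $x\trans{\beta}y\in H_\mu$ can be $\Lambda$-liquid w.r.t.\ the source $\sigma(t_\mu)$ of the sub-rule $r_\mu$ while being $\Lambda$-frozen w.r.t.\ $t=\sigma(t')$ (for instance when a $\Lambda$-frozen argument position of $t'$, or a $\Lambda$-frozen position of a non-variable premise left-hand side, is instantiated by a term with $\Lambda$-liquid positions). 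Such a premise lies in $H^+\setminus(H^d\cup H^\Lambda)$ at the level of $r$, yet the induction hypothesis ``{\pdr} w.r.t.\ $\Lambda$'' for $r_\mu$ exempts it and supplies no witness for polluting or delaying it; your ``key correspondence'' addresses the opposite direction and does not repair this. The paper instead proves the $\Lambda$-free property directly by induction: it takes $H^d:=\bigcup_{\mu\in K^d\cup(K^\Lambda)^+}H^d_\mu$, maintains the invariant (*) that premises placed in $H^d$ remain delayable in every rule proven by a strict subproof, and eliminates the $\Lambda$-liquid premises of $r'$ on the fly using Lem.~\ref{lem:ruloids}, conditions~\ref{aleph} and~\ref{smooth}, and $\aL$-patience---in effect integrating the content of Lem.~\ref{lem:delay-resistant} into the inductive step instead of applying it afterwards. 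Without that integration (and without (*), which is needed to certify delayability of the inherited premises at the level of $r$), the argument as you outline it does not go through.
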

Note the absence of the disclaimer ``w.r.t.\ $\Lambda$'' in the conclusion of the lemma.
\begin{proof}
This proof expands the proof of Lem.~\ref{lem:preservation positive}; we only explain the non-trivial differences.
The first difference is that we apply structural induction on the \emph{skeleton} of a proof $\pi$,
rather than on $\pi$ itself. Here the skeleton is the tree labelled with the rules that
are applied in each node, but not with the substitutions used, or the resulting literals.
This allows us to apply the induction hypothesis on proofs $\pi'$ whose skeleton is a proper
subtree of the skeleton of the proof $\pi$ currently under investigation, even when $\pi'$ itself is
not a subtree of $\pi$.

In the proof of Lem.~\ref{lem:preservation positive} we constructed, for each proof $\pi$ of an (ntytt) rule
$r\mathbin=\frac{H}{t\trans\alpha u}$, a finite set $H^d \mathbin\subseteq H^+$ of delayable premises of $r$.
Here, we do this so that if $\pi^\dagger$ is a strict subproof of $\pi$, proving an (xyntt)
rule $r^\dagger=\frac{H^\dagger}{\nu}$ (so that $H^\dagger \subseteq H$)
with $H^\dagger \cap H^d \neq \emptyset$, then $\nu$ is positive and
any $\mu \in H^\dagger \cap H^d$ is a delayable premise also of $r^\dagger$.\hfill (*)

When starting with an xyntt rule $r=\frac{H}{t\ntrans\alpha}$ [resp.\ $\frac{H}{t \trans\alpha u}$],
linearly provable from $P$ by means of a proof $\pi$, we first deal with the case that $t$ is univariate.
The general case is dealt with at the end of this proof.

Following the proof of Lem.~\ref{lem:preservation positive},
in the induction step $r'=\frac{K}{t'\ntrans\alpha}$ [resp.\ $\frac{K}{t'\trans\alpha u'}$]\vspace{-3pt}
is {\pdr} and $\tau$-pollable w.r.t.\ $\Lambda$ and $P$,
whereas the $r_\mu \mathbin= \frac{H_\mu}{\sigma(\mu)}$ for $\mu\mathbin\in K$ are {\pdr} and $\tau$-pollable w.r.t.~$P$;
they also satisfy (*) for subproofs of $\pi$.
By Lem.~\ref{lem:preservation_branching_bisimulation_safe}, $r$ is rooted branching bisimulation
safe w.r.t.\ $\aleph$ and $\Lambda$.

To show that $r$ is $\tau$-pollable, proceed as in the proof of Lem.~\ref{lem:preservation positive},
until the appearance of the set $M^K$. In case $M^K \subseteq K^+\setminus K^\Lambda$ we proceed
exactly as in the proof of Lem.~\ref{lem:preservation positive}, using that $r'$ is $\tau$-pollable
w.r.t.\ $\Lambda$ and $P$. So now assume that $\mu \in M^K \cap K^\Lambda$.
Consider a premise \nietplat{$\lambda = (x \trans\beta y)$} in $M_\mu$.
By the decency of $r$ (cf.~Lem.~\ref{lem:preservation_decency}), $x\in\var(t)$.
Using that $t$ is univariate, there is exactly one variable $x_0 \in \var(t')$ with $x \in \var(\sigma(x_0))$.
By the decency of $r_\mu$
(cf.~Lem.~\ref{lem:preservation_decency}), $x\in\var(\sigma(t_\mu))$, and using the decency of $r'$
this implies that $x_0\in\var(t_\mu)$.\vspace{-3pt} Given that $\mu\in K^\Lambda$, the occurrence of $x_0$ in
$t'$ must be $\Lambda$-liquid.
By Lem.~\ref{lem:ruloids}, the proof $\pi_\mu$ must have a
subproof $\pi'$ of a rule $\frac{H'}{\sigma(x_0)\ntrans\gamma}$ or $\frac{H'}{\sigma(x_0) \trans\gamma u''}$
with $(x \trans\beta y) \in H' \subseteq H_\mu$.\vspace{-3pt} By induction, this rule is $\tau$-pollable w.r.t.\ $P$,
so for $M':= M_\mu \cap H'$ there is a rule 
\nietplat{$\frac{H'_{M'}}{\sigma(x_0)\ntrans{\alpha}}$}, resp.\ 
\nietplat{$\frac{H'_{M'}}{\sigma(x_0)\trans{\alpha}u''}$} or \nietplat{$\frac{H'_{M'}}{\sigma(x_0)\trans{\tau}w}$} for some term $w$,
linearly provable from $P$, where $H'_{M'} \subseteq (H'{\setminus}M') \cup M'_\tau$.
Call the premise $\lambda$ \emph{innocent} if in fact there is a rule 
\nietplat{$\frac{H'_{M'}}{\sigma(x_0)\ntrans{\alpha}}$}, resp.\ \nietplat{$\frac{H'_{M'}}{\sigma(x_0)\trans{\alpha}u''}$}.
If all premises in $M_\mu$ are innocent, given the construction of rules required by Def.~\ref{def:pollable},
there is no reason for $\mu$ to be in $M^K$.\vspace{-3pt} So there must be a guilty premise
\nietplat{$\lambda = (x \trans\beta y)$} in $M_\mu$ such that a rule
\nietplat{$\frac{H'_{M'}}{\sigma(x_0)\trans{\tau}w}$} for some term $w$ is linearly provable from $P$.
Since $x$ occurs $\aleph$-liquid in a premise of $H$, its unique occurrence in $t$ must also be
$\aleph$-liquid, using condition~\ref{aleph} of Def.~\ref{def:rooted_branching_bisimulation_safe}.
This implies that the unique occurrence of $x_0$ in $t'$ is $\aL$-liquid.
Hence an $\aL$-patient rule \nietplat{$\frac{x_0 \trans\tau z_0}{t' \trans\tau w'}$}, where $w'$ is $t'$
with $x_0$ replaced by $z_0$, is linearly provable from $P$.
Let $\sigma'(z_0)=w$ and $\sigma'$ coincides with $\sigma$ on all other variables.
Lem.~\ref{linear proof composition} yields a linear proof, which uses \plat{$\frac{x_0 \trans\tau z_0}{t' \trans\tau w'}$}
and $\sigma'$ at the bottom, of the rule \nietplat{$\frac{H'_{M'}}{t\trans{\tau}\sigma'(w')}$}, where
$H'_{M'} \subseteq  (H'{\setminus}M') \cup M'_\tau \subseteq  (H_\mu{\setminus}M_\mu) \cup (M_\mu)_\tau
\subseteq  (H{\setminus}M) \cup M_\tau$, as required.

To show that $r$ is {\pdr}, and satisfies (*),
assume that $r$ has the form \nietplat{$r=\frac{H}{t\trans\alpha u}$},
and \nietplat{$r'=\frac{K}{t'\trans\alpha u'}$}. As in the proof of Lem.~\ref{lem:preservation positive},
let $K^d \subseteq K$ and, for each $\mu\in K$, let $H^d_\mu \subseteq H^+_\mu$ be the finite sets of
positive premises that exist by Def.~\ref{def:positive}. This time, take
$H^d := \bigcup _{\mu \in K^d \cup (K^\Lambda)^+} H_\mu^d$.
The requirement of Def.~\ref{def:positive} is established exactly as in the proof of 
Lem.~\ref{lem:preservation positive}, but substituting $K^d \cup (K^\Lambda)^+$ for $K^d$.
By construction (and induction) it follows that $r$ satisfies (*) w.r.t.\ $\pi$.

It remains to show that any literal in \nietplat{$H^d$}, say of the form $x\trans\beta y$, is a delayable premise of $r$.
There is a $\mu_0$ of the form $t_0\trans\gamma y_0$ in $K^d \cup K^\Lambda$ with $x\trans\beta y$ in $H_{\mu_0}^d$.
The case that $\mu_0 \in K^d$ proceeds exactly as in the proof of
Lem.~\ref{lem:preservation positive}. So let $\mu_0\in K^\Lambda$.
By the decency of $r$ (cf.~Lem.~\ref{lem:preservation_decency}), $x\in\var(t)$.
Using that $t$ is univariate, there is exactly one variable $x_0 \in \var(t')$ with $x \in \var(\sigma(x_0))$.
By the decency of $r_{\mu_0}$
(cf.~Lem.~\ref{lem:preservation_decency}), $x\in\var(\sigma(t_{\mu_0}))$, and using the decency of $r'$
this implies that $x_0\in\var(t_{\mu_0})$. Given that $\mu_0\in K^\Lambda$, the occurrence of $x_0$ in
$t'$ must be $\Lambda$-liquid.

By Lem.~\ref{lem:ruloids} there is a decent xyntt rule $r''=\frac{G}{t' \trans\alpha u''}$ with
$P\irr r''$, and a substitution $\sigma'$ with $\sigma'(t')=\sigma(t')=t$ and $\sigma'(u'')=\sigma(u')=u$,
such that $H$ can be written as $\biguplus_{\nu\in G} H_\nu$ with $P\pvb \frac{H_\nu}{\sigma'(\nu)}$
for all $\nu\in G$. Moreover, proofs $\pi_\nu$ of \nietplat{$\frac{H_\nu}{\sigma'(\nu)}$} from $P$ occur
as strict subproofs of $\pi$.\vspace{-3pt}
Let $\nu_0$ be the unique literal in $G$ be such that \nietplat{$x\trans\beta y$} occurs in $H_{\nu_0}$.
By (*), $\nu_0$ is positive---so has the form \nietplat{$x_0 \trans\delta y'_0$}---and
\nietplat{$x\trans\beta y$} is a delayable premise of \nietplat{$\frac{H_{\nu_0}}{\sigma'(\nu_0)}$}.
Hence, there are rules $\frac{H^1_{\nu_0}}{\sigma'(x_0)\trans{\tau}v}$ and
$\frac{H^2_{\nu_0}}{v\trans{\delta}\sigma(y'_0)}$,
linearly provable from $P$, with $H^1_{\nu_0}\subseteq (H_{\nu_0}{\setminus}\{x \trans{\beta} y\})\cup\{x \trans{\tau} z\}$
and $H^2_{\nu_0}\subseteq (H_{\nu_0}{\setminus}\{x \trans{\beta} y\})\cup\{z \trans{\beta} y\}$
for some term $v$ and fresh variable $z$.

Since $x$ occurs $\aleph$-liquid in a premise of $H$, its unique occurrence in $t$ must also be
$\aleph$-liquid, using condition~\ref{aleph} of Def.~\ref{def:rooted_branching_bisimulation_safe}.
This implies that the unique occurrence of $x_0$ in $t'$ is $\aL$-liquid.
Hence an $\aL$-patient rule \nietplat{$\frac{x_0 \trans\tau z_0}{t' \trans\tau w'}$}, where $w'$ is $t'$
with $x_0$ replaced by $z_0$, is linearly provable from $P$.
Let $\sigma''(z_0)=v$ and $\sigma''$ coincides with $\sigma$ on all other variables.
As $z_0$ is chosen fresh, $\sigma''(x_0)=\sigma'(x_0)$,
$\sigma''(t')=\sigma'(t')=t$ and $\sigma''(u')=\sigma'(u')=u$.
Lem.~\ref{linear proof composition} yields a linear proof, which uses $\frac{x_0 \trans\tau z_0}{t' \trans\tau w'}$\vspace{-3pt}
and $\sigma''$ at the bottom, of the rule $\frac{H^1_{\nu_0}}{t\trans{\tau}\sigma''(w')}$,
with  $H^1_{\nu_0}\subseteq (H_{\nu_0}{\setminus}\{x {\trans{\beta}} y\})\cup\{x {\trans{\tau}} z\} \subseteq
H{\setminus}\{x {\trans{\beta}} y\})\cup\{x \trans{\tau} z\}$.

By Lem.~\ref{lem:preservation_smooth}, the rule $r''$
satisfies condition \ref{smooth} of Def.~\ref{def:smooth} w.r.t.\ $\aleph$ and $\Lambda$.
So the occurrences of $x_0$ in $t'$ and in $\nu_0$ are the only two occurrences of $x_0$ in $r''$.
Replacing these occurrences of $x_0$ in $r''$ by $z_0$ produces a rule $r'''$ with source $w'$
and a premise $z_0 \trans\delta y'_0$.
Let $L$ denote $\bigcup_{\nu\in G{\setminus}\{\nu_0\}}H_\nu \cup H^2_{\nu_0}$.\vspace{-3pt}
Lem.~\ref{linear proof composition} yields a linear proof, which uses $r'''$
and $\sigma''$ at the bottom, of the rule $\frac{L}{\sigma''(w')\trans{\alpha}u}$,
with $L \subseteq (H{\setminus}\{x \trans{\beta} y\})\cup\{x \trans{\tau} z\}$.

We have now finished the case that $t$ is univariate.
Suppose an xyntt rule $r^\ddagger=\frac{H^\ddagger}{t^\ddagger{\ntrans\alpha}}$
[resp.\ $\frac{H^\ddagger}{t^\ddagger \trans\alpha u^\ddagger}$],
with $t^\ddagger$ not univariate, is linearly provable from $P$ by means of a proof $\pi^\ddagger$.
By Lem.~\ref{lem:univariate} there is an xyntt rule
$r=\frac{H}{t{\ntrans\alpha}}$ [resp.\ $\frac{H}{t \trans\alpha u}$]
with $t$ univariate, such that $r^\ddagger=\rho(r)$, using a substitution
$\rho:\textrm{dom}(t)\rightarrow V$. From the (trivial) proof of Lem.~\ref{lem:univariate}
we learn that $r$ has a proof $\pi$ that has the same skeleton as $\pi$;
in fact, $\pi^\ddagger = \rho(\pi)$. Hence, the induction hypothesis applies to strict subproofs of $\pi$
just as much as to strict subproofs of $\pi^\ddagger$.

We have shown above that $r$ is {\pdr} as well as $\tau$-pollable w.r.t.\ $P$, and also
satisfies (*) w.r.t.\ the proof $\pi$. Now Lem.~\ref{lem:substitution} says that also $r^\ddagger$ is {\pdr} w.r.t.\ $P$.
In the very same way one shows that $r^\ddagger$ is $\tau$-pollable w.r.t.\ $P$.
Requirement (*) w.r.t.\ $\pi$ says that if a premise from $H^d$ occurs in a rule $r^\dagger$
obtained by a subproof $\pi^\dagger$ of $\pi$, then this premise is a delayable premise of
$r^\dagger$ as well. Any subproof of $\pi^\ddagger$ can be obtained as $\rho(\pi^\dagger)$ with
$\pi^\dagger$ a subproof of $\pi$; the rule proven by $\rho(\pi^\dagger)$ is $\rho(r^\dagger)$.
Again, $\rho$ renames variables occurring in the source of $r^\dagger$, but leaves other variables
occurring in $r^\dagger$ alone. Hence Lem.~\ref{lem:substitution} applies to conclude that also
$r^\ddagger$ satisfies (*) w.r.t.\ $\pi^\ddagger$.
\qed
\end{proof}

\noindent
The following example shows the essentiality of limiting Lem.~\ref{lem:preservation positive} and Lem.~\ref{lem:preservation positive-Lambda}
to \emph{linearly} provable rules. In this example two delayable positive premises in a linear ruloid
are collapsed to a single premise that is not delayable in the resulting non-linear ruloid.
\begin{example}\label{ex:linearity}
Let $f,g$ be unary, $kb,kc$ binary, and $h$ a ternary function symbol. Let $A=\{a,b,c,d\}$,
and consider the TSS with the following rules:
\[
\frac{x\trans a y}{f(x)\trans b x} \qquad \frac{x\trans a y}{f(x)\trans c x} \qquad \frac{x\trans \tau y}{f(x)\trans \tau kb(x,y)} \qquad \frac{x\trans \tau y}{f(x)\trans \tau kc(x,y)}
\]
\[
\frac{x\trans b y \quad x\trans c z}{g(x)\trans d x} \qquad \frac{x\trans \tau y}{g(x)\trans \tau h(x,y,x)} \qquad \frac{x\trans \tau y}{g(x)\trans \tau h(x,x,y)}
\]
\[
\frac{x_2\trans a y}{kb(x_1,x_2)\trans b x_1} \qquad \frac{x_2\trans \tau y}{kb(x_1,x_2)\trans \tau kb(x_1,y)}
\]
\[
\frac{x_2\trans a y}{kc(x_1,x_2)\trans c x_1} \qquad \frac{x_2\trans \tau y}{kc(x_1,x_2)\trans \tau kc(x_1,y)}
\]
\[
\frac{x_2\trans b y \quad x_3\trans c z}{h(x_1,x_2,x_3)\trans d x_1}
\]
\[
\frac{x_2\trans \tau y}{h(x_1,x_2,x_3)\trans \tau h(x_1,y,x_3)} \qquad \frac{x_3\trans \tau y}{h(x_1,x_2,x_3)\trans \tau h(x_1,x_2,y)}\vspace*{1.5mm}
\]
This TSS is positive and in xynft format.
The argument of $f$ and of $g$ are taken to be $\aleph$-liquid and $\Lambda$-frozen.
The first argument of $kb$, $kc$ and $h$ are $\aleph$-frozen and $\Lambda$-frozen,
while their other arguments are $\aleph$-liquid and $\Lambda$-liquid.
The TSS is $\aL$-patient, since there are patience rules for the latter four arguments.
The rules are moreover rooted $\eta$-bisimulation safe and satisfy condition \ref{smooth} of Def.~\ref{def:smooth} w.r.t.\ $\aleph$ and $\Lambda$.

The TSS is manifestly delay resistant w.r.t.\ $\Lambda$. To prove this, we only need to consider the
rules for $f$ and $g$.
That the premise of the first rule for $f$ is manifestly delay resistant follows by the rules
$\frac{x\trans \tau z}{f(x)\trans \tau kb(x,z)}$ and $\frac{z\trans a y}{kb(x,z)\trans b x}$.
Likewise for the second rule for $f$.\vspace{-3pt}
That the premise of the third rule for $f$ is manifestly delay resistant follows by the rules
$\frac{x\trans \tau z}{f(x)\trans \tau kb(x,z)}$ and $\frac{z\trans \tau y}{kb(x,z)\trans \tau kb(x,y)}$.
Likewise for the fourth rule for $f$.
Furthermore, that the first premise of the first rule for $g$ is manifestly delay resistant follows
by the rules $\frac{x\trans \tau z}{g(x)\trans \tau h(x,z,x)}$ and $\frac{z\trans b y\quad x\trans c y'}{h(x,z,x)\trans d x}$.\vspace{-3pt}
Likewise for the second premise of this rule.
That the premise of the second rule for $g$ is manifestly delay resistant follows by
$\frac{x\trans \tau z}{g(x)\trans \tau h(x,z,x)}$ and $\frac{z\trans \tau y}{h(x,z,x)\trans d h(x,y,x)}$.
Likewise for the third rule for $g$.
Finally, that the TSS is manifestly delay resistant (without taking into account $\Lambda$) now follows
easily by employing the four patience rules in the TSS\@.

The non-linear ruloid $\frac{x\trans a y}{g(f(x))\trans d x}$ is obtained by substituting $y$ for $z$ in the linear ruloid $\frac{x\trans a y\quad x\trans a z}{g(f(x))\trans d x}$.
We argue that this non-linear ruloid is not delay resistant. Its premise $x\trans a y$ is not $\tau$-pollable, as clearly $\frac{x\trans \tau z}{g(f(x))\trans d x}$ is not a ruloid.\vspace{-3pt}
Suppose, toward a contradiction, that it is delayable. This would mean there exist ruloids $\frac{x\trans \tau z}{g(f(x))\trans \tau v}$ and $\frac{z\trans a y}{v\trans d x}$
for some term $v$ and fresh variable $z$. The first of these two ruloids allows four possibilities for $v$: $h(f(x),t(x,z),f(x))$ or $h(f(x),f(x),t(x,z))$ where
$t(x,z)$ equals either $kb(x,z)$ or $kc(x,z)$. However, for none of these four possibilities does there exist a ruloid $\frac{z\trans a y}{v\trans d x}$.
\end{example}

\begin{proposition}\label{prop:eliminating-Lambda}
Let $P$ be a standard TSS in ready simulation format, in which each transition rule is rooted branching bisimulation
safe and satisfies condition \ref{smooth} of Def.~\ref{def:smooth} w.r.t.\ $\aleph$ and $\Lambda$.
Let moreover $P$ be $\aL$-patient and manifestly {\dr} w.r.t.\ $\Lambda$.
Then the standard TSS $\hat P^\ddagger$ in xynft format, constructed in Secs.~\ref{sec:ruloids}--\ref{sec:linearity},
is manifestly \dr.
\end{proposition}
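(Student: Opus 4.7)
The plan is to mimic the proof of Prop.~\ref{prop:manifest preservation}, but to exploit the $\Lambda$-tagged variants of the supporting lemmas so that the $\Lambda$-restriction is discharged at exactly the step where it matters. The strategy has three stages: push all antecedents from $P$ to its decent ntyft conversion $P^\dagger$; apply Lem.~\ref{lem:preservation positive-Lambda} to lift {\pdr} and $\tau$-pollability from the rules of $P^\dagger$ (where they hold w.r.t.\ $\Lambda$) to the rules of $\hat P^\ddagger$ (where they hold without any $\Lambda$-restriction); and finally upgrade those properties to their manifest forms by observing that the required witness rules are xynft and therefore live in $\hat P^\ddagger$.

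For the first stage I would verify that $P^\dagger$ inherits $\aL$-patience (immediate from the construction in Sec.~\ref{sec:ruloids}), rooted branching bisimulation safety w.r.t.\ $\aleph$ and $\Lambda$ (as noted in the proof of Cor.~\ref{cor:smooth}), condition \ref{smooth} of Def.~\ref{def:smooth} (ditto), and manifest delay resistance w.r.t.\ $\Lambda$. The last item requires checking that substituting closed terms for free variables and replacing a variable source by each instance $f(x_1,\ldots,x_{\ar(f)})$ transports the finite sets $H^d$, the witness rules $r_M$ in $\overline R$, and the manifest delayability witnesses $\frac{H_1}{t\trans\tau v}$ and $\frac{H_2}{v\trans\alpha u}$ to corresponding rules in $P^\dagger$, while respecting the $\Lambda$-liquid/$\Lambda$-frozen status of each position. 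Given this, the bracketed version of Lem.~\ref{lem:pollable} further delivers that $P^\dagger$ is manifestly $\tau$-pollable w.r.t.\ $\Lambda$, so that each rule of $P^\dagger$ is simultaneously {\pdr}, $\tau$-pollable, and negative-stable w.r.t.\ $\Lambda$ and $P^\dagger$.

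Since $\hat P^\ddagger$ consists of all xynft rules linearly provable from $P^\dagger$, and $P^\dagger$ now satisfies the antecedents of Lem.~\ref{lem:preservation positive-Lambda}, that lemma guarantees that every rule of $\hat P^\ddagger$ is {\pdr} and $\tau$-pollable w.r.t.\ $P^\dagger$ --- note that the $\Lambda$-restriction has been stripped --- while Lem.~\ref{lem:preservation stable} supplies negative-stability. The final step, following the proof of Prop.~\ref{prop:manifest preservation} verbatim, is to promote each of these properties to its manifest form: the witness rules $r_M$ from Defs.~\ref{def:positive} and~\ref{def:pollable} and the witnesses $\frac{H_1}{t\trans\tau v}$ from Def.~\ref{def:delayable} all share the source $f(x_1,\ldots,x_{\ar(f)})$ of the rule in $\hat P^\ddagger$ and are therefore xynft, so their linear provability from $P^\dagger$ places them in $\hat P^\ddagger$ by definition; the secondary witnesses $\frac{H_2}{v\trans\alpha u}$ need only be linearly provable, and the same provability from $P^\dagger$ suffices.

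The main obstacle is the first stage: unlike rooted branching bisimulation safety and condition \ref{smooth}, whose preservation under $P\mapsto P^\dagger$ was already recorded in Cor.~\ref{cor:smooth}, there is no off-the-shelf lemma for the preservation of manifest delay resistance w.r.t.\ $\Lambda$. I therefore anticipate a careful rule-by-rule check that each rewriting step of the conversion (substitution of closed terms into free variables; splitting a variable-sourced rule into one rule per top-level function symbol) leaves the witness set $H^d$ and the witness rules intact in $\overline R$, with the correct $\Lambda$-bookkeeping so that the exemptions enjoyed by $\Lambda$-liquid premises of the source are not lost when the source becomes more concrete.
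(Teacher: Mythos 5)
Your proposal follows the paper's proof: lift the hypotheses to the conversion $P^\dagger$, use (the bracketed version of) Lem.~\ref{lem:pollable} to get manifest $\tau$-pollability w.r.t.\ $\Lambda$, apply Lem.~\ref{lem:preservation stable} and Lem.~\ref{lem:preservation positive-Lambda} to the xynft rules linearly provable from $P^\dagger$ to obtain negative-stability, positive delay resistance and $\tau$-pollability w.r.t.\ $P^\dagger$ with the $\Lambda$-restriction stripped, and then note that the required witness rules are xynft and linearly provable from $P^\dagger$, hence rules of $\hat P^\ddagger$, which makes these properties manifest. The one place where you diverge is your ``main obstacle'': no rule-by-rule check of the conversion is needed, since by Def.~\ref{def:manifestly} manifest delay resistance (w.r.t.\ $\Lambda$) of a standard TSS in ready simulation format is \emph{defined} as manifest delay resistance (w.r.t.\ $\Lambda$) of its conversion $P^\dagger$ to decent ntyft format, so this property of $P^\dagger$ holds immediately by assumption---likewise $\aL$-patience, rooted branching bisimulation safety and condition \ref{smooth} transfer as recorded in the proof of Cor.~\ref{cor:smooth}---and with that observation your remaining steps complete the proof exactly as in the paper.
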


\begin{proof}
By Def.~\ref{def:manifestly}, the conversion $P^\dagger$ of $P$ to decent ntyft format, defined in
Sec.~\ref{sec:ruloids}, is manifestly delay resistant w.r.t.\ $\Lambda$.
By Lem.~\ref{lem:pollable} it is also manifestly $\tau$-pollable w.r.t.\ $\Lambda$.
Hence the rules of $P^\dagger$ are negative-stable and {\pdr} w.r.t.\ $\Lambda$ and $P^\dagger$, as
well as $\tau$-pollable w.r.t.\ $\Lambda$ and $P^\dagger$.
Clearly, $P^\dagger$ is $\aL$-patient, and all of its rules are rooted branching bisimulation
safe and satisfy condition \ref{smooth} of Def.~\ref{def:smooth} w.r.t.\ $\aleph$ and $\Lambda$
(as already observed in the proof of Cor.~\ref{cor:smooth}).

The TSS $\hat P^\ddagger$ contains all xynft rules linearly provable from $P^\dagger$.
By Lem.~\ref{lem:preservation stable} those rules are negative-stable,
and by Lem.~\ref{lem:preservation positive-Lambda} they are {\pdr} and
$\tau$-pollable w.r.t.\ $P^\dagger$. The definitions of {\pdr} and
$\tau$-pollability of rules from $\hat P^\ddagger$ w.r.t.\ $P^\dagger$ require the existence of certain
xynft rules that are linearly provable from $P^\dagger$. By definition, these are rules of $\hat P^\ddagger$.
Hence $\hat P^\ddagger$ is manifestly delay resistant, as well as manifestly $\tau$-pollable.
\qed
\end{proof}
\newpage

\subsection[Transfer of tau-pollability to non-standard rules]{Transfer of $\tau$-pollability to non-standard rules}

\begin{lemma}\label{lem:transfer}
Let $P$ be a standard TSS in ready simulation format. If the TSS $\hat P^\ddagger$
is manifestly $\tau$-pollable and negative-stable, then its augmentation $\hat P^+$ with non-standard rules is
manifestly $\tau$-pollable.
\end{lemma}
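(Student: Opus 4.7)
The plan: since $\hat P^+$ extends $\hat P^\ddagger$ only by rules with negative conclusions, manifest $\tau$-pollability for standard rules is inherited directly from the hypothesis. It therefore suffices to verify the property for an arbitrary non-standard rule $r = \frac{H}{f(x_1,\ldots,x_n)\ntrans\alpha}$ of $\hat P^+$.

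Recall from Sect.~\ref{sec:ruloids} that such an $r$ is constructed by choosing, for each xynft rule $r_i = \frac{H_i}{f(x_1,\ldots,x_n)\trans\alpha u_i}$ of $\hat P^\ddagger$ (with $i$ ranging over the index set $I$ of all such rules with that fixed source and label), a premise $s(i)\in H_i$, and including the denial of each $s(i)$ in $H$. In particular, each positive premise of $H$ arises from a negative selection $s(i) = v_i \ntrans{\delta_i}$ and thus has the form $v_i \trans{\delta_i} y_i$ with $y_i$ a fresh variable.

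Given $M \subseteq H^+$, I would define a new selection function $s'$ on $I$ as follows: set $s'(i) := s(i)$ whenever the denial of $s(i)$ does not lie in $M$; otherwise $s(i)$ is a negative premise $v_i \ntrans{\delta_i}$ of $r_i$, and by negative-stability of $\hat P^\ddagger$ the literal $v_i \ntrans\tau$ is also a premise of $r_i$, so I set $s'(i) := v_i \ntrans\tau$. Now $s'$ induces a non-standard rule $r' = \frac{H_M}{f(x_1,\ldots,x_n)\ntrans\alpha}$ of $\hat P^+$; by selecting the fresh right-hand sides of its positive denials to coincide with the $z_{y_i}$ appearing in the construction of $M_\tau$, one sees immediately that $H_M \subseteq (H\setminus M) \cup M_\tau$, as required (up to the bijective renaming of variables inherent in the notation $\overline{\hat P^+}$).

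The only point that needs scrutiny is the availability of $v_i\ntrans\tau$ as a premise of $r_i$, which is precisely the content of negative-stability (the boundary case $\delta_i = \tau$ being trivial since then $v_i\ntrans\tau$ coincides with $s(i)$ itself). I do not foresee any further obstacle: everything reduces to a local re-selection on an existing non-standard rule, with no need to invoke positive $\tau$-pollability or delay-resistance of $\hat P^\ddagger$.
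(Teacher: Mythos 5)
Your proposal is correct and takes essentially the same approach as the paper: the paper phrases the re-selection as modifying the (surjective) choice function from the rules of $\hat P^\ddagger$ with conclusion $t\trans{\alpha}\ldots$ onto $H$, replacing a selected negative premise $w\ntrans{\beta}$ by $w\ntrans{\tau}$ (available by negative-stability) so that its denial is the corresponding literal of $M_\tau$ — exactly your $s'$. As in your argument, the standard rules are covered by inheritance from $\hat P^\ddagger$, and only negative-stability is used for the non-standard ones.
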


\begin{proof}
Let \nietplat{$\frac{H}{t\ntrans\alpha}$} be a rule of $\hat P^+$,
and let $M\subseteq H^+$.\vspace{-3pt} It suffices to show that also \nietplat{$\frac{H_M}{t\ntrans\alpha}$} is a rule of $\hat P^+$,
where $H_M := (H{\setminus}M) \cup M_\tau$.
Let $\hat P^\ddagger {\upharpoonright} (t{\trans\alpha})$ denote the set of rules of $\hat P^\ddagger$ with a conclusion of the form
$t\trans\alpha u$ for some $u$. By the construction of $\hat P^+$
there exists a surjective function $f$ from
\nietplat{$\hat P^\ddagger {\upharpoonright} (t{\trans\alpha})$} to $H$ such that each rule
\nietplat{$r \in \hat P^\ddagger {\upharpoonright} (t{\trans\alpha})$} contains a premise that denies the premise $f(r)$.
It suffices to find a surjective function $f_M$ from \nietplat{$\hat P^\ddagger {\upharpoonright} (t{\trans\alpha})$}
to $H_M$ such that each rule
\nietplat{$r \in \hat P^\ddagger {\upharpoonright} (t{\trans\alpha})$} contains a premise that denies the premise $f_M(r)$.
In case $f(r) \in (H\setminus M)$, we take $f_M(r):=f(r)$.
In case $f(r) = (w \trans\beta y)\in M$, we take $f_M(r) :=  (w \trans\tau z_y)\in M_\tau$.
As $r$ must contain the premise $w{\ntrans\beta}$, it also contains the premise \nietplat{$w{\ntrans\tau}$},
using that $\hat P^\ddagger$ is negative-stable. This premise denies $f_M(r)$.
Surjectivity is guaranteed by construction.
\qed
\end{proof}

\subsection{Lifting delay resistance to ruloids}

\begin{lemma}\label{lem:tau-negative}
Let $P$ be a standard TSS in ready simulation format, such that the TSS $\hat P^\ddagger$
is manifestly $\tau$-pollable.
If the TSS $\hat P^+$ contains rules \plat{$\frac{H}{t\ntrans\alpha}$} and \plat{$\frac{H^\tau}{t\ntrans\tau}$},
then it also contains a rule \plat{$\frac{H'}{t\ntrans\alpha}$}, with $H'\mathbin\subseteq (H{\cup} H^\tau)^+
\cup (H {\cup} H^\tau)^{s-}\!$.
\end{lemma}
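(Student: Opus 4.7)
The rule $\frac{H}{t\ntrans\alpha}\in\hat P^+$ arises, by the construction of non-standard rules in $\hat P^+$, from a surjective function $\phi$ from $\mathcal{R}_\alpha:=\hat P^\ddagger\mathord\upharpoonright(t\trans\alpha)$ to $H$ such that every $r\in\mathcal{R}_\alpha$ has a premise denying $\phi(r)$; likewise $\frac{H^\tau}{t\ntrans\tau}$ determines a function $\phi^\tau:\mathcal{R}_\tau\to H^\tau$ with $\mathcal{R}_\tau:=\hat P^\ddagger\mathord\upharpoonright(t\trans\tau)$. My strategy is to construct a ``corrected'' choice function $\phi':\mathcal{R}_\alpha\to\{\text{literals}\}$ such that each $r\in\mathcal{R}_\alpha$ still has a premise denying $\phi'(r)$ while every value $\phi'(r)$ lies in $(H{\cup}H^\tau)^+\cup(H{\cup}H^\tau)^{s-}$. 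Taking $H':=\{\phi'(r):r\in\mathcal{R}_\alpha\}$ will then yield, via the same recipe that produced $\frac{H}{t\ntrans\alpha}$, a non-standard rule $\frac{H'}{t\ntrans\alpha}\in\hat P^+$ of the required shape.

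For each $r$ I would first attempt $\phi'(r):=\phi(r)$, which already succeeds in two easy situations: if $\phi(r)$ is positive then $\phi(r)\in H^+\subseteq(H{\cup}H^\tau)^+$, and if $\phi(r)=v\ntrans\beta$ is negative with $v\ntrans\tau\in H\cup H^\tau$ then $\phi(r)\in(H{\cup}H^\tau)^{s-}$. The only problematic case is $\phi(r)=v\ntrans\beta$ with $v\ntrans\tau\notin H\cup H^\tau$; the denying premise of $r$ is then a positive literal $v\trans\beta y$. To handle this I would set $M:=\{v'\trans{\gamma'}y'\in H_r^+\mid v'\ntrans\tau\notin H\cup H^\tau\}$---a set that is nonempty since $v\trans\beta y\in M$---and invoke manifest $\tau$-pollability of $\hat P^\ddagger$ on $r$ with this $M$, obtaining a rule $r'\in\hat P^\ddagger$ with $H_{r'}\subseteq(H_r\setminus M)\cup M_\tau$ whose conclusion has the form $t\trans\alpha u$ (so $r'\in\mathcal{R}_\alpha$) or $t\trans\tau w'$ (so $r'\in\mathcal{R}_\tau$). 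I would then inspect $\phi(r')$, respectively $\phi^\tau(r')$, and the premise of $r'$ that denies it. That denying premise cannot lie in $M_\tau$: were it some $v'\trans\tau z_{y'}$ with $(v'\trans{\gamma'}y')\in M$, then its denial $v'\ntrans\tau$ would end up in $H$ or $H^\tau$, contradicting the defining property of $M$. Hence it lies in $H_r\setminus M$, and is therefore also a premise of $r$. If that premise is negative, then $\phi(r')$ (or $\phi^\tau(r')$) is positive and thus in $(H{\cup}H^\tau)^+$; if it is positive $v'\trans{\gamma'}y'$, then by the complement of $M$ we have $v'\ntrans\tau\in H\cup H^\tau$, placing the associated literal $v'\ntrans{\gamma'}$ in $(H{\cup}H^\tau)^{s-}$. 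In either situation I would set $\phi'(r):=\phi(r')$ or $\phi^\tau(r')$ accordingly; by construction $r$ itself has a premise (namely the one just identified in $H_r\setminus M$) denying this value.

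The main obstacle is to dispatch the difficult case without lapsing into a recursion that might never terminate---a naive, one-bad-premise-at-a-time strategy could cycle through the positive premises of $r$ forever. The key insight is to bundle \emph{all} bad positive premises of $r$ into a single set $M$ and invoke $\tau$-pollability only once: this simultaneously rules out every offender, so any positive premise of $r$ that survives in $r'$ is automatically paired with a stable negative $v'\ntrans\tau$ already present in $H\cup H^\tau$, and one inspection of $\phi(r')$ or $\phi^\tau(r')$ closes the argument. What remains is routine verification that the defined $\phi'(r)$ are indeed denied by premises of $r$ and that $H'$ lies in $(H{\cup}H^\tau)^+\cup(H{\cup}H^\tau)^{s-}$.
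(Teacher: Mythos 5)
Your proposal is correct and follows essentially the same route as the paper's proof: for each rule in $\hat P^\ddagger$ with conclusion $t\trans\alpha u$, gather into $M$ all positive premises $v\trans\beta y$ with $v\ntrans\tau\notin H\cup H^\tau$, apply manifest $\tau$-pollability once, and read off the new literal from the choice function(s) underlying the given non-standard rules for $\alpha$ and $\tau$, noting that the denying premise of the polluted rule cannot lie in $M_\tau$ and hence survives as a premise of the original rule. The only (inessential) difference is that you keep $\phi(r)$ unchanged in the unproblematic cases, whereas the paper performs the $M$-construction uniformly for every rule.
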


\begin{proof}
Suppose $\hat P^+$ contains rules $\frac{H}{t\ntrans\alpha}$ and $\frac{H^\tau}{t\ntrans\tau}$.
Then there exists a surjective function $f$ from\vspace{-3pt}\linebreak
{$\hat P^\ddagger {\upharpoonright} (t{\trans\alpha}) \cup \hat P^\ddagger {\upharpoonright} (t{\trans\tau})$}
to $H \cup H^\tau$ such that each rule $r \in \textrm{dom}(f)$ contains a premise that denies the premise $f(r)$.
It suffices to construct a function $h$ from \nietplat{$\hat P^\ddagger {\upharpoonright} (t{\trans\alpha})$}
to $(H{\cup} H^\tau)^+ \cup (H {\cup} H^\tau)^{s-}$ such that each rule $r \in \textrm{dom}(h)$ contains a
premise that denies the premise $h(r)$.

Let \nietplat{$r = \frac{K}{t \trans\alpha u}\in \hat P^\ddagger {\upharpoonright} (t{\trans\alpha})$}.
Let $M$ be the collection premises \nietplat{$v \trans\beta y$} in
$K$ for which $v {\ntrans\tau}$ does not occur in $H \cup H^\tau$.
Since $\hat P^\ddagger$ is manifestly $\tau$-pollable, there must be a rule
$r_M=\frac{K_M}{t \trans\alpha u}$ or $\frac{K_M}{t \trans\tau w}$ in $\hat P^\ddagger$
with $K_M\subseteq (K{\setminus}M) \cup M_\tau$.
So a premise $\mu \in K_M$ denies the premise $f(r_M) \in H\cup H^\tau$.
Given the definition of $M$, this premise $\mu$ cannot occur in $M_\tau$.
Hence $\mu\in K\setminus M$ and $f(r_M)\in (H{\cup} H^\tau)^+\cup (H {\cup} H^\tau)^{s-}$. Define $h(r):=f(r_M)$.
\qed
\end{proof}

\begin{lemma}\label{lem:negative}
\label{lem:preservation_delay_bisimulation_safe}
Let $P$ be a standard TSS in ready simulation format, such that the TSS $\hat P^\ddagger$
is manifestly $\tau$-pollable and negative-stable.
\begin{enumerate}
\item[(1)] For any linear $P$-ruloid \nietplat{$\frac{H}{t \trans\alpha u}$} there is a 
linear $P$-ruloid \nietplat{$\frac{H'}{t \trans\alpha u}$} with $H' \subseteq H^+\cup H^{s-}$.
\vspace{-2pt}
\item[(2)] For any linear $P$-ruloids \nietplat{$\frac{H}{t{\ntrans\alpha}}$} and \nietplat{$\frac{H^\tau}{t{\ntrans\tau}}$},
with $t$ not a variable,
there is a linear $P$-ruloid {$\frac{H'}{t{\ntrans\alpha}}$} with
$H'\mathbin\subseteq (H{\cup} H^\tau)^+\linebreak[1] \cup (H {\cup} H^\tau)^{s-}\!$.
\end{enumerate}
\end{lemma}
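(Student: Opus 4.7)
The plan is to prove both parts by simultaneous induction on the combined size of the underlying linear proof(s) from $\hat P^+$, since Part~(1) will need Part~(2) at negative premises of the bottom rule, and Part~(2) will need Part~(1) at positive premises of the rule produced by Lem.~\ref{lem:tau-negative}.

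For Part~(1), I would take a linear proof $\pi$ of \plat{$\frac{H}{t\trans\alpha u}$} from $\hat P^+$. If $\pi$ consists of a single hypothesis node, then $H=\{t\trans\alpha u\}$ and the statement is trivial. Otherwise, the bottom of $\pi$ applies a rule $r'=\frac{K}{t'\trans\alpha u'}$ of $\hat P^+$ (necessarily from $\hat P^\ddagger$, since only $\hat P^\ddagger$ has rules with positive conclusions) under a substitution $\sigma$, with strict linear subproofs $\pi_\mu$ of $\sigma(\mu)$ for each $\mu\in K$, where $H = \biguplus_{\mu\in K} H_\mu$. For each positive $\mu\in K$, the inductive hypothesis on Part~(1) applied to $\pi_\mu$ yields a linear proof of $\frac{H'_\mu}{\sigma(\mu)}$ with $H'_\mu\subseteq H_\mu^+\cup H_\mu^{s-}$. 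Since $\hat P^\ddagger$ is negative-stable (Prop.~\ref{prop:eliminating-Lambda}), every negative premise $\mu=w\ntrans\gamma$ in $K$ is accompanied by $\mu'=w\ntrans\tau\in K$; if $\sigma(w)$ is not a variable, the inductive hypothesis on Part~(2) applied to the pair $(\pi_\mu,\pi_{\mu'})$ gives a linear proof of $\frac{H'_\mu}{\sigma(w)\ntrans\gamma}$ with $H'_\mu\subseteq (H_\mu\cup H_{\mu'})^+\cup (H_\mu\cup H_{\mu'})^{s-}\subseteq H^+\cup H^{s-}$; if $\sigma(w)$ is a variable $x$, then $\pi_\mu$ and $\pi_{\mu'}$ are forced to be hypothesis nodes (since no rule in $\hat P^+$ has variable source), so $x\ntrans\tau\in H$ witnesses $x\ntrans\gamma\in H^{s-}$. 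Composing via Lem.~\ref{linear proof composition}, together with $\sigma$ and $r'$ at the bottom, yields a linear proof of $\frac{H'}{t\trans\alpha u}$ with $H'\subseteq H^+\cup H^{s-}$; the resulting rule remains an nxytt rule (hence a linear $P$-ruloid) since positive premises of $H^+$ already had variable left-hand sides.

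For Part~(2), take linear proofs $\pi_1$ of $\frac{H}{t\ntrans\alpha}$ and $\pi_2$ of $\frac{H^\tau}{t\ntrans\tau}$ with $t$ not a variable. Then both proofs necessarily use non-standard rules at the bottom, say $r=\frac{K}{f(\bar x)\ntrans\alpha}$ and $r^\tau=\frac{K^\tau}{f(\bar x)\ntrans\tau}$ in $\hat P^+$ under a common substitution $\sigma$ (after suitable $\alpha$-renaming), with $t=\sigma(f(\bar x))$. Apply Lem.~\ref{lem:tau-negative} to obtain $r'=\frac{K'}{f(\bar x)\ntrans\alpha}$ in $\hat P^+$ with $K'\subseteq (K\cup K^\tau)^+\cup (K\cup K^\tau)^{s-}$; note that Prop.~\ref{prop:eliminating-Lambda} and Lem.~\ref{lem:transfer} supply the hypothesis of Lem.~\ref{lem:tau-negative}. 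For each premise $\nu\in K'$, the derivation $\sigma(\nu)$ appears as a strict subproof of either $\pi_1$ or $\pi_2$. If $\nu\in (K\cup K^\tau)^+$ is positive, the inductive hypothesis on Part~(1) applied to that subproof yields hypotheses in $(H\cup H^\tau)^+\cup (H\cup H^\tau)^{s-}$; if $\nu=w\ntrans\delta\in (K\cup K^\tau)^{s-}$, the companion $w\ntrans\tau$ also lies in $K\cup K^\tau$ and its subproof appears in $\pi_1$ or $\pi_2$, so either $\sigma(w)$ is not a variable and the inductive hypothesis on Part~(2) applies (strictly smaller combined size), or $\sigma(w)$ is a variable and both subproofs are hypothesis nodes, directly placing $\sigma(\nu)$ in $(H\cup H^\tau)^{s-}$. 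Composing via Lem.~\ref{linear proof composition} with $\sigma$ and $r'$ at the bottom produces the desired linear $P$-ruloid.

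The main technical obstacle will be guaranteeing pairwise disjointness of the hypothesis sets when invoking Lem.~\ref{linear proof composition}, especially in Part~(2) where the assembled subproofs may originate from both $\pi_1$ and $\pi_2$. The right-hand sides of positive premises can be renamed apart (they are distinct variables fresh to the source, by the nxytt discipline), and linearity within each original proof ensures no internal clashes; thus the disjointness can be arranged by a uniform bijective renaming of the $y$-variables of positive premises of $\pi_2$ before composition. A secondary subtlety is the requirement that the final rule remain in nxytt form: this is preserved automatically since Lem.~\ref{lem:tau-negative} produces a rule whose source is $f(\bar x)$ and whose premises come from the $K$'s, whose left-hand sides of positive premises are already variables.
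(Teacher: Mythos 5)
Your overall route is the paper's: simultaneous induction on the linear proofs from $\hat P^+$, with negative-stability of $\hat P^\ddagger$ supplying a $\tau$-companion for every negative premise of the bottom rule (so that part~(2) of the induction hypothesis can be applied when the instantiated left-hand side is not a variable, and the premise lands directly in $H^{s-}$ when it is), and with Lem.~\ref{lem:tau-negative}, applied to the two non-standard bottom rules under a common substitution, driving part~(2). Two remarks on the details. A minor one: negative-stability and manifest $\tau$-pollability of $\hat P^\ddagger$ are hypotheses of the present lemma and should be cited as such; Prop.~\ref{prop:eliminating-Lambda} and Lem.~\ref{lem:transfer} cannot be invoked here, as their hypotheses (rooted branching bisimulation safety, condition~\ref{smooth}, $\aL$-patience, manifest delay resistance w.r.t.\ $\Lambda$) are not assumed. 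You should also verify, as the paper does via Lem.~\ref{lem:preservation_decency}, that the rules established by the subproofs are decent nxytt rules, so that the induction hypothesis---which speaks of linear $P$-ruloids---indeed applies to them.

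The substantive problem is your resolution of the ``main technical obstacle''. Renaming the right-hand sides of the positive premises of $\pi_2$ by a bijection $\theta$ turns the second ruloid into $\frac{\theta(H^\tau)}{t\ntrans\tau}$, and the rule your construction then delivers has hypotheses contained in $(H\cup\theta(H^\tau))^+\cup(H\cup\theta(H^\tau))^{s-}$; any positive premise inherited from the renamed side has the form $x\trans{\beta}z$ with $z$ fresh and hence lies outside $(H\cup H^\tau)^+$, so the inclusion demanded by part~(2)---which is exactly what the outer induction steps consume, e.g.\ to conclude $H'_\ell\subseteq H^+\cup H^{s-}$ in part~(1)---is lost. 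Nor can you undo the renaming afterwards: the inverse substitution identifies distinct positive premises, and that is precisely the situation in which linear provability need not be preserved (compare the example after the definition of linear proofs in Sect.~\ref{sec:linearity}), so Lem.~\ref{linear proof composition} does not cover that step. The paper never renames hypothesis literals: it composes the subproofs delivered by the induction hypothesis directly on top of the rule obtained from Lem.~\ref{lem:tau-negative}, keeping every hypothesis as it is; overlaps among negative hypotheses are harmless, since linearity constrains only positive hypothesis labels. So the renaming device must be dropped, and whatever disjointness argument you supply for the composition has to leave the literals of $H\cup H^\tau$ intact.
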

\begin{proof}
We prove the claims by simultaneous structural induction on the linear proofs of the ruloids from
$\hat P^+\!$.

First consider the case that $t$ is a variable $x$. Then any standard $P$-ruloid with source $t$ must have
the form \nietplat{$\frac{x \trans\alpha y}{x \trans\alpha y}$}; it trivially satisfies (1).

Now let $t \notin V$.
First let $\pi$ be a linear proof of \plat{$\frac{H}{t\trans\alpha u}$}.
Let the decent ntyft rule $r$ in $\hat P^+$ (and hence in $\hat P^\ddagger$) of the form
\[
\frac{\{t_k\trans{\beta_k}y_k\mid k\in K\}\cup\{t_\ell\ntrans{\gamma_\ell\;}\,\mid \ell\in L\}\cup\{x_j\ntrans{\gamma_j\;}\,\mid j\in J\}}{f(x_1,\ldots,x_{\ar(f)})\trans{\alpha}v}
\]
and the substitution $\sigma$ be used at the bottom of $\pi$.
Here we require $\sigma(t_\ell)\notin V$ for all $\ell \in L$ and $\sigma(x_j)\in V$; so we have split the negative premises in two
sets, depending on whether after application of $\sigma$ their left-hand sides are variables.
We have $\sigma(f(x_1,\ldots,x_{\ar(f)}))\mathbin=t$ and $\sigma(v)\mathbin=u$.
Moreover, rules $r_k \mathbin= \frac{H_k}{\sigma(t_k)\trans{\beta_k}\sigma(y_k)}$ for each $k\mathbin\in K$ and
$r_\ell \mathbin= \frac{H_\ell}{\sigma(t_\ell)\ntrans{\gamma_\ell}}$ for each $\ell\mathbin\in L$ are linearly provable
from $\hat P^+$ by means of strict subproofs $\pi_k$ and $\pi_\ell$ of $\pi$,
where $H=\bigcup_{k\in K}H_k\cup\bigcup_{\ell\in L}H_\ell \cup \{\sigma(x_j){\ntrans{\gamma_j\;}} \mid j\mathbin\in J\}$.
Here we use that $\hat P^+$ is in ntyft format.
Using that $\hat P^\ddagger$ is negative-stable, $\{\sigma(x_j){\ntrans{\gamma_j\;}} \mid j\mathbin\in J\}
\subseteq H^{s-}$; moreover, for each $\ell\mathbin\in L$ there is a unique
$\ell'\mathbin\in L$ with $t_{\ell'}=t_\ell$ and $\gamma_{\ell'}=\tau$; we write $H^\tau_\ell := H_{\ell'}$.

As $r$ is decent, $\var(t_k) \subseteq \{x_1,\ldots,x_{\ar(f)}\}$, so $\var(\sigma(t_k)) \subseteq
\var(t)$ for each $k\in K$. Likewise,\linebreak $\var(\sigma(t_\ell))\subseteq\var(t)$ for each $\ell\in
L$. From ${\it rhs}(H) \cap \var(t) = \emptyset$ it follows that ${\it rhs}(H_k) \cap
\var(\sigma(t_k)) = \emptyset$ for each $k\mathbin\in K$, and ${\it
  rhs}(H_\ell)\cap\var(\sigma(t_\ell))=\emptyset$ for each $\ell\mathbin\in L$. So for each $k\mathbin\in K$ and
$\ell\mathbin\in L$, the rules $r_k$ and $r_\ell$ are nxytt rules. By Lem.~\ref{lem:preservation_decency},
they are decent, and thus ruloids.

By induction there is a $P$-ruloid $\frac{H'_k}{\sigma(t_k)\trans{\beta_k}\sigma(y_k)}$
with $H'_k \subseteq H_k^+\cup H_k^{s-}$, for each $k\mathbin\in K$.
Likewise, there is a $P$-ruloid $\frac{H'_\ell}{\sigma(t_\ell){\ntrans{\gamma_\ell}}}$
with $H'_\ell\mathbin\subseteq (H_\ell{\cup} H^\tau_\ell)^+ \cup (H_\ell {\cup} H^\tau_\ell)^{s-}\!$,
for each $\ell\mathbin\in L$.
By composition of proofs, we obtain a linear ruloid $\frac{H'}{t \trans\alpha u}$
with \nietplat{$H'=\bigcup_{k\in K}H'_k\cup\bigcup_{\ell\in L}H'_\ell  \cup \{\sigma(x_j){\ntrans{\gamma_j\;}} \mid j\mathbin\in J\}
\subseteq H^+ \cup H^{s-}$}.

Secondly, let $\pi$ be a linear proof of \plat{$\frac{H}{t{\ntrans\alpha}}$}, and $\pi^\tau$ a
linear proof of \plat{$\frac{H^\tau}{t{\ntrans\tau}}$}.
Let the decent ntyft rule $r$ in $\hat P^+$ (and hence in $\hat P^\ddagger$) of the form
\[
\frac{\{t_k\trans{\beta_k}y_k\mid k\in K\}\cup\{t_\ell\ntrans{\gamma_\ell\;}\,\mid \ell\in L\}\cup\{x_j\ntrans{\gamma_j\;}\,\mid j\in J\}}{f(x_1,\ldots,x_{\ar(f)}){\ntrans{\alpha}}}
\]
and the substitution $\sigma$ be used at the bottom of $\pi$, again with $\sigma(t_\ell)\mathbin{\notin} V$ for $\ell \mathbin\in L$
and $\sigma(x_j)\mathbin\in V$ for $j\mathbin\in J$.
Likewise, let the decent ntyft rule $r^\tau$ in $\hat P^+$ of the form
\[
\frac{\{t_k\trans{\beta_k}y_k\mid k\in K^\tau\}\cup\{t_\ell\ntrans{\gamma_\ell\;}\,\mid \ell\in L^\tau\}\cup\{x_j\ntrans{\gamma_j\;}\,\mid j\in J^\tau\}}{f(x_1,\ldots,x_{\ar(f)}){\ntrans{\tau}}}
\]
and the substitution $\sigma^\tau$ be used at the bottom of $\pi^\tau$, with
$\sigma^\tau(t_\ell)\mathbin{\notin} V$ for $\ell \mathbin\in L$ and $\sigma^\tau(x_j)\mathbin{\in} V$ for $j \mathbin\in J$.
Note that $\sigma(x_i)\mathbin=\sigma^\tau(x_i)$ for $i\mathbin=1,\dots,\ar(f)$. By choosing the right-hand sides of
premises in $r^\tau$ different from the ones in $r$, we may assume, w.l.o.g., that $\sigma^\tau=\sigma$.

Exactly as above, $P$-ruloids $r_k \mathbin= \frac{H_k}{\sigma(t_k)\trans{\beta_k}\sigma(y_k)}$
for each $k\mathbin\in K \cup K^\tau$ and
$r_\ell \mathbin= \frac{H_\ell}{\sigma(t_\ell)\ntrans{\gamma_\ell}}$
for each $\ell\mathbin\in L\cup L^\tau$ are linearly provable
from $\hat P^+$ by means of strict subproofs $\pi_k$ and $\pi_\ell$ of $\pi$ or $\pi^\tau$,
where $H=\bigcup_{k\in K}H_k\cup\bigcup_{\ell\in L}H_\ell \cup \{\sigma(x_j){\ntrans{\gamma_j\;}} \mid j\mathbin\in J\}$
and $H^\tau=\bigcup_{k\in K^\tau}H_k\cup\bigcup_{\ell\in L^\tau}H_\ell \cup \{\sigma(x_j){\ntrans{\gamma_j\;}} \mid j\mathbin\in J^\tau\}$.

By Lem.~\ref{lem:tau-negative} there is a rule in $\hat P^+$ of the form
\[
\frac{\{t_k\trans{\beta_k}y_k\mid k\in K'\}\cup\{t_\ell\ntrans{\gamma_\ell\;}\,\mid \ell\in L'\}\cup\{x_j\ntrans{\gamma_j\;}\,\mid j\in J'\}}{f(x_1,\ldots,x_{\ar(f)}){\ntrans{\alpha}}}
\]
with $K' \subseteq K \cup K^\tau$,  $L' \subseteq L \cup L^\tau$ and  $J' \subseteq J \cup J^\tau$,
and such that (1) for any $\ell \in L'$ there is an $\ell'\in L \cup L^\tau$ with $t_{\ell'}=t_\ell$
and $\gamma_{\ell'}=\tau$, and (2) for any $j \in J'$ there is an $j'\in J \cup J^\tau$ with $t_{j'}=t_j$
and $\gamma_{j'}=\tau$; making an arbitrary choice for $\ell'$ in case of ambiguity, let $H^\tau_\ell:=H_{\ell'}$.

By induction there is a $P$-ruloid $\frac{H'_k}{\sigma(t_k)\trans{\beta_k}\sigma(y_k)}$
with $H'_k \subseteq H_k^+\cup H_k^{s-}$, for each $k\mathbin\in K'$.
Likewise, there is a $P$-ruloid $\frac{H'_\ell}{\sigma(t_\ell){\ntrans{\gamma_\ell}}}$
with $H'_\ell\mathbin\subseteq (H_\ell{\cup} H^\tau_\ell)^+ \cup (H_\ell {\cup} H^\tau_\ell)^{s-}\!$,
for each $\ell\mathbin\in L'$.
By composition of proofs, we obtain a linear ruloid $\frac{H'}{t {\ntrans\alpha}}$ with
\nietplat{$H'=\bigcup_{k\in K'}H'_k\cup\bigcup_{\ell\in L'}H'_\ell \cup \{\sigma(x_j){\ntrans{\gamma_j\;}} \mid j\mathbin\in J'\}
\subseteq (H{\cup} H^\tau)^+\linebreak[1] \cup (H {\cup} H^\tau)^{s-}\!$}.$\!$\qed
\end{proof}

\begin{trivlist}
\item [\hspace{\labelsep}{\it Proof of Thm.~\ref{thm:manifest}.}]
Let $P$ be a manifestly delay resistant standard TSS in ready simulation format.
By Prop.~\ref{prop:manifest preservation}, the TSS $\hat P^\ddagger$, constructed in
Sec.~\ref{sec:ruloids}, is manifestly delay resistant, and thus negative-stable.
By Lem.~\ref{lem:pollable}, it is manifestly $\tau$-pollable.

By Lem.~\ref{lem:transfer} the TSS $\hat P^+$ is manifestly $\tau$-pollable.
So by definition the rules of $\hat P^+$ are {\pdr} as well as $\tau$-pollable
w.r.t.\ $\hat P^+$. By Lem.~\ref{lem:preservation positive}, each nxytt rule linearly provable
from $\hat P^+$, i.e.\ each linear $P$-ruloid, is {\pdr} w.r.t.\ $\hat P^+$.
By Lem.~\ref{lem:negative} all linear $P$-ruloids are {\ndr} w.r.t.\ $\hat P^+$.
Thus $P$ is delay resistant.
\qed
\end{trivlist}

\begin{trivlist}
\item [\hspace{\labelsep}{\it Proof of Thm.~\ref{thm:manifest-Lambda}.}]
Let $P$ be a TSS meeting the conditions of Thm.~\ref{thm:manifest-Lambda}.
By Prop.~\ref{prop:eliminating-Lambda}, the TSS $\hat P^\ddagger$ is manifestly delay resistant.
From here, the argument proceeds as above.
\qed
\end{trivlist}


\begin{thebibliography}{10}

\bibitem{Blo95}
{\sc B.~Bloom} (1995):
\newblock {\em Structural operational semantics for weak bisimulations.}
\newblock {\sl Theoretical Computer Science} 146(1/2), pp.\ 25--68.

\bibitem{BFvG04}
{\sc B.~Bloom, W.J. Fokkink \& R.J.~van Glabbeek} (2004):
\newblock {\em Precongruence formats for decorated trace semantics.}
\newblock {\sl ACM Transactions on Computational Logic} 5(1), pp.\ 26--78.

\bibitem{BIM95}
{\sc B.~Bloom, S.~Istrail \& A.R.~Meyer} (1995):
\newblock {\em Bisimulation can't be traced.}
\newblock {\sl Journal of the ACM} 42(1), pp.\ 232--268.

\bibitem{BolG96}
{\sc R.N. Bol \& J.F. Groote} (1996):
\newblock {\em The meaning of negative premises in transition system specifications.}
\newblock {\sl Journal of the ACM} 43(5), pp.\ 863--914.

\bibitem{CGT16}
{\sc V. Castiglioni, D. Gebler \& S. Tini} (2016):
\newblock {\em Modal decomposition on nondeterministic probabilistic processes.}
\newblock In {\sl Proc.\ CONCUR 2016}, LIPIcs 59, pp.\ 36:1--36:15. Schloss Dagstuhl - Leibniz-Zentrum f\"ur Informatik.

\bibitem{Fok00}
{\sc W.J. Fokkink} (2000):
\newblock {\em Rooted branching bisimulation as a congruence.}
\newblock {\sl Journal of Computer and System Sciences} 60(1), pp.\ 13--37.

\bibitem{FvG96}
{\sc W.J. Fokkink \& R.J.~van Glabbeek} (1996):
\newblock {\em Ntyft/ntyxt rules reduce to ntree rules.}
\newblock {\sl Information and Computation} 126(1), pp.\ 1--10.

\bibitem{FvG16}
{\sc W.J. Fokkink \& R.J.~van Glabbeek} (2016):
\newblock {\em Divide and congruence II: Delay and weak bisimilarity.}
\newblock In {\sl Proc.\ LICS'16}, pp.\ 778--787. ACM.

\bibitem{FvG17}
{\sc W.J. Fokkink \& R.J.~van Glabbeek} (2017):
\newblock {\em Precongruence formats with lookahead through modal decomposition.}
\newblock In {\sl Proc.\ CSL'17}, pp.\ 25:1--25:20. Schloss Dagstuhl - Leibniz-Zentrum f\"ur Informatik.

\bibitem{FvGL17}
{\sc W.J. Fokkink, R.J.~van Glabbeek \& B. Luttik} (2017):
\newblock {\em Divide and congruence III: Stability \& divergence.}
\newblock In {\sl Proc.\ CONCUR'17}, pp.\ 15:1--15:16. Schloss Dagstuhl - Leibniz-Zentrum f\"ur Informatik.

\bibitem{FvGdW03}
{\sc W.J. Fokkink, R.J.~van Glabbeek \& P.~de Wind} (2006):
\newblock {\em Compositionality of {H}ennessy-{M}ilner logic by structural operational semantics.}
\newblock {\sl Theoretical Computer Science} 354(3), pp.\ 421--440.

\bibitem{FvGdW12}
{\sc W.J. Fokkink, R.J.~van Glabbeek \& P.~de Wind} (2012):
\newblock {\em Divide and congruence: From decomposition of modal formulas to preservation of branching and $\eta$-bisimilarity.}
\newblock {\sl Information and Computation} 214, pp.\ 59--85.

\bibitem{GF12}
{\sc D. Gebler \& W.J. Fokkink} (2012):
\newblock {\em Compositionality of probabilistic Hennessy-Milner logic through structural operational semantics.}
\newblock In {\sl Proc.\ CONCUR'12}, LNCS 7454, pp.\ 395--409. Springer.

\bibitem{GRS91}
{\sc A.~van Gelder, K.~Ross \& J.S.~Schlipf} (1991):
\newblock {\em The well-founded semantics for general logic programs.}
\newblock {\em Journal of the ACM} 38(3), pp.\ 620--650.

\bibitem{vGl93}
{\sc R.J.~van Glabbeek} (1993):
\newblock {\em The linear time-branching time spectrum {II}: The semantics of sequential systems with silent moves.}
\newblock In {\sl Proc.\ CONCUR'93}, LNCS 715, pp.\ 66--81. Springer.

\bibitem{vGl04}
{\sc R.J. van Glabbeek} (2004):
\newblock {\em The meaning of negative premises in transition system specifications {II}.}
\newblock {\sl Journal of Logic and Algebraic Programming} 60/61, pp.\ 229--258.

\bibitem{vGl11}
{\sc R.J.~van Glabbeek} (2011):
\newblock {\em On cool congruence formats for weak bisimulations.}
\newblock {\sl Theoretical Computer Science}, 412(28), pp.\ 3283--3302.

\bibitem{GlWe96}
{\sc R.J.~van Glabbeek \& W.P. Weijland} (1996):
\newblock {\em Branching time and abstraction in bisimulation semantics.}
\newblock {\sl Journal of the ACM} 43(3), pp.\ 555--600.

\bibitem{Gro93}
{\sc J.F. Groote} (1993):
\newblock {\em Transition system specifications with negative premises.}
\newblock {\sl Theoretical Computer Science} 118(2), pp.\ 263--299.

\bibitem{GV92}
{\sc J.F. Groote \& F.W.~Vaandrager} (1992):
\newblock {\em Structured operational semantics and bisimulation as a congruence.}
\newblock {\sl Information and Computation} 100(2), pp.\ 202--260.

\bibitem{HM85}
{\sc M. Hennessy \& R.~Milner} (1985):
\newblock {\em Algebraic laws for nondeterminism and concurrency.}
\newblock {\sl Journal of the ACM} 32(1), pp.\ 137--161.

\bibitem{Kle56}
{\sc S.C. Kleene} (1956):
\newblock {\em Representation of events in nerve nets and finite automata.}
\newblock In (C.~Shannon and J.~McCarthy, eds.) {\sl Automata Studies}, pp.\ 3--41.
\newblock Princeton University Press.

\bibitem{LL91}
{\sc K.G. Larsen \& X.~Liu} (1991):
\newblock {\em Compositionality through an operational semantics of contexts.}
\newblock {\sl Journal of Logic and Computation} 1(6), pp.\ 761--795.

\bibitem{Mil81}
{\sc R.~Milner} (1981):
\newblock {\em A modal characterisation of observable machine-behaviour.}
\newblock In {\sl Proc.\ CAAP'81}, LNCS 112, pp.\ 25--34. Springer.

\bibitem{Mil89}
{\sc R.~Milner} (1989):
\newblock {\em Communication and Concurrency.} Prentice Hall.

\bibitem{Plo04}
{\sc G.D. Plotkin} (2004):
\newblock {\em A structural approach to operational semantics.}
\newblock {\sl Journal of Logic and Algebraic Programming} 60/61, pp.\ 17--139.
\newblock Originally appeared in 1981.

\bibitem{Sim85}
{\sc R.~de Simone} (1985):
\newblock {\em Higher-level synchronising devices in \textsc{Meije}--{SCCS}.}
\newblock {\sl Theoretical Computer Science} 37(3), pp.\ 245--267.

\bibitem{Uli92}
{\sc I. Ulidowski} (1992):
\newblock {\em Equivalences on observable processes.}
\newblock In {\sl Proc.\ LICS'92}, pp.\ 148--159. IEEE.

\bibitem{UP02}
{\sc I. Ulidowski \& I. Phillips} (2002):
\newblock {\em Ordered SOS rules and process languages for branching and eager bisimulations.}
\newblock {\sl Information and Computation}, 178, pp.\ 180--213.

\bibitem{UY00}
{\sc I. Ulidowski \& S. Yuen} (2000):
\newblock {\em Process languages for rooted eager bisimulation.}
\newblock In {\sl Proc.\ CONCUR'00}, LNCS 1877, pp.\ 275--289. Springer.

\end{thebibliography}
\end{document}